\renewcommand{\section}{\@startsection
  {section}%
  {1}%
  {0mm}%
  {-1\baselineskip}%
  {0.5\baselineskip}%
  {\normalfont\large\bfseries}%
}
\renewcommand{\subsection}{\@startsection
  {subsection}%
  {2}%
  {0mm}%
  {-1\baselineskip}%
  {0.5\baselineskip}%
  {\normalfont\large\itshape}%
}
\renewcommand{\subsubsection}{\@startsection
  {subsubsection}%
  {3}%
  {0mm}%
  {-1\baselineskip}%
  {0.5\baselineskip}%
  {\normalfont\itshape}%
}
\newsavebox{\tempbox}
\renewcommand{\@makecaption}[2]{
  \vspace{10pt}
  \sbox{\tempbox}{\textbf{#1.} #2}
  \ifthenelse{\lengthtest{\wd\tempbox > \linewidth}}{
    \textbf{#1.} #2\par
  }{
    \begin{center}
      \textbf{#1.} #2
    \end{center}
  }
}
\numberwithin{equation}{section}
\numberwithin{figure}{section}
\newtheoremstyle{mythm}%
  {}%
  {}%
  {\itshape}%
  {}%
  {\bfseries}%
  {.}%
  {.5em}%
  {\thmname{#1}~\thmnumber{#2}\ifthenelse{\equal{\thmnote{#3}}{}}{}{~(\thmnote{#3})}}%
\newtheoremstyle{mydefn}%
  {}%
  {}%
  {\upshape}%
  {}%
  {\bfseries}%
  {.}%
  {.5em}%
  {\thmname{#1}~\thmnumber{#2}\ifthenelse{\equal{\thmnote{#3}}{}}{}{~(\thmnote{#3})}}%
\newtheoremstyle{myremark}%
  {}%
  {}%
  {\upshape}%
  {}%
  {\itshape}%
  {.}%
  {.5em}%
  {\thmname{#1}~\thmnumber{#2}\ifthenelse{\equal{\thmnote{#3}}{}}{}{~(\thmnote{#3})}}%
\theoremstyle{mythm}
\newtheorem{theo}{Theorem}[section]
\newtheorem{lem}[theo]{Lemma}
\newtheorem{prop}[theo]{Proposition}
\newtheorem{cor}[theo]{Corollary}
\theoremstyle{mydefn}
\newtheorem{exa}[theo]{Example}
\newtheorem{ass}[theo]{Assumption}
\theoremstyle{myremark}
\newtheorem{rem}[theo]{Remark}
\theoremstyle{mythm}
\newcommand{\uend}{\hfill$\lrcorner$}
\newcounter{claimcounter}
\newenvironment{claim}[1][]{
  \renewcommand{\proof}{\smallskip\par\noindent\textit{Proof. }}
  \medskip\par\noindent%
  \ifthenelse{\equal{#1}{}}{%
    \setcounter{claimcounter}{0}\refstepcounter{claimcounter}\textit{Claim~\arabic{claimcounter}.}
  }{%
    \ifthenelse{\equal{#1}{resume}}{%
      \refstepcounter{claimcounter}\textit{Claim~\arabic{claimcounter}.}
    }{%
      \textit{Claim~#1.}
    }
  }
}{
  \par\medskip
}
\newcommand{\case}[1]{\par\medskip\noindent\textit{Case #1: }}
\newenvironment{cs}{
  \begin{description}
    \renewcommand{\case}[1]{\item[\itshape\mdseries Case ##1:]}
  }{
  \end{description}
}
\newlist{caselist}{description}{10}
\setlist[caselist]{font=\itshape\mdseries}
\newlist{eroman}{enumerate}{2}
\setlist[eroman,1]{label=(\roman*)}
\setlist[eroman,2]{label=(\alph*)}
\newlist{ealph}{enumerate}{1}
\setlist[ealph]{label=(\Alph*)}
\newcounter{nlistcounter}
\newenvironment{nlist}[1]{
  \renewcommand{\thenlistcounter}{\upshape(#1.\arabic{nlistcounter})}
  \begin{list}{\bfseries\thenlistcounter}{%
      \usecounter{nlistcounter}
      \setlength{\labelwidth}{1.5em}%
      \setlength{\leftmargin}{\labelwidth}%
      \addtolength{\leftmargin}{\labelsep}%
      \setlength{\listparindent}{0em}%
      \setlength{\topsep}{5pt}%
      \setlength{\itemsep}{5pt}%
      \setlength{\parsep}{0pt}%
    }
  }{
  \end{list}
}
\definecolor{blau}{RGB}{0,84,159}
\definecolor{hellblau}{RGB}{142,168,229}
\definecolor{petrol}{RGB}{0,97,101}
\definecolor{tuerkis}{RGB}{0,152,161}
\definecolor{gruen}{RGB}{87,171,39}
\definecolor{maigruen}{RGB}{189,205,0}
\definecolor{gelb}{RGB}{255,237,0}
\definecolor{orange}{RGB}{255,128,0}
\definecolor{magenta}{RGB}{227,0,102}
\definecolor{rot}{RGB}{204,7,30}
\definecolor{bordeaux}{RGB}{161,16,53}
\definecolor{violett}{RGB}{97,33,88}
\definecolor{lila}{RGB}{122,111,172}
\definecolor{grey}{gray}{0.7}
\definecolor{mittelblau}{RGB}{0,128,255}
\definecolor{rosa}{RGB}{255,153,204}
\newcommand{\red}{\color{rot}}
\newcommand{\bigmid}{\;\big|\;}
\newcommand{\Bigmid}{\;\Big|\;}
\renewcommand{\mathbf}[1]{\textit{\bfseries #1}}
\renewcommand{\hat}{\widehat}
\newcommand{\angles}[1]{\left\langle#1\right\rangle}
\newcommand{\slashes}[1]{{\backslash#1/}}
\renewcommand{\phi}{\varphi}
\renewcommand{\epsilon}{\varepsilon}
\newcommand{\CM}{{\mathcal M}}
\newcommand{\CN}{{\mathcal N}}
\newcommand{\CP}{{\mathcal P}}
\newcommand{\CT}{{\mathcal T}}
\newcommand{\CY}{{\mathcal Y}}
\newcommand{\CZ}{{\mathcal Z}}
\newcounter{rbcounter}
\newcommand{\ord}{\operatorname{ord}}
\newcommand{\torso}[2]{#1\llbracket#2\rrbracket}
\newcommand{\CMT}{\CT_{\min}}
\newcommand{\CNDT}{\CT_{\textup{nd}}}
\newcommand{\CMS}{{\mathcal{ST}}_{\min}}
\newcommand{\CNDS}{{\mathcal{ST}}_{\textup{nd}}}
\newcommand{\CHT}{\hat{\mathcal T}}
\newcommand{\EC}{E^\times}
\newcommand{\ECND}{\EC_{\textup{nd}}}
\newcommand{\THT}{\mathit{TH}_{+3}}
\newcommand{\THF}{\mathit{TH}_{+4}}
\newcommand{\TRT}{\mathit{TR}_{+3}}
\newcommand{\fc}{\operatorname{fc}}
\newcommand{\sep}{\operatorname{sep}}
\newcommand{\Sep}{\operatorname{Sep}}
\begin{document}
\title{Quasi-4-Connected Components}
\author{Martin Grohe\\\normalsize RWTH Aachen
  University\\\normalsize\texttt{grohe@informatik.rwth-aachen.de}}
\date{}
\maketitle

\begin{abstract}
  We introduce a new decomposition of a graphs into
  \emph{quasi-4-connected} components, where we call a graph
  \emph{quasi-4-connected} if it is 3-connected and it only has separations of order $3$ that
    remove a single vertex. Moreover, we give a cubic time algorithm
  computing the decomposition of a given graph.

  Our decomposition into quasi-4-connected components refines the
  well-known decompositions of graphs into biconnected and
  triconnected components. We relate our decomposition to Robertson
  and Seymour's theory of tangles by establishing a correspondence
  between the quasi-4-connected components of a graph and its tangles
  of order $4$.
\end{abstract}

\section{Introduction}

Decompositions of graphs into their connected, biconnected and
triconnected components are fundamental in structural graph theory,
and they also belong to the basic toolbox of algorithmic graph
theory. The existence of such decompositions goes back to work of
MacLane~\cite{mac37} from the 1930s (also see Tutte~\cite{tut84}). In
the 1970s, Hopcroft and Tarjan~\cite{hoptar73a,tar72} showed that the
decompositions can be computed in linear time.

In modern terms, the decompositions into biconnected and triconnected
components are best described as tree decompositions. To state the
decomposition theorems and also our main results, a few technical
definitions are unavoidable. Recall that a
\emph{tree decomposition} of a graph $G$ is a pair $(T,\beta)$, where
$T$ is a tree and $\beta$ a mapping that associates a set
$\beta(t)\subseteq  V(G)$, called the \emph{bag} at $t$, with every
node $t$ of the tree $T$. The \emph{adhesion} of the decomposition is
the maximum of the sizes $|\beta(t)\cap\beta(u)|$ for tree edges $tu$,
which intuitively is the order of the separations of the
decomposition. Now the decomposition into biconnected
components can be phrased as follows: every graph $G$ has a tree
decomposition $(T,\beta)$ of adhesion at most $1$ such that for all tree nodes $t$ the induced
subgraph $G[\beta(t)]$ is either 2-connected or a complete graph of
order at most $2$. %
The decomposition into triconnected components is more complicated,
mainly because the triconnected components of a graph are no longer
induced subgraphs, but just topological subgraphs. We say that the
\emph{torso} of a set $X\subseteq V(G)$ of vertices of a graph $G$ is
the graph $\torso GX$ obtained from the induced subgraph $G[X]$ by
adding edges $vw$ for all distinct $v,w\in X$ such that there is a
connected component $C$ of $G\setminus X$ with $v,w\in N(C)$, the
neighbourhood of $C$ in $G$. For example, the torso of the set
$X=\{x_1,\ldots,x_4\}$ in the graph $G$ shown in Figure~\ref{fig:3conn}(a)
is the complete graph on $X$. Now the decomposition into triconnected
components can be phrased as follows: every graph $G$ has a tree
decomposition $(T,\beta)$ of adhesion at most $2$ such that for all tree nodes $t$ the torso
$\torso G{\beta(t)}$ is a topological subgraph of $G$ that is either 3-connected or a complete graph of
order at most $3$.

\begin{figure}[h]
  \centering
  \begin{tikzpicture}
  [
  vertex/.style={draw,circle,fill=black,inner sep=0mm,minimum
    size=2mm},
  tn/.style={draw,inner sep=3pt},
  every edge/.style={draw,thick}
  ]

  \begin{scope}
    \path
    (0,0) node[vertex] (x1) {} node[below] {$x_1$}
    (90:1cm) node[vertex] (x2) {}  node[above] {$x_2$}
    (210:1cm) node[vertex] (x3) {}  node[left] {$x_3$}
    (330:1cm) node[vertex] (x4) {}  node[right] {$x_4$}
    (150:1.2cm) node[vertex] (y1) {}  node[above] {$y_1$}
    (270:1.2cm) node[vertex] (y2) {}  node[below] {$y_2$}
    (30:1.2cm) node[vertex] (y3) {}  node[above] {$y_3$}
  ;

    \path 
    (x1) edge (x2) edge (x3) edge (x4)
    (y1) edge (x2) edge (x3) 
    (y2) edge (x3) edge (x4) 
    (y3) edge (x4) edge (x2) 
    ;

    \path (0,-2) node {(a)};
    \end{scope}

    \begin{scope}[xshift=6cm]
      \path
      (0,-0.1) node[tn] (s) {$\{x_1,x_2,x_3,x_4\}$}
      (-1.6,1) node[tn] (t1) {$\{x_2,x_3,y_1\}$}
      (0,-1.2) node[tn] (t2) {$\{x_3,x_4,y_2\}$}
      (1.6,1) node[tn] (t3) {$\{x_4,x_1,y_3\}$}
      ;

      \path (s) edge (t1) edge (t2) edge (t3);

    \path (0,-2) node {(b)};
    \end{scope}
    
\end{tikzpicture}
  \caption{A graph and its decomposition into triconnected components}
  \label{fig:3conn}
\end{figure}
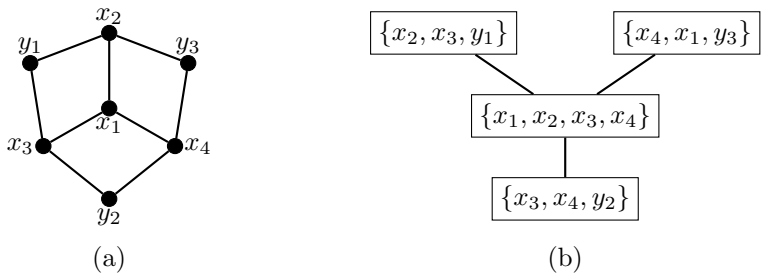

How about decompositions into 4-connected components, or
$k$-connected components for $k\ge4$? At least in the clean form of the
above decomposition theorems, they simply do not exist. Consider, for
example, a hexagonal grid (see Figure~\ref{fig:hexgrids}). Even though
the grid is not 4-connected, and it does not even have a nontrivial 4-connected
subgraph, there is no good way of decomposing it in a tree like
fashion by separations of order $3$. However, the only separations of the
grid of order $3$ are those splitting off a single vertex. If we ignore
such separations, we may view the whole grid as one highly connected
region. Let us call a graph $G$ \emph{quasi-4-connected} if it is
3-connected and for all separations $(Y,S,Z)$ of order $3$
(that is, $|S|=3$ and $Y,S,Z$ form a partition of $V(G)$ and there are no edges
between $Y$ and $Z$), either $|Y|\le 1$ or $|Z|\le 1$. Surprisingly,
with this mild relaxation of 4-connectivity we get a nice
decomposition theorem along the lines of the decompositions into
biconnected and triconnected components.

\begin{figure}
  \centering
  \begin{tikzpicture}
  [
  current point is local=true,
  line width=0.3mm,
  every node/.style={draw,circle,fill=black,inner sep=0mm,minimum
     size=1.2mm},
  every edge/.style={draw},
  scale=0.7,
  ]

  \newcommand{\grd}{1.0392cm}

  \draw ++(90:6mm) {
    +(90:6mm) node {} -- 
    +(150:6mm) node {} -- 
    +(210:6mm) node {} -- 
    +(270:6mm) node {} -- 
    +(330:6mm) node {} -- 
    +(30:6mm) node {} -- 
    cycle
  }
  ++(240:\grd) {
    +(90:6mm) node {} -- 
    +(150:6mm) node {} -- 
    +(210:6mm) node {} -- 
    +(270:6mm) node {} -- 
    +(330:6mm) node {} -- 
    +(30:6mm) node {} -- 
    cycle
  }
  ++(0:\grd) {
    +(90:6mm) node {} -- 
    +(150:6mm) node {} -- 
    +(210:6mm) node {} -- 
    +(270:6mm) node {} -- 
    +(330:6mm) node {} -- 
    +(30:6mm) node {} -- 
    cycle
  }
  ++(60:\grd) { [rounded corners]
    +(90:6mm) node {} -- 
    +(150:6mm) node {} -- 
    +(210:6mm) node {} -- 
    +(270:6mm) node {} -- 
    +(330:6mm) node {} -- 
    +(30:6mm) -- 
    cycle
  }
  ++(120:\grd) { [rounded corners]
    +(90:6mm) --
    +(150:6mm) node {} -- 
    +(210:6mm) node {} -- 
    +(270:6mm) node {} -- 
    +(330:6mm) node {} -- 
    +(30:6mm) --
    cycle
  }
  ++(180:\grd) { [rounded corners]
    +(90:6mm) -- 
    +(150:6mm) -- 
    +(210:6mm) node {} -- 
    +(270:6mm) node {} -- 
    +(330:6mm) node {} -- 
    +(30:6mm) node {} -- 
    cycle
  }
  ++(240:\grd) { [rounded corners]
    +(90:6mm) node {} -- 
    +(150:6mm) -- 
    +(210:6mm) node {} -- 
    +(270:6mm) node {} -- 
    +(330:6mm) node {} -- 
    +(30:6mm) node {} -- 
    cycle
  }
  ++(240:\grd) { [rounded corners]
    +(90:6mm) node {} -- 
    +(150:6mm) -- 
    +(210:6mm) -- 
    +(270:6mm) node {} -- 
    +(330:6mm) node {} -- 
    +(30:6mm) node {} -- 
    cycle
  }
  ++(300:\grd) { [rounded corners]
    +(90:6mm) node {} -- 
    +(150:6mm) node {} -- 
    +(210:6mm) -- 
    +(270:6mm) -- 
    +(330:6mm) node {} -- 
    +(30:6mm) node {} -- 
    cycle
  }
  ++(0:\grd) { [rounded corners]
    +(90:6mm) node {} -- 
    +(150:6mm) node {} -- 
    +(210:6mm) node {} -- 
    +(270:6mm) -- 
    +(330:6mm) node {} -- 
    +(30:6mm) node {} -- 
    cycle
  }
  ++(0:\grd) { [rounded corners]
    +(90:6mm) node {} -- 
    +(150:6mm) node {} -- 
    +(210:6mm) node {} -- 
    +(270:6mm) -- 
    +(330:6mm) -- 
    +(30:6mm) node {} -- 
    cycle
  }
  ++(60:\grd) { [rounded corners]
    +(90:6mm) node {} -- 
    +(150:6mm) node {} -- 
    +(210:6mm) node {} -- 
    +(270:6mm) node {} -- 
    +(330:6mm) -- 
    +(30:6mm) -- 
    cycle
  }
  ;
 
\end{tikzpicture}
  \begin{tikzpicture}
  [
  current point is local=true,
  line width=0.3mm,
  every node/.style={draw,circle,fill=black,inner sep=0mm,minimum
     size=1.2mm},
  every edge/.style={draw},
  scale=0.7,
  ]

  \newcommand{\grd}{1.0392cm}

  \draw ++(90:6mm) {
    +(90:6mm) node {} -- 
    +(150:6mm) node {} -- 
    +(210:6mm) node {} -- 
    +(270:6mm) node {} -- 
    +(330:6mm) node {} -- 
    +(30:6mm) node {} -- 
    cycle
  }
  ++(240:\grd) {
    +(90:6mm) node {} -- 
    +(150:6mm) node {} -- 
    +(210:6mm) node {} -- 
    +(270:6mm) node {} -- 
    +(330:6mm) node {} -- 
    +(30:6mm) node {} -- 
    cycle
  }
  ++(0:\grd) {
    +(90:6mm) node {} -- 
    +(150:6mm) node {} -- 
    +(210:6mm) node {} -- 
    +(270:6mm) node {} -- 
    +(330:6mm) node {} -- 
    +(30:6mm) node {} -- 
    cycle
  }
  ++(60:\grd) {
    +(90:6mm) node {} -- 
    +(150:6mm) node {} -- 
    +(210:6mm) node {} -- 
    +(270:6mm) node {} -- 
    +(330:6mm) node {} -- 
    +(30:6mm) node {} -- 
    cycle
  }
  ++(120:\grd) {
    +(90:6mm) node {} -- 
    +(150:6mm) node {} -- 
    +(210:6mm) node {} -- 
    +(270:6mm) node {} -- 
    +(330:6mm) node {} -- 
    +(30:6mm) node {} -- 
    cycle
  }
  ++(180:\grd) {
    +(90:6mm) node {} -- 
    +(150:6mm) node {} -- 
    +(210:6mm) node {} -- 
    +(270:6mm) node {} -- 
    +(330:6mm) node {} -- 
    +(30:6mm) node {} -- 
    cycle
  }
  ++(240:\grd) {
    +(90:6mm) node {} -- 
    +(150:6mm) node {} -- 
    +(210:6mm) node {} -- 
    +(270:6mm) node {} -- 
    +(330:6mm) node {} -- 
    +(30:6mm) node {} -- 
    cycle
  }
  ++(240:\grd) {
    +(90:6mm) node {} -- 
    +(150:6mm) node {} -- 
    +(210:6mm) node {} -- 
    +(270:6mm) node {} -- 
    +(330:6mm) node {} -- 
    +(30:6mm) node {} -- 
    cycle
  }
  ++(300:\grd) {
    +(90:6mm) node {} -- 
    +(150:6mm) node {} -- 
    +(210:6mm) node {} -- 
    +(270:6mm) node {} -- 
    +(330:6mm) node {} -- 
    +(30:6mm) node {} -- 
    cycle
  }
  ++(0:\grd) {
    +(90:6mm) node {} -- 
    +(150:6mm) node {} -- 
    +(210:6mm) node {} -- 
    +(270:6mm) node {} -- 
    +(330:6mm) node {} -- 
    +(30:6mm) node {} -- 
    cycle
  }
  ++(0:\grd) {
    +(90:6mm) node {} -- 
    +(150:6mm) node {} -- 
    +(210:6mm) node {} -- 
    +(270:6mm) node {} -- 
    +(330:6mm) node {} -- 
    +(30:6mm) node {} -- 
    cycle
  }
  ++(60:\grd) {
    +(90:6mm) node {} -- 
    +(150:6mm) node {} -- 
    +(210:6mm) node {} -- 
    +(270:6mm) node {} -- 
    +(330:6mm) node {} -- 
    +(30:6mm) node {} -- 
    cycle
  }
  ++(60:\grd) {  [rounded corners] 
    +(90:6mm) node {} -- 
    +(150:6mm) node {} -- 
    +(210:6mm) node {} -- 
    +(270:6mm) node {} -- 
    +(330:6mm) node {} -- 
    +(30:6mm) -- 
    cycle
  }
  ++(120:\grd) {  [rounded corners] 
    +(90:6mm) node {} -- 
    +(150:6mm) node {} -- 
    +(210:6mm) node {} -- 
    +(270:6mm) node {} -- 
    +(330:6mm) node {} -- 
    +(30:6mm) -- 
    cycle
  }
  ++(120:\grd) {  [rounded corners] 
    +(90:6mm) -- 
    +(150:6mm) node {} -- 
    +(210:6mm) node {} -- 
    +(270:6mm) node {} -- 
    +(330:6mm) node {} -- 
    +(30:6mm) -- 
    cycle
  }
  ++(180:\grd) {  [rounded corners] 
    +(90:6mm) -- 
    +(150:6mm) node {} -- 
    +(210:6mm) node {} -- 
    +(270:6mm) node {} -- 
    +(330:6mm) node {} -- 
    +(30:6mm) node {} -- 
    cycle
  }
  ++(180:\grd) {  [rounded corners] 
    +(90:6mm) -- 
    +(150:6mm) -- 
    +(210:6mm) node {} -- 
    +(270:6mm) node {} -- 
    +(330:6mm) node {} -- 
    +(30:6mm) node {} -- 
    cycle
  }
  ++(240:\grd) {  [rounded corners] 
    +(90:6mm) node {} -- 
    +(150:6mm) -- 
    +(210:6mm) node {} -- 
    +(270:6mm) node {} -- 
    +(330:6mm) node {} -- 
    +(30:6mm) node {} -- 
    cycle
  }
  ++(240:\grd) {  [rounded corners] 
    +(90:6mm) node {} -- 
    +(150:6mm) -- 
    +(210:6mm) node {} -- 
    +(270:6mm) node {} -- 
    +(330:6mm) node {} -- 
    +(30:6mm) node {} -- 
    cycle
  }
  ++(240:\grd) {  [rounded corners] 
    +(90:6mm) node {} -- 
    +(150:6mm) -- 
    +(210:6mm) -- 
    +(270:6mm) node {} -- 
    +(330:6mm) node {} -- 
    +(30:6mm) node {} -- 
    cycle
  }
  ++(300:\grd) {  [rounded corners] 
    +(90:6mm) node {} -- 
    +(150:6mm) node {} -- 
    +(210:6mm) -- 
    +(270:6mm) node {} -- 
    +(330:6mm) node {} -- 
    +(30:6mm) node {} -- 
    cycle
  }
  ++(300:\grd) {  [rounded corners] 
    +(90:6mm) node {} -- 
    +(150:6mm) node {} -- 
    +(210:6mm) -- 
    +(270:6mm) -- 
    +(330:6mm) node {} -- 
    +(30:6mm) node {} -- 
    cycle
  }
  ++(0:\grd) {  [rounded corners]
    +(90:6mm) node {} -- 
    +(150:6mm) node {} -- 
    +(210:6mm) node {} -- 
    +(270:6mm) -- 
    +(330:6mm) node {} -- 
    +(30:6mm) node {} -- 
    cycle
  }
  ++(0:\grd) {  [rounded corners] 
    +(90:6mm) node {} -- 
    +(150:6mm) node {} -- 
    +(210:6mm) node {} -- 
    +(270:6mm) -- 
    +(330:6mm) node {} -- 
    +(30:6mm) node {} -- 
    cycle
  }
  ++(0:\grd) {  [rounded corners] 
    +(90:6mm) node {} -- 
    +(150:6mm) node {} -- 
    +(210:6mm) node {} -- 
    +(270:6mm) -- 
    +(330:6mm) -- 
    +(30:6mm) node {} -- 
    cycle
  }
  ++(60:\grd) {  [rounded corners] 
    +(90:6mm) node {} -- 
    +(150:6mm) node {} -- 
    +(210:6mm) node {} -- 
    +(270:6mm) node {} -- 
    +(330:6mm) -- 
    +(30:6mm) node {} -- 
    cycle
  }
  ++(60:\grd) {  [rounded corners] 
    +(90:6mm) node {} -- 
    +(150:6mm) node {} -- 
    +(210:6mm) node {} -- 
    +(270:6mm) node {} -- 
    +(330:6mm) -- 
    +(30:6mm) -- 
    cycle
  }
  ;
 
\end{tikzpicture}
  \caption{Hexagonal grids of radius 2 and 3}
  \label{fig:hexgrids}
\end{figure}
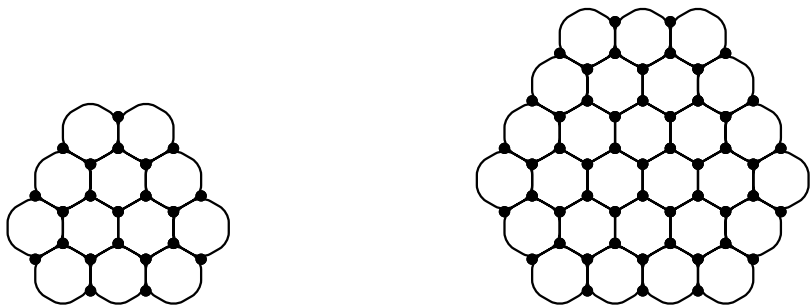

\begin{theo}[Decomposition Theorem]\label{theo:dec}
  Every graph $G$ has a tree decomposition $(T,\beta)$ of adhesion at
  most $3$ such that for all tree nodes $t$ the torso
  $\torso G{\beta(t)}$ is a minor of $G$ that is either
  quasi-4-connected or a complete graph of order at most $4$.

  Furthermore, this decomposition can be computed in cubic time.
\end{theo}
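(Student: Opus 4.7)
The plan is to refine the classical decomposition into triconnected components by subdividing each triconnected torso further along its \emph{essential} separations of order $3$---those $(Y,S,Z)$ with $|Y|,|Z|\ge 2$. The main structural step will be to extract, from the set of all such essential order-$3$ separations inside a fixed $3$-connected torso, a maximal nested (laminar) family $\CN$ of oriented separations. Once $\CN$ is in hand, a tree decomposition is obtained in the standard way: the nodes of $T$ are the ``regions'' carved out by $\CN$, the adhesion between adjacent regions is the corresponding separator in $\CN$, and each bag is the union of the internal vertices of its region with the adhesion sets bordering it. Applying this construction inside every triconnected component and then gluing along the existing order-$\le 2$ adhesions produces the desired global decomposition.

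For the nestedness step I would use Robertson--Seymour style uncrossing. Given two essential order-$3$ separations $(Y_1,S_1,Z_1)$ and $(Y_2,S_2,Z_2)$ that cross, submodularity of separation order produces two corner separations whose orders sum to at most $6$; in the generic case both can be taken to have order $\le 3$, and one corner pair can replace the crossing pair. Degenerate corners, where one side collapses to at most a single vertex, simply get discarded as non-essential. Iterating until no crossings remain yields a maximal nested family. Canonicity is then argued through the correspondence with tangles of order $4$ advertised in the abstract: distinct quasi-4-connected regions are exactly the ``homes'' of distinct order-$4$ tangles, and $\CN$ consists precisely of the separations of order $\le 3$ that distinguish these tangles.

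The subtlest point is verifying that each torso $\torso G{\beta(t)}$ is a \emph{minor} of $G$, not merely a topological subgraph. For virtual edges coming from adhesion sets of size $\le 2$ this is immediate: a single neighbouring component, together with its attachment vertices, contracts to the required edge. For an adhesion $S$ of size $3$ the torso may add a full triangle on $S$; here I would combine $3$-connectivity with the assumption $|Z|\ge 2$ to produce three internally disjoint paths from $S$ into a connected piece of $Z$, then aggregate those paths into three pairwise-adjacent branch sets (each containing one vertex of $S$) that realise $K_3$ on $S$ as a minor. Checking that these branch sets can be chosen side-by-side for all adhesion sets of the same bag without collision is the step I expect to be the main obstacle, since the branch sets for different virtual triangles share the same bag but live on different sides of its boundary separators; this is where nestedness of $\CN$ is genuinely needed.

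For the cubic-time bound I would first compute the biconnected and triconnected decompositions in linear time via Hopcroft--Tarjan, reducing to the $3$-connected case. Within a $3$-connected torso on $n$ vertices, essential order-$3$ separations can be enumerated through $O(n^2)$ minimum-cut computations (one per ordered vertex pair), each running in $O(n)$ time because the cut value is bounded by $3$, for a total of $O(n^3)$. Uncrossing, which replaces crossing pairs by their nested corners, can be driven by repeatedly scanning the current family and applying the submodular swap; with a suitable representation of separations this can be arranged so that the total number of uncrossing operations and the cost of building the tree $T$ remain within $O(n^3)$. Finally, certifying the minor property by producing explicit branch sets is a local operation per bag and does not affect the asymptotic running time.
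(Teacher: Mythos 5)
Your overall plan (refine each triconnected torso along a \emph{maximal} nested family $\CN$ of order-$3$ separations with both sides of size at least $2$) is a genuinely different route from the paper, which instead goes through tangles of order $4$, crossedge contraction and the Correspondence Theorem. But two of the mechanisms you rely on do not work as stated. First, the uncrossing step fails in exactly the relevant case. In a $3$-connected graph, two crossing non-degenerate $3$-separations necessarily have the configuration of the Crossing Lemma (Lemma~\ref{lem:cross}): $Y_1\cap Y_2=\emptyset$, $S_1\cap S_2=\emptyset$ and a crossedge; the corner on the $Z_1\cap Z_2$ side then has order $2+0+2=4$, so it is false that ``in the generic case both corners can be taken to have order $\le 3$''. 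The other diagonal does give two order-$3$ corners, but replacing the pair by them typically creates new crossings with other members, and the process can cycle: in the graph of Figure~\ref{fig:thex}, uncrossing the two corner separations that cut two vertices off one triangle recreates the original separation cutting off the whole triangle. So ``iterate until no crossings remain'' is not an argument; and even if you obtain some crossing-free family, what your construction would actually need (and what you never prove) is that $\CN$ is maximal among \emph{all} essential $3$-separations and that this maximality forces every torso $\torso G{\beta(t)}$ to be quasi-4-connected -- that extension-and-contradiction argument is entirely missing from the proposal. Second, the tangle-based description of $\CN$ is wrong: the separations of order $\le 3$ distinguishing distinct order-$4$ tangles can form the empty set (the hexagonal grid, or the graph of Figure~\ref{fig:thex}, has a unique tangle of order $4$), while the graph is far from quasi-4-connected, so that family cannot be the one you decompose along. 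This is precisely the point of the paper's related-work discussion: tangle-distinguishing decompositions do not have quasi-4-connected torsos, and the paper resolves the crossing structure by \emph{contracting} the crossedges (Sections~\ref{sec:cce}--\ref{sec:tangleregion}), which is also why its torsos are minors rather than subgraphs.

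Two further points. For the minor property, the obstruction is pinned down by Remark~\ref{rem:degsep}: a virtual triangle on an adhesion set $S$ cannot be realised when the component hanging off $S$ is a single vertex with $S$ independent; so what you must use is that the \emph{far} side of each member of $\CN$ has at least two vertices (essentiality), not ``$3$-connectivity plus $|Z|\ge 2$'' on the bag's side -- with that correction, and with the far sides of distinct adhesion sets of one bag being disjoint by nestedness, this step can indeed be repaired along the lines of Lemma~\ref{lem:degsep}. Finally, the algorithm is asserted rather than proved: $O(n^2)$ bounded min-cut computations produce some essential separations, but maximality of $\CN$ must hold against \emph{all} essential $3$-separations, and your uncrossing loop has neither a termination proof nor a complexity bound; the paper avoids this by computing, for a fixed tangle $\CT(Y_0,S_0,Z_0)$, the minimal separations $\CMT$ and the crossedges directly via leftmost minimum separations and split vertices (Lemmas~\ref{lem:q4dec1}--\ref{lem:q4dec3}).
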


There have been earlier attempts to generalise the decomposition of
graphs into triconnected components. The most prominent of these are
Robertson and Seymour's tangles~\cite{gm10}, which play an important
role in the structure theory for graphs with excluded
minors~\cite{gm16}. Intuitively, a tangle of order $k$ describes a
``$k$-connected region'' in a graph by ``pointing to it'', that is,
by assigning a direction to each separation of order less than $k$
in such a way that ``most'' of the region described by the tangle is
on the side the separation is directed towards. It is known that the tangles of
orders 1, 2, 3 are in one-to-one correspondence to the connected,
biconnected and triconnected of a graph \cite{gm10,gro16a}. We
establish a similar correspondence between the tangles of order $4$
and the quasi-4-connected components. This is our second main theorem,
which I think is interesting in its own right, but is also essential
for the proof of Theorem~\ref{theo:dec}. We defer the precise technical
statement of this Correspondence Theorem to the main part of the
paper (Theorem~\ref{theo:corr}). %

This paper grew out of my work on descriptive complexity theory for
graph classes with excluded minors~\cite{gro12,gro12+a}, and this may
also serve as an illustration of potential applications of our
Decomposition Theorem. Separations of order 3 play a special, but
somewhat annoying role in the main structure theorems for graph
classes with excluded minors such as the ``Flat Grid Theorem'' of
\cite{gm13} and the structure theorem of \cite{gm16}, and the theorems
simplify for quasi-4-connected graphs. In \cite{gro12+a}  I exploited some of the main ideas underlying
our Decomposition Theorem to obtain such simplifications in the
context of logical definability, and I believe the Decomposition
Theorem proved here may turn out to be similarly useful in an
algorithmic context.\footnote{Let
  me clarify the relation of this work to Chapter~10 of the
  forthcoming monograph~\cite{gro12+a}. The basic ideas are the same,
  and actually my original motivation for the present paper was to
  make these ideas accessible to readers not interested in logic. However, only when I started to work on this paper I noticed the
  connection to tangles, and it is this connection that provides the
  right framework and also makes the decomposition much
  simpler. On the other hand, the main goal of \cite{gro12+a} is to
  obtain a decomposition that is definable in fixed-point logic with
  counting, and the decomposition we obtain here is not. So, except
  for some of the basic lemmas in Section~\ref{sec:4t1}, the results
  are incomparable.}

\subsection{Related work}
 It was shown in \cite{gm10,cardiehar+13a} that for every $k$, every
  graph admits a canonical decomposition into its tangles of order
  $k$. Related to this is the decomposition into
  so-called $(k-1)$-blocks due to \cite{cardiehun+14}. These
  decompositions (for $k=4$) are related to ours. An important
  difference between these results and ours, or rather an additional
  feature of our decomposition, is that the pieces of our
  decomposition are quasi-4-connected graphs in their own right and
  can be dealt with separately (for example in an algorithmic
  context), whereas tangles of order $4$ or $3$-blocks are only
  defined within the surrounding graph. 

  On the algorithmic side, it was shown in \cite{groschwe15a} that the
  decomposition  into its tangles of order
  $k$ can be computed in time $n^{O(k)}$. I believe that our
  techniques can be used to improve this to cubic time for $k=4$.

  There is a different line of work on ``$k$-connected components''
  that, as far as I can see, is completely unrelated to ours. There,
  $k$-connected components are simply defined as maximal $k$-connected
  subgraphs (see, for example, \cite{mak88,nagwat93,nagiba08}). This
  leads to completely different decompositions. For example, a graph
  of maximum degree $3$ will only have trivial 4-connected components
  in this framework. However, what I see as the
  crucial difference between our form of decomposition and this line
  of work is that we get tree decompositions into independent parts with a small
  interface (technically, small adhesion). This is important for typical dynamic-programming or
  divide-and-conquer algorithms on the decomposition.

\section{Preliminaries}
\label{sec:prel}
We assume basic knowledge of graph theory and refer the reader to
\cite{die05} for background. Our notation is standard, let us just
review the most important and frequently used notations. All graphs
considered in this paper are finite and simple. The vertex set
and edge set of a graph $G$ are denoted by $V(G)$ and $E(G)$,
respectively. The \emph{order} of $G$ is $|G|:=|V(G)|$. For a set
$W\subseteq V(G)$, we denote the induced subgraph of $G$ with vertex
set $W$ by $G[W]$ and the induced subgraph with vertex set
$V(G)\setminus W$ by $G\setminus W$. 
For a vertex $v$, we denote the
set of neighbours of $v$ in $G$ by $N^G(v)$. In this and similar
notations, we omit the index~${}^G$ if $G$ is clear from the
context. For a set $W\subseteq V(G)$, we define 
$
N^G(W):=\Big(\bigcup_{v\in W}N^G(v)\Big)\setminus W,
$
and for a subgraph $H\subseteq G$ we let $N^G(H):=N^G(V(H))$. %

A minor of $G$ is a graph obtained from $G$ by deleting vertices
and edge and contracting edges. An \emph{model} of $H$ in $G$ consists of a family $(M_w)_{w\in
  V(H)}$ of mutually disjoint connected subsets of $V(G)$ and a family
$(e_f)_{f\in E(H)}$ of edges of $G$ such that for every edge $f=ww'$ of $H$ the edge
$e_f$ has one endvertex in $M_w$ and one endvertex in
$M_{w'}$. Then $H$ is a \emph{minor} of $G$ if and only if there is
a model of $H$ in $G$. We call the sets $M_w$, for $w\in V(H)$, the
\emph{branch sets} of the model $\CM$. When reasoning about a model, it is
often enough to know the branch sets. 

A \emph{faithful model} of $H$ in $G$ is a model $\big((M_w)_{w\in
  V(H)},(e_f)_{f\in E(H)}\big)$ such that $w\in M_w$ for all $w\in
V(H)$. We say that $H$ is a \emph{faithful minor} of $G$ if
$V(H)\subseteq V(G)$ and there is a faithful model of $H$ in $G$.

Separations of a graph $G$ are usually defined as pairs of subgraphs
(see the appendix). However, in this paper it will be
more convenient to view them as 
partitions of the vertex set. We say
that a \emph{separation} of $G$ is a triple $(Y,S,Z)$ of
(possibly empty) mutually disjoint subsets of $V(G)$ such that $Y\cup S\cup Z=V(G)$ and
there is no edge $vw\in E(G)$ such that $v\in Y$ and $z\in Z$. The
order of the separation $(Y,S,Z)$ is $|S|$, and the
separation is \emph{proper} if both $Y$ and $Z$ are nonempty. 
The set of all separations of $G$ is
denoted by $\Sep(G)$, and the subset of all separations of
order less than $k$ (at most $k$, exactly $k$) by $\Sep_{<k}(G)$
(resp.~$\Sep_{\le k}(G)$, $\Sep_{= k}(G)$). 

A set $S\subseteq V(G)$ is a \emph{separator} of $G$ of \emph{order}
$k:=|S|$, or a \emph{$k$-separator}, if there are two vertices $v,w\in
V(G)\setminus S$ such that there is a path from $v$ to $w$ in $G$, but
no path from $v$ to $w$ in $G\setminus S$. Note that if $G$ is connected then $S$ is
a separator if and only if there is a proper separation
$(Y,S,Z)$ of $G$.

A graph $G$ is \emph{$k$-connected} if $|G|>k$ and $G$ has no proper
$(k-1)$-separation. 
 
A subset $X\subseteq V(G)$ of the vertex set of a graph $G$ is
\emph{$k$-inseparable} if $|X|>k$ and there is no separation $(Y,S,Z)$
of $G$ of order at most $k$ such that $X\cap Y\neq\emptyset$ and
$X\cap Z\neq\emptyset$.

\section{Tangles}

Let $G$ be a graph. Deviating from Robertson and Seymour's \cite{gm10} original
definition, we define tangles as families of separations of the vertex
set (as we defined them in Section~\ref{sec:prel}) rather
than separations viewed pairs of graphs or partitions of the edge
set. (In the appendix, we show that the two notions are
equivalent.)
A \emph{$G$-tangle} of order $k$ is a family $\CT\subseteq\Sep_{<k}(G)$ of separations of
$G$ of order less than $k$ satisfying the following conditions.
\begin{nlist}{T}
\item\label{li:t1}
  For all separations $(Y,S,Z)\in\Sep_{<k}(G)$ either
  $(Y,S,Z)\in\CT$ or $(Z,Y,S)\in\CT$.
\item\label{li:t2}
  If $(Y_1,S_1,Z_1), (Y_2,S_2,Z_2), (Y_3,S_3,Z_3)\in\CT$ then either
  $Z_1\cap Z_2\cap Z_3\neq\emptyset$ or there is an edge $e\in E(G)$
  that has an endvertex in each $Z_i$.
\item\label{li:t3}
  $Z\neq\emptyset$ for all $(Y,S,Z)\in\CT$.
\end{nlist}
In the following, we collect a few basic facts about tangles. For more
background and examples, I refer the reader to
\cite{gm10,gro16a}. 

\subsection{Basic Facts}
For $(Y,S,Z),(Y',S',Z')\in \Sep(G)$, we let 
\begin{align*}
  (Y,S,Z)\cap(Y',S',Z')&:=(Y\cup Y',(S\cap Z')\cup (S\cap
                           S')\cup(Z\cap S'),Z\cap Z'),\\
  (Y,S,Z)\cup(Y',S',Z')&:=(Y\cap Y',(S\cap Y')\cup (S\cap
                           S')\cup(Y\cap S'),Z\cup Z')
\end{align*}
(see Figure~\ref{fig:int} for an illustration). Note that both $ (Y,S,Z)\cap(Y',S',Z')$ and $(Y,S,Z)\cup(Y',S',Z')$
are separations of $G$.

\begin{figure}
  \centering
  \begin{tikzpicture}
  \begin{scope}
        \fill[black!10] (-1.75,-1.75) rectangle (-0.25,1.75);
        \fill[black!10] (-1.75,-0.25) rectangle (1.75,1.75);
        \fill[black!20] (-0.25,-1.75) rectangle (0.25,1.75);
        \fill[black!20] (-1.75,-0.25) rectangle (1.75,0.25);
        
      \draw[thick] (-1.75,-1.75) rectangle (1.75,1.75)
                   (-0.25,-1.75) rectangle (0.25,1.75) 
                   (-1.75,-0.25) rectangle (1.75,0.25);
      
      \path (-2,1) node {$Z$}             
            (-2,0) node {$S$}        
            (-2,-1) node {$Y$}  
            (-1,2) node {$Z'$}
            (0,2) node {$S'$}
            (1,2) node {$Y'$}
      ;

      \path (0,-2.3) node {(a) $(Y,S,Z)$ and $(Y',S',Z')$};
      \end{scope}

   \begin{scope}[xshift=4.5cm]
        \fill[black!10] (-1.75,0.25) rectangle (-0.25,1.75);
        \fill[black!20] (-0.25,0.25) rectangle (0.25,1.75);
        \fill[black!20] (-1.75,-0.25) rectangle (0.25,0.25);
        
      \draw[thick] (-1.75,-1.75) rectangle (1.75,1.75)
                   (-0.25,-1.75) rectangle (0.25,1.75) 
                   (-1.75,-0.25) rectangle (1.75,0.25);
      
      \path (-2,1) node {$Z$}             
            (-2,0) node {$S$}        
            (-2,-1) node {$Y$}  
            (-1,2) node {$Z'$}
            (0,2) node {$S'$}
            (1,2) node {$Y'$}
      ;

      \path (0,-2.3) node {(b) $(Y,S,Z)\cap(Y',S',Z')$};
      \end{scope}

  \begin{scope}[xshift=9cm]
        \fill[black!10] (-1.75,-1.75) rectangle (-0.25,1.75);
        \fill[black!10] (-1.75,-0.25) rectangle (1.75,1.75);
        \fill[black!20] (-0.25,-1.75) rectangle (0.25,0.25);
        \fill[black!20] (0.25,-0.25) rectangle (1.75,0.25);
         
      \draw[thick] (-1.75,-1.75) rectangle (1.75,1.75)
                   (-0.25,-1.75) rectangle (0.25,1.75) 
                   (-1.75,-0.25) rectangle (1.75,0.25);
      
      \path (-2,1) node {$Z$}             
            (-2,0) node {$S$}        
            (-2,-1) node {$Y$}  
            (-1,2) node {$Z'$}
            (0,2) node {$S'$}
            (1,2) node {$Y'$}
      ;

      \path (0,-2.3) node {(c) $(Y,S,Z)\cup(Y',S',Z')$};
      \end{scope}

\end{tikzpicture}
  \caption{Intersection and union of separations}
  \label{fig:int}
\end{figure}
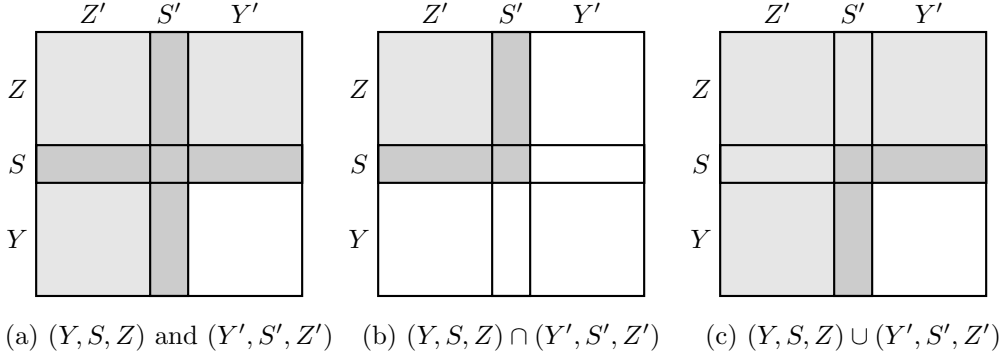

\begin{lem}[\cite{gm10}]\label{lem:tangle-closure}
  Let $G$ be a graph and $\CT$ a $G$-tangle of order $k$.
  \begin{enumerate}
  \item If $(X,Y,Z)\in\Sep(G)$ with $|Y\cup S|<k$ then $(Y,S,Z)\in\CT$. 
  \item  If $(Y,S,Z)\in\CT$ and $(Y',S',Z')\in\Sep_{<k}(G)$ such that
    $Z\subseteq Z'$ then $(Y',S',Z')\in\CT$.
  \item  If $(Y,S,Z),(Y',S',Z')\in\CT$ such that $(Y,S,Z)\cap
    (Y',S',Z')\in\Sep_{<k}(G)$
    then $(Y,S,Z)\cap (Y',S',Z') \in\CT$.
  \end{enumerate}
\end{lem}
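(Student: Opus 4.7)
The plan is to prove all three parts by a uniform strategy: assume the asserted tangle member is \emph{not} in $\CT$, use axiom (T.1) to flip its orientation into $\CT$, and then derive a contradiction from (T.2) by supplementing the flipped separation with one or two further members of $\CT$. The underlying geometric fact that drives every contradiction is that no separation $(A,B,C)$ admits edges between $A$ and $C$; once the triple intersection of the ``big sides'' chosen for (T.2) is shown to be empty, this also rules out the edge alternative of (T.2).

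For part~(1), I would first note that the separation $(\emptyset, Y\cup S, Z)$ has order $|Y\cup S|<k$ and lies in $\CT$: its reverse $(Z, Y\cup S, \emptyset)$ violates (T.3), so (T.1) forces the desired orientation into $\CT$ (assuming $Z\neq\emptyset$, since otherwise the conclusion of the lemma simply fails and this degenerate case must be excluded). Assuming now $(Y,S,Z)\notin\CT$, axiom (T.1) places $(Z,S,Y)\in\CT$, and I apply (T.2) to the triple $(Z,S,Y)$, $(Z,S,Y)$, $(\emptyset, Y\cup S, Z)$. The corresponding big sides are $Y,Y,Z$ with empty triple intersection $Y\cap Z=\emptyset$, so (T.2) demands an edge with an endpoint in $Y$ and one in $Z$, contradicting $(Y,S,Z)$.

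For part~(2), I assume $(Y',S',Z')\notin\CT$, use (T.1) to get $(Z',S',Y')\in\CT$, and then apply (T.2) to $(Y,S,Z)$ together with two copies of $(Z',S',Y')$. The big sides are $Z,Y',Y'$; since $Z\subseteq Z'$ and $Z'\cap Y'=\emptyset$, the intersection $Z\cap Y'$ is empty, and any edge between $Z$ and $Y'$ would, through $Z\subseteq Z'$, be an edge between $Z'$ and $Y'$, contradicting $(Y',S',Z')$.

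Part~(3) is the main obstacle, as the argument must draw on both original separations at once. Writing $(Y'',S'',Z'')$ for the intersection, so that $Z''=Z\cap Z'$ and $Y''=Y\cup Y'$, I assume it is not in $\CT$ and apply (T.2) to $(Y,S,Z)$, $(Y',S',Z')$, $(Z'',S'',Y'')$. The big sides $Z,Z',Y''$ satisfy $Z\cap Z'\cap Y''=Z''\cap Y''=\emptyset$. The real work is a short case analysis on an alleged ``covering edge'': either one of its endpoints lies in $Z\cap Z'$, in which case the other endpoint, forced into $Y''=Y\cup Y'$, creates a forbidden edge across $(Y,S,Z)$ or $(Y',S',Z')$; or the two endpoints split between $Z\setminus Z'$ and $Z'\setminus Z$, and whichever of them also lies in $Y''$ again yields a forbidden edge for one of the two original separations. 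In either case (T.2) is violated, and the contradiction establishes $(Y'',S'',Z'')\in\CT$.
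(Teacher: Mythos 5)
Your proof is correct. The paper gives no proof of this lemma at all (it is imported from Robertson and Seymour \cite{gm10}), so there is nothing to compare against; your argument is the standard one: flip the alleged non-member with \ref{li:t1} and contradict \ref{li:t2}, using the auxiliary separation $(\emptyset,Y\cup S,Z)$ (which lies in $\CT$ by \ref{li:t1} and \ref{li:t3}) in part (1), two copies of the reversed separation in part (2), and the reversed intersection in part (3); the two-case analysis you sketch for (3) does cover all possibilities for a covering edge. One small correction to part (1): the case $Z=\emptyset$ is not a failure of the lemma that has to be excluded. If $Z=\emptyset$ then $|V(G)|=|Y\cup S|<k$, and then the separation $(\emptyset,V(G),\emptyset)$ together with \ref{li:t1} and \ref{li:t3} shows that $G$ has no tangle of order $k$ at all, so the hypothesis is void and the statement holds vacuously; a complete write-up should say this rather than restrict the claim.
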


\begin{cor}\label{cor:tangle-closure}
  Let $G$ be a graph and $\CT$ a $G$-tangle of order $k$. Let
  $(Y,S,Z),(Y',S',Z')\in\CT$. Then $|(S\cup Z)\cap(S'\cup Z')|\ge k$.
\end{cor}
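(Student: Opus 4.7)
The key set-theoretic observation is
\[
(S\cup Z)\cap(S'\cup Z')=V(G)\setminus(Y\cup Y'),
\]
which holds because $V(G)\setminus Y=S\cup Z$ for any separation $(Y,S,Z)$. So the set whose size I want to lower-bound is precisely the ``separator'' of the trivial separation that isolates $Y\cup Y'$ on one side. My strategy is to argue by contradiction: assume $|(S\cup Z)\cap(S'\cup Z')|<k$ and then violate one of the tangle axioms.

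Write $U:=(S\cup Z)\cap(S'\cup Z')$. Under the contradictory assumption, both $(\emptyset,U,Y\cup Y')$ and its reverse $(Y\cup Y',U,\emptyset)$ lie in $\Sep_{<k}(G)$. The reverse has empty third component, so by \ref{li:t3} it cannot lie in $\CT$; hence by \ref{li:t1} the separation $(\emptyset,U,Y\cup Y')$ does lie in $\CT$. (Alternatively, Lemma~\ref{lem:tangle-closure}(1) gives this directly.)

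Now I apply \ref{li:t2} to the three $\CT$-members $(Y,S,Z)$, $(Y',S',Z')$, and $(\emptyset,U,Y\cup Y')$, whose third components are $Z$, $Z'$, and $Y\cup Y'$ respectively. Their triple intersection $Z\cap Z'\cap(Y\cup Y')$ is empty since $Z\cap Y=\emptyset$ and $Z'\cap Y'=\emptyset$. Therefore \ref{li:t2} must supply an edge $e=uv$ with an endpoint in each of $Z$, $Z'$, and $Y\cup Y'$.

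Ruling out such an edge is the only step requiring genuine care, and it is a short pigeonhole argument. Since $e$ has only two endpoints, one of them--say $u$--lies in the intersection of two of the three sets; the three possibilities are $u\in Z\cap Z'$ (forcing $v\in Y\cup Y'$), $u\in Z\cap Y'$ (forcing $v\in Z'$), and $u\in Z'\cap Y$ (forcing $v\in Z$). In every case $u$ and $v$ end up on opposite ``sides'' of either $(Y,S,Z)$ or $(Y',S',Z')$, contradicting the defining property that a separation admits no edge between its first and third parts. The principal obstacle is thus this final case analysis, which I have tried to organise via the identity in the first paragraph so that the auxiliary separation witnessing the contradiction presents itself naturally.
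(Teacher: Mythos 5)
Your proof is correct and is essentially the intended argument: the paper states the corollary without proof as an immediate consequence of Lemma~\ref{lem:tangle-closure}, and your derivation --- noting $(S\cup Z)\cap(S'\cup Z')=V(G)\setminus(Y\cup Y')$, placing the trivial separation $(\emptyset,U,Y\cup Y')$ in $\CT$, and then contradicting \ref{li:t2} via the pigeonhole case analysis on the edge --- is the standard way to fill it in. The only cosmetic difference from the canonical route is that you use the trivial separation with separator $U$ rather than the intersection $(Y,S,Z)\cap(Y',S',Z')$ together with Lemma~\ref{lem:tangle-closure}(1) and (3); both amount to the same application of \ref{li:t2}.
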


The following lemma slightly strengthens
Lemma~\ref{lem:tangle-closure}(1). 

\begin{lem}\label{lem:t2}
  Let $G$ be a graph and $\CT$ a $G$-tangle of order $k$. Then for all
  $(Y,S,Z)\in\Sep_{<k}(G)$, if $|Y\cup S|\le
  \frac{3}{2}\cdot(k-1)$ then $(Y,S,Z)\in\CT$.
\end{lem}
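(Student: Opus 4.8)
The plan is to derive this from Lemma~\ref{lem:tangle-closure}(1) and (3) by a single application of the tangle axiom \ref{li:t1} together with a counting argument. Suppose $(Y,S,Z)\in\Sep_{<k}(G)$ with $|Y\cup S|\le\frac32(k-1)$, and assume toward a contradiction that $(Y,S,Z)\notin\CT$. Then by \ref{li:t1} the ``reverse'' separation $(Z,S,Y)$ lies in $\CT$ (note $|S|<k$, so it is indeed in $\Sep_{<k}(G)$). The idea is now to intersect $(Z,S,Y)$ with itself in a way that produces a separation whose ``small side'' $Y\cup S$ is forced to be in $\CT$ by part~(1), contradicting axiom~\ref{li:t3} or the consistency of $\CT$.

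First I would make the role of $|Y\cup S|\le\frac32(k-1)$ precise. Write $|S|=s$ and $|Y|=y$, so $s+y\le\frac32(k-1)$ and $s\le k-1$. The separation $(Z,S,Y)$ certifies that the set $Y\cup S$ is ``enclosed'': everything the tangle points to lies in $Y$. I want to split $Y\cup S$ itself into two pieces by a separation of order $<k$. The natural candidate uses the fact that $|Y\cup S|\le\frac32(k-1)$: if I could find a set $S'\subseteq Y\cup S$ with $|S'|<k$ that separates $Y\cup S$ into two parts $A,B$ with $A\cup S'\cup B = Y\cup S$ (taking $Z$ on the outside), the two resulting separations $(Z\cup A, S', B)$ and $(Z\cup B, S', A)$ — or rather their appropriate orientations — would each be forced into $\CT$ by Lemma~\ref{lem:tangle-closure}(1) whenever the ``inside'' has size $<k$, and then applying part~(3) or axiom~\ref{li:t2} to $(Z,S,Y)$ and these would yield a separation in $\CT$ with empty last coordinate, contradicting \ref{li:t3}.

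Concretely, here is the cleanest route I would try. Since $|Y\cup S|\le\frac32(k-1)$, I can partition $Y\cup S$ as $S = S_1\cup S_2$ refined together with $Y = Y_1\cup Y_2$ — more simply, split the whole set $Y\cup S$ into three blocks $P_1, P_2, Q$ where $|P_1\cup Q|\le k-1$ and $|P_2\cup Q|\le k-1$; this is possible precisely because the total size is at most $\frac32(k-1)$ (take $Q$ of size about $\frac{k-1}{2}$ in the ``overlap'' and split the rest evenly). Set the three separations $(Y,S,Z)$, $\sigma_1 := (V(G)\setminus(P_1\cup Q),\, Q,\, P_1)$-type orientation, $\sigma_2 := $ the analogous one for $P_2$. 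By Lemma~\ref{lem:tangle-closure}(1), since $|P_i\cup Q|\le k-1 < k$, the orientation pointing \emph{away} from $P_i$ lies in $\CT$; that is, $(P_i, Q, V(G)\setminus(P_i\cup Q))\in\CT$ — wait, part~(1) forces the orientation whose first-plus-middle coordinate is small into $\CT$ with $Z$-side large, so I get $(\,\cdot\,,\cdot, V(G)\setminus(P_i\cup Q))\in\CT$ for $i=1,2$. Now apply axiom~\ref{li:t2} to these two separations together with $(Z,S,Y)\in\CT$: the three ``$Z$-sides'' are $V(G)\setminus(P_1\cup Q)$, $V(G)\setminus(P_2\cup Q)$, and $Y$. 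Their intersection is $Y\setminus(P_1\cup P_2\cup Q) = \emptyset$ since $P_1\cup P_2\cup Q = Y\cup S\supseteq Y$; and since each $P_i\cup Q\subseteq Y\cup S$, any edge with an endpoint in each of the three $Z$-sides would in particular avoid $S$ on one end while... one checks there can be no such edge because any edge meeting $Y$ is internal to $Y\cup S$ (as $(Z,S,Y)$ is a separation, there are no $Z$–$Y$ edges). This contradicts \ref{li:t2}.

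The main obstacle is getting the three-way partition of $Y\cup S$ into blocks $P_1,P_2,Q$ with the right size bounds \emph{and} verifying the no-common-edge condition in the final application of \ref{li:t2} — in particular handling the edges incident to $S$ carefully, since $S$ is distributed among the $P_i$ and $Q$ and one must ensure the chosen orientations of $\sigma_1,\sigma_2$ are genuinely the ones Lemma~\ref{lem:tangle-closure}(1) delivers. I would double-check the edge condition by noting that in $\sigma_i$, edges are only forbidden between $P_i$ and $V(G)\setminus(P_i\cup Q)$; since all of $Z$'s neighbours into $Y\cup S$ go through $S$, and $S\subseteq P_1\cup P_2\cup Q$, a case analysis on where an edge's endpoints land shows no edge can simultaneously hit all three $Z$-sides. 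If the arithmetic of the split is annoying, the fallback is to instead intersect $(Z,S,Y)$ with a single auxiliary separation and invoke Lemma~\ref{lem:tangle-closure}(3), iterating — but I expect the direct \ref{li:t2} argument above to go through, since $\frac32(k-1)$ is exactly the threshold at which a set can be cut into two pieces each of size $<k$ with a shared part of size $<k$.
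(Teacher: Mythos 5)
Your overall strategy --- assume $(Z,S,Y)\in\CT$ via \ref{li:t1} and contradict \ref{li:t2} using two auxiliary separations supplied by Lemma~\ref{lem:tangle-closure}(1) --- is the right one and is essentially the paper's, but the way you choose the auxiliary separations breaks the argument at exactly the step you yourself flag as ``the main obstacle''. There are two problems. The first is repairable: the triple $(P_i,Q,V(G)\setminus(P_i\cup Q))$ is in general not a separation of $G$ at all, since vertices of $S\cap P_i$ may have neighbours in $Z$ and vertices of $P_i$ may have neighbours in $P_{3-i}$; Lemma~\ref{lem:tangle-closure}(1) therefore does not apply to it as written. You should instead use $(\emptyset,\,P_i\cup Q,\,V(G)\setminus(P_i\cup Q))$, which is always a separation, satisfies the hypothesis of Lemma~\ref{lem:tangle-closure}(1), and has the same $Z$-side.

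The second problem is fatal to the argument as given: the no-common-edge verification fails for a partition of $Y\cup S$ into \emph{disjoint} blocks $P_1,P_2,Q$. Take any edge $xx'$ with $x\in P_1$, $x'\in P_2$ and $\{x,x'\}\cap Y\neq\emptyset$ --- for instance an edge of $G[Y]$ whose endpoints happen to land in different blocks. Then $x\in V(G)\setminus(P_2\cup Q)$, $x'\in V(G)\setminus(P_1\cup Q)$, and one endpoint lies in $Y$, so this single edge has an endvertex in each of your three $Z$-sides and \ref{li:t2} yields no contradiction. What the bound $|Y\cup S|\le\frac{3}{2}(k-1)$ actually buys is not a disjoint three-way split but three \emph{overlapping} subsets of $Y\cup S$, each of size $k-1$, such that every pair of vertices of $Y\cup S$ lies together in one of them (split $Y\cup S$ into three blocks of size at most $\frac{k-1}{2}$ and take the three pairwise unions). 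This pair-covering property is what kills the bad edges: an edge with an endpoint in the $Z$-side of the reversed separation has both endpoints in $Y\cup S$ but not both in the first set, hence both in the second or both in the third, so it misses one of the other two $Z$-sides. The paper's proof does exactly this: it first enlarges $S$ to $S_1\supseteq S$ with $|S_1|=k-1$, reducing to the separation $(Z,S_1,Y\setminus S_1)$ via Lemma~\ref{lem:tangle-closure}(2), then chooses $S_2,S_3\subseteq Y\cup S$ of size $k-1$ with the pair-covering property and applies \ref{li:t2} to $(Z,S_1,Y\setminus S_1)$, $(\emptyset,S_2,V(G)\setminus S_2)$ and $(\emptyset,S_3,V(G)\setminus S_3)$.
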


\begin{proof}
  Let $(Y,S,Z)\in\Sep_{<k}(G)$ such that $|Y\cup S|\le
  \frac{3}{2}\cdot(k-1)$. By Lemma~\ref{lem:tangle-closure}(1)
  we may assume that $|Y\cup S|\ge k$. 
  Let $S_1\subseteq Y\cup S$ such that $S\subseteq S_1$ and
  $|S_1|=k-1$, and let
  $Y_1:=Y\setminus S_1$. Then it suffices to prove
  $(Y_1,S_1,Z)\in\CT$, because this implies $(Y,S,Z)\in\CT$ by
  Lemma~\ref{lem:tangle-closure}(2).

  As $|Y\cup S|\le \frac{3}{2}\cdot(k-1)$, we can choose subsets
  $S_2,S_3\subseteq Y\cup S$ of cardinality $|S_i|=k-1$ such that for
  all $x,x'\in Y\cup S$ (not necessarily distinct) there is an $i$
  such that $x,x'\in S_i$.  Note that $Y_1\subseteq S_2\cup S_3$.  By
  Lemma~\ref{lem:tangle-closure}(1), we have
  $(\emptyset,S_i,V(G)\setminus S_i)\in \CT$.

  Suppose for contradiction that $(Z,S_1,Y_1)\in\CT$. We have
  $Y_1\cap (V(G)\setminus S_2)\cap (V(G)\setminus S_3)=Y_1\setminus
  (S_2\cup S_3)=\emptyset$.
  Furthermore, let $e=xx'\in E(G)$. If $e$ has an endvertex in $Y_1$
  then $x,x'\in Y_1\cup S_1=Y\cup S$ and not $x,x'\in S_1$. Thus either
  $x,x'\in S_2$ or $x,x'\in S_3$, and either $e$ has no endvertex in
  $V(G)\setminus S_2$ or no endvertex in $V(G)\setminus S_3$. This
  contradicts \ref{li:t2}.
\end{proof}

The next lemma shows that highly connected sets within a graph induce
tangles. For a set $X\subseteq V(G)$ and $k\ge 1$, we let 
\[
\CT^k(X):=\big\{(Y,S,Z)\in\Sep_{<k}(G)\bigmid X\subseteq S\cup
Z\big\}.
\]
Of course in general, $\CT^k(X)$ is not a tangle, and neither are all
$G$-tangles of order $k$ of the form $\CT^k(X)$. However, we will see
in Section~\ref{sec:t3} that they are if $k\le 3$.

\begin{lem}\label{lem:t1}
  Let $G$ be a graph and $k\ge 1$. Let $X\subseteq V(G)$ be a $(k-1)$-inseparable set of
  cardinality $|X|>\frac{3}{2}\cdot (k-1)$. Then $\CT^k(X)$ is a
  $G$-tangle of order $k$.
\end{lem}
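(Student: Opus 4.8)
The plan is to verify conditions \ref{li:t1}, \ref{li:t2}, \ref{li:t3} directly for $\CT=\CT^k(X)$, leaning on the $(k-1)$-inseparability of $X$ and the cardinality bound $|X|>\frac32(k-1)$, which is exactly the hypothesis that Lemma~\ref{lem:t2} was designed to exploit.

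For \ref{li:t3}: if $(Y,S,Z)\in\CT^k(X)$ had $Z=\emptyset$, then $X\subseteq S$, so $|X|\le|S|<k$, contradicting $|X|>\frac32(k-1)\ge k-1$, i.e. $|X|\ge k$. For \ref{li:t1}: given $(Y,S,Z)\in\Sep_{<k}(G)$, I want to show $X\subseteq S\cup Z$ or $X\subseteq S\cup Y$. If not, then $X$ meets both $Y$ and $Z$, which contradicts $(k-1)$-inseparability of $X$ since $(Y,S,Z)$ has order $|S|<k$, i.e. order at most $k-1$. (Here one also notes $|X|>k-1$ is part of the definition of $(k-1)$-inseparable, so this is consistent.) For \ref{li:t2}: suppose $(Y_i,S_i,Z_i)\in\CT^k(X)$ for $i=1,2,3$, so $X\subseteq S_i\cup Z_i$ for each $i$, and suppose toward a contradiction that $Z_1\cap Z_2\cap Z_3=\emptyset$ and no edge has an endvertex in each $Z_i$. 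Then $X\setminus(Z_1\cap Z_2\cap Z_3)=X$, and every $x\in X$ lies in some $S_i$ (since if $x\notin S_i$ then $x\in Z_i$, and $x$ cannot be in all three $Z_i$). So $X\subseteq S_1\cup S_2\cup S_3$, giving $|X|\le|S_1|+|S_2|+|S_3|\le 3(k-1)$ — but this is too weak; I need the factor $\frac32$ rather than $3$.

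The main obstacle is therefore sharpening this counting for \ref{li:t2}. The fix is to use a pairwise covering argument exactly as in the proof of Lemma~\ref{lem:t2}: using $|X|>\frac32(k-1)$ together with $|S_i|\le k-1$, one shows there must be a vertex $x\in X$ lying in at least two of the $S_i$, or more precisely one argues via the union of separations. Concretely, consider $(Y',S',Z'):=(Y_1,S_1,Z_1)\cap(Y_2,S_2,Z_2)$. If $|S'|<k$ then Lemma~\ref{lem:tangle-closure}(3) (after checking $(Y_i,S_i,Z_i)\in\CT^k(X)$ is closed under $\cap$, which follows since $X\subseteq(S_1\cup Z_1)\cap(S_2\cup Z_2)\subseteq S'\cup Z'$) puts $(Y',S',Z')\in\CT^k(X)$, and then \ref{li:t2} for the triple reduces to \ref{li:t2} for $(Y',S',Z'),(Y_3,S_3,Z_3)$ and a trivial third coordinate — handled by the inseparability argument as in \ref{li:t1}. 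The remaining case is $|S'|\ge k$, i.e. $|S_1\cup S_2\cup(\text{overlap region})|\ge k$; combined with Corollary~\ref{cor:tangle-closure}-style counting and $|S_1|,|S_2|\le k-1$, the bound $|X|>\frac32(k-1)$ forces $X$ to meet $Z_1\cap Z_2$, hence to meet $Z_1\cap Z_2\cap Z_3$ unless $X$ is split by the order-$(k-1)$ separation $(Y_3,S_3,Z_3)$ — again contradicting inseparability. I expect the bookkeeping to distribute the $S_i$ among the three pairwise covers to be the only delicate point; everything else is a direct unwrapping of definitions plus Lemma~\ref{lem:tangle-closure}.
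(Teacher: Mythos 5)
Your handling of \ref{li:t1} and \ref{li:t3} is correct and matches the paper. The gap is in \ref{li:t2}, and it is essential. The correct double count (which you state backwards at first: you need a vertex of $X$ in \emph{at most} one of the $S_i$, hence in at least two of the $Z_i$ — this is what $\sum_i|X\setminus Z_i|\le 3(k-1)<2|X|$ gives) leaves you with a vertex $x\in X\cap Z_2\cap Z_3$, say, with $x\in S_1$. At this point no further counting can help: $|X\setminus(Z_1\cap Z_2)|\le 2(k-1)$ is all you can say, and $|X|>\tfrac32(k-1)$ does not beat $2(k-1)$, so you cannot force $X$ to meet $Z_1\cap Z_2$, let alone $Z_1\cap Z_2\cap Z_3$. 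Your closing step is also not a contradiction: a vertex of $X$ sitting in $S_3$ (or $S_1$) does not ``split'' $X$ in the sense of $(k-1)$-inseparability, which only forbids $X$ meeting both $Y$- and $Z$-sides. The intersection-of-separations detour does not rescue this either: when the intersection has order $\ge k$ you are back to exactly the same residual case, and when it has order $<k$ you have merely reduced three separations to two, for which the same difficulty persists.

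What is missing is the production of the \emph{edge} required by \ref{li:t2}. The paper's proof shows that the offending vertex $x\in S_1\cap Z_2\cap Z_3$ must have a neighbour in $Z_1$, and this is where $(k-1)$-inseparability is used a second time, in a Menger-type way: since $|X|\ge k$, pick distinct $x_1,\dots,x_{k-1}\in X\setminus\{x\}$ and internally disjoint paths $P_i$ from $x$ to $x_i$ (these exist because no separation of order $\le k-1$ splits $X$). For each $i$, the last vertex $y_i$ of $P_i$ lying in $S_1\cup Y_1$ must in fact lie in $S_1$ (its successor, if any, is in $Z_1$ and is adjacent to it; if $y_i=x_i$ then $x_i\in X\subseteq S_1\cup Z_1$). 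The $y_i$ are distinct and all lie in $S_1$ together with $x$, so $|S_1|\le k-1$ forces $y_i=x$ for some $i$, and the successor of $x$ on that $P_i$ lies in $Z_1$; the edge from $x$ to this successor has an endvertex in each $Z_j$. Without this (or an equivalent) argument the proof of \ref{li:t2} does not go through.
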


\begin{proof}
  To see that $\CT:=\CT^k(X)$ satisfies \ref{li:t1}, note that the
  $(k-1)$-inseparability of $X$ implies $X\subseteq S\cup Y$ or
  $X\subseteq S\cup Z$ for every $(Y,S,Z)\in\Sep_{<k}(G)$.

  To see that $\CT$ satisfies \ref{li:t2}, let
  $(Y_i,S_i,Z_i)\in\CT$ for $i=1,2,3$. We have $X\cap Y_i=\emptyset$
  and thus $|X\setminus Z_i|\le|S_i|\le k-1$. 
  As $|X|>\frac{3}{2}\cdot (k-1)$, there is a vertex $x\in X$
  such that $x$ is contained in at most one of the sets $S_i$ and
  hence in at least two of the sets $Z_j$. Say,
  $x\in Z_2\cap Z_3$. If $x\in Z_1$, then
  $Z_1\cap Z_2\cap Z_3\neq\emptyset$. So let us assume that
  $x\in S_1$.  

  Let $x_1,\ldots,x_{k-1}\in X\setminus\{x\}$ be distinct. Such $x_i$
  exists because $|X|\ge\frac{3}{2}(k-1)+1\ge k$. As $X$ is
  $(k-1)$-inseparable, for all $i$ there is a path $P_i$ from $x$ to $x_i$
  such that $V(P_i)\cap V(P_j)=\{x\}$ for $i\neq j$. Let
  $y_i$ be the last vertex of $P_i$ (in the direction from $x$ to
  $x_i$) that is in $S_1\cup Y_1$ (possibly, $y_i=x_i$). We claim that
  $y_i\in S_1$. This is the case if $y_i=x_i\in  X\subseteq S_1\cup Z_1$. If $y_i\neq x_i$, let $z_i$ be the successor of $y_i$ on
  $P_i$. Then $z_i\in Z_1$, and as $y_iz_i\in
  E(G)$, it follows that $y_i\in S_1$.

  Thus $x,y_1,\ldots,y_{k-1}\in S_1$, and as $|S_1|\le k-1$ and the
  $y_i$ are mutually distinct, it follows that $y_i=x$ for some
  $i$. As $x\neq x_i$, the vertex $z_i$ exists. The edge $xz_i$ has
  endvertices $z_i$ in $Z_1$ and $x$ in $Z_2$ and $Z_3$.

  Finally, $\CT$ satisfies
  \ref{li:t3}, because for every $(Y,S,Z)\in\CT$ we have $X\cap
  Z\neq\emptyset$, because $X\subseteq S\cup Z$ and $X>k-1\ge|S|$.
\end{proof}

It follows from Lemma~\ref{lem:t2} that the lower bound on $|X|$ in
the Lemma~\ref{lem:t1} is tight.

\subsection{Minimal Elements}
Let $G$ be a graph. We define a partial order $\preceq$ on $\Sep(G)$ by letting
\begin{equation}
  \label{eq:2}
  (Y,S,Z)\preceq(Y',S',Z')\quad:\Longleftrightarrow\quad
  S\cup Z\subset S'\cup Z'\text{ or }\big(S\cup
  Z=S'\cup Z'\text{  and }S\subseteq S'\big).
\end{equation}
Note that if $|S|=|S'|$, then
$(Y,S,Z)\preceq(Y',S',Z')\iff (Z,S,Y)\succeq(Z',S',Y')$; this is not
necessarily the case if $|S|\neq |S'|$. For a $G$-tangle $\CT$, we let
$\CMT$ be the set of minimal elements of $\CT$ with respect to the
partial order $\preceq$.

\begin{lem}[Reed~\cite{ree97}]\label{lem:reed}
  Let $\CT$ be a $G$-tangle of order $k$. Then for every set
  $S\subseteq V(G)$ of cardinality $|S|<k$ there is a connected
  component $C_{\CT}(S)$ of $G\setminus S$ such that for all $Y,Z$
  such that
  $(Y,S,Z)\in\Sep_{<k}(G)$,
  \[
  (Y,S,Z)\in\CT\iff V(C_{\CT}(S))\subseteq Z.
  \]
\end{lem}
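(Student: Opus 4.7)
The plan is to identify, for each connected component of $G\setminus S$, a canonical ``outward-pointing'' separation and show that $\CT$ selects exactly one of them. Enumerate the connected components $C_1,\ldots,C_m$ of $G\setminus S$, and for each $i$ set
$\sigma_i := (V(G)\setminus(V(C_i)\cup S),\,S,\,V(C_i))$.
Each $\sigma_i$ is a valid separation of order $|S|<k$, because any edge of $G$ leaving $V(C_i)$ must enter $S$. The key observation that drives everything else is that all the $\sigma_i$ share the same middle part $S$, so by the intersection formula for separations, every intersection among them (and their reverses) again has middle part $S$ and hence lies in $\Sep_{<k}(G)$; Lemma~\ref{lem:tangle-closure}(3) therefore applies to arbitrary iterated intersections.

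First I would show that exactly one index $i^*$ satisfies $\sigma_{i^*}\in\CT$, and then set $C_\CT(S):=C_{i^*}$. For uniqueness: if $\sigma_i,\sigma_j\in\CT$ with $i\neq j$, then $\sigma_i\cap\sigma_j=(V(G)\setminus S,\,S,\,\emptyset)$ would lie in $\CT$ by Lemma~\ref{lem:tangle-closure}(3), contradicting \ref{li:t3}. For existence: if no $\sigma_i$ lies in $\CT$, then by \ref{li:t1} each reverse $(V(C_i),\,S,\,V(G)\setminus(V(C_i)\cup S))$ lies in $\CT$; iteratively intersecting all of them yields $(V(G)\setminus S,\,S,\,\emptyset)\in\CT$, again contradicting \ref{li:t3}.

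It then remains to verify the stated equivalence. For the forward implication, let $(Y,S,Z)\in\CT$; since $C_{i^*}$ is a connected subgraph of $G\setminus S$, its vertex set lies entirely in $Y$ or entirely in $Z$, and in the former case $(Y,S,Z)\cap\sigma_{i^*}$ would have empty third component but belong to $\CT$ by Lemma~\ref{lem:tangle-closure}(3), contradicting \ref{li:t3}. For the converse, if $V(C_{i^*})\subseteq Z$ then \ref{li:t1} gives $(Y,S,Z)\in\CT$ or $(Z,S,Y)\in\CT$, and the latter would, by the forward implication, force $V(C_{i^*})\subseteq Y$, contradicting $V(C_{i^*})\neq\emptyset$. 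The only real obstacle I foresee is spotting the common-middle-part observation; once that is in place the proof becomes essentially mechanical, and the degenerate case $V(G)=S$ is vacuous since then no separation of the form $(Y,S,Z)$ can lie in $\CT$ by \ref{li:t3}.
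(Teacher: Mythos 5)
Your proof is correct. Note that the paper does not prove this lemma at all -- it is quoted from Reed's survey -- so there is no in-paper argument to compare against; your write-up is a complete, self-contained proof using only \ref{li:t1}, \ref{li:t3} and Lemma~\ref{lem:tangle-closure}(3), and the pivotal observation is exactly the right one: separations with the same middle set $S$ are closed under $\cap$ with middle set again $S$, so the order bound needed for Lemma~\ref{lem:tangle-closure}(3) is automatic. Existence and uniqueness of the distinguished component via intersecting the component separations $\sigma_i$ (resp.\ their reverses), and the two directions of the equivalence, are all argued soundly; this is essentially the standard argument behind Reed's lemma. The only quibble is the degenerate case $S=V(G)$: your justification (``no separation with this $S$ lies in $\CT$'') makes the equivalence moot but not the existential claim ``there is a connected component $C_\CT(S)$''. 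The clean way out is that this case cannot occur at all: if $|V(G)|=|S|<k$, then $(\emptyset,V(G),\emptyset)\in\Sep_{<k}(G)$ and both of its orientations violate \ref{li:t3}, contradicting \ref{li:t1}, so $G$ has no tangle of order $k$ and the hypothesis is vacuous. With that one-line remark the proof is airtight.
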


For a $G$-tangle $\CT$ of order $k$ and a set $S\subseteq V(G)$ of
cardinality $|S|<k$, we let 
\begin{equation}\label{eq:3}
\CZ_{\CT}(S):=V(C_{\CT}(S)),
\end{equation}
where
$C_{\CT}(S)$ is the connected component of $G\setminus S$ from
Lemma~\ref{lem:reed}.  Furthermore, we let 
\begin{equation}\label{eq:4}
\CY_{\CT}(S):=V(G)\setminus\big(S\cup\CZ_{\CT}(S)\big).
\end{equation}
Note that $(\CY_\CT(S),S,\CZ_{\CT}(S))\in\CT$.
 
\begin{cor}\label{cor:reed2}
  Let $\CT$ be a tangle, and let $(Y,S,Z)\in\CMT$. Then
  $S=N(Z)$ and $Z=\CZ_\CT(S)$.
\end{cor}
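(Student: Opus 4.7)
The plan is to prove the two equalities separately, each by contradiction with the minimality of $(Y,S,Z)$ under $\preceq$.

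\textbf{Step 1: $Z = \CZ_\CT(S)$.} By Lemma~\ref{lem:reed} and definitions~(\ref{eq:3}) and (\ref{eq:4}), the triple $(\CY_\CT(S),S,\CZ_\CT(S))$ is a separation of $G$ of the same order as $(Y,S,Z)$, lies in $\CT$, and satisfies $\CZ_\CT(S)\subseteq Z$ (this last inclusion is precisely the content of the ``$\Longrightarrow$'' direction of Lemma~\ref{lem:reed}, applied to the fact that $(Y,S,Z)\in\CT$). Suppose for contradiction that $\CZ_\CT(S)\subsetneq Z$. Since $S$ is disjoint from both sides, $S\cup\CZ_\CT(S)\subsetneq S\cup Z$, so by the definition of $\preceq$ in~(\ref{eq:2}) we have $(\CY_\CT(S),S,\CZ_\CT(S))\prec(Y,S,Z)$, contradicting $(Y,S,Z)\in\CMT$. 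Hence $Z=\CZ_\CT(S)$.

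\textbf{Step 2: $S = N(Z)$.} The inclusion $N(Z)\subseteq S$ is immediate from the definition of a separation: no vertex of $Y$ is adjacent to a vertex of $Z$, so every neighbour of $Z$ outside $Z$ lies in $S$. Suppose for contradiction that there exists $s\in S\setminus N(Z)$. Set $Y':=Y\cup\{s\}$, $S':=S\setminus\{s\}$, $Z':=Z$. Since $s$ has no neighbour in $Z$ and $Y$ has no edge to $Z$, the triple $(Y',S',Z')$ is a separation of $G$, of order $|S|-1<k$. As $Z\subseteq Z'$, Lemma~\ref{lem:tangle-closure}(2) gives $(Y',S',Z')\in\CT$. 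Finally, because $s\in S\setminus Z$, we have $S'\cup Z'=(S\cup Z)\setminus\{s\}\subsetneq S\cup Z$, hence $(Y',S',Z')\prec(Y,S,Z)$ by~(\ref{eq:2}), again contradicting minimality. Therefore $S=N(Z)$.

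The argument is essentially a direct unpacking of the definitions together with Lemmas~\ref{lem:tangle-closure} and~\ref{lem:reed}, so there is no real obstacle. The only point requiring slight care is matching the two clauses of the definition of $\preceq$ to the two steps: Step~1 uses the ``$S\cup Z\subsetneq S'\cup Z'$'' clause at fixed $S$, while Step~2 exploits that shrinking $S$ by moving a ``useless'' vertex from $S$ to $Y$ also strictly shrinks $S\cup Z$.
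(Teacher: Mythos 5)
Your proof is correct and is exactly the intended derivation: the paper states this as an unproved corollary of Lemma~\ref{lem:reed}, and the natural argument is precisely yours — use Reed's lemma to get $\CZ_\CT(S)\subseteq Z$ and membership of $(\CY_\CT(S),S,\CZ_\CT(S))$ in $\CT$, then invoke $\preceq$-minimality twice (once to force $Z=\CZ_\CT(S)$, once, after shifting a vertex of $S\setminus N(Z)$ into $Y$ via Lemma~\ref{lem:tangle-closure}(2), to force $S=N(Z)$). No gaps.
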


\begin{cor}\label{cor:reed3}
  Let $\CT$ be a $G$-tangle of order $k$, and let
  $(Y_1,S_1,Z_1), (Y_2,S_2,Z_2)\in\CMT$ be distinct. Then
  $|(S_1\cap Z_2)\cup(S_1\cap S_2)\cup(Z_1\cap S_2)|\ge k$ and $S_i\cap
  Z_{3-i}\neq\emptyset$ for $i=1,2$.
\end{cor}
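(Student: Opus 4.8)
The plan is to prove the first inequality — that $S':=(S_1\cap Z_2)\cup(S_1\cap S_2)\cup(Z_1\cap S_2)$ satisfies $|S'|\ge k$ — by contradiction, and then read off the second claim almost for free. Indeed, $S'$ is exactly the middle component of the separation $(Y_1,S_1,Z_1)\cap(Y_2,S_2,Z_2)$, so it is a separator; and if $S_1\cap Z_2=\emptyset$ then $S'=(S_1\cap S_2)\cup(Z_1\cap S_2)\subseteq S_2$, forcing $|S'|\le|S_2|\le k-1$, while if $S_2\cap Z_1=\emptyset$ then symmetrically $S'\subseteq S_1$. Hence, once $|S'|\ge k$ is established, both statements $S_i\cap Z_{3-i}\neq\emptyset$ follow, and all the work is in the first inequality.

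Assume for contradiction that $|S'|<k$. Then $(Y',S',Z'):=(Y_1,S_1,Z_1)\cap(Y_2,S_2,Z_2)=(Y_1\cup Y_2,\,S',\,Z_1\cap Z_2)$ lies in $\Sep_{<k}(G)$, so by Lemma~\ref{lem:tangle-closure}(3) it belongs to $\CT$. Put $Z^*:=\CZ_\CT(S')$; applying Lemma~\ref{lem:reed} to the separation $(Y',S',Z')\in\CT$ gives $Z^*\subseteq Z'=Z_1\cap Z_2$, and we also know $(\CY_\CT(S'),S',Z^*)\in\CT$. Since $S'\subseteq S'\cup Z'=(S_1\cup Z_1)\cap(S_2\cup Z_2)$ and $Z^*\subseteq Z_1\cap Z_2$, both $S'\cup Z^*\subseteq S_1\cup Z_1$ and $S'\cup Z^*\subseteq S_2\cup Z_2$ hold; if either inclusion were proper, then $(\CY_\CT(S'),S',Z^*)$ would be strictly $\prec$-below the minimal element $(Y_i,S_i,Z_i)$ (by the first clause of the definition of $\preceq$), a contradiction. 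Hence $W:=S'\cup Z^*=S_1\cup Z_1=S_2\cup Z_2$; consequently $(Y_1,S_1,Z_1)$, $(Y_2,S_2,Z_2)$, and $(\CY_\CT(S'),S',Z^*)$ all share the first component $Y:=V(G)\setminus W$, and $Z_i=W\setminus S_i$ for $i=1,2$.

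The decisive move is then to consider the triple $(Y,\,S_1\cap S_2,\,Z_1\cup Z_2)$. Using $Z_i=W\setminus S_i$ one checks $Z_1\cup Z_2=W\setminus(S_1\cap S_2)$, so this triple is a separation of $G$ of order $|S_1\cap S_2|\le k-1$ — it is a separation because there are no $Y$--$Z_1$ edges and no $Y$--$Z_2$ edges. From $(Y_1,S_1,Z_1)\in\CT$ and $Z_1\subseteq Z_1\cup Z_2$, Lemma~\ref{lem:tangle-closure}(2) places $(Y,S_1\cap S_2,Z_1\cup Z_2)$ in $\CT$; and since its $\preceq$-ceiling is $(S_1\cap S_2)\cup(Z_1\cup Z_2)=W=S_1\cup Z_1$ while $S_1\cap S_2\subseteq S_1$, it is $\preceq(Y_1,S_1,Z_1)$. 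Minimality of $(Y_1,S_1,Z_1)$ forces equality, so $S_1\cap S_2=S_1$, that is, $S_1\subseteq S_2$. If $S_1=S_2$ then $Z_1=W\setminus S_1=W\setminus S_2=Z_2$ and $Y_1=Y=Y_2$, contradicting $(Y_1,S_1,Z_1)\neq(Y_2,S_2,Z_2)$; and if $S_1\subsetneq S_2$, then $(Y_1,S_1,Z_1)\prec(Y_2,S_2,Z_2)$ (equal ceilings $W$, strictly smaller separator), contradicting minimality of $(Y_2,S_2,Z_2)$. Either way we reach a contradiction, so $|S'|\ge k$, and the second claim follows as explained above.

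The step I expect to be the main obstacle is the collapse carried out in the second paragraph: deducing, from the single hypothesis $|S'|<k$, that the two given minimal separations and the core separation $(\CY_\CT(S'),S',Z^*)$ must all live over the same side $Y$ with $W=S_1\cup Z_1=S_2\cup Z_2$. This forces one to invoke minimality of \emph{both} $(Y_1,S_1,Z_1)$ and $(Y_2,S_2,Z_2)$ and to be careful with the fact that $\preceq$ is not symmetric under swapping the $Y$- and $Z$-sides when the separator sizes differ. After that, the reduction to $(Y,S_1\cap S_2,Z_1\cup Z_2)$ via Lemma~\ref{lem:tangle-closure}(2) and the final clash with distinctness/minimality is routine.
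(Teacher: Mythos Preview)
Your proof is correct and follows the same overall strategy as the paper: show that the separator $S'$ of the intersection $(Y_1,S_1,Z_1)\cap(Y_2,S_2,Z_2)$ must have size $\ge k$ by contradiction with minimality, and then read off $S_i\cap Z_{3-i}\neq\emptyset$ from $S'\subseteq S_{3-i}$ otherwise.

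Two comparative remarks. First, the paper compresses your second and third paragraphs into a single line: ``as $(Y_1,S_1,Z_1)$ and $(Y_2,S_2,Z_2)$ are distinct, either $(Y,S,Z)\prec(Y_1,S_1,Z_1)$ or $(Y,S,Z)\prec(Y_2,S_2,Z_2)$''. Taken literally this is not obvious from distinctness alone, because when $S_1\cup Z_1=S_2\cup Z_2$ the intersection separator $S'$ may strictly contain both $S_1$ and $S_2$, so neither $\preceq$-comparison holds. Your argument via the auxiliary separation $(Y,S_1\cap S_2,Z_1\cup Z_2)$ is exactly what is needed to rule this case out (it forces $S_1=S_2$), so you have in effect filled a small gap the paper leaves implicit. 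Second, your detour through Reed's lemma and $Z^*:=\CZ_\CT(S')$ is harmless but unnecessary: you can argue directly with $(Y',S',Z')\in\CT$, since $S'\cup Z'=(S_1\cup Z_1)\cap(S_2\cup Z_2)\subseteq S_i\cup Z_i$ already gives the strict $\prec$-comparison whenever the inclusion is proper, and the equal-$W$ case is then handled by your $(Y,S_1\cap S_2,Z_1\cup Z_2)$ step.
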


\begin{proof}
  $S:=(S_1\cap Z_2)\cup(S_1\cap S_2)\cup(Z_1\cap S_2)$ is the
  separator of the separation
  $(Y,S,Z):=(Y_1,S_1,Z_1)\cap(Y_2,S_2,Z_2)$. As $(Y_1,S_1,Z_1)$ and
  $(Y_2,S_2,Z_2)$ are distinct, either $(Y,S,Z)\prec(Y_1,S_1,Z_1)$ or
  $(Y,S,Z)\prec(Y_2,S_2,Z_2)$. By the minimality of $(Y_i,S_i,Z_i)$
  this implies $(Y,S,Z)\not\in\CT$, and thus by
  Lemma~\ref{lem:tangle-closure}(2), $|S|\ge k$.

  If $S_i\cap Z_{3-i}=\emptyset$ then $S\subseteq S_{3-i}$, and as
  $|S_{3-i}|<k$, this contradicts $|S|\ge k$.
\end{proof}

\begin{cor}\label{cor:reed4}
  Let $\CT$ be a $G$-tangle and
  $(Y,S,Z)\in\CMT$. Then for every $(Y',S',Z')\in\Sep(G)$ with $S'\subseteq S$ it holds that $Z\cup (S\setminus S')\subseteq Z'$.
\end{cor}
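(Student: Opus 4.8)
The plan is to route everything through Reed's Lemma~\ref{lem:reed} for the smaller separator $S'$. First I would record what the minimality of $(Y,S,Z)$ provides: by Corollary~\ref{cor:reed2} we have $S=N(Z)$ and $Z=\CZ_\CT(S)=V(C_\CT(S))$, so $Z$ is the vertex set of a single connected component of $G\setminus S$. Since $S'\subseteq S$, deleting the smaller set $S'$ only merges components of $G\setminus S$ (every connected subset of $G\setminus S$ is a connected subset of $G\setminus S'$), so $Z$ is contained in a unique connected component $D$ of $G\setminus S'$.

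The first substantive step is to check that $S\setminus S'$ also lands in $V(D)$, so that all of $Z\cup(S\setminus S')$ sits inside the single component $D$. This is exactly where $S=N(Z)$ is used: any $s\in S\setminus S'$ has a neighbour in $Z$, and since $s\notin S'$ and $Z\cap S'=\emptyset$, that edge still lies in $G\setminus S'$; hence $s$ is in the component of $G\setminus S'$ that contains $Z$, namely $D$. Thus $Z\cup(S\setminus S')\subseteq V(D)$.

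The second step identifies $V(D)$ with $\CZ_\CT(S')$. Consider the triple $\tau:=\bigl(V(G)\setminus(S'\cup V(D)),\,S',\,V(D)\bigr)$; it is a separation because $D$ is a connected component of $G\setminus S'$ (so no edge leaves $V(D)$ except into $S'$), and it has order $|S'|\le|S|<k$, so $\tau\in\Sep_{<k}(G)$. Since $Z\subseteq V(D)$ and $(Y,S,Z)\in\CT$, Lemma~\ref{lem:tangle-closure}(2) gives $\tau\in\CT$, and Reed's Lemma applied to $S'$ then forces $V(C_\CT(S'))\subseteq V(D)$; as both sides are connected components of $G\setminus S'$, we get $V(D)=V(C_\CT(S'))=\CZ_\CT(S')$. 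Finally, since $(Y',S',Z')\in\CT$, a further application of Reed's Lemma to $S'$ yields $\CZ_\CT(S')\subseteq Z'$, and therefore $Z\cup(S\setminus S')\subseteq V(D)=\CZ_\CT(S')\subseteq Z'$, as claimed. (If instead $(Y',S',Z')$ were the reversed separation, the same computation would place $Z\cup(S\setminus S')$ inside $Y'$, so the statement is to be read for separations $(Y',S',Z')\in\CT$.)

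I do not anticipate a genuine obstacle here: the whole argument is careful bookkeeping about how the connected components of $G\setminus S$ embed into those of $G\setminus S'$, glued together by two applications of Reed's Lemma. The only points that need a word of care are verifying that $\tau$ is a legitimate separation of order $<k$, so that Reed's Lemma applies to it, and using that containment of one connected component in another forces equality.
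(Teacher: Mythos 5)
Your proof is correct and follows essentially the same route as the paper's: both arguments hinge on Corollary~\ref{cor:reed2} making $Z\cup(S\setminus S')$ connected and hence contained in a single connected component of $G\setminus S'$. The only difference is the closing step --- the paper rules out $Z\cup(S\setminus S')\subseteq Y'$ directly via the order bound of Corollary~\ref{cor:tangle-closure}, whereas you identify that component with $\CZ_\CT(S')$ by two applications of Reed's Lemma; and your remark that the statement must be read with $(Y',S',Z')\in\CT$ is right, since the paper's own proof also tacitly assumes this when invoking Corollary~\ref{cor:tangle-closure}.
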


\begin{proof}
  By Corollary~\ref{cor:reed2}, $Z\cup(S\setminus S')$ is connected and hence
  contained in some connected component of $G\setminus S'$. Thus
  $Z\cup (S\setminus S')\subseteq Z'$ or $Z\cup (S\setminus
  S')\subseteq Y'$. In the latter case, we have $|(Z\cup S)\cap
  (Z'\cap S')|=|S'|<k$, which contradicts Corollary~\ref{cor:tangle-closure}.
\end{proof}

\subsection{Tangles of Order at Most $3$}
\label{sec:t3}
Let $G$ be a graph and $k\ge 1$. Following \cite{cardiehun+14}, we
call an inclusionwise maximal $k$-inseparable set
$X\subseteq V(G)$ a \emph{$k$-block} of $G$. We call a $k$-block
$X$ \emph{proper} if $|X|\ge k+2$. Observe that if $X$ is a proper
$k$-block then the torso $\torso GX$ is $(k+1)$-connected. 

It can be shown that for $k=2,3$ the torsos $\torso GX$ of the
$(k-1)$-blocks $X$, which for
$k\le 2$ coincide with the induced subgraphs $G[X]$ and for $k=3$ are
topological subgraphs of $G$, are precisely the biconnected and
triconnected components appearing the decomposition described in the introduction.

By Lemma~\ref{lem:t1}, if $X$ is a $(k-1)$-block for $k=1,2$ or a
proper $k$-block for $k=3$, then $\CT^k(X)$ is a $G$-tangle of order
$k$. The following theorem shows that all $G$-tangles of order at most
$3$ are of this form. 

\begin{theo}[\cite{gm10,gro16a}]\label{theo:tangles_vs_3cc} 
  Let $G$ be a graph, and let $\CT$ be a $G$-tangle of order $k\le
  3$. Then the set 
  \begin{equation}
    \label{eq:5}
    X_{\CT}:=\bigcap_{(Y,S,Z)\in\CT}(S\cup Z)
  \end{equation}
  is a $k$-block (proper if $k=3$) and $\CT=\CT^k(X_{\CT})$.
\end{theo}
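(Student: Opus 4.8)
## Proof Proposal

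The plan is to show the two assertions — that $X_{\CT}$ is a $k$-block (proper when $k=3$) and that $\CT = \CT^k(X_{\CT})$ — by first getting a handle on $X_\CT$ via the minimal elements $\CMT$ of the tangle. By Corollary~\ref{cor:reed2}, every $(Y,S,Z)\in\CMT$ satisfies $S=N(Z)$ and $Z=\CZ_\CT(S)$, and by Lemma~\ref{lem:tangle-closure}(2) together with the definition of $\preceq$ the minimal elements are ``cofinal downwards'', so in fact $X_\CT=\bigcap_{(Y,S,Z)\in\CMT}(S\cup Z)$. I would first argue that $X_\CT$ is $(k-1)$-inseparable: given any separation $(Y,S,Z)$ of order $<k$, condition \ref{li:t1} puts it (or its reverse) in $\CT$, hence $X_\CT\subseteq S\cup Z$ or $X_\CT\subseteq S\cup Y$, so $X_\CT$ cannot be split by a separation of order $<k$. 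The maximality needed for $X_\CT$ to be a block would then follow by showing any $(k-1)$-inseparable $X'\supseteq X_\CT$ is already contained in $X_\CT$; here one uses Lemma~\ref{lem:t1} in reverse, or more directly: if $v\in X'\setminus X_\CT$ then some $(Y,S,Z)\in\CT$ has $v\in Y$, and since $X_\CT$ meets $S\cup Z$ in more than $k-1$ vertices (it must, being $(k-1)$-inseparable of the right size), $X'$ would be split — contradiction.

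The crux is the lower bound on $|X_\CT|$, i.e.\ showing $|X_\CT|>\frac32(k-1)$ (so that $X_\CT$ is genuinely $(k-1)$-inseparable as a \emph{block}, and proper when $k=3$, where $\frac32\cdot 2 = 3$ forces $|X_\CT|\ge 4$). This is where the hypothesis $k\le 3$ is essential and where I expect the real work to lie. For $k=1$ the statement is trivial ($X_\CT$ is a single connected component, nonempty). For $k=2$ one shows $|X_\CT|\ge 2$: if $X_\CT=\{v\}$ were a single vertex, then every $(Y,S,Z)\in\CMT$ with $|S|\le 1$ has $v\in S\cup Z$; one derives a contradiction with \ref{li:t2} by finding three minimal separations whose $Z$-sides are pairwise disjoint and edge-disjoint, using Corollary~\ref{cor:reed3}. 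For $k=3$ one must rule out $|X_\CT|\le 3$; the case $|X_\CT|=3$ is the delicate one, and I would handle it by taking $S:=X_\CT$, noting $|S|<k=3$ is false — wait, $|S|=3\not<3$ — so instead one argues that if $|X_\CT|=3$, then writing $X_\CT=\{a,b,c\}$, each of the three $2$-subsets $\{a,b\},\{b,c\},\{a,c\}$ is a separator whose ``large side'' avoids the third vertex; combining via Corollary~\ref{cor:reed3} and \ref{li:t2} (applied to a carefully chosen triple of minimal separations, or their intersections) yields the contradiction. The bookkeeping with intersections of separations via Lemma~\ref{lem:tangle-closure}(3) and Corollary~\ref{cor:reed3} is the main obstacle.

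Once $|X_\CT|>\frac32(k-1)$ is established, $\CT=\CT^k(X_\CT)$ follows cleanly. The inclusion $\CT\subseteq\CT^k(X_\CT)$ is immediate from the definition of $X_\CT$ as the intersection of all $S\cup Z$ over $(Y,S,Z)\in\CT$. For the reverse inclusion, note that by Lemma~\ref{lem:t1} the set $\CT^k(X_\CT)$ is itself a $G$-tangle of order $k$ (this is where the size bound is used a second time); since both $\CT$ and $\CT^k(X_\CT)$ are tangles of the same order and $\CT\subseteq\CT^k(X_\CT)$, and any separation or its reverse lies in each by \ref{li:t1}, they must be equal — if some $(Y,S,Z)\in\CT^k(X_\CT)\setminus\CT$, then $(Z,S,Y)\in\CT\subseteq\CT^k(X_\CT)$, so both $(Y,S,Z)$ and $(Z,S,Y)$ lie in $\CT^k(X_\CT)$, and applying \ref{li:t2} to this pair together with, say, $(\emptyset, S, V(G)\setminus S)\in\CT^k(X_\CT)$ (or simply observing $X_\CT\subseteq S\cup Z$ and $X_\CT\subseteq S\cup Y$ forces $X_\CT\subseteq S$, contradicting $|X_\CT|>\frac32(k-1)\ge k-1\ge|S|$) gives a contradiction. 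This final paragraph is routine; essentially all the difficulty is concentrated in the size estimate for $X_\CT$.
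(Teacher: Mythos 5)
The paper does not actually prove Theorem~\ref{theo:tangles_vs_3cc}: it is quoted from \cite{gm10,gro16a}, so there is no in-paper argument to compare against. Judged on its own, your proposal gets the outer architecture right. The reduction of $X_{\CT}$ to an intersection over $\CMT$, the observation that \ref{li:t1} together with the definition of $X_{\CT}$ shows no separation of order $<k$ splits $X_{\CT}$, the maximality argument for the block property, and the derivation of $\CT=\CT^k(X_{\CT})$ from the size bound are all sound --- in particular your parenthetical observation that $(Y,S,Z),(Z,S,Y)\in\CT^k(X_{\CT})$ would force $X_{\CT}\subseteq S$ is the clean way to finish, and it only needs $|X_{\CT}|>k-1$. (You also correctly read the statement as asserting that $X_{\CT}$ is a $(k-1)$-block, consistent with the paragraph preceding the theorem, rather than the literal ``$k$-block''.)

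The genuine gap is exactly the step you flag as the crux: the lower bound $|X_{\CT}|\ge 2$ for $k=2$ and $|X_{\CT}|\ge 4$ for $k=3$ is never proved, and the sketch offered for $k=3$ would not go through. You assert that if $X_{\CT}=\{a,b,c\}$ then each $2$-subset of $X_{\CT}$ ``is a separator whose large side avoids the third vertex''; nothing in the tangle axioms produces such separators, and for a $3$-connected $G$ no $2$-set is a separator at all, so this cannot be the route. The appeal to ``a carefully chosen triple of minimal separations, or their intersections'' is where essentially all of the content of the theorem lives: the known proofs analyse how the elements of $\CMT$ intersect, show that for order $\le 3$ the minimal separations are mutually orthogonal (the order-$3$ analogue of the crossing analysis the paper performs in Section~\ref{sec:4t1} for order $4$), and only then conclude that the common ``large side'' has at least $2$, respectively $4$, vertices. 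As written, your proposal is a correct reduction of the theorem to its hard kernel, not a proof of it.
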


The theorem utterly fails for $k=4$: a hexagonal grid $H$ (see
Figure~\ref{fig:hexgrids}) has a unique $H$-tangle $\CT$ of order $4$,
but the set $X_{\CT}$ (defined as in \eqref{eq:5}) is empty, and in
fact $H$ has no $3$-inseparable set.

As a motivation for our definition of ``quasi-4-connected regions'' in
Section~\ref{sec:q4r}, let us give an alternative characterisation of
the proper $2$-blocks. We have already remarked that they are
precisely the vertex sets of the triconnected components. In view of
our later terminology, we call them \emph{triconnected regions}.

\begin{prop}\label{prop:3cr}
  Let $G$ be a graph and $R\subseteq V(G)$. The the following are
  equivalent.
  \begin{enumerate}
  \item $R$ is a triconnected region of $G$.
  \item $R$ is an inclusionwise maximal subset of $G$ such that
    $\torso GR$ is 3-connected and a topological subgraph of $G$.
  \item 
    $\torso GR$ is 3-connected and a topological subgraph of $G$, and
    for every connected component $C$ of $G\setminus R$ we have
    $|N(C)|\le 2$.
  \end{enumerate}
\end{prop}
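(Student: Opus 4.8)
The plan is to prove Proposition~\ref{prop:3cr} by establishing the cycle of implications $(1)\Rightarrow(2)\Rightarrow(3)\Rightarrow(1)$, drawing throughout on the correspondence between $2$-blocks and tangles of order $3$ (Theorem~\ref{theo:tangles_vs_3cc}) together with the structural lemmas on minimal elements of tangles from Section~3.2. Recall that a triconnected region is a proper $2$-block $R$, so by the remark preceding the proposition (and Lemma~\ref{lem:t1}) the family $\CT:=\CT^3(R)$ is a $G$-tangle of order $3$ with $X_\CT=R$, and the torso $\torso GR$ is $3$-connected. That $\torso GR$ is a topological (indeed minor) subgraph of $G$ follows from the fact that each component $C$ of $G\setminus R$ has $|N(C)|\le 2$: the added torso edges can be routed through these components by disjoint paths. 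So $(1)\Rightarrow(2)$ reduces mostly to checking maximality — but this is essentially the definition of a $2$-block being an inclusionwise maximal $2$-inseparable set, once one notes that $\torso GR$ being $3$-connected forces $R$ to be $2$-inseparable.

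For $(2)\Rightarrow(3)$, I would argue by contradiction: if some component $C$ of $G\setminus R$ had $|N(C)|\ge 3$, pick three vertices $v_1,v_2,v_3\in N(C)$ and a vertex $u\in V(C)$. Since $\torso GR$ is $3$-connected, $R$ together with the branch set $\{u\}$ (or rather $R\cup V(C)$ suitably) would still admit a torso that is $3$-connected and a topological subgraph — one uses that the three neighbours give three disjoint paths from $u$ into $R$, which is exactly what is needed to realise the torso edges of $R\cup\{u\}$ topologically in $G$ while keeping $3$-connectivity. This contradicts the inclusionwise maximality of $R$. The delicate point is verifying that adding a single vertex from such a component genuinely preserves both $3$-connectivity of the torso and the topological-subgraph property; here it is cleanest to phrase things via $3$-inseparability: if $|N(C)|\ge3$ then $R\cup\{u\}$ is $2$-inseparable (any separation of order $\le 2$ separating it would have to separate $R$, contradicting that $R$ is $2$-inseparable), so $R$ was not maximal.

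For $(3)\Rightarrow(1)$, suppose $\torso GR$ is $3$-connected, is a topological subgraph of $G$, and every component $C$ of $G\setminus R$ satisfies $|N(C)|\le 2$. The $3$-connectivity of $\torso GR$ immediately gives that $R$ is $2$-inseparable in $G$: any proper separation $(Y,S,Z)$ of $G$ of order $\le 2$ with $R\cap Y\neq\emptyset\neq R\cap Z$ would restrict to a separation of $\torso GR$ of order $\le 2$ (the condition $|N(C)|\le2$ ensures the torso edges respect $S$), contradicting $3$-connectivity. It remains to show $R$ is \emph{maximal} $2$-inseparable. If $R\subsetneq R'$ with $R'$ $2$-inseparable, consider any $v\in R'\setminus R$; it lies in some component $C$ of $G\setminus R$ with $|N(C)|\le 2$, so $(V(C)\setminus\{v\},\,N(C)\cup(\text{at most})\ldots)$ — more precisely, $N(C)$ is a separator of order $\le2$ separating $v$ from $R\setminus N(C)$, and since $|R|\ge 4$ there is a vertex of $R$ outside $N(C)$, giving a separation of order $\le 2$ splitting $R'$, a contradiction. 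Hence $R$ is a proper $2$-block, i.e.\ a triconnected region.

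The main obstacle I anticipate is the careful bookkeeping in the "torso is a topological subgraph" direction: one must make sure that when passing between $G$, $\torso GR$, and $\torso G{R\cup\{v\}}$, the disjoint paths realising torso edges through the components of $G\setminus R$ can be chosen simultaneously and consistently, using only that each such component attaches to at most (or exactly) two vertices of $R$. This is the kind of argument where a clean lemma — "if $X$ is $2$-inseparable and every component $C$ of $G\setminus X$ has $|N(C)|\le 2$, then $\torso GX$ is a topological subgraph of $G$" — should be isolated first and then applied in each implication, rather than re-proving the routing argument three times.
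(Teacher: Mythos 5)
Your implication $(3)\Rightarrow(1)$ is essentially the paper's argument and is fine. The problem is that the substantive content of the proposition --- that for a triconnected region $R$ (equivalently, a maximal set with $3$-connected topological torso) every component $C$ of $G\setminus R$ satisfies $|N(C)|\le 2$ --- is never actually proved in your write-up. In $(1)\Rightarrow(2)$ you cite it as ``the fact that each component $C$ of $G\setminus R$ has $|N(C)|\le 2$'', but this is precisely the claim at issue (it is the component condition of (3)), not something available from the definition of a proper $2$-block. In $(2)\Rightarrow(3)$ you then try to justify it by asserting that if $|N(C)|\ge 3$ then $R\cup\{u\}$ is $2$-inseparable for a vertex $u\in V(C)$, ``since any separation of order $\le 2$ separating it would have to separate $R$.'' That is false: a $2$-separation can cut $u$ off from \emph{all} of $R$ without splitting $R$. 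Concretely, let $C$ be a star with centre $c$ and leaves $\ell_1,\ell_2,\ell_3$, with $\ell_i$ adjacent to $v_i\in R$; then $|N(C)|=3$, but $\{c,v_1\}$ separates $\ell_1$ from $R\setminus\{v_1\}$, so $R\cup\{\ell_1\}$ is not $2$-inseparable. Only a carefully chosen vertex of $C$ (here $c$) works, and proving that such a vertex exists is exactly where the work lies.

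The paper's proof of $(1)\Rightarrow(3)$ supplies this missing argument: assuming $|N(C)|\ge 3$, it picks $v_1,v_2,v_3\in N(C)$, a neighbour $w_1\in V(C)$ of $v_1$, and a $\preceq$-minimal separation $(Y_1,S_1,Z_1)$ of order $\le 2$ with $w_1\in Y_1$ and $R\subseteq S_1\cup Z_1$ (which exists by maximality of $R$); it then locates a vertex $x_1\in S_1\cap V(C^+)$ and uses the minimality of the separation to produce three internally disjoint paths from $x_1$ to $v_1,v_2,v_3$ inside $C^+$, so that $x_1$ cannot be separated from $R$ by any $2$-separation --- contradicting the maximality of $R$ as a $2$-inseparable set. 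Without this (or an equivalent) argument, both your $(1)\Rightarrow(2)$ and $(2)\Rightarrow(3)$ steps are unsupported, so the proposal as it stands has a genuine gap at the heart of the proposition.
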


\begin{proof}
  To prove that (1) implies (3), let $R$ be a triconnected region of
  $G$, that is, an inclusionwise maximal $2$-inseparable set
  $R\subseteq V(G)$ of cardinality $|R|\ge 4$. We have already noted
  that the torso of a proper $2$-block is 3-connected.

  Suppose for contradiction that $C$ is a connected component of
  $G\setminus R$ such that $|N(C)|\ge 3$. Let $C^+$ be the subgraph of
  $G$ with vertex set $V(C)\cup N(C)$ and all edges that have at least
  one endvertex in $C$ (that is, all edges of $C$ and all edges from
  $C$ to $N(C)$). By the maximality of $R$, for every $y\in V(C)$
  there is a separation $(Y,S,Z)$ of order at most $2$ such that $y\in
  Y$ and $R\subseteq S\cup Z$. Let $v_1,v_2,v_3\in N(C)$ be distinct
  and $w_1\in N(v_1)\cap V(C)$. Let $(Y_1,S_1,Z_1)\in\Sep_{\le 2}(G)$ such that $w_1\in
  Y_1$ and $R\subseteq S_1\cup Z_1$, and subject to these conditions,
  $(Y_1,S_1,Z_1)$ is $\preceq$-minimal. Then $v_1\in S_1$, and as $C$
  is connected, there is another vertex $x_1\in S_1$ that separates
  $w_1$ from $v_2,v_3$ in the graph $C^+$. It
  follows from the minimality $(Y_1,S_1,Z_1)$ that there is no $x_2$
  separating $x_1$ from $v_2,v_3$. Hence there are two internally
  disjoint paths $P_2,P_3\subseteq C^+$ from $x_1$ to $v_2,v_3$,
  respectively. Moreover, there is a path $P_1\subseteq C^+$ from
  $x_1$ to $v_1$ (via $w_1$), because $C^+$ is connected. $P_1$ is
  internally disjoint from $P_2$ and $P_3$, because otherwise $x_1$
  would not separate $w_1$ from $v_2,v_3$. But this implies that there
  is no $(Y_2,S_2,Z_2)\in\Sep_{\le 2}(G)$ such that $x_1\in Y_2$ and
  $R\subseteq S_2\cup Z_2$. This is a contradiction.

  Hence for every connected component $C$ of $G\setminus R$ we have
  $|N(C)|\le 2$. This directly implies that $\torso GR$ is a
  topological subgraph of $G$. 

  \medskip To prove that (3) implies (2), suppose that $R$ satisfies
  (3) and that there is an $R'\supset R$ such that $\torso G{R'}$ is
  3-connected. Let $C$ be a connected component of $G\setminus R$ that
  contains a vertex of $R'\setminus R$. Then $|N(C)|\le2$ and thus
  $(V(G)\setminus (V(C)\cup N(C)),N(C),V(C))\in\Sep_{\le 2}(G)$. This
  implies that
  $(R'\setminus (V(C)\cup N(C)),N(C),V(C)\cap R') \in\Sep_{\le
    2}(\torso G{R'})$,
  which contradicts $\torso G{R'}$ being 3-connected.

  \medskip
  Finally, to prove that (2) implies (1), suppose that $R$ satisfies
  (2). Then $|R|\ge 4$, because $\torso GR$ is 3-connected. For every
 separation $(Y,S,Z)\in\Sep_{\le 2}(G)$ with $Y\cap R\neq\emptyset$ and $Z\cap
  R\neq\emptyset$ the triple $(Y\cap R,S,Z\cap R)$ is a proper separation of
  $\torso GR$ of the same order, which implies that $R$ is $2$-inseparable. Suppose for
  contradiction that $R$ is not maximal 2-inseparable, and let $R'\supset
  R$ be the 2-block that contains $R$. Then by the implications
  (1)$\implies$(3)$\implies$(2), $R'$ satisfies (2) as well, and this
  is a contradiction.
\end{proof}

\subsection{Lift and Project}
We can ``lift'' a
tangle from a minor of a graph to the original graph.
Let $G$ be a graph, $H$ a minor of $G$, and $\CM$ a model of $H$ in
$G$, say, with branch sets $(M_w)_{w\in
  V(H)}$. For a separation
$(Y,S,Z)\in\Sep(G)$, the \emph{$\CM$-projection} of $(Y,S,Z)$ to $H$ is the
triple $\pi_\CM(Y,S,Z)=(Y',S',Z')$ of subsets of $V(H)$ defined by
\begin{align}
  \notag
  Y'&:=\big\{w\in V(H)\bigmid V(M_w)\subseteq Y\big\},\\
  \label{eq:proj}
  S'&:=\big\{w\in V(H)\bigmid V(M_w)\cap S\neq\emptyset\},\\
  \notag
  Z'&:=\big\{w\in V(H)\bigmid V(M_w)\subseteq Z\big\}.
\end{align}
It is easy to see that $(Y',S',Z')$ is a separation of $H$ of order
$|S'|\le |S|$.

\begin{lem}[\cite{gm10}]\label{lem:lift}
  Let $G$ be a graph, $H$ a minor of $G$, and $\CM$ a model of $H$ in
  $G$. Let $\CT'$ be an $H$-tangle of order $k$. Then the set $\CT$ of
  all separations $(Y,S,Z)\in\Sep_{<k}(G)$ such that
  $\pi_{\CM}(Y,S,Z)\in\CT'$ is a $G$-tangle of order $k$.
\end{lem}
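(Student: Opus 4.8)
The plan is to verify the three tangle axioms \ref{li:t1}--\ref{li:t3} for $\CT$ one at a time, in each case transporting the statement along the projection $\pi_\CM$ to the corresponding axiom for $\CT'$. Throughout, write $(Y_i',S_i',Z_i') := \pi_\CM(Y_i,S_i,Z_i)$. Before starting I would record two trivial but needed facts about $\pi_\CM$: first, it is symmetric, $\pi_\CM(Z,S,Y) = (Z',S',Y')$, which is immediate from the defining equations \eqref{eq:proj}; second, every branch set $M_w$ is nonempty, since an empty $M_w$ would lie in both the first and the third part of every projection, contradicting the already-noted fact that $\pi_\CM$ outputs a separation of $H$. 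Since $|S_i'|\le|S_i|$, the projection of a separation in $\Sep_{<k}(G)$ lands in $\Sep_{<k}(H)$, the domain on which $\CT'$ is defined, and by construction $\CT\subseteq\Sep_{<k}(G)$; so all that remains is the three axioms.

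For \ref{li:t1}: given $(Y,S,Z)\in\Sep_{<k}(G)$, apply \ref{li:t1} for $\CT'$ to $(Y',S',Z')\in\Sep_{<k}(H)$, obtaining $(Y',S',Z')\in\CT'$ or $(Z',S',Y')\in\CT'$; by the symmetry of $\pi_\CM$ the second triple equals $\pi_\CM(Z,S,Y)$, so the two cases give $(Y,S,Z)\in\CT$ or $(Z,S,Y)\in\CT$ (and $(Z,S,Y)$ is again a separation of $G$ of the same order). For \ref{li:t3}: if some $(Y,S,Z)\in\CT$ had $Z=\emptyset$, then no nonempty branch set is contained in $Z$, hence $Z'=\emptyset$, contradicting \ref{li:t3} for $\CT'$ applied to $(Y',S',Z')\in\CT'$.

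The slightly more substantial axiom is \ref{li:t2}. Given $(Y_i,S_i,Z_i)\in\CT$ for $i=1,2,3$, apply \ref{li:t2} for $\CT'$ to the three projections $(Y_i',S_i',Z_i')\in\CT'$. If it returns a common vertex $w\in Z_1'\cap Z_2'\cap Z_3'$, then $\emptyset\neq M_w\subseteq Z_1\cap Z_2\cap Z_3$, so $Z_1\cap Z_2\cap Z_3\neq\emptyset$. If instead it returns an edge $f=ww'\in E(H)$ with an endvertex in each $Z_i'$, take the model edge $e_f\in E(G)$ with endvertices $a\in M_w$ and $b\in M_{w'}$; for each $i$, whichever of $w,w'$ lies in $Z_i'$ forces $M_w\subseteq Z_i$ or $M_{w'}\subseteq Z_i$ and hence the corresponding endvertex $a$ or $b$ into $Z_i$, so $e_f$ has an endvertex in each $Z_i$.

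I do not anticipate a real obstacle: the whole argument is a routine ``pull back along $\pi_\CM$''. The one place that calls for a moment's attention is this last edge case of \ref{li:t2}: the two endvertices of $e_f$ may serve different indices $i$, but that is permitted, since the axiom only requires the single edge $e_f$ to meet each $Z_i$, not that one fixed endvertex does.
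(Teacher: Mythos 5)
Your proof is correct, and it is the standard ``pull back along $\pi_\CM$'' argument for this lemma; the paper itself omits the proof entirely, attributing the statement to \cite{gm10}, and the argument given there (and in \cite{gro16a}) is essentially the one you wrote. All three axioms are transported correctly, and you rightly flag the only delicate point, namely that in \ref{li:t2} the two endvertices of $e_f$ may serve different indices $i$, which the axiom permits.
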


We call $\CT$ be the \emph{lifting} of $\CT'$ to $G$ with respect to
the model $\CM$. 
Clearly, the lifting may depend on the model. This is even the
case if we only consider faithful minors and models.
It is easy to see that the lifting relation is transitive, that is, if
we have graphs $G,G',G''$ such that $G'$ a minor of $G$ and $G''$ a
minor of $G'$, tangles $\CT,\CT',\CT''$ of $G,G',G''$ such that $\CT'$
the lifting of $\CT''$ to $G'$ with respect to some model of $G''$ in
$G'$ and $\CT$ the lifting of $\CT'$ to $G$ with respect to some model
of $G'$ in $G$, then there is a model of $G''$ in $G$ such that $\CT$
is the lifting of $\CT''$ to $G$ with respect to this model.

So we can lift tangles from minors of a graph to the graph. What about
the converse: do tangles of a graph induce tangles of their minors?
Obviously not in general, but in the following lemma we identify a
useful special case where a tangle of graph induces a tangle of some
minor that is a torso of a triconnected region. We need some additional terminology.
Let $\CT,\CT'$ be $G$-tangles. If $\CT'\subseteq\CT$, we say that
$\CT$ is an \emph{extension} of $\CT'$ and $\CT'$ is a
\emph{truncation} of $\CT$. Observe that every $G$-tangle $\CT$ of
order $k$ has a unique truncation to every order $k'\le k$.

\begin{lem}\label{lem:induced-tangle}
  Let $\CT$ be a $G$-tangle of order $4$ such that the truncation
  of $\CT$ to order $3$ is $\CT^3(R)$ for some triconnected region $R$
  of $G$. Let $\torso{\CT}{R}$ be the set of all
  separations $(Y',S',Z')\in\Sep_{<4}(\torso GR)$ such that
  there is a separation $(Y,S,Z)\in\CT$ with $Y'=Y\cap X,S'=S\cap
  X,Z'=Z\cap X$.

  Then $\torso{\CT}{R}$ is a $\torso{G}{R}$-tangle of order $4$.
\end{lem}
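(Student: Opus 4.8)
The plan is to verify the three tangle axioms \ref{li:t1}--\ref{li:t3} for $\torso{\CT}{R}$ directly, using that $R$ is a triconnected region (hence $2$-inseparable with $|R|\ge4$), that every connected component $C$ of $G\setminus R$ has $|N(C)|\le 2$ by Proposition~\ref{prop:3cr}, and that the truncation of $\CT$ to order $3$ equals $\CT^3(R)$. The key preliminary observation is the relationship between separations of $\torso GR$ and their ``restrictions to $R$''. First I would record that for any separation $(Y',S',Z')$ of $\torso GR$, there is a naturally associated separation $\Phi(Y',S',Z')$ of $G$ of the \emph{same order}: put each component $C$ of $G\setminus R$ with $N(C)\subseteq S'$ into the separator, and otherwise (since $|N(C)|\le 2$ and $\torso GR$ contains all edges among $N(C)$) $N(C)$ lies entirely in $S'\cup Y'$ or in $S'\cup Z'$, so $C$ goes with $Y'$ or $Z'$ accordingly; the absence of $Y'$--$Z'$ edges in $\torso GR$ guarantees no $Y$--$Z$ edges in $G$. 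Conversely, restricting a separation $(Y,S,Z)$ of $G$ to $R$ — i.e.\ $(Y\cap R, S\cap R, Z\cap R)$ — is a separation of $\torso GR$ of order at most $|S|$ (and these two operations are ``inverse'' in the sense that $\Phi$ followed by restriction is the identity on separations of $\torso GR$).

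\textbf{Axiom \ref{li:t1}.} Given $(Y',S',Z')\in\Sep_{<4}(\torso GR)$, apply $\Phi$ to get $(Y,S,Z)\in\Sep_{<4}(G)$ of the same order; by \ref{li:t1} for $\CT$, either $(Y,S,Z)\in\CT$ or $(Z,S,Y)\in\CT$, and restricting back to $R$ shows $(Y',S',Z')\in\torso\CT R$ or $(Z',S',Y')\in\torso\CT R$ respectively.

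\textbf{Axiom \ref{li:t3}.} If $(Y',S',Z')\in\torso\CT R$ via witness $(Y,S,Z)\in\CT$, I claim $Z'=Z\cap R\ne\emptyset$. If $|S|\le 2$, then $(Y,S,Z)$ being in $\CT^3(R)$ forces $R\subseteq S\cup Z$, and since $|R|\ge 4>|S|$ we get $Z\cap R\ne\emptyset$. If $|S|=3$, note that $(Y,S,Z)$ restricted to order $3$ lies in $\CT^3(R)$, so again $R\subseteq S\cup Z$ or the order-$3$ truncation... — more carefully: if $Z\cap R=\emptyset$ then $R\subseteq Y\cup S$, so $(Z,S,Y)$ has $R\subseteq S\cup Y$, meaning for the order-$3$ truncation $\CT^3(R)$ we'd need to check which side; since $X_{\CT}=R$ and $R\subseteq S\cup Y$, consistency of the tangle (Lemma~\ref{lem:tangle-closure}(2) applied within order $3$, or directly $(Y,S,Z)\in\CT$ with $Z\cap R=\emptyset$ contradicting $R\subseteq S\cup Z$ which defines membership in $\CT^3(R)$ at order $\le 3$) — the cleanest argument is: the order-$\le 3$ part of $\CT$ is exactly $\{(A,B,C)\in\Sep_{\le 3}(G): R\subseteq B\cup C\}$, wait that's only order $<4$, i.e.\ $\le 3$, which is precisely $\CT^3(R)$... hmm, $\CT^3(R)=\CT^4(R)\cap\Sep_{<4}$? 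No: $\CT^3(R)=\{(Y,S,Z)\in\Sep_{<3}(G): R\subseteq S\cup Z\}$ has order $<3$. So for $|S|=3$ I cannot directly invoke it; instead use Lemma~\ref{lem:reed}: $\CZ_\CT(S)$ is a component of $G\setminus S$ contained in $Z$, and since the order-$2$ truncation is $\CT^3(R)$ with $X_\CT=R$, for any $2$-subset $S''\subseteq S$ we have $\CZ_\CT(S'')\supseteq$ large part of $R$; by $2$-inseparability of $R$ and $|R|\ge 4$, $R$ meets the tangle side $\CZ_\CT(S)$, giving $Z\cap R\ne\emptyset$.

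\textbf{Axiom \ref{li:t2}.} Given $(Y_i',S_i',Z_i')\in\torso\CT R$ for $i=1,2,3$ with witnesses $(Y_i,S_i,Z_i)\in\CT$, by \ref{li:t2} for $\CT$ either $Z_1\cap Z_2\cap Z_3\ne\emptyset$ or some edge $e$ of $G$ has an endvertex in each $Z_i$. In the first case I must upgrade this to $Z_1'\cap Z_2'\cap Z_3'=Z_1\cap Z_2\cap Z_3\cap R\ne\emptyset$; in the second, if $e\in E(\torso GR)$ with endvertices in $R$ I'm done, and otherwise $e$ has an endvertex outside $R$, lying in some component $C$ of $G\setminus R$, and I replace the contribution of $C$ by its neighbourhood $N(C)$ (of size $\le 2$) which lies in the appropriate $Z_i\cup S_i$'s — this needs care since $N(C)$ might land in an $S_i$ rather than $Z_i$. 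The fix is to choose the witnesses $(Y_i,S_i,Z_i)$ \emph{minimally}: replace each by the $\preceq$-minimal element below it (Corollary~\ref{cor:reed2} gives $S_i=N(Z_i)$), and argue that then $Z_i\cap R$ determines $Z_i$ so that components $C$ of $G\setminus R$ with $N(C)\cap Z_i\ne\emptyset$ actually satisfy $V(C)\subseteq Z_i$. This minimal-choice argument, ensuring the witnesses restrict compatibly to $R$ so that axiom \ref{li:t2} transfers without the separators ``absorbing'' the relevant vertices or edges, is the main obstacle; everything else is bookkeeping with $\Phi$ and the $|N(C)|\le 2$ property.

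\textbf{Main difficulty.} The hard part is axiom \ref{li:t2}: making the three witness separations in $G$ ``line up'' with $R$ so that intersecting them in $G$ and then restricting to $R$ agrees with intersecting the restrictions in $\torso GR$ — equivalently, ensuring no vertex or edge of $G\setminus R$ that is ``responsible'' for \ref{li:t2} in $G$ gets lost when we pass to $\torso GR$. Choosing $\preceq$-minimal representatives and exploiting Lemma~\ref{lem:reed} together with $|N(C)|\le 2$ and $2$-inseparability of $R$ should resolve this, but it requires the most care.
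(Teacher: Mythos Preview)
Your overall strategy matches the paper's: lift separations of $\torso GR$ to $G$ for \ref{li:t1}, and for \ref{li:t2}--\ref{li:t3} push witnesses from $G$ back to $\torso GR$ using that every component $C$ of $G\setminus R$ has $|N(C)|\le2$. You also correctly identify \ref{li:t2} as the crux. But two of your suggested mechanisms do not go through as written.

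\textbf{Axiom \ref{li:t3}.} Your case $|S|=3$ via $2$-subsets $S''\subseteq S$ does not close: knowing $R\subseteq S''\cup\CZ_\CT(S'')$ for each such $S''$ gives no direct control over $R\cap\CZ_\CT(S)$. The paper instead proves once and for all that \emph{every} $(Y,S,Z)\in\CT$ has $Z\cap R\neq\emptyset$, by the following clean move: replace $Z$ by $\CZ_\CT(S)$ (same $S$, now $Z$ connected with $S=N(Z)$). If $Z\cap R=\emptyset$ then, being connected, $Z\subseteq V(C)$ for some component $C$ of $G\setminus R$, hence $S\cup Z\subseteq V(C)\cup N(C)$. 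But $|N(C)|\le2$, so $(V(C),N(C),V(G)\setminus(V(C)\cup N(C)))\in\Sep_{<3}(G)$ lies in the truncation $\CT^3(R)$ (since $R\cap V(C)=\emptyset$), and now $(S\cup Z)\cap\big(V(G)\setminus V(C)\big)\subseteq N(C)$ has size $\le2<4$, contradicting Corollary~\ref{cor:tangle-closure}. This single observation then also gives \ref{li:t3} immediately.

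\textbf{Axiom \ref{li:t2}.} Your plan to pass to $\preceq$-minimal witnesses and argue that ``components $C$ with $N(C)\cap Z_i\neq\emptyset$ satisfy $V(C)\subseteq Z_i$'' is not correct: $S_i$ can cut through $C$, so a long path component with one end in $Z_i$ need not lie entirely in $Z_i$. Also, passing to $\preceq$-minimal elements may shrink $S_i$, after which $Z_i\cap R$ need not be contained in $Z_i'$ any more. The paper's fix is lighter and avoids both issues: keep $S_i$ fixed and only replace $Z_i$ by $\CZ_\CT(S_i)$ (Lemma~\ref{lem:reed}), so $Z_i$ is connected and the new $Z_i\cap R$ is a subset of the old $Z_i'$, which is harmless. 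Now apply \ref{li:t2} in $G$. If the witness is a vertex $v$ or an edge $e$ with an endvertex $z_i$ in each $Z_i$, and some such $z_i$ lies in a component $C$ of $G\setminus R$, then all the $z_i$ lie in $V(C)\cup N(C)$. Since $Z_i$ is connected and meets $R$ (by the preliminary claim above), $Z_i$ must contain a vertex of $N(C)$; hence each $Z_i'$ contains one of the at most two vertices of $N(C)$, and pigeonhole gives either a common vertex or the torso edge between the two vertices of $N(C)$.

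\textbf{Minor point on $\Phi$.} In your construction for \ref{li:t1}, components $C$ with $N(C)\subseteq S'$ should go to one of the two sides (the paper puts them with $Z$), not ``into the separator''; otherwise the order would grow.
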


\begin{proof}
    We note first that for all $(Y,S,Z)\in\CT$ we
  have $Z\cap R\neq\emptyset$. To see this, by Lemma~\ref{lem:reed}, we may assume
  without loss of generality that $Z$ is connected in $G$ and that
  $S=N(Z)$. Hence if $Z\cap R=\emptyset$, there is a connected
  component $C$ of $G\setminus R$ such that $Z\subseteq V(C)$. By
  Proposition~\ref{prop:3cr}, we have $|N(C)|\le 2$. Thus
  $(V(C),N(C),V(G)\setminus (V(C)\cup N(C)))\in\Sep_{<3}(C)$. As the truncation
  of $\CT$ to order $3$ is $\CT^3(R)$ and $R\cap V(C)=\emptyset$, we
  have $(V(C),N(C),V(G)\setminus (V(C)\cup N(C)))\in\CT$. But as
  $(Y,S,Z)\in \CT$ and $Z\subseteq V(C)$ and thus $Z\cup S\subseteq
  V(C)\cap N(C)$, this contradicts, this contradicts
  Corollary~\ref{cor:tangle-closure}.

  Let us now prove that $\CT':=\torso{\CT}{R}$ satisfies the tangle
  axioms. Let $G':=\torso GR$. 
  To prove that $\CT'$ satisfies \ref{li:t1}, let $(Y',S',Z')\in\Sep_{<4}(G')$. For every
  connected component $C$ of $G\setminus R$ the set $N(C)$ is a clique
  in $G'$ and thus either $N(C)\subseteq Y'\cup S'$ or
  $N(C)\subseteq Z'\cup S'$. We let $Y$ be the union of $Y'$ with
  the vertex sets of all connected components $C$ of $G\setminus R$
  such that $N(C)\subseteq Y'\cup S'$, and we let $Z$ be the union of $Z'$ with
  the vertex sets of all remaining connected components of $G\setminus
  R$. Then $(Y,S,Z)\in\Sep_{<4}(G)$. By
  \ref{li:t1}, either $(Y,S,Z)\in\CT$ or $(Z,S,Y)\in\CT$, and hence either $(Y',S',Z')\in \CT'$ or
  $(Z',S',Y')\in\CT'$.

  To prove that $\CT'$ satisfies \ref{li:t2}, let
  $(Y_i',S_i',Z_i')\in\CT'$ for $i=1,2,3$. Then there are
  $(Y_i,S_i,Z_i)\in\CT$ such that $Y_i\cap R=Y'_i$, $S_i\cap R=S_i'$,
  and $Z_i\cap R=Z_i'$. By Lemma~\ref{lem:reed}, we may assume that
  the sets $Z_i$ are connected in $G$. By our observation above, they
  have a nonempty intersection with $R$.  By \ref{li:t3}, either there
  is a vertex $v\in Z_1\cap Z_2\cap Z_3$ or an edge $e$ that has at
  least one endvertex in every $Z_i$. Assume the latter, the argument
  in the former case is similar (and simpler). Let $z_i$ be the
  endvertex of $e$ in $Z_i$. If all the $z_i$ are in $R$, the
  edge $e$ is also an edge of $G'$ which has an endvertex in
  every $Z_i'$. Otherwise, there is a connected
  component $C$ of $G\setminus X$ such that all the $z_i$ are in
  $V(C)\cap N(C)$. If $z_i\in V(C)$, then
  $Z_i\cap N(C)\neq\emptyset$, because $Z_i$ is connected in
  $G$ and has a nonempty intersection with $R$. Thus $Z_1', Z_2',Z_3'$
  all contain at least one of the at most two vertices in $N(C)$. Thus
  either they share a vertex, or the edge of $G'$ that connects
  the two vertices in $N(C)$ has an endvertex in all three $Z_i'$.

  Finally, \ref{li:t3} follows from the fact that for every vertex
  separation $(Y,S,Z)\in\CT$ we have
  $Z\cap R\neq\emptyset$. 
\end{proof}

\section{Tangles of Order 4}
\label{sec:4t}

Let us now look at tangles of order
$4$. Lemma~\ref{lem:induced-tangle}, in combination with
Theorem~\ref{theo:tangles_vs_3cc} allows us to focus on 3-connected
graphs. 
The main result of this section is a correspondence between
tangles of order $4$ and what we will call \emph{quasi-4-connected regions}
of a graph. This correspondence holds for all but a small number of
\emph{exceptional} regions, which we shall completely characterise. We
first state the theorem; the necessary definitions follow as we go along.

\begin{theo}[Correspondence Theorem]\label{theo:corr}
  Let $G$ be a 3-connected graph. Then with every quasi-4-connected
  region $R$ of $G$ we can associate a $G$-tangle $\CT_R$ of order $4$
  and with every $G$-tangle $\CT$ of order $4$ a quasi-4-connected
  region $R_{\CT}$ such that 
  \[
  \CT=\CT_{R_\CT}.
  \]
\end{theo}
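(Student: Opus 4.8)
The plan is to build the two maps $R\mapsto\CT_R$ and $\CT\mapsto R_{\CT}$ explicitly and then show $\CT_{R_{\CT}}=\CT$. Two reductions organise everything. First, since $G$ is $3$-connected, $V(G)$ is its unique proper $2$-block, so by Theorem~\ref{theo:tangles_vs_3cc} the order-$3$ truncation of every $G$-tangle $\CT$ of order $4$ is exactly $\CT^3(V(G))$; hence the only \emph{proper} separations in $\CT$ have order exactly $3$, and by Lemma~\ref{lem:reed} each is recorded by the component $\CZ_{\CT}(S)$ of $G\setminus S$ it points to. Second, if one $G$-tangle of order $4$ is contained in another then the two coincide (apply the axioms \ref{li:t1} and \ref{li:t2} to the smaller one); so it suffices to prove $\CT\subseteq\CT_{R_{\CT}}$.

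For $R\mapsto\CT_R$: the torso $G_R:=\torso GR$ of a quasi-$4$-connected region is again quasi-$4$-connected, and the first supporting lemma is that a quasi-$4$-connected graph $H$ — apart from finitely many small exceptions, which are excluded from the notion of region (these are the ``exceptional regions''; e.g.\ $K_5$ minus an edge, whose unique proper $3$-separation has two singleton sides, giving a genuine tie) — carries a \emph{canonical} $H$-tangle $\CT(H)$ of order $4$ obtained by orienting every separation of order $<4$ toward its larger side. This is well defined because $3$-connectivity makes every order-$\le2$ separation trivial and quasi-$4$-connectivity forces the smaller side of a proper order-$3$ separation to be a single vertex, so for large $|H|$ there are no ties; \ref{li:t1} is then clear, \ref{li:t3} holds since the oriented side always has $\ge|H|-4$ vertices, and \ref{li:t2} reduces to the observation that three subsets of $V(H)$ of co-size $\le4$ have nonempty common intersection once $|H|$ is large (the finitely many remaining $H$ are checked by hand). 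I then take $\CT_R$ to be the lifting of $\CT(G_R)$ to $G$ (Lemma~\ref{lem:lift}) along a faithful model exhibiting $G_R$ as a minor of $G$, and verify independence of the model: each component of $G\setminus R$ attaches to at most three vertices of $R$, so the projection $\pi_{\CM}(Y,S,Z)$ is determined by $(Y\cap R,\,S\cap R,\,Z\cap R)$ together with, for each component $C$, which side of $G_R$ contains the clique $N(C)$ — data that does not involve $\CM$.

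For $\CT\mapsto R_{\CT}$: call a $3$-separator $S$ of $G$ \emph{heavy for $\CT$} if $|\CY_{\CT}(S)|\ge2$, and set
\[
  R_{\CT} \;:=\; V(G)\setminus\bigcup\bigl\{\,\CY_{\CT}(S) : S \text{ a heavy $3$-separator of } G\,\bigr\},
\]
so $R_{\CT}$ keeps every vertex except those in a more-than-one-vertex piece that $\CT$ points away from across a $3$-separation (on a hexagonal-grid-type tangle every proper $3$-separation splits off a single vertex, so $R_{\CT}=V(G)$). The core task is to show $R_{\CT}$ is a quasi-$4$-connected region. The pivotal claim is that the heavy sides $\CY_{\CT}(S)$ are \emph{pairwise disjoint}: if two of them met, the union of the two corresponding separations would be a separation in $\CT$ whose separator is contained in one of the two $S_i$, contradicting Corollary~\ref{cor:tangle-closure} (using Corollaries~\ref{cor:reed2}--\ref{cor:reed4}); so although distinct $3$-separators seen by $\CT$ always cross (Corollary~\ref{cor:reed3}), their heavy sides form a genuine disjoint family. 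Deleting these pieces leaves $R_{\CT}$ with every component of $G\setminus R_{\CT}$ attaching to at most three vertices, whence $\torso G{R_{\CT}}$ is a minor of $G$ (each added clique on an $N(C)$ is realised inside the surviving connected ``main side''), and $\torso G{R_{\CT}}$ is quasi-$4$-connected: a proper order-$\le3$ separation of the torso with both sides of size $\ge2$ would pull back to a proper separation $(Y^*,S^*,Z^*)$ of $G$ of order $\le3$ with $\min(|Y^*\cap R_{\CT}|,|Z^*\cap R_{\CT}|)\ge2$; whichever side, say $Z^*$, the tangle points to, the other satisfies $Y^*\subseteq\CY_{\CT}(S^*)$, so $|\CY_{\CT}(S^*)|\ge|Y^*|\ge2$, hence $S^*$ is heavy and $Y^*\cap R_{\CT}\subseteq\CY_{\CT}(S^*)\cap R_{\CT}=\emptyset$, a contradiction. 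Maximality of $R_{\CT}$ is argued the same way.

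It remains to prove $\CT\subseteq\CT_{R_{\CT}}$. Write $R:=R_{\CT}$. Given $(Y,S,Z)\in\CT$ we have $|S|\le3$; by an argument as in the proof of Lemma~\ref{lem:induced-tangle} (using Corollary~\ref{cor:tangle-closure} and the disjointness of heavy sides) one gets $Z\cap R\neq\emptyset$, so the projection of $(Y,S,Z)$ to $G_R$ has nonempty ``$Z$-side''; moreover its ``$Y$-side'' is contained in $Y\cap R\subseteq\CY_{\CT}(S)\cap R$, which has at most one vertex because either $S$ is heavy (so $\CY_{\CT}(S)\cap R=\emptyset$) or $|\CY_{\CT}(S)|\le1$. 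Hence in the quasi-$4$-connected graph $G_R$ this projected separation has its small side on the $Y$-side and is therefore oriented by $\CT(G_R)$ toward the $Z$-side, i.e.\ it lies in $\CT(G_R)$, so $(Y,S,Z)\in\CT_R=\CT_{R_{\CT}}$; equality of the two order-$4$ tangles follows from the second reduction. The step I expect to be the main obstacle is the disjointness-and-structure argument of the third paragraph: unlike the order-$\le3$ situation the relevant $3$-separations of a $4$-tangle are not laminar — Corollary~\ref{cor:reed3} says they pairwise cross — so there is no tree of separations to work with, and one must uncross them ``through'' the tangle via Corollaries~\ref{cor:reed2}--\ref{cor:reed4}, showing the separators cross while their heavy sides stay disjoint, and then rule out every surviving bad separation of the torso. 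A lesser nuisance is pinning down the finitely many small quasi-$4$-connected graphs for which the ``orient toward the larger side'' recipe fails.
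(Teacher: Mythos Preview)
Your plan has a genuine structural gap at exactly the step you flag as ``the main obstacle.'' The disjointness of the heavy sides is \emph{not} the real issue; what breaks is your subsequent claim that every component of $G\setminus R_{\CT}$ has at most three neighbours (condition~\ref{li:q3}).

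First, the disjointness claim as stated is false: your $R_{\CT}$ ranges over \emph{all} heavy $3$-separators, and these can be nested, giving $\CY_{\CT}(S_1)\subsetneq\CY_{\CT}(S_2)$. The Crossing Lemma (Lemma~\ref{lem:cross}) only yields $Y_1\cap Y_2=\emptyset$ for \emph{minimal} separations in $\CMT$. But even restricting to $\CMT$, you are not done: when two minimal separations $(Y_1,S_1,Z_1),(Y_2,S_2,Z_2)\in\CMT$ cross non-orthogonally, the Crossing Lemma gives $Y_1\cap Y_2=\emptyset$, $S_1\cap Y_2=\{s_1\}$, $S_2\cap Y_1=\{s_2\}$, and $s_1s_2\in E(G)$. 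Then $G[Y_1\cup Y_2]$ is connected via the crossedge $s_1s_2$, and its neighbourhood in the rest of the graph is $\{s_1',s_1'',s_2',s_2''\}$, which has four elements. So the corresponding component of $G\setminus R_{\CT}$ has $|N(C)|=4$, violating \ref{li:q3}; the torso then acquires a $K_4$ that need not be a minor of $G$, so \ref{li:q1} fails too. The hexagonal grid with triangles (Figure~\ref{fig:thex}) is exactly this situation.

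This is why the paper does not take your route. Its construction of $R_{\CT}$ (Sections~\ref{sec:cce}--\ref{sec:tangleregion}) first \emph{contracts all non-degenerate crossedges} one by one, proving after each contraction that the resulting graph is still $3$-connected, carries an induced tangle of order $4$, and has one fewer crossedge (Corollary~\ref{cor:4t15}, Lemmas~\ref{lem:4t17}--\ref{lem:4t19}, Corollary~\ref{cor:4t19}). Only after all crossedges are gone are the minimal separations mutually orthogonal, and then the analogue of your $R_{\CT}$ works; the paper tracks this via the ``fence'' $\fc(S)$ (Lemma~\ref{lem:4t21}) to show each component of the complement has exactly three neighbours. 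Your proposal is missing this entire uncrossing mechanism, and without it the region you define is simply not a quasi-$4$-connected region in the crossed case.

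A secondary gap: your argument that $\CT_R$ is independent of the faithful model is too quick. The projection $\pi_{\CM}(Y,S,Z)$ is \emph{not} determined by $(Y\cap R,S\cap R,Z\cap R)$ alone when $S$ meets a component $C$ of $G\setminus R$; which branch set absorbs the pieces of $C$ depends on $\CM$. The paper spends Lemmas~\ref{lem:q4r2} and~\ref{lem:q4r3} on this, with a lengthy case analysis.
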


We shall call the torsos $\torso G{R_{\CT}}$ for the $G$-tangles of
order $4$ the \emph{quasi-4-connected components} of $G$.

In general, the mapping  $R\mapsto \CT_R$ is
not injective; the mapping $\CT\mapsto R_\CT$ is (otherwise the theorem
could not hold). The mapping $R\mapsto\CT_R$ is \emph{canonical} (or
\emph{isomorphism invariant}). This
means that for any two graphs $G,G'$ and regions $R,R'$, if $f$ is an
isomorphism from $G$ to $G'$ that maps $R$ to $R'$ then $f$ also maps
$\CT_R$ to $\CT_{R'}$. This will be obvious from the construction. The
mapping $\CT\mapsto R_{\CT}$ is not canonical. However, the mapping
from $\CT$ to the quasi-4-connected component $\torso G{R_{\CT}}$,
viewed as an abstract graph, is (see
  Corollary~\ref{cor:q4c-can}).

\subsection{Quasi-4-Connected Graphs}
Recall from the introduction that a graph $G$ is \emph{quasi-4-connected} if %
$G$ is 3-connected and for all separations $(Y,S,Z)$ of $G$
of order $3$, either $|Y|\le 1$ or $|Z|\le 1$. In this section, we
will analyse tangles of order $4$ of quasi-4-connected graphs.

\begin{lem}\label{lem:q4c1}
  Let $G$ be a quasi-4-connected graph of order $|G|\ge 8$. Let 
  \[
  \CT:=\big\{(Y,S,Z)\in\Sep_{<4}(G)\bigmid |Y|<|Z|\big\}.
  \]
  Then  $\CT$ is a $G$-tangle of order $4$.
\end{lem}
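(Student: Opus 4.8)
The plan is to verify the three tangle axioms \ref{li:t1}, \ref{li:t2}, \ref{li:t3} directly for the proposed family $\CT$. Axiom \ref{li:t1} is essentially immediate: given any separation $(Y,S,Z)\in\Sep_{<4}(G)$, since $|G|\ge 8$ and $|S|\le 3$ we have $|Y|+|Z|\ge 5$, so $|Y|\neq|Z|$ is impossible to fail in the sense that one of $|Y|<|Z|$ or $|Z|<|Y|$ must hold unless $|Y|=|Z|$; but $|Y|=|Z|$ would force $|Y|+|Z|$ even and at least $5$, so $|Y|=|Z|\ge 3$, giving a separation of order $\le 3$ with both sides of size $\ge 3\ge 2$, contradicting quasi-4-connectedness (which demands $|Y|\le 1$ or $|Z|\le 1$ for order-$3$ separations, and $3$-connectedness rules out order $\le 2$ proper separations with both sides large). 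So exactly one of $(Y,S,Z),(Z,S,Y)$ lies in $\CT$. Axiom \ref{li:t3} is also quick: if $(Y,S,Z)\in\CT$ then $|Z|>|Y|\ge 0$, so $Z\neq\emptyset$.

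The real content is axiom \ref{li:t2}. First I would record the structural consequence of quasi-4-connectedness and $|G|\ge 8$: for any $(Y,S,Z)\in\CT$ we have $|Y|<|Z|$ and $|Y|+|Z|\ge 5$, hence $|Z|\ge 3$; moreover $|Y|\le 2$ (since if $|Y|\ge 2$ then the order-$3$ separation — or lower — would violate quasi-4-connectedness unless $|S|<3$, in which case $3$-connectedness is violated), and in fact a careful case analysis shows $|Y|\le 1$ when $|S|=3$, $|Y|=0$ when $|S|\le 2$. Thus every element of $\CT$ has $|Y|\le 1$. Now take $(Y_i,S_i,Z_i)\in\CT$ for $i=1,2,3$. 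Then $|Y_1\cup Y_2\cup Y_3|\le 3$ and $|S_1\cup S_2\cup S_3|\le 9$, while $|Z_1\cap Z_2\cap Z_3|\ge |G| - \sum_i|Y_i| - \sum_i |S_i\setminus(\text{already counted})|$. This crude bound $|G|\ge 8$ is not by itself enough against $3+9=12$, so I would instead argue more sharply: $V(G)\setminus Z_i = Y_i\cup S_i$ has size $|Y_i|+|S_i|\le 1+3 = 4$ (using $|Y_i|\le 1$), actually at most $3$ when $|S_i|\le 2$ and at most $4$ when $|S_i|=3$, but when $|S_i|=3$ we showed $|Y_i|\le 1$, so $|V(G)\setminus Z_i|\le 4$. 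Hmm, $3\times 4 = 12 > 8$ still fails. So the purely cardinality-based argument does not close; I expect the actual proof uses the edge condition in \ref{li:t2} and the connectivity of the $Z_i$ (via Lemma~\ref{lem:reed}, assuming each $Z_i$ connected with $S_i=N(Z_i)$), together with a finer analysis — possibly showing $|V(G)\setminus Z_i|\le 3$ or that the $Z_i$ can be chosen with small complement, or by contradiction assuming $Z_1\cap Z_2\cap Z_3=\emptyset$ and no connecting edge, and then building a violating separation of $G$.

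The main obstacle will therefore be axiom \ref{li:t2}: turning the failure $Z_1\cap Z_2\cap Z_3 = \emptyset$ (and no edge hitting all three $Z_i$) into a contradiction with quasi-4-connectedness. I would attack this as follows: assume WLOG each $Z_i$ is connected and $S_i=N(Z_i)$, $|Y_i|\le 1$; write $Y_i=\{y_i\}$ or $Y_i=\emptyset$. Consider $(Y,S,Z):=(Y_1,S_1,Z_1)\cap(Y_2,S_2,Z_2)$, a separation of $G$; one shows $Z=Z_1\cap Z_2$ and then intersects with $(Y_3,S_3,Z_3)$ to reach a separation $(Y',S',Z')$ with $Z'=\emptyset$, i.e. $S'=V(G)\setminus Y'$ where $Y'=Y_1\cup Y_2\cup Y_3$ has size $\le 3$, so $|S'|\ge 5$. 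But one must also control $|S'|$: $S'\subseteq S_1\cup S_2\cup S_3$, and I would use the quasi-4-connected structure to argue that in fact $S'$ is small — each $S_i$ has size $\le 3$, and since $Y_i$ is a single vertex (or empty) whose only neighbours are in $S_i$, the sets $Y_i\cup S_i$ are "small pockets"; three such pockets covering all of $G$ with $|G|\ge 8$ forces them to overlap substantially, and I would derive that one $Y_j$ lies in some $Y_i\cup S_i$ with $i\neq j$, then play the pockets against each other to produce an order-$\le 3$ separation of $G$ with both sides of size $\ge 2$, contradicting quasi-4-connectedness. The bookkeeping of which vertices land in which $S_i$ versus $Z_i$, and the use of the edge-version of \ref{li:t2}, is where the care is needed; everything else is routine.
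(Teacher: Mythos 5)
Your handling of \ref{li:t1} and \ref{li:t3} is fine, and you correctly reduce to the key structural fact that every $(Y_i,S_i,Z_i)\in\CT$ has $|Y_i|\le 1$. But for \ref{li:t2} you stop exactly where the work begins: you note that the crude cardinality bound ($3\cdot 4=12>8$) does not close, and then only sketch a plan (intersecting two of the separations, then ``playing the pockets against each other'') without carrying it out. That plan has the defect you yourself flag: the separator of $(Y_1,S_1,Z_1)\cap(Y_2,S_2,Z_2)$ can have order up to $6$, so it is not in $\Sep_{<4}(G)$ and quasi-4-connectedness says nothing about it; you give no concrete mechanism for bounding it. As written, \ref{li:t2} is not proved.

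The two missing ingredients are these. First, assuming $Z_1\cap Z_2\cap Z_3=\emptyset$ and that no edge has an endvertex in every $Z_i$, one shows: if $x\in Z_i\cap Z_j$ then $x\in Y_k$ (for $\{i,j,k\}=\{1,2,3\}$). Indeed $x\notin Z_k$ by the empty triple intersection; and $x\in S_k$ is impossible, because $3$-connectedness and $|S_k|\le 3$ force $N(Z_k)=S_k$, so $x$ would have a neighbour $z\in Z_k$ and the edge $xz$ would meet all three $Z_i$. Second, a double count: $|V(G)\setminus(Y_1\cup Y_2\cup Y_3)|\ge 8-3=5$, while there are at most $|S_1|+|S_2|+|S_3|\le 9$ incidences between these vertices and the sets $S_i$; so some vertex $x$ outside all the $Y_i$ lies in at most one $S_i$, hence in at least two of the $Z_i$, and the first step then puts $x$ into some $Y_k$ --- a contradiction. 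This combination (not a bound on $|S'|$ for an intersected separation) is what the argument actually needs; without it your proof of \ref{li:t2} is a gap, not a proof.
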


\begin{proof}
  To see that $\CT$ satisfies \ref{li:t1}, let $(Y,S,Z)\in\Sep_{<4}(G)$. Without loss of
  generality, we assume that $|Y|\le|Z|$. Then $|Y|\le 1$ and thus
  $|Z|=V(G)\setminus(Y\cup S)\ge 4$. Hence $(Y,S,Z)\in\CT$.  Similarly,
  $\CT$
  satisfies \ref{li:t3}, because for all $(Y,S,Z)\in\CT$ it holds
  that $|Y\cup S|\le 4<|V(G)|$.

  It remains to prove that $\CT$ satisfies \ref{li:t2}. For $i=1,2,3$,
  let $(Y_i,S_i,Z_i)\in\CT$. Suppose for contradiction that $Z_1\cap
  Z_2\cap Z_3=\emptyset$ and that there is no edge that
  has an endvertex in each $Z_i$.

  \begin{claim}
    For distinct $i,j,k\in[3]$ and $x\in V(G)$, if $x\in Z_i\cap Z_j$
    then $x\in Y_k$.

    \proof 
    We have $x\not\in Z_k$, because $Z_1\cap Z_j\cap
    Z_k=\emptyset$. Suppose that $x\in S_k$, and let $z\in N(x)\cap
    Z_k$. Such a $z$ exists, because
    $Z_k\neq\emptyset$ and $N(Z_k)\subseteq S_k$, and as $|S_k|\le 3$
    and $G$ is 3-connected, this implies $N(Z_k)=S_k$. But the edge
    $xz$ has an endvertex in every $Z_i$, which contradicts our
    assumption that no such edge exists. 
    \uend
  \end{claim}

  We have $|Y_i|\le 1$ and thus
  $|V(G)\setminus(Y_1\cup Y_2\cup Y_3)|\ge 5$. A simple double counting
  argument shows that there is a vertex
  $x\in V(G)\setminus(Y_1\cup Y_2\cup Y_3)$ such that $x$ is only
  contained in one of the three sets $S_1,S_2,S_3$ (count pairs
  $(x,S_i)$ where $x\in V(G)\setminus(Y_1\cup Y_2\cup Y_3)$ and
  $i=1,2,3$). But an $x\in V(G)\setminus(Y_1\cup Y_2\cup Y_3)$ contained
  in at most one of the sets $S_i$ is contained in two of the sets
  $Z_i$, and this contradicts Claim~1.
\end{proof}

\begin{figure}
  \centering
  \begin{tikzpicture}
  [
  vertex/.style={draw,circle,fill=black,inner sep=0mm,minimum
    size=2mm},
  every edge/.style={draw,thick}
  ]

    \path
    (1.5,-1.5) node[vertex] (v1) {} node[below right] {$v_1$}
    (0,0) node[vertex] (v2) {} node[left] {$w_1$}
    (1.5,0) node[vertex] (v3) {}  node[below right] {$w_2$}
    (3,0) node[vertex] (v4) {} node[right] {$w_3$}
    (0,1.5) node[vertex] (v5) {}  node[left] {$v_2$}
    (1.5,1) node[vertex] (v6) {}  node[above] {$v_3$}
    (3,1.5) node[vertex] (v7) {} node[right] {$v_4$}
    ;

    \path 
    (v1) edge (v2) edge (v3) edge (v4) edge (v5) edge[bend left=15] (v6) edge (v7)
    (v2) edge (v5) edge (v6)
    (v3) edge (v5) edge (v7)
    (v4) edge (v6) edge (v7)
    (v5) edge (v6) edge (v7)
    (v6) edge (v7)
    ;
\end{tikzpicture}
  \caption{The Graph $\THT$}
  \label{fig:th3}
\end{figure}

\begin{exa}\label{exa:cube-1}
  Figure~\ref{fig:th3} shows a quasi-4-connected graph $\THT$ with seven
  vertices and no tangle of order $4$. The name $\THT$ is motivated by
  the fact that this graph can be seen as a tetrahedron with 3
  ``corners'' attached to it.

  To see that $\THT$ has no tangle of order $4$, suppose for
  contradiction that $\CT$ is a $\THT$-tangle
  of order $4$. For $i=1,2,3$, let $S_i=N(w_i)$ and
  $Z_i=V(\THT)\setminus(\{w_i\}\cup S_i)$. Then by Lemma~\ref{lem:t2},
  we have $(\{w_i\},S_i,Z_i)\in\CT$. However, $Z_1\cap
  Z_2\cap Z_3=\emptyset$, and for every edge $e=xy$ of $\THT$ there is an
  $i$ such that $x,y\in\{w_i\}\cup S_i$. Hence $\CT$ violates tangle
  axiom \ref{li:t2}.
  \uend
\end{exa}

\begin{figure}
  \centering
  \begin{tikzpicture}
  [
  vertex/.style={draw,circle,fill=black,inner sep=0mm,minimum
    size=2mm},
  every edge/.style={draw,thick}
  ]

    \path
    (0,-0.5) node[vertex] (x1) {} node[below left] {$v_1$}
    (1.5,0) node[vertex] (x2) {} node[below] {$v_2$}
    (3,-0.5) node[vertex] (x3) {} node[below right] {$v_3$}
    (0,1.5) node[vertex] (y1) {} node[above left] {$w_1$}
    (1.5,2) node[vertex] (y2) {} node[above] {$w_2$}
    (3,1.5) node[vertex] (y3) {} node[above right] {$w_3$}
  ;

    \path 
    (x1) edge (x2) edge (x3) edge (y1) edge (y2) edge (y3) 
    (x2) edge (x3) edge (y1) edge (y2) edge (y3) 
    (x3) edge (y1) edge (y2) edge (y3) 
    ;
\end{tikzpicture}
  \caption{The graph $\TRT$}
  \label{fig:tr3}
\end{figure}

We now give a precise characterisation of the quasi-4-connected graphs
with a tangle of order $4$. A quasi-4-connected graph is
\emph{exceptional} if it is either isomorphic to a subgraph of the
graph $\THT$ shown in Figure~\ref{fig:th3} or isomorphic to a subgraph
of the graph $\TRT$ shown in Figure~\ref{fig:tr3}.

\begin{lem}\label{lem:q4c2}
  Let $G$ be an exceptional quasi-4-connected graph. Then $G$ has no
  tangle of order $4$.
\end{lem}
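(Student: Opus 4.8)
The plan is to imitate Example~\ref{exa:cube-1}: for each exceptional $G$ I will exhibit three separations of $G$ of order at most~$3$ that every $G$-tangle $\CT$ of order~$4$ is forced to contain by Lemma~\ref{lem:t2} (which for $k=4$ puts $(Y,S,Z)$ into $\CT$ whenever $|Y\cup S|\le 4$, since $4\le\frac32\cdot(4-1)$), but that jointly violate axiom~\ref{li:t2}. Concretely, Lemma~\ref{lem:t2} forces into $\CT$ the separation $\big(\{v\},N^G(v),V(G)\setminus(\{v\}\cup N^G(v))\big)$ for every $v\in V(G)$ with $\deg_G(v)\le 3$, and every trivial separation $(\emptyset,S,Z)$ with $|S|\le 3$.

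Since an exceptional graph is quasi-4-connected, hence $3$-connected, we have $\delta(G)\ge 3$, and I will use this to reduce $V(G)$ to finitely many possibilities. Assume first that $G$ is (isomorphic to) a subgraph of $\THT$. As $\deg_{\THT}(w_i)=3$, every $w_i\in V(G)$ keeps all its edges, so $N^G(w_i)=N^{\THT}(w_i)$; and if $v_1\notin V(G)$ then every such $w_i$ would have at most two neighbours in $\THT[V(G)]$, so $V(G)\cap\{w_1,w_2,w_3\}=\emptyset$ and $V(G)\subseteq\{v_2,v_3,v_4\}$, contradicting $|V(G)|\ge 4$. Hence $v_1\in V(G)$, and a short degree count (each $w_i\in V(G)$ needs its three $\THT$-neighbours in $V(G)$) shows that, up to an automorphism of $\THT$, one of four situations occurs. \emph{If $V(G)=V(\THT)$:} the three separations $\big(\{w_i\},N^{\THT}(w_i),Z_i\big)$ with $Z_i:=V(G)\setminus(\{w_i\}\cup N^{\THT}(w_i))$ lie in $\CT$; we have $Z_1\cap Z_2\cap Z_3=\emptyset$ because $\bigcup_i(\{w_i\}\cup N^{\THT}(w_i))=V(\THT)$; and no edge of $G$ has an endvertex in each $Z_i$ because every edge of $\THT$ is contained in some $\{w_i\}\cup N^{\THT}(w_i)$ — contradicting~\ref{li:t2}. \emph{If $V(G)=V(\THT)\setminus\{w_3\}$:} the $Z$-parts $Z_1:=\{v_4,w_2\}$ and $Z_2:=\{v_3,w_1\}$ of the two separations splitting off $w_1$ and $w_2$ are disjoint, and $v_3v_4$ is the only edge of $\THT$ between $Z_1$ and $Z_2$; adjoining the trivial separation $(\emptyset,\{v_3,v_4\},V(G)\setminus\{v_3,v_4\})$, whose $Z$-part avoids $v_3$ and $v_4$, then contradicts~\ref{li:t2}. \emph{If $V(G)=\{v_1,v_2,v_3,v_4,w_1\}$} (so $\THT[V(G)]$ is $K_5$ with the edge $v_4w_1$ deleted): in $G$ both $v_4$ and $w_1$ have degree~$3$ with the same three neighbours, and $v_4w_1\notin E(G)$; so the two separations splitting off $v_4$ and $w_1$ have disjoint singleton $Z$-parts $\{w_1\}$ and $\{v_4\}$ not joined by an edge of $G$, and together with any trivial separation of order at most~$3$ they contradict~\ref{li:t2}. \emph{If $|V(G)|=4$} then $G=K_4$, and the three trivial separations $(\emptyset,V(G)\setminus\{a\},\{a\})$ for $a$ ranging over three vertices have pairwise disjoint singleton $Z$-parts, so no edge meets all three, contradicting~\ref{li:t2}.

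If instead $G$ is a subgraph of $\TRT$, the argument runs along the same lines: since $N^{\TRT}(w_j)=\{v_1,v_2,v_3\}$ for all $j$, $3$-connectivity forces $\{v_1,v_2,v_3\}\subseteq V(G)$, so $V(G)=\{v_1,v_2,v_3\}\cup W$ with $\emptyset\ne W\subseteq\{w_1,w_2,w_3\}$. When $|W|=3$, the separations $\big(\{w_j\},\{v_1,v_2,v_3\},\{w_1,w_2,w_3\}\setminus\{w_j\}\big)$ lie in $\CT$, their $Z$-parts have empty intersection, and, lying in the independent set $\{w_1,w_2,w_3\}$, no edge of $G$ meets all three — contradicting~\ref{li:t2}. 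When $|W|=2$, $\TRT[V(G)]$ is $K_5$ minus the edge joining the two vertices of $W$ and one argues as in the third situation above; when $|W|=1$, $G=K_4$. I expect the only point requiring genuine care to be the case $V(G)=V(\THT)\setminus\{w_3\}$: there $G$ is a spanning subgraph of a graph with only two vertices of degree at most~$3$, so two of the three separations needed for~\ref{li:t2} come from those vertices while the third has to be a trivial separation chosen precisely to miss the single edge joining the two $Z$-parts; once this is set up the argument is as before. The remaining work — the degree bookkeeping restricting $V(G)$, and verifying in each case that the chosen triple satisfies the hypotheses of Lemma~\ref{lem:t2} and fails~\ref{li:t2} — is routine.
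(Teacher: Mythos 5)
Your proof is correct, but it takes a genuinely different route from the paper's. The paper first observes that a supergraph of a graph with a tangle of order~$4$ also has a tangle of order~$4$ (this is an instance of Lemma~\ref{lem:lift}, since a subgraph is a minor), so it suffices to treat the two inclusionwise maximal exceptional graphs $\THT$ and $\TRT$; it then kills each of these with a single triple of separations (Example~\ref{exa:cube-1} for $\THT$, and the three separations $(\{w_i\},\{v_1,v_2,v_3\},\cdot)$ for $\TRT$). You instead avoid the lifting step entirely and classify all possible exceptional graphs directly: using $3$-connectivity to pin down the degree-$3$ vertices $w_i$ and their forced neighbourhoods, you reduce to four vertex sets inside $\THT$ (all of $V(\THT)$; $V(\THT)\setminus\{w_3\}$; $\{v_1,\dots,v_4,w_1\}$; $K_4$) and the analogous cases inside $\TRT$, and in each case you exhibit a triple of separations forced into any tangle by Lemma~\ref{lem:t2} whose $Z$-parts violate~\ref{li:t2}. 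Your case analysis is complete (the enumeration of vertex sets is right, and your arguments are robust against the possible absence of edges among the $v_j$'s, which matters since ``subgraph'' is not ``induced subgraph''), and the one delicate case you flag, $V(G)=V(\THT)\setminus\{w_3\}$, is handled correctly: the only edge between $Z_1=\{v_4,w_2\}$ and $Z_2=\{v_3,w_1\}$ is $v_3v_4$, and the auxiliary separation $(\emptyset,\{v_3,v_4\},\cdot)$ excludes it. What the paper's approach buys is brevity (two cases instead of seven) at the cost of invoking the lifting of tangles along the subgraph relation; what yours buys is a self-contained, purely combinatorial argument that also makes the full list of exceptional graphs explicit.
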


\begin{proof}
  As all supergraphs of a graph that has a tangle of order $4$ also have
  a tangle of order $4$, we may
  assume without loss of generality that $G$ is an inclusionwise
  maximal exceptional quasi-4-connected graph, that is, $\THT$ or $\TRT$.

  We have already seen in Example~\ref{exa:cube-1} that $\THT$ has no tangle of order $4$.

  Suppose for contradiction that
  that $\CT$ is a $\TRT$-tangle of order $4$. By Lemma~\ref{lem:t2},
  the separations 
  $(Y_i,S,Z_i):=\big(\{w_i\},\{v_1,v_2,v_3\},\{w_2,w_2,w_3\}\setminus\{w_i\}\big)$
  of order $3$ are in $\CT$. However, we have $Z_1\cap Z_2\cap
  Z_2=\emptyset$, and for every edge $xy\in E(\TRT)$ there is an $i$ such
  that $x,y\in Y_i\cup S$. This contradicts~\ref{li:t2}.
\end{proof}

\begin{theo}\label{theo:q4c}
  Let $G$ be a quasi-4-connected graph. Then $G$ has a tangle of order
  $4$ if and only if it is not exceptional.

  Furthermore, if $G$ has a tangle of order $4$, it has exactly one
  such tangle, which consists of all separations
  $(Y,S,Z)\in\Sep_{<4}(G)$ such that $|Y|<|Z|$.
\end{theo}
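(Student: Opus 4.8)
The plan is to settle the ``only if'' direction by Lemma~\ref{lem:q4c2} and to treat the ``if'' direction together with the uniqueness statement at once: for a non-exceptional quasi-4-connected $G$ I will show that $\CT:=\{(Y,S,Z)\in\Sep_{<4}(G)\mid |Y|<|Z|\}$ is the unique $G$-tangle of order~$4$. Two observations do most of the work. (a)~For every $(Y,S,Z)\in\Sep_{<4}(G)$ with $|Y|<|Z|$ we have $|Y\cup S|\le 4$: when $|S|=3$, quasi-4-connectedness together with $|Y|<|Z|$ forces $|Y|\le 1$, and when $|S|\le 2$, $3$-connectedness forces $Y=\emptyset$. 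Since $4\le\frac{3}{2}\cdot(4-1)$, Lemma~\ref{lem:t2} thus places every such separation in \emph{every} $G$-tangle of order~$4$. (b)~A non-exceptional quasi-4-connected graph has no separation of order $<4$ with $|Y|=|Z|$: $3$-connectedness rules out $|S|\le 2$, quasi-4-connectedness rules out $|Y|=|Z|\ge 2$ when $|S|=3$, and the only remaining case $|Y|=|Z|=1$, $|S|=3$ (hence $|G|=5$) forces $N(y)=N(z)=S$ for $Y=\{y\}$, $Z=\{z\}$, which makes $G$ a subgraph of $\TRT$ and contradicts non-exceptionality.

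Uniqueness then follows quickly. Let $\CT'$ be a $G$-tangle of order~$4$; by Lemma~\ref{lem:q4c2}, $G$ is not exceptional. By~(a), $\CT'$ contains every separation with $|Y|<|Z|$, and it contains no separation $(Y,S,Z)$ with $|Y|>|Z|$: otherwise $(Z,S,Y)\in\CT'$ as well, but then \ref{li:t2} fails for the triple $(Y,S,Z),(Y,S,Z),(Z,S,Y)$, since $Z\cap Z\cap Y=\emptyset$ and there is no edge between $Y$ and $Z$. By~(b) these cases are exhaustive, so $\CT'=\CT$.

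For existence I will verify that $\CT$ is a $G$-tangle when $G$ is not exceptional. Axiom~\ref{li:t1} is immediate from~(b) (exactly one of a separation and its reverse has the strictly smaller first part), and~\ref{li:t3} holds since $|Z|>|Y|\ge 0$. For~\ref{li:t2} I split on $|G|$. If $|G|\ge 8$ this is precisely Lemma~\ref{lem:q4c1}. If $|G|=4$ then $3$-connectedness gives $G=K_4$, which is a subgraph of $\TRT$ and hence exceptional, so there is nothing to prove. For $5\le|G|\le 7$ I will re-run the mechanics of the proof of Lemma~\ref{lem:q4c1}: if $(Y_i,S_i,Z_i)\in\CT$ ($i=1,2,3$) witness a failure of~\ref{li:t2}, then each $|Y_i|\le 1$, each $S_i=N(Z_i)$ (by $3$-connectedness), and $Z_i\cap Z_j\subseteq Y_k$ for distinct $i,j,k$ (else a neighbour in $Z_k$ of a vertex of $Z_i\cap Z_j$ gives an edge meeting all three $Z_i$). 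Consequently every vertex of $W:=V(G)\setminus(Y_1\cup Y_2\cup Y_3)$ lies in at least two of $S_1,S_2,S_3$; since $|S_i|\le 3$ and $|W|\ge|G|-3$, a case analysis of the configurations of $(S_1,S_2,S_3)$ on $W$, combined with $Z_i=V(G)\setminus(Y_i\cup S_i)$ and $S_i=N(Z_i)$, will force $G$ to be a subgraph of $\THT$ (if $|G|=7$) or of $\TRT$ (if $|G|\le 6$), contradicting non-exceptionality.

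The step I expect to be the main obstacle is exactly this finite case analysis for $5\le|G|\le 7$: reconstructing the adjacencies of $G$ from the combinatorial constraints on the three offending separations and recognising the two exceptional templates $\THT$ and $\TRT$. Everything else is bookkeeping on top of Lemmas~\ref{lem:t2}, \ref{lem:q4c1} and~\ref{lem:q4c2} and the tangle axioms.
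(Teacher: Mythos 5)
Your overall strategy coincides with the paper's: the ``only if'' direction is exactly Lemma~\ref{lem:q4c2}, and the ``if'' direction reduces via Lemma~\ref{lem:q4c1} to graphs of order at most $7$, where a finite case analysis must show that any violation of \ref{li:t2} forces $G$ to embed into $\THT$ or $\TRT$. Your observations (a) and (b) are correct, and your uniqueness argument is sound and is in fact a welcome supplement: the paper's own proof only verifies that $\CT$ is a tangle and never spells out why it is the \emph{only} one, whereas your combination of Lemma~\ref{lem:t2} (every separation with $|Y\cup S|\le 4$ lies in every tangle of order $4$) with the observation that no separation of a non-exceptional quasi-4-connected graph has $|Y|=|Z|$ settles the ``exactly one'' clause cleanly.

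The genuine gap is the finite case analysis itself, which you explicitly defer and which constitutes the bulk of the paper's proof. Setting up the constraints ($|Y_i|\le 1$, the claim that $Z_i\cap Z_j\subseteq Y_k$, the double count showing each vertex outside $Y_1\cup Y_2\cup Y_3$ lies in at least two of the $S_i$) is the easy part; the work is in reconstructing the adjacencies. Concretely, the paper first dispatches $|G|\le 5$ and the 4-connected case separately (all quasi-4-connected graphs of order at most $5$ that are not 4-connected are exceptional, and 4-connected ones are covered by Lemma~\ref{lem:t1}), so only $6\le|G|\le 7$ with $G$ not 4-connected remains. For $|G|=6$ one must first rule out $y_1\in S_2$ by a refined count, then split on $|S_1\cap S_2|\in\{2,3\}$, producing embeddings into $\TRT$ and $\THT$ respectively; for $|G|=7$ one must show the $S_i$ are pairwise distinct with pairwise intersections of size at most $2$, then deduce the existence of a vertex $x_1\in S_1\cap S_2\cap S_3$ and vertices $x_{ij}\in S_i\cap S_j\setminus S_k$, yielding an embedding into $\THT$. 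None of this is automatic from your setup, and your parenthetical assignment of $\TRT$ to $|G|\le 6$ and $\THT$ to $|G|=7$ is not quite how it falls out (order $6$ can land in either template). So the plan is the right one, but the theorem is not yet proved.
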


\begin{proof}
  We have already seen that if $G$ is exceptional then it has no
  tangle of order $4$. To prove the converse, we assume that $G$ is not
  exceptional. Let 
  \[
  \CT:=\big\{(Y,S,Z)\in\Sep_{<4}(G)\bigmid |Y|<|Z|\big\}.
  \]
  We shall prove that $\CT$ is a $G$-tangle of order $4$.

  By Lemma~\ref{lem:q4c1}, we may assume that
  $|G|\le 7$. Note that $|G|\ge 5$, because the only quasi-4-connected graph of
  order at most $4$, the tetrahedron, is a subgraph of $\THT$ (and
  also of $\TRT$) and hence exceptional. If $G$ is
  4-connected, then it follows from Lemma~\ref{lem:t1} that $\CT$ is a
  $G$-tangle of order $4$. Hence we may assume that $|G|$ is not
  4-connected. But then $|G|\ge 6$, because all graphs of order $5$
  that are not 4-connected are subgraphs of $\THT$  (and
  also of $\TRT$) and hence exceptional. From
  from now on we assume that $6\le|G|\le 7$ and that $G$ is not 4-connected. 

  $\CT$ trivially \ref{li:t3}. To see that it satisfies
  \ref{li:t1}, let $(Y,S,Z)\in\Sep_{<4}(G)$. Without loss of generality, we may assume that $|Y|\le |Z|$. As
 $G$ is quasi-4-connected, we have $|Y|\le 1$. Thus
 $|Z|=|G|\setminus|Y\cup S|\ge 6-4=2>|Y|$.

 It remains to prove that $\CT$ satisfies \ref{li:t2}. For $i=1,2,3$,
  let $(Y_i,S_i,Z_i)\in\CT$. Suppose for contradiction that $Z_1\cap
  Z_2\cap Z_3=\emptyset$ and that there is no edge that
  has an endvertex in each $Z_i$.

  The next claim is the same as Claim~1 in the proof of
  Lemma~\ref{lem:q4c1}. 
  
  \begin{claim}
    For distinct $i,j,k\in[3]$ and $x\in V(G)$, if $x\in Z_i\cap Z_j$
    then $x\in Y_k$.
    \uend
  \end{claim}

  \begin{cs}
    \case1: $|G|=6$.\\
    Then by Claim~2, at least two of the sets $Y_i$ must be nonempty. Say, $Y_1=\{y_1\}$ and
    $Y_2=\{y_2\}$. Then $S_i=N(y_i)$ for $i=1,2$. 

    Suppose for contradiction $y_1\in S_2$. Then $y_2\in S_1$. A similar
    double counting argument as above shows that at least one of the
    remaining four vertices in $V(G)\setminus(Y_1\cup Y_2)$ is contained
    in at most one of the sets $S_i$: there are two pairs $(x,S_1)$
    with $x\in V(G)\setminus (Y_1\cup Y_2)$ and two pairs $(x,S_2)$
    with $x\in V(G)\setminus (Y_1\cup Y_2)$ and at most three pairs
    $(x,S_3)$; overall at most seven pairs $(x,S_i)$. But if every $x$
    was contained in two of the sets $S_i$, there would be eight such
    pairs. As above, an
    $x\in V(G)\setminus(Y_1\cup Y_2\cup Y_3)$ contained in at most one of
    the sets $S_i$ is contained in two of the sets $Z_i$, and this
    contradicts Claim~1.

    So $y_1\not\in S_2$ and $y_2\not\in S_1$. As $V(G)= 6$, we must
    have $|S_1\cap S_2|\ge 2$. 

    \begin{cs}
      \case{1a}
      $|S_1\cap S_2|=3$.\\
      Then $S_1=S_2$. Let $y_3$ be the unique vertex in
    $V(G)\setminus(S_1\cup\{y_1,y_2\})$. Then $N(y_3)=S_1$, because
    $y_1,y_2\not\in N(y_3)$ (otherwise $S_1=S_2$ would not separate
    $y_1,y_2$ from $y_3$). Thus $G$ is isomorphic to a subgraph of
    $\TRT$: the three vertices in $S_1$ can be mapped to
    $v_1,v_2,v_3$, and the vertices $y_1,y_2,y_3$ can be mapped to
    $w_1,w_2,w_3$.

    \case{1b}
    $|S_1\cap S_2|= 2$.\\
    Say, $S_1\cap S_2=\{x_1,x_2\}$. For
    $i=1,2$, let $x_{2+i}$ be the unique vertex in
    $S_i\setminus\{x_1,x_2\}$. Figure~\ref{fig:ex1}(a) shows the
    situation. As there is no edge from $y_1$ to $y_2$, this shows that
    $G$ is isomorphic to a subgraph of $\THT$: the four vertices $x_1,\ldots,x_4$
    can be mapped to $v_1,\ldots,v_4$ and $y_1,y_2$ to $w_1,w_2$, respectively.
    \end{cs}

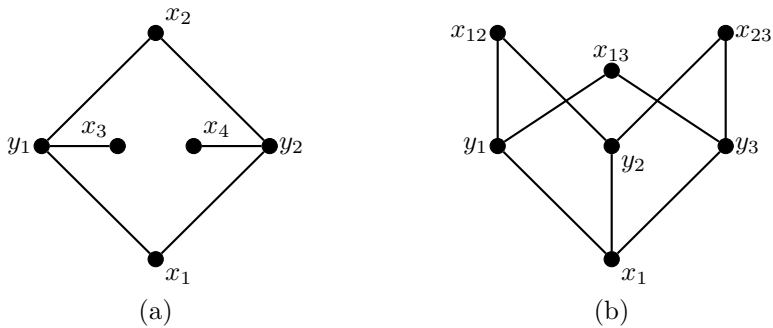
\begin{figure}
  \centering
  \begin{tikzpicture}
  [
  vertex/.style={draw,circle,fill=black,inner sep=0mm,minimum
    size=2mm},
  every edge/.style={draw,thick}
  ]

  \begin{scope}
      \path
      (0,0) node[vertex] (y1) {} node[left] {$y_1$}
      (1,0) node[vertex] (z1) {} node[above left] {$x_3$}
      (2,0) node[vertex] (z2) {} node[above right] {$x_4$}
      (3,0) node[vertex] (y2) {} node[right] {$y_2$}
      (1.5,1.5) node[vertex] (xx) {} node[above right] {$x_2$}
      (1.5,-1.5) node[vertex] (x) {} node[below right] {$x_1$}
      ;

  \path 
  (y1) edge (z1) edge (x) edge (xx) 
  (y2) edge (z2) edge (x) edge (xx) 
  ;

      \path (1.5,-2.2) node {(a)};
  \end{scope}

  \begin{scope}[xshift=6cm]

    \path
    (1.5,-1.5) node[vertex] (v1) {} node[below right] {$x_1$}
    (0,0) node[vertex] (v2) {} node[left] {$y_1$}
    (1.5,0) node[vertex] (v3) {}  node[below right] {$y_2$}
    (3,0) node[vertex] (v4) {} node[right] {$y_3$}
    (0,1.5) node[vertex] (v5) {}  node[left] {$x_{12}$}
    (1.5,1) node[vertex] (v6) {}  node[above] {$x_{13}$}
    (3,1.5) node[vertex] (v7) {} node[right] {$x_{23}$}
    ;

    \path 
    (v1) edge (v2) edge (v3) edge (v4)
    (v2) edge (v5) edge (v6)
    (v3) edge (v5) edge (v7)
    (v4) edge (v6) edge (v7)
    ;

     \path (1.5,-2.2) node {(b)};
 
  \end{scope}

\end{tikzpicture}
  \caption{Proof of Theorem~\ref{theo:q4c}}
  \label{fig:ex1}
\end{figure}

    \case2: $|G|=7$.\\
    Then all three sets $Y_i$ must be nonempty. Say, $Y_i=\{y_i\}$, and note
    that $S_i=N(y_i)$. 

    By essentially the same argument as in Case~1, we have $y_i\not\in
    S_j$ for all $i,j$. Furthermore, $S_i\neq S_j$ for all $i\neq j$,
    because otherwise $\big((\{y_i,y_j\},S_i,V(G)\setminus (S_i\cup\{y_i,y_j\})\big)$
    is separation of $G$ of order $3$ where both sides have
    cardinality at least $2$, which contradicts $G$ being
    quasi-4-connected. Hence $|S_i\cap S_j|\le 2$. By the usual
    argument based on Claim~1, each of the four vertices in $V(G)\setminus\{y_1,y_2,y_3\}$
    is contained in at least two of the sets $S_i$. It follows that
    there is one vertex $x_1\in S_1\cap S_2\cap S_3$ and for all
    distinct $i,j,k$
    a vertex $x_{ij}\in S_i\cap S_j\setminus S_k$. So far, the graph
    $G$ looks like the graph in Figure~\ref{fig:ex1}(b). This shows
    that it is a subgraph of $\THT$.
    \qedhere
 \end{cs}
\end{proof}

\subsection{Quasi-4-Connected Regions}
\label{sec:q4r}

For the rest of Section~\ref{sec:4t}, we make the following
assumption.

\begin{ass}\label{ass:4t1}
  $G$ is a 3-connected graph.
\end{ass}

A \emph{quasi-4-connected region} of $G$ is a
subset $R\subseteq V(G)$ satisfying the following conditions.
\begin{nlist}{Q}
  \item\label{li:q1} $\torso{G}{R}$ is a faithful minor of $G$.
  \item\label{li:q2} $\torso{G}{R}$ is quasi-4-connected.
  \item\label{li:q3} For every connected component $C$ of $G\setminus R$ it holds that $N(C)=3$.
\end{nlist}
While conditions \ref{li:q1} and \ref{li:q2} are, to some extent,
natural, condition~\ref{li:q3} may seems less so. It is a (weak) maximality
condition: if $R'\supset R$ such that $\torso{G}{R'}$ is
quasi-4-connected, then $R'\setminus R$ contains at most one vertex of
every connected component of $G\setminus R$ (unless $|R|=4$). 
Conditions \ref{li:q1}--\ref{li:q3} are motivated by the characterisation of 3-connected
components given in Proposition~\ref{prop:3cr}(3). The reason for
choosing these conditions instead of adding some maximality condition
is simply that it works best in combination with tangles and for the
Decomposition Theorem; it is condition \ref{li:q3} which guarantees
that our decomposition will have adhesion $3$.

In the remainder of Section~\ref{sec:q4r}, we shall prove that we can associate
a tangle of order $4$ with every quasi-4-connected region, up to a
finite number of small exceptional cases. These exceptional cases will
be derived from the exceptional quasi-4-connected graphs, but will
also have to take the surrounding graph into account. The following
example illustrates why.

\begin{figure}
  \centering
  \begin{tikzpicture}
  [
  vertex/.style={draw,circle,fill=black,inner sep=0mm,minimum
    size=2mm},
  every edge/.style={draw,thick}
  ]

    \path
    (1.5,-1.5) node[vertex] (x1) {} node[below right] {$v_1$}
    (0,0) node[vertex] (y1) {} node[left] {$w_1$}
    (1.5,0) node[vertex] (y2) {}  node[below right] {$w_2$}
    (3,0) node[vertex] (y3) {} node[right] {$w_3$}
    (0,1.5) node[vertex] (x2) {}  node[left] {$v_2$}
    (1.5,1) node[vertex] (x3) {}  node[above] {$v_3$}
    (3,1.5) node[vertex] (x4) {} node[right] {$v_4$}
   (1.5,2.5) node[vertex] (y4) {}  node[above] {$w_4$}
    ;

    \path 
    (x1) edge (y1) edge (y2) edge (y3) edge[dashed] (x2) edge[dashed,bend left=15] (x3) edge[dashed] (x4)
    (y1) edge (x2) edge (x3)
    (y2) edge (x2) edge (x4)
    (y3) edge (x3) edge (x4)
    (x2) edge[dashed] (x3) edge[dashed] (x4)
    (x3) edge[dashed] (x4)
    (y4) edge (x2) edge (x3) edge (x4)
    ;
\end{tikzpicture}
  \caption{The graph $\THF$}
  \label{fig:th4}
\end{figure}

\begin{exa}\label{exa:q4r1}
  Consider the graph $G=\THF$ in Figure~\ref{fig:th4}. The dashed edges
  may or not be present; it makes no difference for the example. Let
  $R:=\{v_1,v_2,v_3,v_4\}$. Then $R$ is a quasi-4-connected region of
  $G$. The torso $\torso GR$ is a tetrahedron, which is an exceptional
  quasi-4-connected graph. Yet the graph $G$ has a tangle $\CT$ of
  order $4$. This tangle $\CT$ consists of all separations
  $(Y,S,Z)\in\Sep_{<4}(G)$ such that $R\subseteq S\cup Z$, and hence it
  is fully justified to say that this tangle is ``associated with $R$''.

  To make the example more interesting, we may replace the vertices
  $w_i$ by larger 3-connected graphs. Then the resulting graph may have other
  tangles of order $4$. But $R$ remains a quasi-4-connected region and
  the set of all
  separations $(Y,S,Z)\in\Sep_{<4}(G)$ such that $R\subseteq S\cup Z$
  remains a tangle of order $4$.
  \uend
\end{exa}

Let us call subgraph $H$ of $\THF$ \emph{full} if it is obtained by
deleting some of the dashed edges in Figure~\ref{fig:th4}, that is,
$V(H)=V(\THF)=\{v_1,\ldots,v_4,w_1,\ldots,w_4\}$ and 
\[
\{v_iw_j\mid i,j\in[4],v_{i}w_j\in E(\THF)\}\subseteq E(H)\subseteq
E(\THF).
\]
Observe that every full subgraph of $\THF$ is non-exceptional quasi-4-connected.

Let $R$ be a quasi-4-connected region of $G$. A
\emph{non-exceptional extension} of $R$ is a graph $\hat H$ satisfying the following 
conditions.
\begin{nlist}{X}
\item\label{li:x1} $\hat H$ is a faithful minor of $G$.
\item\label{li:x2}  $\hat H$ is non-exceptional quasi-4-connected.
\item\label{li:x3}  $R\subseteq V(\hat H)$, and for each connected component $C$ of $G\setminus R$ it holds that
  $V(\hat H)\cap V(C)\le 1$.
\item\label{li:x4}
  Subject to \ref{li:x1}--\ref{li:x3}, $V(\hat H)$ is inclusionwise minimal.
\end{nlist}
Note that, by \ref{li:x1} and \ref{li:x3}, we have $R\subseteq V(\hat H)\subseteq V(G)$
We call the vertices in $V(\hat H)\setminus R$ the \emph{extension
  vertices} of $\hat H$. 
Further note that if $\torso GR$ is
non-exceptional, then we have $V(\hat H)=R$ for every
non-exceptional extension $\hat H$ of $R$. This implies $\hat
H\subseteq\torso GR$, but not necessarily $\hat
H=\torso GR$.

\begin{lem}\label{lem:q4r1}
  Let $R$ be a quasi-4-connected region of $G$ such that $\torso GR$ 
  is exceptional. Let $\hat H$ be a non-exceptional extension of
  $R$. Then $\hat H$ is isomorphic to a full subgraph of $\THF$.
\end{lem}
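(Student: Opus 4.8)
The plan is to extract a rigid local structure for $\hat H$, establish some bounds, and then run a case analysis over the short list of exceptional quasi-$4$-connected graphs.

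\emph{Local structure.} Since distinct components of $G\setminus R$ are joined by no edge of $G$ and $|V(\hat H)\cap V(C)|\le 1$ for every component $C$ by \ref{li:x3}, the set $E:=V(\hat H)\setminus R$ of extension vertices is an independent set of $\hat H$. A short model-chase (the branch set of an extension vertex $v\in C$ must lie inside $V(C)$, so every neighbour of $v$ in $\hat H$ lies in $N_G(C)$), combined with $|N_G(C)|=3$ from \ref{li:q3} and the $3$-connectivity of $\hat H$, shows that each $v\in E$ has degree exactly $3$ in $\hat H$ with $N_{\hat H}(v)=N_G(C)$, which is a triangle of $\torso GR$ since the torso adds all edges inside $N_G(C)$. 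Moreover $E\ne\emptyset$: otherwise $\hat H$ would be a subgraph of the exceptional graph $\torso GR$, hence exceptional, contradicting \ref{li:x2}.

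\emph{Bounds and leverage.} As $\torso GR$ is exceptional it is a subgraph of $\THT$ or $\TRT$, so $4\le|R|\le7$ and $\torso GR$ is one of the finitely many exceptional quasi-$4$-connected graphs of that order. If two extension vertices had the same neighbourhood $N$, then $(\{v,v'\},N,V(\hat H)\setminus(\{v,v'\}\cup N))$ would be an order-$3$ separation forcing $|V(\hat H)|\le6$; so once $|V(\hat H)|\ge7$ distinct extension vertices attach to distinct triangles. The decisive structural fact is that if $|V(\hat H)|\le8$ then every order-$3$ separation of $\hat H$ has a side of size $\le1$, i.e.\ $\hat H$ has no two vertices whose combined external neighbourhood has size $\le3$ while two further vertices remain. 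Applied to an extension vertex $v$ and a neighbour $u$ that has few other neighbours — a low-degree vertex of $\torso GR$, or (were they adjacent) another extension vertex — this tool rules out most ways of attaching extension vertices. Finally, by \ref{li:x4}, deleting any single extension vertex from $\hat H$ produces a graph satisfying \ref{li:x1} and \ref{li:x3} but not \ref{li:x2}; as this graph has more than $7$ vertices when $|V(\hat H)|\ge9$ it cannot then be exceptional, so it must fail to be quasi-$4$-connected — a possibility the case analysis rules out, yielding $|V(\hat H)|=8$.

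\emph{Case analysis and conclusion.} It remains to go through the exceptional $\torso GR$: the tetrahedron for $|R|=4$; $K_5-e$ and its $3$-connected subgraphs for $|R|=5$; and the short explicit lists of $6$- and $7$-vertex spanning subgraphs of $\THT$, and of $\TRT$ and its $3$-connected subgraphs, for $|R|=6,7$. For each, one lists the triangles of $\torso GR$ (the only candidates for $N_G(C)$), uses the leverage above to discard attachments that violate quasi-$4$-connectivity, and uses \ref{li:x4} together with the classification of exceptional graphs of order $\le7$ to see which combinations leave every extension vertex essential (for instance, three extension vertices on a tetrahedron produce $\THT$, so a fourth is forced). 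One then checks that the surviving graphs, after identifying the vertices of degree $\ge4$ with the $v_i$ and those of degree $3$ with the $w_i$ of $\THF$, are exactly the full subgraphs of $\THF$; in several $\TRT$-type cases no admissible attachment survives and the statement is vacuous. (The converse, that full subgraphs of $\THF$ are indeed non-exceptional quasi-$4$-connected, is the observation recorded just before the lemma.) The main obstacle is this final step, and within it the cases $|R|=6,7$, where $\torso GR$ already carries degree-$3$ vertices and hence admits very few safe attachment triangles; there one must verify both that the unique minimal non-exceptional quasi-$4$-connected extension is the one completing the picture to a full subgraph of $\THF$, and which vertices of $R$ end up playing the roles of the $v_i$ and the $w_i$.
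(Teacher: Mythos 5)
Your local-structure analysis (each extension vertex has degree $3$ in $\hat H$ with neighbourhood $N_G(C)$ for its component $C$, no two extension vertices share a neighbourhood, and the set of extension vertices is nonempty) matches the opening of the paper's proof. From that point on, however, your argument is a plan rather than a proof: the entire substantive content is deferred to an exhaustive case analysis over the exceptional torsos of orders $4$ through $7$ that you describe but do not carry out, and you yourself flag "this final step" as the main obstacle. In particular, the assertion that "in several $\TRT$-type cases no admissible attachment survives" is precisely the nontrivial claim that needs proving, and it is asserted rather than argued.

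The idea you are missing is the paper's Claim~1, which disposes of the $\TRT$-type torsos in one stroke and collapses the case analysis. If $H:=\torso GR$ is not a subgraph of $\THT$, then, being exceptional, it contains the three pairwise non-adjacent vertices $w_1,w_2,w_3$ of $\TRT$, and hence also $v_1,v_2,v_3$ by $3$-connectivity. For every component $C$ of $G\setminus R$ the set $N(C)$ is a clique of $H$ of size $3$, so it contains at most one $w_i$; thus for an extension vertex $z\in V(C)$, say with $w_1,w_2\notin N(C)$, the set $\{v_1,v_2,v_3\}$ separates $\{w_1,w_2\}$ from $\{w_3,z\}$ in $\hat H$ (the $w_i$ have no neighbours outside $\{v_1,v_2,v_3\}$), contradicting quasi-$4$-connectivity of $\hat H$. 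This reduces the whole problem to ``$H$ is a tetrahedron with some degree-$3$ corners attached,'' after which the only non-exceptional completion by further mutually non-adjacent degree-$3$ vertices with distinct triangle neighbourhoods is a full subgraph of $\THF$. Without this reduction, your exhaustion over all $5$-, $6$- and $7$-vertex exceptional graphs and all their triangles would have to be written out case by case; since the proposal does not do so, the proof as it stands has a genuine gap exactly where the work lies.
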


\begin{proof}
  Let $H:=\torso GR$ and $\hat R:=V(\hat H)$. By \ref{li:q3} and \ref{li:x3} and since $\hat H$ is
  3-connected, we have  $N^{\hat H}(z)\subseteq R$ and $|N^{\hat H}(z)|=3$ for every extension vertex
  $z\in\hat R\setminus R$. We observe next that there are no two
  extension vertices $z_1,z_2\in \hat R\setminus R$ such that
  $N^{\hat H}(z_1)=N^{\hat H}(z_2)$. Indeed, if
  $N^{\hat H}(z_1)=N^{\hat H}(z_2)=:N$ then
  $|\hat R\setminus (N\cup\{z_1,z_2\})|=1$, because $\hat H$ is
  quasi-4-connected, and thus $\hat H$ is isomorphic to a subgraph of $\TRT$, which
  contradicts $\hat H$ being non-exceptional.
 
  \begin{claim}
    $H$ is isomorphic to a subgraph of $\THT$.

    \proof Suppose for contradiction that $H$ is not isomorphic to a
    subgraph of $\THT$. As $H$ is exceptional, this means that
    $H$ is isomorphic to a subgraph of of $\TRT$ that contains the
    vertices $w_1,w_2,w_3$. Without loss of generality we assume that
    $H\subseteq\TRT$ with $w_1,\ldots,w_3\in R$. Then
    $v_1,\ldots,v_3\in R$, because otherwise $H$ is not
    3-connected. For every connected component $C$ of $G\setminus R$
    we have $N(C)\le 3$ and $N(C)$ is a clique in $H$. Thus $N(C)$
    contains at most one of the vertices $w_i$.  Let
    $z\in\hat R\setminus R$, and let $C$ be the connected
    component of $G\setminus R$ such that $z\in V(C)$. Without loss of
    generality we may assume that $w_1,w_2\not\in N(C)$. Then there is
    a separation $(Y,\{v_1,v_2,v_3\},Z)$ of $\hat H$ with
    $w_1,w_2\in Y$ and $w_3,z\in Z$, and this contradicts $\hat H$
    being quasi-4-connected.
    \uend
  \end{claim}
  
  It follows from Claim~1 that $H$ is a tetrahedron, possibly with some
 vertices of degree $3$ attached. The only way to turn such a graph
 into a non exceptional quasi-4-connected graph by attaching further
 vertices of degree $3$ with mutually non-adjacent neighbourhoods is
 to turn it into $\THF$, possibly with some of the dashed edges missing.
\end{proof}

Let us call a quasi-4-connected region $R$ \emph{non-exceptional} if
it has a non-exceptional extension $\hat H$. Let $R$ be
non-exceptional and $\hat H$ a non-exceptional extension of $R$. Let
$\CM$ be a faithful model of $\hat H$ in $G$, and let $\CHT$ be the
unique $\hat H$-tangle of order $4$. Then the lifting
$\CT(\hat H,\CM)$ of $\CHT$ with respect to $\CM$ is a $G$-tangle of
order $4$.

\begin{lem}\label{lem:q4r3}
  Let $R$ be a non-exceptional quasi-4-connected region of $G$ such
  that $\torso GR$ is exceptional. Then
  for all non-exceptional extensions $\hat H,\hat H'$ of $R$ and all faithful
  models $\CM$ of $\hat H$ and $\CM'$ of $\hat H'$ we have $\CT(\hat
  H,\CM)=\CT(\hat H',\CM')$.
\end{lem}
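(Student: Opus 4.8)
The plan is to describe each of the two liftings through its ``tangle side'' at every $3$-separator of $G$ (Lemma~\ref{lem:reed}) and to read that side off from the structural description of non-exceptional extensions given by Lemma~\ref{lem:q4r1}. By Lemma~\ref{lem:reed}, a $G$-tangle $\CT$ of order $k$ is determined by the family $(C_{\CT}(S))_{|S|<k}$, where $C_{\CT}(S)$ is the component of $G\setminus S$ with $(Y,S,Z)\in\CT\iff V(C_{\CT}(S))\subseteq Z$. Since $G$ is $3$-connected, $G\setminus S$ is connected for every $S$ with $|S|\le 2$, so for such $S$ we have $C_{\CT(\hat H,\CM)}(S)=C_{\CT(\hat H',\CM')}(S)=V(G)\setminus S$. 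Hence it suffices to show
\[
  C_{\CT(\hat H,\CM)}(S)=C_{\CT(\hat H',\CM')}(S)\qquad\text{for every $3$-separator $S$ of $G$.}
\]

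By Lemma~\ref{lem:q4r1}, $\hat H$ and $\hat H'$ are both isomorphic to full subgraphs of $\THF$; in particular each has exactly eight vertices, of which four --- say $v_1,\dots,v_4$ --- carry the ``tetrahedron'' and the remaining four --- say $w_1,\dots,w_4$ --- are pairwise non-adjacent with $N(w_i)=\{v_j:j\neq i\}$. As $R\subseteq V(\hat H)$ and $|R|=|V(\torso GR)|$ is fixed, $R$ consists of $\{v_1,\dots,v_4\}$ together with a fixed subset of the $w_i$; the remaining $w_i$ (for $i$ in some fixed index set $I\subseteq[4]$) are the extension vertices, each lying in a distinct component $C_i$ of $G\setminus R$ with $N(C_i)=\{v_j:j\neq i\}$ by \ref{li:q3} and \ref{li:x3}. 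Thus $\hat H$ and $\hat H'$ share the same $R$, the same tetrahedron $\{v_1,\dots,v_4\}$ and the same index set $I$, and differ only in which component $C_i$ (for $i\in I$) and which vertex in it is used as an extension vertex. Fix a faithful model $\CM$ of $\hat H$; its eight branch sets split into four sets $B_i\ni v_i$ and four sets $B_i'\ni w_i$, where for $i\notin I$ the vertex $w_i$ lies in $R$, and for $i\in I$ we have $B_i'\subseteq V(C_i)$ (the $C_i$ are components of $G\setminus R$ and branch sets are disjoint and connected). Now fix a $3$-separator $S$ and a component $D$ of $G\setminus S$, put $(Y,S,Z):=(V(G)\setminus(S\cup V(D)),S,V(D))$, and let $(Y',S',Z')=\pi_{\CM}(Y,S,Z)$; then $|Z'|$, $|S'|$, $|Y'|$ count the branch sets that are contained in $V(D)$, that meet $S$, and that do neither, respectively. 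Using the description of the unique order-$4$ tangle of $\hat H$ from Theorem~\ref{theo:q4c} (namely $|Y'|<|Z'|$) together with the fact that $\hat H$ is quasi-4-connected of order $8$ --- so that the separation $(Y',S',Z')$ of $\hat H$ cannot have order $\le 2$ with both sides nonempty, nor order $3$ with both sides of size $\ge 2$ --- a short case distinction on $|S'|$ gives: $(Y,S,Z)\in\CT(\hat H,\CM)$ if and only if at least four of the eight branch sets lie in $V(D)$. Since $\CT(\hat H,\CM)$ is a $G$-tangle (Lemma~\ref{lem:lift}), exactly one component of $G\setminus S$ has this property, and it equals $C_{\CT(\hat H,\CM)}(S)$; the same description applies to $\CM'$.

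It therefore remains to prove that the component of $G\setminus S$ receiving at least four branch sets is the same for $\CM$ and for $\CM'$. The idea is that the host component of almost every branch set is forced by $(G,R,S)$ and is model-independent. A set $B_i$ can only lie in the component of $G\setminus S$ containing $v_i$ (when $v_i\notin S$) or meet $S$ (which is forced when $v_i\in S$), and since $|S'|\le 3$, as soon as two or more of the $v_i$ lie in $S$ the remaining vertex sets cannot meet $S$ and so occupy prescribed components. A set $B_i'$ with $i\notin I$ either contains the fixed vertex $w_i\in R$ and so lies in a prescribed component, or meets $S$. A set $B_i'$ with $i\in I$ lies in $V(C_i)$ for a component $C_i$ of $G\setminus R$ whose only attachment to the rest of $G$ is through $\{v_j:j\neq i\}$; hence whenever $\{v_j:j\neq i\}\not\subseteq S$ the set $B_i'$ lies in the component of $G\setminus S$ meeting $\{v_j:j\neq i\}\setminus S$, independently of the chosen representative $C_i$, while if $\{v_j:j\neq i\}\subseteq S$ it is the unique branch set in a component contained in $C_i$ and hence --- by the counting above --- in the minority. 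Running the case analysis organised by $|R\cap S|$ and by which sets $\{v_j:j\neq i\}$ are contained in $S$ (using $3$-connectedness of $G$, and that $\hat H$, a quasi-4-connected minor of $G$ with $R\subseteq V(\hat H)$, forbids $3$-separators of $G$ that split $\{v_1,\dots,v_4\}$ into two parts of size $\ge 2$), one checks in every case that the unique ``majority'' component is determined by $(G,R,S)$ alone. The main obstacle is precisely this accounting: a faithful model need not use singleton branch sets or confine each branch set to ``its'' part of $G\setminus S$, so the substantive claim is that the distribution of the eight branch sets among the components of $G\setminus S$ is forced --- up to a single ambiguous, and hence minority, branch set --- by where $R$ and the four neighbourhood types $\{v_j:j\neq i\}$ sit relative to $S$, which is model-independent. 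Once this is established, the two $C_{\CT}(S)$-families coincide and Lemma~\ref{lem:reed} yields $\CT(\hat H,\CM)=\CT(\hat H',\CM')$.
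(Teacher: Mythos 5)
Your reduction is sound as far as it goes, and it is a genuinely different route from the paper's. By Lemma~\ref{lem:reed} it indeed suffices to match the components $C_{\CT}(S)$ for all $3$-separators $S$, and your characterisation of $C_{\CT(\hat H,\CM)}(S)$ as the unique component of $G\setminus S$ containing at least four of the eight branch sets is correct: for a separation $(Y',S',Z')$ of the $8$-vertex quasi-4-connected graph $\hat H$ with $|S'|\le 3$, membership in the unique order-$4$ tangle is equivalent to $|Z'|\ge 4$. The paper instead fixes a separation on which the two liftings disagree and derives a contradiction from its two projections; your ``majority of branch sets'' framing is a legitimate alternative.

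The problem is that the proof stops exactly where the content of the lemma begins. The assertion that the majority component is determined by $(G,R,S)$ alone \emph{is} the lemma in disguise, and the sketch offered for it contains steps that are unproven or false. First, ``up to a single ambiguous branch set'' is unjustified: as many as $3-|S\cap V(\hat H)|$ branch sets may meet $S$ without being forced to, so up to three can defect from their prescribed components. Second, the claim that $B_i'$ lies in the component of $G\setminus S$ meeting $\{v_j:j\neq i\}\setminus S$ ``independently of the chosen representative $C_i$'' fails once $S$ intersects $V(C_i)$: the branch set may then be trapped in a fragment of $C_i$ reaching no $v_j$ (harmless, as it is alone there), but it may also reach \emph{one} of two vertices of $\{v_j:j\neq i\}\setminus S$ lying in \emph{different} components of $G\setminus S$, so that $w_i\in C_i$ and $w_i'\in C_i'$ could a priori feed different candidate majority components. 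Excluding this is precisely the crux of the lemma; in the paper it is Claim~1 of the proof, established by routing three internally disjoint paths from $w_i$ and from $w_i'$ to $N(C_i)$ to force $S\subseteq V(C_i)\cup V(C_i')\cup N(C_i)$, hence $S_M,S_M'\subseteq N(C_i)$, and then locating $v_4$ to exhibit four branch sets on one side of one of the two projections. Your proposal asserts the conclusion of this argument (``one checks in every case \dots'') without supplying it or the case analysis that would replace it, so as written the proof is incomplete at its decisive step.
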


\begin{proof}
  Let $H:=\torso GR$.
  As $H$ is exceptional, by Lemma~\ref{lem:q4r1}, both
  $\hat H$ and $\hat H'$ are isomorphic to full subgraphs of
  $\THF$. 

  Without loss of generality we may assume that $\hat
  H\subseteq\THF$.
  Then $v_1,\ldots,v_4\in R$, because otherwise $H$ is not
  quasi-4-connected or we have no way of adding the remaining vertices
  without violating ~\ref{li:x3}.  Let $f$ be an isomorphism from $\hat H'$
  to a full subgraph of $\THF$, and let $v_i':=f^{-1}(v_i)$ and
  $w_j':=f^{-1}(w_j)$. Then $v_1',\ldots,v_4'\in R$, and by symmetry
  we may assume without loss of generality that $v_i'=v_i$ for all
  $i\in[4]$. As the $w_j$s and $w_j'$s are uniquely determined by
  their neighbours among the $v_i$, if $w_j\in R$, then $w_j=w_j'$,
  and if $w_j$ is in a component $C$ of $G\setminus R$ then $w_j'$
  is in a component $C'$ with $N(C)=N(C')=N^{\THF}(w_j)$.

  Let $\CT:=\CT(\hat H,\CM)$ and $\CT':=\CT(\hat H',\CM')$. Suppose
  for contradiction that $\CT\neq\CT'$. Then there is a 
  separation $(Y,S,Z)\in\Sep_{<4}(G)$ such that
  $(Y,S,Z)\in\CT$ and $(Z,S,Y)\in\CT'$. Let
  $(Y_M,S_M,Z_M):=\pi_{\CM}(Y,S,Z)$ and
  $(Y_M',S_M',Z_M'):=\pi_{\CM'}(Y,S,Z)$. Then
  \begin{equation}
    \label{eq:7}
    (Y_M,S_M,Z_M)\in\CHT
    \quad\text{and}\quad
    (Z_M',S_M',Y_M')\in\CHT.
  \end{equation}
  We have $Z_M,Z_M'\subseteq Z$ and
  $Y_M,Y_M'\subseteq Y$ (see the definition of the projection in
  \eqref{eq:proj}). Thus 
  \begin{equation}
    \label{eq:8}
    Z_M\cap
    Y_M'=\emptyset
   \quad\text{and}\quad
   Y_M\cap Z_M'=\emptyset,
  \end{equation}
  because $Y\cap Z=\emptyset$.

  \begin{claim}
    There is no $j$ such that $w_j\in Y$ and $w_j'\in Z$ or vice
    versa.

    \proof Suppose for contradiction that $w_1\in Y$ and $w_1'\in
    Z$.
    Let $C,C'$ be the connected components of $G\setminus R$ such that
    $w_1\in V(C)$ and $w_1'\in V(C')$ (possibly, $C=C'$). Then
    $N(C)=N(C')=\{v_1,v_2,v_3\}$. As $G$ is 3-connected, there are
    internally disjoint paths $P_1,P_2,P_3$ from $w_1$ to
    $v_1,v_2,v_3$, respectively, and internally disjoint paths
    $P_1',P_2',P_3'$ from $w_1'$ to $v_1,v_2,v_3$, respectively.  The
    vertex sets of all these paths are contained in
    $V(C)\cup V(C')\cup\{v_1,v_2,v_3\}$, and as $S$ separates $w_1\in Y$ from
    $w_1'\in Z$, we have $S\subseteq V(C)\cup V(C')\cup\{v_1,v_2,v_3\}$. This
    implies $S_M,S_M'\subseteq\{v_1,v_2,v_3\}$.

    If $v_4\in Z$ then 
    $v_4,w_2',w_3',w_4'\in Z_M'$. Thus $|Z_{M'}|\ge 4>|Y_{M'}|$, and this implies
    $(Y_M',S_M',Z_M')\in\CHT$, contradicting
    \eqref{eq:7}. Similarly, if $v_4\in Y$ then 
    $v_4,w_2,w_3,w_4\in Y_M$, and this implies
    $(Z_M,S_M,Y_M)\in\CHT$, contradicting
    \eqref{eq:7} again.
    \uend
  \end{claim}

  As $\hat H$ is quasi-4-connected, $(Y_M,S_M,Z_M)\in\CHT$ implies
  that either $Y_M=\emptyset$ or $Y_M=\{w_j\}$ for some $j$ and thus
  $|Z_M|\ge 4$. Similarly, $(Z_M',S_M',Y_M')\in\CHT$ implies
  $Z_M'=\emptyset$ or $Z_M'=\{w_j'\}$ for some $j$ and $|Y_M'|\ge 4$.

  \begin{claim}[resume]
    $Y_M\neq\emptyset$ and $Z_M'\neq\emptyset$.

    \proof
    Suppose for contradiction that $Y_M=\emptyset$. Then $|Z_M|\ge
    5$. As $|Y_M'|\ge 4$ and $|\THF|=8$, it follows that either $Z_M\cap
    Y_M'\neq\emptyset$, which contradicts \eqref{eq:8}, or there is
    a $j$ such that $w_j\in Z_M\subseteq Z$ and
    $w_j'\in Y_M'\subseteq Y$, which contradicts Claim~1. 
    \uend
  \end{claim}

  Thus $Y_M=\{w_j\}$ and $Z_M'=\{w_j'\}$, where $j\neq j'$ by
  \eqref{eq:8} and Claim~1. Without loss of generality we assume that
  $Y_M=\{w_1\}$ and $Z_M'=\{w_2'\}$. Then
  $S_M=N^{\THF}(w_1)=\{v_1,v_2,v_3\}$ and
  $S_M'=N^{\THF}(w_2)=\{v_1,v_2,v_4\}$. Hence $w_3\in Z_M\subseteq Z$
  and $w_3'\in Y_M'\subseteq Y$, and this contradicts Claim~1.
\end{proof}

If $\torso GR$ is non-exceptional, then there is no need for
non-exceptional extensions, and we can directly work with liftings of
the unique $\torso GR$-tangle of order $4$. Surprisingly, it is much
harder to prove the uniqueness of the lifting in this case. If
$H:=\torso GR$ is non-exceptional and $\CM$ is a faithful image of $H$
in $G$, then $\CT(H,\CM)$ is the lifting of the unique $H$-tangle of
order $4$ to $G$ with respect to $\CM$.

\begin{lem}\label{lem:q4r2}
  Let $R$ be a non-exceptional quasi-4-connected region of $G$ such that $H:=\torso GR$ is non-exceptional. Then for all faithful
  models $\CM, \CN$ of $H$ in $G$ we have $\CT(H,\CM)=\CT(H,\CN)$.
\end{lem}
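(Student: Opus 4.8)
The plan is to show that the lifting $\CT(H,\CM)$ does not actually depend on the model $\CM$, by arguing that any separation on which two models could disagree would have to ``cut through'' a branch set in an incompatible way, and that this is impossible because of how tightly a faithful model of a quasi-4-connected graph sits inside $G$. Write $\CM=(M_v)_{v\in R}$ and $\CN=(N_v)_{v\in R}$; since both models are faithful, $v\in M_v\cap N_v$ for every $v\in R$, and every vertex of $G$ lies in exactly one branch set of $\CM$ and exactly one of $\CN$. I first fix notation: for a separation $(Y,S,Z)\in\Sep_{<4}(G)$ let $\pi_\CM(Y,S,Z)$ and $\pi_\CN(Y,S,Z)$ be its two projections to $H$. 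If $\CT(H,\CM)\neq\CT(H,\CN)$, there is a separation with $(Y,S,Z)\in\CT(H,\CM)$ and $(Z,S,Y)\in\CT(H,\CN)$; as in the proof of Lemma~\ref{lem:q4r3}, set $(Y_M,S_M,Z_M):=\pi_\CM(Y,S,Z)$ and $(Y_N,S_N,Z_N):=\pi_\CN(Y,S,Z)$, so that $(Y_M,S_M,Z_M)\in\CHT$ and $(Z_N,S_N,Y_N)\in\CHT$, where $\CHT$ is the unique $H$-tangle of order $4$ given by Theorem~\ref{theo:q4c}.

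Next I would exploit that $\CHT$ is the ``$|Y|<|Z|$'' tangle: $(Y_M,S_M,Z_M)\in\CHT$ forces $|Y_M|\le|S_M|+|Y_M|-k+1$-type smallness, and in fact, since $H$ is quasi-4-connected, $(Y_M,S_M,Z_M)\in\CHT$ together with $|S_M|\le 3$ means $|Y_M|\le 1$ — either $Y_M=\emptyset$ or $Y_M=\{a\}$ for a single vertex $a$ of $H$ whose $H$-neighbourhood is exactly $S_M$. Symmetrically $Z_N=\emptyset$ or $Z_N=\{b\}$ with $N^H(b)=S_N$. Now use the projection inclusions $Y_M,Y_N\subseteq Y$ and $Z_M,Z_N\subseteq Z$ (immediate from \eqref{eq:proj}) together with $Y\cap Z=\emptyset$ to get $Y_M\cap Z_N=\emptyset$ and $Z_M\cap Y_N=\emptyset$. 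Since $|R|=|H|$ and $|R|\ge 5$ (a non-exceptional quasi-4-connected graph has order at least $5$), counting vertices of $R$ among $Y_M,S_M,Z_M$ and among $Y_N,S_N,Z_N$ will pin down where each vertex of $R$ must lie; the aim is to derive that $Y_M$ and $Z_N$ must contain a common vertex of $R$, or that some vertex $v\in R$ has $M_v\subseteq Y$ while $N_v\subseteq Z$, which is the crux.

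The key structural observation — and the step I expect to be the main obstacle — is that a vertex $v\in R$ cannot have $M_v\subseteq Y$ and $N_v\subseteq Z$ simultaneously. Here is where the hypothesis that $R$ is a quasi-4-connected region, hence $G$ is $3$-connected and every component of $G\setminus R$ has a $3$-element neighbourhood (\ref{li:q3}), must be brought to bear, exactly as in Claim~1 of Lemma~\ref{lem:q4r3}: if $v\in M_v\cap N_v$ but $M_v\subseteq Y$ and $N_v\subseteq Z$, that is contradictory since $v$ would lie in both $Y$ and $Z$ — so the real difficulty is subtler, namely controlling the case where $v$'s branch sets straddle $S$. I would argue that because $S_M=N^H(a)$ is a genuine $3$-separator of $H$ realised faithfully, the set $S$ must, up to the interiors of at most two components of $G\setminus R$, coincide with (a superset containing) a single $3$-set of vertices of $R$, so that $S_M=S_N$ as subsets of $R$; then $\{a\}=Y_M$ and $\{b\}=Z_N$ lie on opposite sides of the \emph{same} $3$-separation of $H$, forcing $a\neq b$ and some third vertex of $R$ to be simultaneously in $Z_M\subseteq Z$ and in $Y_N\subseteq Y$, which is the desired contradiction. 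The bookkeeping of ``which component of $G\setminus R$ absorbs which piece of $S$'' is the technical heart of the argument; once $S_M=S_N$ in $R$ is established, the finish is the same two-line parity/adjacency argument as at the end of Lemma~\ref{lem:q4r3}.
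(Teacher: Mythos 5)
Your setup is correct and matches the paper's: the projections $(Y_M,S_M,Z_M)\in\CHT$ and $(Z_N,S_N,Y_N)\in\CHT$, the inclusions $Y_M\subseteq Y\cap R\subseteq Y_M\cup S_M$ (and analogously for the other three), the resulting disjointness $Y_M\cap Z_N=Z_M\cap Y_N=\emptyset$, and the bounds $|Y_M|,|Z_N|\le 1$ obtained from Lemma~\ref{lem:t2} and quasi-4-connectivity. But the step you identify as the crux --- that ``$S$ must \ldots coincide with a single $3$-set of vertices of $R$, so that $S_M=S_N$ as subsets of $R$'' --- is false, and the proof collapses there.

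A vertex $s\in S$ that lies in a connected component $C$ of $G\setminus R$ may belong to the branch set $M_w$ for one vertex $w\in N(C)$ and to the branch set $N_{w'}$ for a \emph{different} vertex $w'\in N(C)$ (or to no branch set of one of the models at all), so the two projections of the same separator $S$ can be entirely different $3$-subsets of $R$; indeed the hardest configurations in the paper's proof are exactly those with $S_M\cap S_N=\emptyset$, e.g.\ $S_M=\{y_1,y_2,y_3\}\subseteq Y_N$ and $S_N=\{z_1,z_2,z_3\}\subseteq Z_M$. What one can actually prove is only the weaker inequality $|(S_M\cap Y_N)\cup(Z_M\cap S_N)|\ge 4$, obtained by applying \ref{li:t2} to $(Y_M,S_M,Z_M)$, $(Z_N,S_N,Y_N)$ and $(\emptyset,X,R\setminus X)$ where $X:=(S_M\cap Y_N)\cup(Z_M\cap S_N)$. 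From there the paper needs a substantial case analysis on the sizes of $S_M\cap Y_N$ and $Z_M\cap S_N$, in which the contradiction is not a two-line parity argument but a careful study of which components of $G\setminus R$ can cover which torso edges $yz$ with $y\in Y$, $z\in Z$ (each such component covers at most two of them, by \ref{li:q3}), combined with Hall's theorem in one subcase, to conclude that some edge of $H$ cannot be realised by one of the two models. This is precisely why the paper remarks that uniqueness of the lifting is ``much harder'' here than in Lemma~\ref{lem:q4r3}, where the rigid structure of $\THF$ made the short argument you are trying to transplant possible. So the proposal has a genuine gap: the claimed reduction to $S_M=S_N$ does not hold, and without it the concluding argument does not apply.
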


\begin{proof}
   Let $\CHT$ be the unique $H$-tangle of order $4$. Let $(M_w)_{w\in R}$ and
  $(N_w)_{w\in R}$ be the branch sets of faith models $\CM$ and $\CN$
  of $G$ in $G$. Suppose for contradiction that 
$\CT(H,\CM)\neq\CT(H,\CN)$. Then there is a separation
  $(Y,S,Z)\in\Sep_{<4}(G)$ such that 
  \begin{equation*}
    (Y,S,Z)\in \CT(H,\CM)
  \quad\text{and}\quad
  (Z,S,Y)\in \CT(H,\CN).
  \end{equation*}
  Let $(Y_M,S_M,Z_M):=\pi_{\CM}(Y,S,Z)$ and
  $(Y_N,S_N,Z_N):=\pi_{\CN}(Y,S,Z)$. Then 
  \begin{equation}
    \label{eq:9}
    (Y_M,S_M,Z_M)\in\CHT
    \quad\text{and}\quad
    (Z_N,S_N,Y_N)\in\CHT.
  \end{equation}
It
  follows from the definition of the projections in \eqref{eq:proj} and
  the assumption that the models $\CM$ and $\CN$ be faithful that
  $Y_M\subseteq Y\cap {R}\subseteq Y_M\cup S_M$ and
  $Z_M\subseteq Z\cap {R}\subseteq Z_M\cup S_M$ and
  $Y_N\subseteq Y\cap {R}\subseteq Y_N\cup S_N$ and
  $Z_N\subseteq Z\cap {R}\subseteq Z_N\cup S_N$. Hence 
  \begin{equation}
    \label{eq:10}
    Y_M\cap
    Z_N=\emptyset
    \quad\text{and}\quad
    Z_M\cap Y_N=\emptyset.
  \end{equation}
  (see Figure~\ref{fig:q4r1}(a)).

  \begin{figure}
    \centering
    \begin{tikzpicture}
  \begin{scope}
        \fill[black!20] (-1.75,-1.75) rectangle (-0.25,0.25);
        \fill[black!20] (-0.25,-1.75) rectangle (0.25,1.75);
        \fill[black!20] (0.25,-0.25) rectangle (1.75,1.75);

      \draw[thick] (-1.75,-1.75) rectangle (1.75,1.75)
                   (-0.25,-1.75) rectangle (0.25,1.75) 
                   (-1.75,-0.25) rectangle (1.75,0.25);
      
      \path (-2,1) node {$Z_M$}             
            (-2,0) node {$S_M$}        
            (-2,-1) node {$Y_M$}  
            (-1,2) node {$Y_N$}
            (0,2) node {$S_N$}
            (1,2) node {$Z_N$}
      ;

      \path (0,-2.3) node {(a)};
      \end{scope}

  \begin{scope}[xshift=4.5cm]
        \fill[black!20] (-1.75,-1.75) rectangle (-0.25,0.25);
        \fill[black!20] (-0.25,0.25) rectangle (1.75,1.75);

      \draw[thick] (-1.75,-1.75) rectangle (1.75,1.75)
                   (-0.25,-1.75) rectangle (0.25,1.75) 
                   (-1.75,-0.25) rectangle (1.75,0.25);
      
      \path (-2,1) node {$Z_M$}             
            (-2,0) node {$S_M$}        
            (-2,-1) node {$Y_M$}  
            (-1,2) node {$Y_N$}
            (0,2) node {$S_N$}
            (1,2) node {$Z_N$}
      ;

      \draw[fill=black] (-1.3,0) circle (2pt) node[left] {$y_1$};
      \draw[fill=black] (-1,0) circle (2pt) node[fill=black!20,inner sep=1pt,below=4pt] {$y_2$};
      \draw[fill=black] (-0.7,0) circle (2pt) node[right] {$y_3$};
      \draw[fill=black] (0,1.3) circle (2pt) node[above] {$z_1$};
      \draw[fill=black] (0,1) circle (2pt) node[fill=black!20,inner sep=1pt,right=4pt] {$z_2$};
      \draw[fill=black] (0,0.7) circle (2pt) node[below] {$z_3$};
      
      \path (0,-2.3) node {(b)};
      \end{scope}

  \begin{scope}[xshift=9cm]
        \fill[black!20] (-1.75,-1.75) rectangle (-0.25,0.25);
        \fill[black!20] (-0.25,-1.75) rectangle (0.25,-0.25);
        \fill[black!20] (-0.25,0.25) rectangle (0.25,1.75);

      \draw[thick] (-1.75,-1.75) rectangle (1.75,1.75)
                   (-0.25,-1.75) rectangle (0.25,1.75) 
                   (-1.75,-0.25) rectangle (1.75,0.25);
      
      \path (-2,1) node {$Z_M$}             
            (-2,0) node {$S_M$}        
            (-2,-1) node {$Y_M$}  
            (-1,2) node {$Y_N$}
            (0,2) node {$S_N$}
            (1,2) node {$Z_N$}
      ;

      \draw[fill=black] (-1.3,0) circle (2pt) node[left] {$y_1$};
      \draw[fill=black] (-1,0) circle (2pt) node[fill=black!20,inner sep=1pt,below=4pt] {$y_2$};
      \draw[fill=black] (-0.7,0) circle (2pt) node[right] {$y_3$};
      \draw[fill=black] (0,1.25) circle (2pt) node[above] {$z_1$};
      \draw[fill=black] (0,0.75) circle (2pt) node[below] {$z_2$};
      
      \path (0,-2.3) node {(c)};
      \end{scope}

 \begin{scope}[yshift=-5cm]
        \fill[black!20] (-1.75,-1.75) rectangle (-0.25,0.25);
        \fill[black!20] (-0.25,-0.25) rectangle (0.25,1.75);
        \fill[black!20] (0.25,0.25) rectangle (1.75,1.75);

      \draw[thick] (-1.75,-1.75) rectangle (1.75,1.75)
                   (-0.25,-1.75) rectangle (0.25,1.75) 
                   (-1.75,-0.25) rectangle (1.75,0.25);
      
      \path (-2,1) node {$Z_M$}             
            (-2,0) node {$S_M$}        
            (-2,-1) node {$Y_M$}  
            (-1,2) node {$Y_N$}
            (0,2) node {$S_N$}
            (1,2) node {$Z_N$}
      ;

      \draw[fill=black] (-1.25,0) circle (2pt) node[left] {$y_1$};
      \draw[fill=black] (-0.75,0) circle (2pt) node[right] {$y_2$};
      \draw[fill=black] (0,1.25) circle (2pt) node[above] {$z_1$};
      \draw[fill=black] (0,0.75) circle (2pt) node[below] {$z_2$};
     \draw[fill=black] (-0.05,0.05) circle (2pt) node[below right=-2pt] {$x$};
      
      \path (0,-2.3) node {(d)};
      \end{scope}

 \begin{scope}[xshift=4.5cm,yshift=-5cm]
        \fill[black!20] (-1.75,-0.25) rectangle (-0.25,0.25);
        \fill[black!20] (-0.25,-1.75) rectangle (0.25,-0.25);
        \fill[black!20] (-0.25,0.25) rectangle (0.25,1.75);
        \fill[black!20] (0.25,-0.25) rectangle (1.75,0.25);

      \draw[thick] (-1.75,-1.75) rectangle (1.75,1.75)
                   (-0.25,-1.75) rectangle (0.25,1.75) 
                   (-1.75,-0.25) rectangle (1.75,0.25);
      
      \path (-2,1) node {$Z_M$}             
            (-2,0) node {$S_M$}        
            (-2,-1) node {$Y_M$}  
            (-1,2) node {$Y_N$}
            (0,2) node {$S_N$}
            (1,2) node {$Z_N$}
      ;

      \draw[fill=black] (-1.25,0) circle (2pt) node[left] {$y_1$};
      \draw[fill=black] (-0.75,0) circle (2pt) node[right] {$y_2$};
      \draw[fill=black] (0,1.25) circle (2pt) node[above] {$z_1$};
      \draw[fill=black] (0,0.75) circle (2pt) node[below] {$z_2$};
      
      \path (0,-2.3) node {(e)};
      \end{scope}

\end{tikzpicture}
    \caption{Proof of Lemma~\ref{lem:q4r2}}
    \label{fig:q4r1}
  \end{figure}

  By
  \eqref{eq:9} and Lemma~\ref{lem:t2} we have
  $|Z_M\cup S_M|\ge 5$ and $|Y_N\cup S_N|\ge 5$ and thus
  $|Z_M|,|Y_N|\ge 2$. As ${H}$ is quasi-4-connected, it follows
  that
  \begin{equation}
    \label{eq:11}
    |Y_M|\le 1
    \quad\text{and}\quad
    |Z_N|\le 1.
  \end{equation}

  \begin{claim}
  \begin{equation}
    \label{eq:12}
    |(S_M\cap Y_N)\cup (Z_M\cap S_N)|\ge 4.
  \end{equation}
  \proof
  Let $X:=(S_M\cap Y_N)\cup (Z_M\cap S_N)$, and suppose for
  contradiction that $|X|\le 3$. Then
  $(\emptyset,X,{R}\setminus X)\in\CHT$. Hence by \eqref{eq:9}
  and \ref{li:t2}, either $Z_M\cap
  Y_N\cap {R}\setminus X\neq\emptyset$ or there is an edge that has
  an endvertex in $Z_M$, $Y_N$, and ${R}\setminus X$. 
However, it
  follows from \eqref{eq:10} that neither is the case.
  \uend
  \end{claim}

  Without loss of generality we assume
  \begin{equation}
    \label{eq:13}
    |(S_M\cap Y_N)|\ge|(Z_M\cap S_N)|
  \end{equation}

  Let us call an edge $yz\in E(H)$ with $y\in Y$ and $u\in Z$ a
  \emph{yz-edge} and a connected component $C$ of $G\setminus R$ with
  $N(C)\cap Y\neq\emptyset$ and $N(C)\cap Z\neq\emptyset$ a
  \emph{yz-component}. If $yz$ is a yz-edge, we have $yz\not\in
  E(G)$.
  Thus there must be a yz-component $C$ such that $y,z\in N(C)$. If
  this is the case, we say that the yz-component $C$ \emph{covers} the
  edge $yz$. Note that every yz-component $C$ has a nonempty
  intersection with $S$, because if $y\in N(C)\cap Y$ and
  $z\in N(C)\cap Z$ then there is a path from $y$ to $z$ with all
  internal vertices in $C$, and this path must have a nonempty
  intersection with $S$. This means that there are at most three
  yz-components. It follows from \ref{li:q3} that each yz-component
  covers at most two yz-edges, and if it covers two edges, they have
  one endvertex in common.

  \begin{cs}
    \case1
    $|S_M\cap Y_N|=3$.\\
    Then $S_M\subseteq Y_N$ and thus $S_M\cap S_N=\emptyset$. Suppose
    that $S_M=y_1,y_2,y_3$. As
    $|Y_M\cap S_N|\le 1$ by \eqref{eq:11}, we have $|Z_M\cap S_N|\ge
    2$.
    \begin{cs}
      \case{1a}
      $|Z_M\cap S_N|=3$.\\
      Then $S_N\subseteq Z_M$. Suppose that $S_N=\{z_1,z_2,z_3\}$ (see
      Figure~\ref{fig:q4r1}(b)).
      Whenever $y_iz_j\in E(H)$, it is a
      yz-edge, and hence there is a yz-component $C_{ij}$ that
      covers it.

      \begin{claim}[resume]
        There is a perfect matching between
        $\{y_1,y_2,y_3\}$ and $\{z_1,z_2,z_3\}$ in ${H}$.

        \proof We first note that every $y_i$ has at least one $z_j$ as a neighbour,
        because if, say, $y_1$ has no neighbour among the $z_j$s,
          then $\{y_2,y_3\}$ is a separator of $H$. Similarly, every
          $z_j$ has a neighbour among the $y_i$s. 

          Now let $Y\subseteq \{y_1,y_2,y_3\}$, and let
          $Z:=N^{{H}}(Y)\cap\{z_1,z_2,z_3\}$ be the set of neighbours of
          $Y$. We shall prove that $|Y|\le|Z|$. Then the claim follows
          from Halls's Marriage Theorem.

          If $|Y|=1$ we have $|Z|\ge 1$, because every $y_i$ has a
          neighbour among the $z_j$. If $|Y|=3$ we have $|Z|=3$,
          because if there is a $z\in \{z_1,z_2,z_3\}\setminus Z$ this
          $z$ has no neighbour among the $y_i$s. Suppose that
          $|Y|=2$, and let $y$ be the unique element of
          $\{y_1,y_2,y_3\}\setminus Y$. If
          $Z\neq\{z_1,z_2,z_3\}$, then $Z\cup\{y\}$ is a separator of
          $H$, and this implies $|Z|\ge 2$.
          \uend
        \end{claim}

        Without loss of generality we assume that
        $y_1z_1,y_2z_2,y_3z_3\in E(H)$. It follows from
        \ref{li:q3} that the yz-components
        $C_{11},C_{22},C_{33}$ covering these yz-edges are distinct. Thus these are the only
        yz-components. Let $s_i\in S\cap V(C_{ii})$.

        \begin{claim}[resume]\label{cl:q4r20}
          There is no $y_i$ such that $z_1,z_2,z_3\in N^H(y_i)$ and
          not $z_j$ such that $y_1,y_2,y_3\in N^H(z_j)$. 

          \proof
          Suppose for contradiction $z_1,z_2,z_3\in N^H(y_1)$. 

          If $C_{11}=C_{12}$ then $C_{11}$ covers the edges $y_1z_1$
          and $y_1z_2$, but not $y_1z_3$. Hence $C_{33}=C_{13}$. It
          follows $z_2\not\in N(C_{11})\cup N(C_{33})$.

          Now we have to analyse the models $\CM,\CN$. As $y_2\in
          S_M$, we must have $s_2\in M_{y_2}$. As $y_3\not\in N(C_{11})$, we have
          $s_3\in M_{y_3}$. But then the edge $y_1z_3$ cannot be
          realised in $\CM$, because $C_{33}=C_{13}$ is the only
          component that covers the edge, and $s_3$ separates $y_1$
          from $z_3$ in $C_{33}$.

          The case $C_{11}=C_{13}$ is symmetric.

          So suppose that $C_{11}\neq C_{12}, C_{13}$. Then we have $C_{22}=C_{12}$
          and $C_{33}=C_{13}$, because we need to cover the edges
          $y_1z_2$ and $y_1z_3$, and we cannot have $C_{22}=C_{13}$ or
          $C_{33}=C_{12}$ by \ref{li:q3}.

          Without loss of generality we may assume that
          $y_2\not\in N(C_{11})$ (the other case $y_3\not\in
          N(C_{11})$ is symmetric). As we also have
          $y_2\not\in N(C_{33})$, we must have $s_2\in
          M_{y_2}$.
          This implies that the edge $y_1z_2$ cannot be
          realised in $\CM$, because $s_2$ separates $y_1$ from $z_2$
          in $C_{12}$.
          \uend
        \end{claim}

        By symmetry, we may assume that $C_{11}=C_{12}$. Suppose that
        $C_{22}=C_{21}$. Then if $C_{33}=C_{13}$, we have
        $z_1,z_2,z_3\in N^H(y_1)$, which contradicts
        Claim~\ref{cl:q4r20}. Similarly, if $C_{33}=C_{23}$, we have
        $z_1,z_2,z_3\in N^H(y_2)$. If $C_{33}=C_{31}$, we have
        $y_1,y_2,y_3\in N^H(z_1)$, and if $C_{33}=C_{32}$, we have
        $y_1,y_2,y_3\in N^H(z_2)$. All this contradicts
        Claim~\ref{cl:q4r20}.

        Suppose next that $C_{22}=C_{32}$. Then $y_1,y_2,y_3\in
        N^H(z_2)$, which again contradicts
        Claim~\ref{cl:q4r20}.

        So we must have $C_{22}=C_{23}$. By symmetry, this implies
        $C_{33}=C_{31}$ (just as $C_{11}=C_{12}$ implies
        $C_{22}=C_{23}$).

        Then 
        \begin{align*}
          N^H(C_{11})&=\{y_1,z_1,z_2\},\\
          N^H(C_{22})&=\{y_2,z_2,z_3\},\\
          N^H(C_{11})&=\{y_3,z_3,z_1\}.
        \end{align*}
        Looking at the model $\CN$, we have either $s_1\in N_{z_1}$ or
        $s_3\in N_{z_1}$. If $s_1\in N_{z_1}$ then the edge $y_1z_2$
        cannot be realised in the model $\CN$. Thus $s_3\in
        N_{z_1}$.
        But then the edge $y_3z_3$ cannot be realised in the model
        $\CN$.  Either way we have a contradiction.  

        \case{1b}
        $|Z_M\cap S_N|=2$.\\
        Then $Z_N=\emptyset$, because otherwise ${H}$ is not
        3-connected. Let $Z_M\cap S_N=\{z_1,z_2\}$ (see
        Figure~\ref{fig:q4r1}(c)). As $|Y_M|\le 1$, we have $|R|\le 6$. 

        \begin{claim}[resume]
          $N^H(z_i)\supseteq\{y_1,y_2,y_3\}$ for $i=1,2$.
          
          \proof
          Suppose for contradiction that $y_3\not\in N^H(z_1)$. Then
          $N^H(z_1)=\{y_1,y_2,z_2\}$, and the mapping $\pi$ defined by
          $\pi(y_i):=v_i$ for $i=1,2,3$, $\pi(z_2):=v_4$,
          $\pi(z_1):=w_2$, and if there is a vertex $x\in Y_M$,
          $\pi(x):=w_1$, is an embedding of $H$ into $\THT$. Thus
          $H$ is exceptional, which is a contradiction.
          \uend
        \end{claim}

        The six edges $y_iz_j$ are yz-edges. Thus each edge $y_iz_j$
        needs to be covered by a yz-component $C_{ij}$. As there are
        six yz-edges and at most three yz-components and each
        yz-component covers at most two yz-edges, each yz-component
        must cover exactly two of the yz-edges $y_iz_j$.

        \begin{claim}[resume]
          There is an $i$ such that $C_{i1}=C_{i2}$.

          \proof
          Suppose not. Then $C_{11}\neq C_{12}$ and thus either
          $C_{11}=C_{21}$ or $C_{11}=C_{31}$. By symmetry, we may
          assume $C_{11}=C_{21}$. Then the four edges
          $y_1z_2,y_2z_2,y_3z_1,y_3z_2$ must be covered by the
          remaining two yz-components. We either have $C_{12}=C_{22}$
          or $C_{12}=C_{32}$. If $C_{12}=C_{32}$, the two edges
          $y_2z_2$ and $y_3z_1$ must be covered by the same
          yz-component, which is impossible. Hence $C_{12}=C_{22}$ and
          thus $C_{31}=C_{32}$. This proves the claim for $i=3$.
          \uend
        \end{claim}

        By symmetry, we may assume that $C_{31}=C_{32}$. Then the four
        edges $y_1z_1$, $y_1z_2$, $y_2z_1$, $y_2z_2$ must be covered by the
        remaining two yz-components. Suppose first
        that $C_{11}=C_{12}$. Then $C_{22}=C_{21}$, and we have
        $N(C_{11})=\{y_1,z_1,z_2\}$ and $N(C_{22})=\{y_2,z_1,z_2\}$. 
        Let $s_i$ be the unique element of $S\cap V(C_{ii})$.

        We analyse the model $\CN$. If $s_1\in N_{z_1}$, then we cannot realise
        the edge $z_2y_1$ in the model $\CN$, because $s_1$ separates
        $y_1$ and $z_2$ in $C_{11}$ and $y_1\not\in N(C_{22})$.
        Similarly, if $s_2\in N_{z_1}$, then we cannot realise the
        edge $z_2y_2$ in the model $\CN$, because $s_2$ separates
        $y_2$ and $z_2$ in $C_{22}$ and $y_2\not\in N(C_{11})$.

        If $C_{11}=C_{21}$ and $C_{22}=C_{12}$, then we can argue
        similarly with the model $\CM$.
    \end{cs}

    \case2
    $|S_M\cap Y_N|=2$.\\
    Then it follows from Claim~1 and \eqref{eq:13} that $|Z_M\cap
    S_N|=2$. Let $S_M\cap Y_N=\{y_1,y_2\}$ and $Z_M\cap
    S_N=\{z_1,z_2\}$. 
    \begin{cs}
      \case{2a}
      $S_M\cap S_N\neq\emptyset$.\\
      Then $|S_M\cap S_N|=1$. Let $x$ be the unique vertex in
      $S_M\cap S_N$ (see Figure~\ref{fig:q4r1}(d)). Then $|R|\le
      7$.
      If there is a vertex in $Y_M$, we call it $y_3$, and if there is
      a vertex in $Z_N$ we call it $z_3$.

        \begin{claim}[resume]
          $y_iz_j\in E({H})$ for $i,j=1,2$.

        \proof Suppose for contradiction that that
        $y_1z_1\not\in E(H)$. 
        Then $\{x,y_2,z_2\}$ is a separator of $H$ that separates
        $y_1$ from $z_1$. If both $y_3$ and $z_3$ exists, it even
        separates $\{y_1,y_3\}$ from $\{z_1,z_3\}$, which contradicts
        $H$ being quasi-4-connected. 

        Hence without loss of generality we may assume that $y_3$ does
        not exist, that is, $Y_M=\emptyset$. Then the following
        mapping $\pi$ is an embedding of $H$ into the graph $\THT$:
      \[
      \begin{array}{c|c|c|c|c|c|c}
         u      & x  &y_1&y_2&z_1&z_2&z_3\\
         \hline
         \pi(u)&v_1&w_1&v_2&v_4&v_3&w_3.
       \end{array}
       \]
       Hence $H$ is exceptional, which is a contradiction.
        \uend
      \end{claim}

      For all $i,j$ the edge $y_iz_j\in E(H)$ is a yz-edge. Hence
      there is a yz-component $C_{ij}$ covering it. As one
      yz-component covers at most two yz-edges, we need at least
      two yz-components to cover the four edges $y_iz_j$.
      Furthermore, the
      yz-components $C_{11}$, $C_{22}$ and the
      yz-components $C_{12}$, $C_{21}$ are distinct. Let
      $s_1\in S\cap V(C_{11})$ and $s_2\in S\cap C_{22}$.

      \begin{claim}[resume]
        $x\not\in S$.

        \proof Suppose for contradiction that $x\in S$. Then
        $|S\setminus R|\le 2$ and hence there are at most two
        yz-components. We can argue exactly as in Case~1b. Let me
        repeat the argument for the reader's convenience.

        Suppose first that $C_{11}=C_{12}$. Then
        $C_{22}=C_{21}$, and we have $N(C_{11})=\{y_1,z_1,z_2\}$ and
        $N(C_2)=\{y_2,z_1,z_2\}$. We analyse the model $\CN$. If $s_1\in N_{z_1}$, then we cannot realise
      the edge $z_2y_1$ in the model $\CN$, because $s_1$ separates
      $y_1$ and $z_2$ in $C_{11}$ and $y_1\not\in
      N(C_{22})$.
      Similarly, if $s_2\in N_{z_1}$, then we cannot realise the edge
      $z_2y_2$ in the model $\CN$, because $s_2$ separates $y_2$ and
      $z_2$ in $C_{22}$ and $y_2\not\in N(C_{11})$.

      It remains to consider the case $C_{11}=C_{12}$. But this case
      is symmetric, and we argue with the roles of $\CM$ and $\CN$
      swapped.
      \uend
      \end{claim}

      Thus either $x\in Y$ or $x\in Z$. By symmetry, we may assume
      that $x\in Y$. 

      \begin{claim}[resume]
        $Z_N=\emptyset$, that is, $z_3$ does not exist.

        \proof
        Suppose for contradiction that $z_3$ exists.  Then $xz_3$ is a
        yz-edge, and we
      need a yz-component $C\neq C_{11},C_{12},C_{21},C_{22}$ to cover
      it. We have $y_1,y_2\not\in N(C)$, and hence all edges $y_iz_j$
      must be covered by the components
      $C_{11},C_{12},C_{21},C_{22}$. Now we can argue as in the proof
      of Claim~7 to derive a contradiction.
      \end{claim}

       Now we are in the same situation
        as in Case~1b with $x$ playing the role of $y_3$, and we can
        argue exactly as we did there.

        \case{2b}
        $S_M\cap S_N=\emptyset$.\\
        Then $Y_M\cap Y_N=\emptyset$, because otherwise $Y_M\cap
        S_N=\emptyset$, and 
        $\{y_1,y_2\}$ separates $Y_M\cap Y_N$ from $\{z_1,z_2\}$,
        which contradicts ${H}$ being 3-connected. Similarly,
        $Z_M\cap Z_N=\emptyset$ (see
        Figure~\ref{fig:q4r1}(e)).

        If $Y_M=\emptyset$ then $Z_N=\emptyset$, because otherwise
        $\{z_1,z_2\}$ is a separator of $H$. It follows that $|R|=4$,
        which contradicts $H$ being non-exceptional. Thus $Y_M\neq\emptyset$ and, by
        symmetry, $Z_N\neq\emptyset$. Let $y_3$ be the unique vertex
        in $Y_M$ and $z_3$ the unique vertex in $Z_N$. Then $y_3\in
        Y_M\cap S_N$ and $N^H(y_3)=\{y_1,y_2,z_3\}$. Similarly, $z_3\in
        S_M\cap Z_N$ and $N^H(z_3)=\{z_1,z_2,y_3\}$.

        The yz-edge $y_3z_3$ must be covered by some yz-component $C$.

        If $y_iz_j\not\in E(H)$ for some $i,j\le 2$, then $H$ can be
        embedded into $\THT$ and is exceptional.
        Thus $y_iz_j\in E(H)$ for all $i,j\le 2$. The yz-component $C$
        covers none of these four yz-edges, and we need to cover them
        with the two remaining yz-components. We have seen (several
        times) that this is impossible.
        \qedhere
    \end{cs}
  \end{cs}
\end{proof}

\begin{theo}\label{theo:q4r}
  Let $R$ be a non-exceptional quasi-4-connected region of $G$. If
  $\torso GR$ is non-exceptional, we let $\CT_R:=\CT(\torso GR,\CM)$
  for some faithful model $\CM$ of $\torso GR$ in $G$, and if
  $\torso GR$ is non-exceptional, we let $\CT_{R}:=\CT(\hat H,\CM)$
  for some non-exceptional extension $\hat H$ of $R$ and some faithful
  model $\CM$ of $\hat H$ in $G$.

  Then the mapping $R\mapsto\CT_R$ is well-defined (that is, $\CT_R$ only
  depends on $R$ and not on the extension $\hat H$ or the model
  $\CM$), and $\CT_R$ is a $G$-tangle of order $4$.
\end{theo}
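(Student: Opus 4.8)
The plan is to split into the two cases that appear in the definition of $\CT_R$ and, in each case, reduce the assertion to results established earlier. In both cases the object $\CT_R$ is defined as a lifting $\CT(\cdot,\CM)$ of a tangle of order $4$, so the two things to verify are (a) that such a lifting exists and is a $G$-tangle of order $4$, and (b) that it does not depend on the choices made.

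For (a): if $\torso GR$ is non-exceptional, then by \ref{li:q1} it is a faithful minor of $G$, so a faithful model $\CM$ of $\torso GR$ in $G$ exists; by \ref{li:q2} it is quasi-4-connected, hence by Theorem~\ref{theo:q4c} it has a (unique) $\torso GR$-tangle of order $4$, and $\CT_R=\CT(\torso GR,\CM)$ is the lifting of that tangle along $\CM$, which is a $G$-tangle of order $4$ by Lemma~\ref{lem:lift}. If instead $\torso GR$ is exceptional, then, $R$ being a \emph{non-exceptional} region, it has by definition a non-exceptional extension $\hat H$; by \ref{li:x2} $\hat H$ is non-exceptional quasi-4-connected, so by Theorem~\ref{theo:q4c} it has a unique $\hat H$-tangle of order $4$, and by \ref{li:x1} it is a faithful minor of $G$, so a faithful model $\CM$ of $\hat H$ in $G$ exists; again $\CT_R=\CT(\hat H,\CM)$ is the lifting of that tangle along $\CM$ and hence a $G$-tangle of order $4$ by Lemma~\ref{lem:lift}.

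For (b): in the first case the only choice is the faithful model $\CM$ of $\torso GR$, and independence of this choice is precisely Lemma~\ref{lem:q4r2}. In the second case one chooses both a non-exceptional extension $\hat H$ of $R$ and a faithful model $\CM$ of $\hat H$, and independence of both choices is precisely Lemma~\ref{lem:q4r3} (whose applicability, in turn, rests on Lemma~\ref{lem:q4r1} identifying the possible non-exceptional extensions). Combining (a) and (b) in each case finishes the proof.

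The point to flag is that this theorem is essentially a bookkeeping statement: all the genuine work is already done. The truly delicate ingredient is Lemma~\ref{lem:q4r2} (and, to a lesser degree, Lemma~\ref{lem:q4r3}); here one only has to make sure that the two cases of the definition are matched with the two uniqueness lemmas, and that the non-exceptionality hypothesis on $R$ is exactly what supplies a non-exceptional extension in the exceptional-torso case. So I do not expect any real obstacle at this step.
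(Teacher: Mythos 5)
Your proposal is correct and follows essentially the same route as the paper, which likewise dispatches the non-exceptional-torso case via Lemma~\ref{lem:q4r2} and the exceptional-torso case via Lemma~\ref{lem:q4r3}, with the existence of the liftings as $G$-tangles already supplied by Lemma~\ref{lem:lift} and Theorem~\ref{theo:q4c} in the surrounding text. You also correctly read the second ``non-exceptional'' in the theorem statement as a typo for ``exceptional''.
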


\begin{proof}
  If $\torso GR$ is non-exceptional, this follows from
  Lemma~\ref{lem:q4r2}. If $\torso GR$ is exceptional, it follows from
  Lemma~\ref{lem:q4r3}.
\end{proof}

This proves the first half of the Correspondence
Theorem~\ref{theo:corr}. The remainder of Section~\ref{sec:4t} is
devoted to the second half.

\subsection{Degenerate 3-Separations}
\label{sec:q3b}

We continue assuming that $G$ is a 3-connected graph. Let us call a
proper separation $(Y,S,Z)$ \emph{degenerate} if $|Y|=1$ and $S$ is an
independent set in $G$.

\begin{figure}
  \centering
  \begin{tikzpicture}
  [
  vertex/.style={draw,circle,fill=black,inner sep=0mm,minimum
    size=2mm},
  every edge/.style={draw,thick}
  ]

    \path
    (0,0,0) node[vertex] (v1) {}
    (2,0,0) node[vertex] (v2) {}
    (0,2,0) node[vertex] (v3) {}
    (0,0,2) node[vertex] (v4) {}
    (2,2,0) node[vertex] (v5) {}
    (2,0,2) node[vertex] (v6) {}
    (0,2,2) node[vertex] (v7) {}
    (2,2,2) node[vertex] (v8) {};

    \path 
    (v1) edge (v2) edge (v3) edge (v4)
    (v2) edge (v5) edge (v6)
    (v3) edge (v5) edge (v7)
    (v4) edge (v6) edge (v7)
    (v5) edge (v8)
    (v6) edge (v8)
    (v7) edge (v8)
    ;
  
\end{tikzpicture}
  \caption{The cube graph}
  \label{fig:cube}
\end{figure}
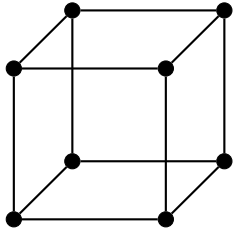

\begin{exa}
  Let $G$ be a hexagonal grid (see
  Figure~\ref{fig:hexgrids}) or a cube (see
  Figure~\ref{fig:cube}). Then all proper separations
  $(Y,S,Z)\in\Sep_{\le 3}(G)$ with $|Y|\le |Z|$ are degenerate.
\end{exa}

\begin{lem}\label{lem:degsep}
  Let $(Y,S,Z)\in\Sep_{=3}(G)$ be a non-degenerate proper
  separation. Then $\torso G{Z\cup S}$ is a faithful minor of $G$.
\end{lem}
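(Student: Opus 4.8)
The plan is to exhibit a faithful model of $\torso G{Z\cup S}$ in $G$ directly. Write $X:=Z\cup S$, so that $V(G)\setminus X=Y$ and the connected components of $G\setminus X$ are exactly those of $G[Y]$. For such a component $C$ we have $N(C)\subseteq S$ (there are no $Y$--$Z$ edges), and since $Z\neq\emptyset$ the set $N(C)$ is a separator of the $3$-connected graph $G$, whence $|N(C)|=3$, i.e.\ $N(C)=S$. Therefore $\torso GX$ is obtained from the induced subgraph $G[X]$ by adding exactly the missing edges among the three vertices of $S$ (every component of $G[Y]$, and there is at least one since the separation is proper, already sees all of $S$). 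In particular, if $S$ is already a clique in $G$ then $\torso GX=G[X]$ is an induced subgraph and there is nothing to prove. In general, setting $M_v:=\{v\}$ for every $v\in Z$ realises every edge of $G[X]$ by itself, so it remains to choose pairwise disjoint connected sets $M_{s_i}\subseteq\{s_i\}\cup Y$ with $s_i\in M_{s_i}$ that are pairwise adjacent in $G$; this realises the added edges among $S$ and, since $s_i\in M_{s_i}$, keeps all $G[X]$-edges realised.

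Next the easy configurations. If $G[Y]$ has two distinct components $C,C'$, put $M_{s_1}:=\{s_1\}\cup V(C)$, $M_{s_2}:=\{s_2\}\cup V(C')$ and $M_{s_3}:=\{s_3\}$; these are connected because $N(C)=N(C')=S$, they are disjoint, and they are pairwise adjacent via the edges joining $C$ to $s_2$, joining $C$ to $s_3$, and joining $C'$ to $s_3$. If instead $G[Y]$ is connected and $S$ is not independent, then at most two pairs inside $S$ are non-edges and they are incident to a common vertex $s_1$; now $M_{s_1}:=\{s_1\}\cup Y$ is connected (as $N(Y)=S$ and $G[Y]$ is connected) and, together with $M_{s_2}:=\{s_2\}$ and $M_{s_3}:=\{s_3\}$, realises the at most two missing edges through the edges joining $Y$ to $s_2$ and to $s_3$.

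The remaining, and main, case is that $G[Y]$ is connected and $S$ is independent. Here non-degeneracy of $(Y,S,Z)$ means precisely that $|Y|\neq 1$, hence $|Y|\ge 2$. We must find pairwise adjacent connected sets $M_{s_i}\subseteq\{s_i\}\cup Y$, that is, a $K_3$-minor of $H:=G[Y\cup S]$ rooted at $s_1,s_2,s_3$ with branch sets avoiding the other two roots (note $H$ is connected, since $G[Y]$ is and each $s_i$ has a neighbour in $Y$). By the standard criterion for the existence of a rooted triangle minor it suffices to show that no single vertex $v$ of $H$ leaves the three roots---or, if $v$ is a root, the two other roots---pairwise in different components of $H-v$. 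If $v=s_k\in S$, then the two roots $s_i,s_j$ ($i,j\neq k$) are joined through $Y$ in $H-s_k$, using that $G[Y]$ is connected and $s_i,s_j$ each have a neighbour in $Y$, so they lie in one component. If $v\in Y$, then $G[Y]-v$ is nonempty because $|Y|\ge 2$, and every component $D$ of $G[Y]-v$ satisfies $N_G(D)\subseteq\{v\}\cup S$ with $|N_G(D)|\ge 3$ (as $D$ is a component of $G$ minus $\{v\}\cup S$ and $Z\neq\emptyset$), so $D$ is adjacent to at least two vertices of $S$; a short case check on how the $D$'s distribute over $S$ then forces two of the roots into a common component of $H-v$, contradicting the assumed separation. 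Hence no obstructing $v$ exists, the rooted triangle minor is available, and it provides the required $M_{s_i}$. In each case we have assembled a faithful model of $\torso G{Z\cup S}$ in $G$, which proves the lemma.

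The step I expect to be the main obstacle is this last case: actually routing three connections through the connected region $Y$. It is here that $3$-connectivity (bounding the neighbourhoods of the components of $G[Y]-v$) and non-degeneracy (guaranteeing $|Y|\ge 2$, which is exactly what rules out the single genuinely bad configuration---one vertex whose neighbourhood is an independent $S$---and makes the hypothesis sharp) are both essential; depending on taste one either invokes the rooted-triangle-minor criterion as above or inlines the corresponding short argument.
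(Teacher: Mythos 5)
Your proof is correct, and for the one genuinely hard case it takes a different route from the paper. The two easy configurations ($S$ not independent; $G[Y]$ disconnected) are handled exactly as in the paper, by contracting one or two components of $G[Y]$ onto vertices of $S$. For the main case ($S$ independent, $G[Y]$ connected, hence $|Y|\ge 2$ by non-degeneracy), the paper explicitly builds a fan of three internally disjoint $v$--$S$ paths with one path of length at least $2$, finds a connecting path $Q'$ from its interior to another leg, and distributes the resulting segments among the three branch sets. You instead reformulate the goal as the existence of a $K_3$-minor of $G[Y\cup S]$ rooted at $s_1,s_2,s_3$, invoke the characterisation ``such a minor exists iff no single vertex leaves the three roots pairwise in distinct components,'' and verify that hypothesis cleanly from $3$-connectivity (every component of $G[Y]-v$ has at least three neighbours inside $\{v\}\cup S$, hence sees at least two vertices of $S$, and a pigeonhole argument glues two roots together). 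This is more modular and makes visible exactly where $3$-connectivity and non-degeneracy are used; the one caveat is that the rooted-triangle criterion, though true, is stated without proof, and proving it (via a Steiner tree for the three roots plus a connecting path avoiding its branch vertex) essentially reproduces the paper's explicit construction, so the gain is in organisation rather than in total work. If you submit this version, either cite a source for the criterion or inline its short proof.
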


\begin{proof}
  Suppose that $S=\{s_1,s_2,s_3\}$ and let $H:=\torso{G}{Z\cup S}$.
  Note that $E(H)\setminus E(G)\subseteq\{s_1s_2,s_1s_3,s_2s_3\}$. We
  shall define a faithful model
  $\big((M_v)_{v_\in V(H)}, (e_f)_{f\in E(H)}\big)$ of $H$ in $G$. For
  all $z\in Z$ we let $V(M_z):=\{z\}$. For all edges in
  $f\in E(G)\cap E(H)$ we let $e_f:=f$. It only remains to define the
  $V(M_{s_i})$ and $e_{s_is_j}$.

  Suppose first that $S$ is not independent. Say, $s_1s_2\in
  E(G)$. Let $C$ be a connected component of $G[Y]$. As $G$ is
  3-connected, we have $N(C)=S$. We let
  $V(M_{s_3}):=V(C)\cup\{s_3\}$, and for $i=1,2$ we let $e_{s_is_3}$
  be an edge from $V(C)$ to $s_i$.

  Suppose next that $S$ is independent and $G[Y]$ is not connected. Let $C_1$ and $C_2$ be
  two connected components of $G[Y]$. We let $V(M_{s_2}):=\{s_2\}$. We contract $C_1$ onto $s_1$ to
  create an edge from $s_1$ to $s_2$. That is, we let
  $V(M_{s_1}):=V(C_1)\cup\{s_1\}$, and we let $e_{s_1s_2}$ be an
  arbitrary edge from $V(C_1)$ to $s_2\in N(C_1)$. Then we contract
  $C_2$ onto $s_3$ to create edges from $s_3$ to $s_1$ and
  $s_2$. Formally, we let  $V(M_{s_3}):=V(C_3)\cup\{s_1\}$, and for
  $i=1,2$ we let
  $e_{s_is_3}$ be an
  arbitrary edge from $V(C_2)$ to $s_i\in N(C_2)$.

  Finally, suppose that $S$ is an independent set and $G[Y]$. Let $v\in Y$.  As $G$ is
  3-connected, there are internally disjoint paths $P_i$, for
  $i\in[3]$, from $v$ to $s_i$. At least one of these paths, say,
  $P_1$, can be chosen to have length at least $2$. To see this,
  suppose that $P_1,P_2,P_3$ have length $1$. Let $w\in N(v)\cap Y$;
  such a $w$ exists because $G[Y]$ is connected and $|Y|\ge 2$. Then there is a path $Q$
  from $w$ to $S$ in $G\setminus\{v\}$. Let $s_i$ be the endvertex of
  $Q$ in $S$. As for all $j\neq i$ the path $P_j$ only consist of a single
  edge, $Q$ and $P_j$ have an empty intersection. We can replace $P_i$
  by the path from $v$ to $w$ and then along $Q$ to $s_i$; this path
  has length at least $2$.

  In the following we assume without loss of generality that $P_1$ has
  length at least $2$. Then  $ V(P_1)\setminus\{v,s_1\}\neq\emptyset$. 
  Let $Q'$ be a path from
  $V(P_1)\setminus\{v,s_1\}$ to $\big(V(P_2)\cup
  V(P_3)\big)\setminus\{v\}$ in $G\setminus\{v,s_1\}$. Such a path
  exists because $G$ is 3-connected. Without loss of generality we may
  assume that the endvertex of $Q'$ is on $P_2$ and that $Q'$ has no
  internal vertex on $V(P_1)\cup V(P_2)\cup V(P_3)$. For $i=1,2$, let
  $w_i$ be the endvertex of $Q'$ on $P_i$, and let $P_i'$ be the segment of
  $P_i$ from  $s_i$ to $w_i$. We let $V(M_{s_1}):=V(P_1')\cup
  V(Q')\setminus\{w_2\}$ and $V(M_{s_2}):=V(P_2')$, and we let
  $e_{s_1s_2}$ be the edge of $Q'$ incident with $w_2$.
  We let $V(M_{s_3}):=V(P_3)\cup (V(P_1)\setminus V(P_1'))\cup
  (V(P_2)\setminus V(P_2'))$. For $i=1,2$, let $x_i$ be the neighbour
  of $w_i$ in $V(P_i)\setminus V(P_i')$, and let $e_{s_is_3}$ be the
  edge from $w_i$ to $x_i$.
\end{proof}

\begin{rem}\label{rem:degsep}
  The converse of Lemma~\ref{lem:degsep} holds as well:
  \emph{if $(Y,S,Z)\in\Sep_{\le 3}(G)$ is a proper separation such
    that $\torso G{Z\cup S}$ is a minor of $G$, then $(Y,S,Z)$ is non-degenerate.}

    To see this, suppose that $(Y,S,Z)\in\Sep_{\le 3}(G)$ is a proper
    degenerate separation and $H:=\torso G{S\cup Z}$ is a minor of
  $G$. Note that $|H|=|G|-1$ and
  $|E(H)|=|E(G)|$, because $S$ is an independent set in $G$. Let
  $\big((M_w)_{w\in V(H)},(e_f)_{f\in E(H)}\big)$ be a model of $H$ in
  $G$. Then $\{e_f\mid f\in E(H)\}=E(G)$, and this implies that
  $|M_w|=1$ for all $w\in V(H)$, because if $|M_w|\ge 2$ then at least
  one edge would appear in the connected graph $G[M_w]$. Thus
  $\sum_{w\in V(H)}|M_w|=|G|-1$, and hence there is a vertex
  $v\in V(G)\setminus\bigcup_{w\in V(H)}V(M_w)$. As $G$ is connected,
  $v$ is incident with at least one edge $e$, and this edge $e$ is not
  among the $e_f$ for $f\in E(H)$. This is a contradiction.
  \uend
\end{rem}

\subsection{Crossing Separations}
\label{sec:4t1}

\begin{figure}
  \centering
  \input{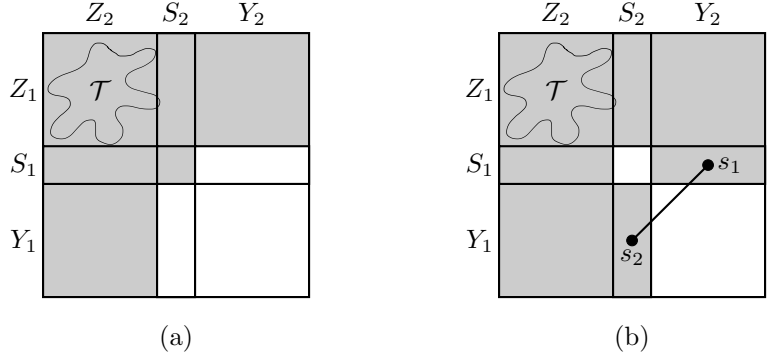}
  \caption{Orthogonal and crossing separations}
  \label{fig:orth}
\end{figure}

Let us call two separations $(Y_1,S_1,Z_1),(Y_2,S_2,Z_2)\in\Sep(G)$
\emph{orthogonal} if 
\[
(Y_1\cup S_1)\cap(Y_2\cup S_2)\subseteq S_1\cap S_2.
\]
(see Figure~\ref{fig:orth}(a)).  It is not hard to show that the minimal
separations of a tangle of order $3$ in a graph are mutually
orthogonal. This is the key to the proof of
Theorem~\ref{theo:tangles_vs_3cc}. The minimal separations of a tangle
of order $4$ are not necessarily orthogonal, but in this section, we
shall prove that they can only ``cross'' in a very restricted way.

We continue to assume that $G$ is a 3-connected graph and, in addition
make the following assumption, which will stay in place until the end
of Section~\ref{sec:4t}.

\begin{ass}\label{ass:4t2}
  $\CT$ is a $G$-tangle of order $4$.
\end{ass}

For all sets $S\subseteq V(G)$ of cardinality  $|S|\le 3$ we let
$\CZ(S):=\CZ_{\CT}(S)$ and $\CY(S):=\CY_{\CT}(S)$ (see \eqref{eq:3}
and \eqref{eq:4}).

\begin{lem}[Crossing Lemma]\label{lem:cross}
  Let $(Y_1,S_1,Z_1),(Y_2,S_2,Z_2)\in\CMT$ be distinct. Then
  either $(Y_1,S_1,Z_1)$ and $(Y_2,S_2,Z_2)$ are orthogonal or
  $Y_1\cap Y_2=\emptyset$ and $S_1\cap S_2=\emptyset$ and there is an edge $s_1s_2\in E(G)$ such
  that for $i=1,2$ we have $S_i\cap Y_{3-i}=\{s_i\}$ (see Figure~\ref{fig:orth}(b)).

  In the latter case, we call the edge $s_1s_2$ the \emph{crossedge}
  of $(Y_1,S_1,Z_1)$ and $(Y_2,S_2,Z_2)$.
\end{lem}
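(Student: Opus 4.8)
The plan is to analyse the intersections $(Y_1,S_1,Z_1)\cap(Y_2,S_2,Z_2)$ and $(Y_1,S_1,Z_1)\cup(Y_2,S_2,Z_2)$ using the closure properties of tangles collected earlier, and to deduce from them how the two minimal separations can overlap. Write $A_i:=S_i\cup Z_i$ for $i=1,2$. First I would recall from Corollary~\ref{cor:reed2} that $S_i=N(Z_i)$ and $Z_i=\CZ_\CT(S_i)$, so in particular each $Z_i$ is connected, and from Corollary~\ref{cor:reed3} that $|(S_1\cap Z_2)\cup(S_1\cap S_2)\cup(Z_1\cap S_2)|\ge 4$ and $S_i\cap Z_{3-i}\neq\emptyset$; since $|S_i|=|S_{3-i}|=3$, the separator $S$ of the intersection separation has exactly four vertices, and it is not contained in either $S_i$. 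Assuming the two separations are not orthogonal, some vertex of $(Y_1\cup S_1)\cap(Y_2\cup S_2)$ lies outside $S_1\cap S_2$; the task is to show this forces the very rigid crossedge configuration.

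The key step is to use Corollary~\ref{cor:reed4} in both directions. Applied to $(Y_1,S_1,Z_1)\in\CMT$ and any separation with separator $S'\subseteq S_1$, it says $Z_1\cup(S_1\setminus S')\subseteq Z'$. I would like to apply this with $S'=S_1\cap S_2$: the separation $(Y_1,S_1,Z_1)\cap(Y_2,S_2,Z_2)=(Y,S,Z)$ has $Z=Z_1\cap Z_2$; if $S_1\cap Z_2=\emptyset$, then $S\subseteq S_2$, contradicting $|S|=4$; so $S_1\cap Z_2\neq\emptyset$, and symmetrically $S_2\cap Z_1\neq\emptyset$. Now count: $|S_1\cap Z_2|+|S_1\cap S_2|+|S_1\cap Y_2|=|S_1|=3$ and similarly for $S_2$; combined with $|(S_1\cap Z_2)\cup(S_1\cap S_2)\cup(Z_1\cap S_2)|=4$ and $S_i\cap Z_{3-i}\neq\emptyset$, this is a small system of equations in the six quantities $|S_1\cap Z_2|,|S_1\cap S_2|,|S_1\cap Y_2|,|Z_1\cap S_2|,|S_2\cap Y_1|$ (noting $|S_1\cap S_2|$ is shared). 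The orthogonality failure gives $|S_1\cap Y_2|+|S_2\cap Y_1|\ge 1$, i.e. at least one of these is nonzero. I expect the arithmetic to force $|S_1\cap S_2|=0$, $|S_1\cap Z_2|=|S_2\cap Z_1|=2$, and $|S_1\cap Y_2|=|S_2\cap Y_1|=1$ — giving the claimed $S_1\cap S_2=\emptyset$ and $S_i\cap Y_{3-i}=\{s_i\}$. (One has to rule out $|S_1\cap S_2|=1$ with one of the $Y$-intersections empty; here the symmetric use of Corollary~\ref{cor:reed4} on the \emph{union} separation $(Y_1,S_1,Z_1)\cup(Y_2,S_2,Z_2)$, whose separator is $(S_1\cap Y_2)\cup(S_1\cap S_2)\cup(Y_1\cap S_2)$, should give $|{(S_1\cap Y_2)\cup(S_1\cap S_2)\cup(Y_1\cap S_2)}|\ge 4$ as well, since $(Z,S,Y)$-type minimality dualises for equal-order separations — this pins everything down.)

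It then remains to produce the edge $s_1s_2$ and show $Y_1\cap Y_2=\emptyset$. For $Y_1\cap Y_2=\emptyset$: the union separation $(Y_1\cap Y_2,\,(S_1\cap Y_2)\cup(S_1\cap S_2)\cup(Y_1\cap S_2),\,Z_1\cup Z_2)$ has separator of size $4$ by the dual of Corollary~\ref{cor:reed3}; if $Y_1\cap Y_2\neq\emptyset$ then its $Y$-side is nonempty and its $Z$-side $Z_1\cup Z_2$ contains $Z_1$, so applying Corollary~\ref{cor:reed4} to $(Z_1,S_1,Y_1)$... — actually more directly: $(Z_1\cup Z_2)\cup$ (part of separator) being disconnected from $Y_1\cap Y_2$ would, via the connectivity of $Z_1,Z_2$ and $S_i\cap Z_{3-i}\neq\emptyset$ (so $Z_1\cup Z_2$ is connected), contradict $\CT$-membership of too-large separations as in Corollary~\ref{cor:tangle-closure}. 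Finally, for the edge: $\{s_1\}=S_1\cap Y_2$ and $\{s_2\}=S_2\cap Y_1$; since $s_1\in S_1=N(Z_1)$ it has a neighbour in $Z_1$, and since $s_1\in Y_2$ while $Z_1\subseteq S_2\cup Z_2$... I would argue that $s_1$ must have a neighbour in $Y_1\cup S_1$ that lands in $S_2$, and the only candidate is $s_2$, because $S_1\cap S_2=\emptyset$ and $s_1\notin Z_2$ forbids edges from $s_1$ to $Z_2$ only if $s_1\notin S_2$, which holds. Carefully: $s_1\in Y_2$, so $N(s_1)\subseteq Y_2\cup S_2$; also $s_1\in S_1$, and $s_1$ being in the separator $S_1=N(Z_1)$ means $s_1$ has a neighbour $z\in Z_1\subseteq S_2\cup Z_2$; since $N(s_1)\cap Z_2=\emptyset$ (as $Y_2$-vertices have no neighbours in $Z_2$... wait, $s_1\in Y_2$ so indeed no edge $s_1$–$Z_2$) we get $z\in S_2$, and since $Z_1\cap S_2=\{?\}$ — hmm, $|Z_1\cap S_2|=2$ not $1$. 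Let me instead note $z\in Z_1\cap N(s_1)$ and $z\in S_2$ would need $z\in S_2\cap Z_1$; but also I want $z=s_2\in S_2\cap Y_1$, contradiction. So the edge must come the other way: the separator-$4$ condition on $(Y_1,S_1,Z_1)\cap(Y_2,S_2,Z_2)$ together with $|S_1\cap S_2|=0$ forces, via tangle axiom \ref{li:t2} applied to $(\emptyset, S', \cdot)$ for a size-$3$ subset, an edge with endpoints witnessing the crossing — concretely an edge between $S_1\cap Y_2$ and $S_2\cap Y_1$. The main obstacle I anticipate is exactly this last step: cleanly extracting the single crossedge $s_1s_2\in E(G)$ and verifying it is unique, which will require combining \ref{li:t2} (to forbid a size-$3$ separator in the "right" place, forcing an edge) with the connectivity of $Z_1$ and $Z_2$ from Corollary~\ref{cor:reed2}; the arithmetic bookkeeping in the middle is routine but must be done with care to exclude the $|S_1\cap S_2|=1$ near-miss case.
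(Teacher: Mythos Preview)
Your arithmetic setup via Corollary~\ref{cor:reed3} is right, but the argument then goes off the rails because you never invoke the standing hypothesis that $G$ is $3$-connected (Assumption~\ref{ass:4t1}). This is the tool that does all the remaining work, and the paper uses it three times. Most seriously, your claim that the separator of the \emph{union} separation $(Y_1,S_1,Z_1)\cup(Y_2,S_2,Z_2)$ also has size $\ge 4$ is false; there is no ``dual'' of Corollary~\ref{cor:reed3}, because $(Z_i,S_i,Y_i)\notin\CT$. In fact the arithmetic you set up gives the opposite: from $|S_1|=|S_2|=3$ and $|(S_1\cap Z_2)\cup(S_1\cap S_2)\cup(Z_1\cap S_2)|\ge 4$ you get
\[
|(S_1\cap Y_2)\cup(S_1\cap S_2)\cup(Y_1\cap S_2)|\le 2.
\]
This set of size $\le 2$ separates $Y_1\cap Y_2$ from the nonempty set $S_1\cap Z_2$; since $G$ is $3$-connected, $Y_1\cap Y_2=\emptyset$. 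That is how the paper obtains it---not via Corollary~\ref{cor:tangle-closure}.

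The same omission breaks your handling of the ``$|S_1\cap S_2|=1$ near-miss'' and of the crossedge. Once $Y_1\cap Y_2=\emptyset$, suppose $S_1\cap Y_2\neq\emptyset$ but $Y_1\cap S_2=\emptyset$. Then $Y_1\subseteq Z_2$, so $N(Y_1)\subseteq S_1\setminus Y_2$, a set of size $\le 2$; this is a proper separation of order $<3$, contradicting $3$-connectivity. Hence both $|S_i\cap Y_{3-i}|\ge 1$, and now the inequality above forces $|S_1\cap S_2|=0$ and $|S_i\cap Y_{3-i}|=1$, $|S_i\cap Z_{3-i}|=2$. For the edge: if $s_1s_2\notin E(G)$, then (using $Y_1\cap Y_2=\emptyset$ and $S_1\cap S_2=\emptyset$) the only possible edge from $Y_1$ to $S_1\setminus Z_2=\{s_1\}$ is $s_1s_2$, so $S_1\cap Z_2$ (size $2$) separates $Y_1\ni s_2$ from the rest of $G$---again contradicting $3$-connectivity. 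Your attempted route via \ref{li:t2} does not produce this edge; the tangle axioms alone are not enough here.
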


\begin{proof}
  We observe first that $Y_1$ and $Y_2$ are
  nonempty. If $Y_i=\emptyset$ then $S_i=\emptyset$ by the minimality
  of $(Y_i,S_i,Z_i)$ and thus
  $(Y_i,S_i,Y_i)=(\emptyset,\emptyset,V(G))\succeq(Y_{3-i},S_{3-i},Z_{3-i})$.
  Again by the minimality of $(Y_i,S_i,Z_i)$ this implies that the
  two separations are equal. This contradicts our assumption
  that they be distinct.

  By Corollary~\ref{cor:reed3} we have
  \begin{equation}
    \label{eq:14}
    |(S_1\cap Z_2)\cup(S_1\cap S_2)\cap(Z_1\cap S_2)|\ge 4
  \end{equation}
  and
  \begin{equation}
    \label{eq:15}
    |S_i\cap Z_{3-i}|\ge 1\hspace{2cm}\text{for }i=1,2.
  \end{equation}
  An easy calculation based on these inequalities and $|S_i|\le 3$
  shows that
  \begin{equation*}
    |(S_1\cap Y_2)\cup (S_1\cap S_2)\cup (Y_1\cap S_2)|\le 2.
  \end{equation*}
  As $G$ is 3-connected and $(S_1\cap Y_2)\cup (S_1\cap S_2)\cup (Y_1\cap S_2)$ separates $Y_1\cap Y_2$ from the nonempty set
  $S_1\cap Z_2$, it follows that
  \begin{equation*}
    Y_1\cap Y_2=\emptyset.
  \end{equation*}
  Thus if $S_1\cap Y_2=S_2\cap Y_1=\emptyset$, then 
  $(Y_1,S_1,Z_1)$ and $(Y_2,S_2,Z_2)$ are orthogonal.
  
  In the following, we assume without loss of generality that 
  \begin{equation}
    \label{eq:18}
    |S_1\cap
    Y_2|\ge 1.
  \end{equation}
  Suppose for contradiction that $Y_1\cap S_2=\emptyset$. Then
  $Y_1\subseteq Z_2$. As there is no edge from $Z_2$ to
  $Y_2$, we have $N(Y_1)\subseteq S_1\setminus Y_2$. Thus
  $(Y_1,S_1\setminus Y_2,Z_1\cup(S_1\cap Y_2))$ is a separation
  of order less than $3$, which contradicts $G$ being 3-connected. Thus
  \begin{equation}
    \label{eq:19}
    |S_2\cap
    Y_1|\ge 1.
  \end{equation}
  The only solution to the inequalities $|S_i|\le 3$ and \eqref{eq:14},
  \eqref{eq:18}, and \eqref{eq:19} is
  \begin{equation*}
    |S_i\cap Z_{3-i}|= 2
    \quad\text{and}\quad
     |S_i\cap Y_{3-i}|=1
   \qquad\text{for }i=1,2.
  \end{equation*}
  Let $s_i$ be the unique vertex in $S_i\cap Y_{3-i}$. We have
  $s_1s_2\in E(G)$, because otherwise $S_1\cap Z_2$ (and also $Z_1\cap
  S_2$) separates $s_1$ from $s_2$, which contradicts $G$ being 3-connected.
\end{proof}

We say that two separations
$(Y_1,S_1,Z_1),(Y_2,S_2,Z_2)\in\CMT$ \emph{cross} if they are not
orthogonal. We call $(Y,S,Z)\in\CMT$ \emph{crossed} if there is some
$(Y',S',Z')\in\CMT$ that crosses $(Y,S,Z)$.  We denote the set of all
non-degenerate separations $(Y,S,Z)\in\CMT$ by $\CNDT$.

By Corollary~\ref{cor:reed2}, the minimal separations
are determined by their separators: for every $3$-separator $S$ of
$G$, if $(Y,S,Z)\in\CMT$ for some $Y,Z\subseteq V(G)$, then
$Z=\CZ(S)$ and $Y=\CY(S)$. Therefore, we call $S$ $\CT$-minimal if
$(\CY(S),S,\CZ(S))\in\CMT$, and we denote the set of all $\CT$-minimal
3-separations by $\CMS$. We say that $S_1,S_2$ are \emph{orthogonal} (\emph{crossing})
if the corresponding separations $(\CY(S_i),S_i,\CZ(S_i))$ are orthogonal
(crossing, respectively). The \emph{crossedge} of two crossing
separators $S_1,S_2\in\CMS$ is the crossedge of the corresponding
separations.  We say that $S\in\CMS$ is \emph{crossed} if
$(\CY(S),S,\CZ(S))$ is crossed.
We call $S\in\CMS$ \emph{degenerate} if $(\CY(S),S,\CZ(S))$ is
degenerate and denote
the set of all non-degenerate 
$S\in\CMS$ by $\CNDS$.

\begin{lem}\label{lem:4t1a}
  Let $S\in\CMS$ be crossed. Then $\CY(S)$ is connected in $G$. 
\end{lem}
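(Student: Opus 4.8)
The plan is to argue by contradiction: assume $\CY(S)$ is disconnected, and use the crossing partner of $S$ to trap one connected component of $G\setminus S$ on the wrong side of that partner's separation. First I would invoke the Crossing Lemma. Since $S$ is crossed, there is a separation $(Y_2,S_2,Z_2)\in\CMT$ that crosses $(Y_1,S_1,Z_1):=(\CY(S),S,\CZ(S))$, and Lemma~\ref{lem:cross} then supplies $Y_1\cap Y_2=\emptyset$, $S_1\cap S_2=\emptyset$, and a crossedge $s_1s_2\in E(G)$ with $S_1\cap Y_2=\{s_1\}$ and $S_2\cap Y_1=\{s_2\}$. The facts I want to read off are: $s_1\in Y_2$, $s_2\in\CY(S)$ (so in particular $\CY(S)\neq\emptyset$), and, because $\CY(S)=Y_1$ meets $Y_2$ not at all and meets $S_2$ only in $s_2$, the inclusion $\CY(S)\setminus\{s_2\}\subseteq Z_2$.

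Next I would record the elementary structural point that, since $\CZ(S)$ is a connected component of $G\setminus S$ and $\CY(S)=V(G)\setminus\big(S\cup\CZ(S)\big)$, the connected components of $G[\CY(S)]$ are exactly the connected components of $G\setminus S$ other than $\CZ(S)$. For any such component $C$ I have $N(C)\subseteq S$, and the triple $\big(V(G)\setminus(V(C)\cup N(C)),\,N(C),\,V(C)\big)$ is a proper separation of $G$ since $\CZ(S)$ is a nonempty subset of $V(G)\setminus(V(C)\cup N(C))$ (nonempty by \ref{li:t3}). As $G$ is $3$-connected and $|S|=3$, this forces $N(C)=S$; in particular $s_1$ has a neighbour in $V(C)$.

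Now the contradiction. If $G[\CY(S)]$ had more than one component, then at least one component $C$ would not contain $s_2$, so $V(C)\subseteq\CY(S)\setminus\{s_2\}\subseteq Z_2$. But then $s_1\in Y_2$ has a neighbour in $V(C)\subseteq Z_2$, which is impossible because there is no edge between $Y_2$ and $Z_2$. Hence $G[\CY(S)]$ has exactly one component (it has at least one, as $s_2\in\CY(S)$), i.e.\ $\CY(S)$ is connected in $G$.

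I expect no real obstacle here; the only thing to be careful with is the bookkeeping around the Crossing Lemma — making sure the crossing partner is genuinely taken in $\CMT$ so the lemma applies, and correctly extracting the containment $\CY(S)\setminus\{s_2\}\subseteq Z_2$ from the partition identities. Everything after that is a one-line $3$-connectivity observation plus the ``no edge across a separation'' fact.
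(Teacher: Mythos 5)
Your proof is correct and follows essentially the same route as the paper: take a component $C$ of $G[\CY(S)]$ avoiding the crossedge endpoint $s_2$, use 3-connectivity to force $N(C)=S$, and derive a contradiction with the crossing separation. The only (immaterial) difference is the final step — you exhibit an edge from $s_1\in Y_2$ into $V(C)\subseteq Z_2$, whereas the paper exhibits a path through $C$ from $S\cap Z_2$ to $S\cap Y_2$ avoiding $S_2$; both contradict $(Y_2,S_2,Z_2)$ being a separation.
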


\begin{proof}
  Let $Y:=\CY(S)$ and $Z:=\CZ(S)$, and let $(Y',S',Z')\in\CT$ be a vertex
  separation that crosses $(Y,S,Z)$. Let $ss'$ with $s\in S$ and
  $s'\in S'$ be the crossedge. Then $S'\cap Y=\{s'\}$ and $S\cap
  Y'=\{s\}$. Let $s''\in S\cap Z'$.

  Suppose for contradiction that $G[Y]$ is not connected. Then there
  is a connected component $C$ of $G[Y]$ such that $s'\not\in V(C)$
  and hence $S'\cap V(C)=\emptyset$. As $G$ is 3-connected, we have
  $N(C)=S$. Thus there is a path from $s''\in Z'$ to $s\in Y'$ with
  all internal vertices in $V(C)$. This path has an empty intersection
  with $S'$, which contradicts $S'$ separating $Z'$ from $Y'$.
\end{proof}

\begin{lem}[Crossedge Independence Lemma]\label{lem:CI}
  Let $S,S_1,S_2\in\CMS$ be distinct such that both
  $S,S_1$ and $S,S_2$ cross, and let $e_1=s_1s_1',e_2=s_2s_2'$ with
  $s_i\in S$ and $s_i'\in S_i$ be the
  respective crossedges. 
  \begin{enumerate}
  \item $S$ is an independent set.
  \item If $S$ is degenerate, then $s_1,s_2,s_1'$ are mutually distinct and $s_1'=s_2'$.
  \item If $S$ is non-degenerate, then $s_1,s_2,s_1',s_2'$ are
    mutually distinct.%
  \end{enumerate}
\end{lem}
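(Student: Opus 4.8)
The plan is to derive all three parts from the Crossing Lemma (Lemma~\ref{lem:cross}) together with the $3$-connectivity of $G$; no further tools are needed. First I would record what the two crossings give. By Lemma~\ref{lem:cross}, for $i=1,2$ we have $S\cap S_i=\emptyset$, $\CY(S)\cap\CY(S_i)=\emptyset$, $S\cap\CY(S_i)=\{s_i\}$ and $S_i\cap\CY(S)=\{s_i'\}$; since $|S|=|S_i|=3$ this forces $|S\cap\CZ(S_i)|=|S_i\cap\CZ(S)|=2$, so $S=\{s_i\}\cup(S\cap\CZ(S_i))$ with the two vertices of $S\setminus\{s_i\}$ lying in $\CZ(S_i)$. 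As $s_i\in\CY(S_i)$ and there is no edge between $\CY(S_i)$ and $\CZ(S_i)$, it follows that $s_i$ is non-adjacent to both vertices of $S\setminus\{s_i\}$. I would also check that $s_1\neq s_2$: if $s_1=s_2=:s$, then $s\in\CY(S_1)\cap\CY(S_2)$, so the distinct separations $(\CY(S_1),S_1,\CZ(S_1)),(\CY(S_2),S_2,\CZ(S_2))\in\CMT$ are not of the ``crossing'' type of Lemma~\ref{lem:cross} and hence orthogonal; but then $s\in(\CY(S_1)\cup S_1)\cap(\CY(S_2)\cup S_2)\subseteq S_1\cap S_2\subseteq S_1$, contradicting $s\in\CY(S_1)$.

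Parts~(1) and~(2) then follow quickly. For~(1): an edge inside $S$ would, by the non-adjacency statement for both $i=1$ and $i=2$, have both ends in $(S\setminus\{s_1\})\cap(S\setminus\{s_2\})=S\setminus\{s_1,s_2\}$, a set of size $1$ (here using $s_1\neq s_2$); impossible, so $S$ is independent. For~(2): if $S$ is degenerate then $\CY(S)$ is a singleton, and since $s_i'\in S_i\cap\CY(S)$ this gives $s_1'=s_2'$; as $s_1,s_2$ are distinct, lie in $S$, and $S\cap\CY(S)=\emptyset$, they are also distinct from $s_1'\in\CY(S)$, so $s_1,s_2,s_1'$ are mutually distinct.

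Part~(3) is the heart of the matter. By~(1), ``$S$ non-degenerate'' is equivalent to $|\CY(S)|\geq 2$ (and $\CY(S)\neq\emptyset$ always, as $(\CY(S),S,\CZ(S))\in\CMT$ has a separator of size $3$). As before the four vertices are automatically distinct except that we must rule out $s_1'=s_2'$; so suppose for contradiction $s_1'=s_2'=:y\in\CY(S)$ and write $S=\{s_1,s_2,b\}$. Fix any $v\in\CY(S)\setminus\{y\}$. From $\CY(S)\cap\CY(S_i)=\emptyset$ and $S_i\cap\CY(S)=\{y\}$ we get $v\in\CZ(S_i)$ for $i=1,2$, while $s_i\in\CY(S_i)$; as there is no edge between $\CY(S_i)$ and $\CZ(S_i)$, $v$ is non-adjacent to $s_1$ and to $s_2$. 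Combined with $N(v)\subseteq\CY(S)\cup S$ (there is no edge from $\CY(S)$ to $\CZ(S)$), this yields $N(v)\subseteq\CY(S)\cup\{b\}$. Consequently $S':=N(\CY(S)\setminus\{y\})\subseteq\{y,b\}$, so $(\CY(S)\setminus\{y\},\,S',\,V(G)\setminus((\CY(S)\setminus\{y\})\cup S'))$ is a separation of $G$ of order at most $2$ whose first part is nonempty (as $|\CY(S)|\geq2$) and whose last part contains $\CZ(S)\neq\emptyset$ (since $\CZ(S)$ is disjoint from $\CY(S)$ and from $\{b\}\subseteq S$). This proper separation of order at most $2$ contradicts the $3$-connectivity of $G$, so $s_1'\neq s_2'$, which completes~(3).

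Apart from routine set-chasing, the two steps that need care are the proof that $s_1\neq s_2$ — which genuinely uses the orthogonal alternative of the Crossing Lemma and the distinctness of $S_1,S_2$ — and, in~(3), the realisation that the neighbourhood restriction on $\CY(S)\setminus\{y\}$ is to be exploited via $3$-connectivity, not via $\CT$-minimality of $S$ (the offending separator $\{y,b\}$ lies strictly above $S$ in the order $\preceq$, so minimality yields nothing).
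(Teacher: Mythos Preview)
Your proof is correct. Parts~(1) and~(2), including the verification that $s_1\neq s_2$, follow the paper's argument essentially verbatim (the paper phrases $s_1\neq s_2$ slightly differently, noting directly that the Crossing Lemma applied to $S_1,S_2$ gives $\CY(S_1)\cap\CY(S_2)=\emptyset$ in either alternative, but your contradiction argument is equivalent).

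For part~(3) you take a cleaner route than the paper. The paper does not argue by contradiction; instead it picks a connected component $C$ of $G[\CY(S)\cap\CZ(S_1)]$, uses $3$-connectivity to pin down $N(C)=\{s_2,b,s_1'\}$, and then observes that a path in $C$ from $s_2\in\CY(S_2)$ to $b\in\CZ(S_2)$ must meet $S_2$, forcing $s_2'\in V(C)$ and hence $s_2'\neq s_1'\in N(C)$. Your argument---assume $s_1'=s_2'=y$, show every $v\in\CY(S)\setminus\{y\}$ has $N(v)\subseteq\CY(S)\cup\{b\}$, and read off a $2$-separator---is more direct and avoids the component/path detour. Both arguments ultimately rest on $3$-connectivity, but yours exposes the obstruction (a $2$-cut) explicitly; the paper's is a little more constructive in that it locates $s_2'$ inside a specific component relative to $S_1$, which is information it does not subsequently need.
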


\begin{proof}
  Let
  $(Y,S,Z),(Y_i,S_i,Z_i)\in\CMT$ be the separations
  corresponding to our separators.  By the Crossing Lemma~\ref{lem:cross} we have
  \begin{gather}
    \label{eq:21}
    Y\cap Y_1=Y\cap
    Y_2=Y_1\cap Y_2=\emptyset\\
     \label{eq:22}
   S\cap Y_1=\{s_1\}\quad\text{and}\quad S\cap Y_2=\{s_2\}\\
     \label{eq:23}
   S_1\cap Y=\{s_1'\}\quad\text{and}\quad S_2\cap Y=\{s_2'\}\\
   \label{eq:24}
   S\cap S_1=S\cap S_2=\emptyset.
  \end{gather}
  Observe that $s_1\neq s_2$, because $s_1\in Y_1$ and $s_2\in Y_2$ and $Y_1\cap
    Y_2=\emptyset$ by \eqref{eq:21}. Furthermore, $s_1\neq s_2'$, because $s_1\in S$
    and $s_2'\in Y$ and, similarly, $s_2\neq s_1'$.

    To prove (1), note that $s_i\not\in S_1\cup S_2$ by
    \eqref{eq:24}. As $s_i\in Y_i$ by \eqref{eq:22}, we have
    $s_i\in Y_i\setminus(S_1\cup S_2)$. It follows
    that $s_1s_2\not\in E(G)$, because $Y_1$ and $Y_2$ are disjoint
    and $N(Y_i)\subseteq S_i$. Let $s$ be the unique element in
    $S\setminus\{s_1,s_2\}$. By
    \eqref{eq:22} and \eqref{eq:24} we have $s\in Z_1\cap
    Z_2$. As $S_i$ separates $Z_i$ from $Y_i$, we have
    $s_is\not\in E(G)$. Hence $S$ is an independent set.
 
    To prove (2), suppose that $S$ is degenerate. Then $|Y|=1$, and as
    $s_1',s_2'\in Y$ this implies $s_1'=s_2'$. 

    To prove (3), suppose that $S$ is non-degenerate. Then
    $|Y|\ge 2$, because $S$ is an independent set.
    We need to prove that 
    $
      s_1'\neq s_2'.
    $
    We have $Y\cap Y_1=\emptyset$ and $Y\cap
    S_1=\{s_1'\}$. Hence $Y\cap Z_1\neq\emptyset$. Let $C$ be a
    connected component of $G[Y\cap Z_1]$. As $G$ is 3-connected, we
    have $|N(C)|\ge 3$, and as 
    \[
    N(C)\subseteq N(Y\cap
    Z_1)\subseteq(S\cap Z_1)\cup (S\cap S_1)\cup(Y\cap
    S_1)=\{s_2,s,s_1'\},
    \]
    it follows that $N(C)=\{s_2,s,s_1'\}$. Hence there is a path
    from $s_2$ to $s$ with all internal
    vertices in $C$. As $S_2$ separates $s\in Z_2$ from
    $s_2\in Y_2$, it holds that $V(C)\cap S_2\neq\emptyset$. As
    $V(C)\subseteq Y$ and $Y\cap S_2=\{s_2'\}$, we have
    $s_2'\in V(C)$. Thus $s_2'\neq s_1'\in N(C)$.
\end{proof}

\begin{figure}
  \centering 
  \begin{tikzpicture}
  [
  current point is local=true,
  line width=0.3mm,
  vertex/.style={draw,circle,fill=black,inner sep=0mm,minimum
    size=1.5mm},
  tnode/.style={circle,fill=white,inner sep=0mm,minimum size=6mm,},
  every edge/.style={draw},
  ]
  
  \newcommand{\trup}{{+(90:3mm) node[vertex] {}-- +(210:3mm) node[vertex] {} -- +(330:3mm)
     node[vertex] {} -- cycle}}
  \newcommand{\trdn}{{+(270:3mm) node[vertex] {} -- +(30:3mm) node[vertex] {} -- +(150:3mm)
     node[vertex] {} -- cycle}}
  
 \newcommand{\rad}{10mm} 
 \newcommand{\grd}{17.3205mm} 

 \draw ++(90:10mm) {
    +(90:\rad) node[tnode] (u1) {} -- 
    +(150:\rad) node[tnode] (d1) {} -- 
    +(210:\rad) node[tnode] (u2) {} -- 
    +(270:\rad) node[tnode] (d2) {} -- 
    +(330:\rad) node[tnode] (u3) {} -- 
    +(30:\rad) node[tnode] (d3) {} -- 
    cycle
  }
  ++(240:\grd) {
    +(90:\rad) node[tnode] {} -- 
    +(150:\rad) node[tnode] (d4) {} -- 
    +(210:\rad) node[tnode] (u4) {} -- 
    +(270:\rad) node[tnode] (d5) {} -- 
    +(330:\rad) node[tnode] (u5) {} -- 
    +(30:\rad) node[tnode] {} -- 
    cycle
  }
  ++(0:\grd) {
    +(90:\rad) node[tnode] {} -- 
    +(150:\rad) node[tnode] {} -- 
    +(210:\rad) node[tnode] {} -- 
    +(270:\rad) node[tnode] (d6) {} -- 
    +(330:\rad) node[tnode] (u6) {} -- 
    +(30:\rad) node[tnode] (d7) {} -- 
    cycle
  }
  ++(60:\grd) { [rounded corners]
    +(90:\rad) node[tnode] (u7) {} -- 
    +(150:\rad) node[tnode] {} -- 
    +(210:\rad) node[tnode] {} -- 
    +(270:\rad) node[tnode] {} -- 
    +(330:\rad) node[tnode] {} -- 
    +(30:\rad) -- 
    cycle
  }
  ++(120:\grd) { [rounded corners]
    +(90:\rad) --
    +(150:\rad) node[tnode] (d8) {} -- 
    +(210:\rad) node[tnode] {} -- 
    +(270:\rad) node[tnode] {} -- 
    +(330:\rad) node[tnode] {} -- 
    +(30:\rad) --
    cycle
  }
  ++(180:\grd) { [rounded corners]
    +(90:\rad) -- 
    +(150:\rad) -- 
    +(210:\rad) node[tnode] (u8) {} -- 
    +(270:\rad) node[tnode] {} -- 
    +(330:\rad) node[tnode] {} -- 
    +(30:\rad) node[tnode] {} -- 
    cycle
  }
  ++(240:\grd) { [rounded corners]
    +(90:\rad) node[tnode] {} -- 
    +(150:\rad) -- 
    +(210:\rad) node[tnode] (u9) {} -- 
    +(270:\rad) node[tnode] {} -- 
    +(330:\rad) node[tnode] {} -- 
    +(30:\rad) node[tnode] {} -- 
    cycle
  }
  ++(240:\grd) { [rounded corners]
    +(90:\rad) node[tnode] {} -- 
    +(150:\rad) -- 
    +(210:\rad) -- 
    +(270:\rad) node[tnode] (d9) {} -- 
    +(330:\rad) node[tnode] {} -- 
    +(30:\rad) node[tnode] {} -- 
    cycle
  }
  ++(300:\grd) { [rounded corners]
    +(90:\rad) node[tnode] {} -- 
    +(150:\rad) node[tnode] {} -- 
    +(210:\rad) -- 
    +(270:\rad) -- 
    +(330:\rad) node[tnode] (u10) {} -- 
    +(30:\rad) node[tnode] {} -- 
    cycle
  }
  ++(0:\grd) { [rounded corners]
    +(90:\rad) node[tnode] {} -- 
    +(150:\rad) node[tnode] {} -- 
    +(210:\rad) node[tnode] {} -- 
    +(270:\rad) -- 
    +(330:\rad) node[tnode] (u11) {} -- 
    +(30:\rad) node[tnode] {} -- 
    cycle
  }
  ++(0:\grd) { [rounded corners]
    +(90:\rad) node[tnode] {} -- 
    +(150:\rad) node[tnode] {} -- 
    +(210:\rad) node[tnode] {} -- 
    +(270:\rad) -- 
    +(330:\rad) -- 
    +(30:\rad) node[tnode] (d10) {} -- 
    cycle
  }
  ++(60:\grd) { [rounded corners]
    +(90:\rad) node[tnode] (u12) {} -- 
    +(150:\rad) node[tnode] {} -- 
    +(210:\rad) node[tnode] {} -- 
    +(270:\rad) node[tnode] {} -- 
    +(330:\rad) -- 
    +(30:\rad) -- 
    cycle
  }
  ;

\foreach \n in {u1,u2,u3,u4,u5,u6,u7,u8,u9,u10,u11,u12}
\draw (\n) \trup; 

\foreach \n in {d1,d2,d3,d4,d5,d6,d7,d8,d9,d10}
\draw (\n) \trdn; 

\end{tikzpicture}
  \caption{Hexagonal grid with triangles}
  \label{fig:thex}
\end{figure}
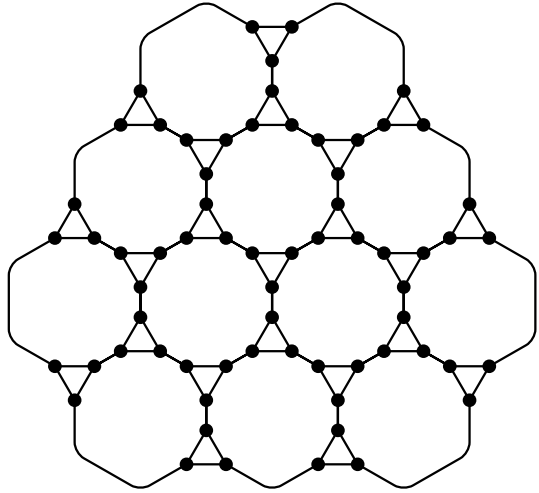

\begin{exa}
  The two cases of the Crossedge Independence Lemma are
  nicely illustrated by a hexagonal grid (see
  Figure~\ref{fig:hexgrids}), where all 3-separators are degenerate,
  and the graph in Figure~\ref{fig:thex}, where we have crossing
  non-degenerate 3-separators. In fact, a lot of my intuition draws from
  these two examples.
  \uend
\end{exa}

We call a crossedge of two 3-separators in $S_1,S_2\in\CMS$
\emph{non-degenerate} if both $S_1$ and $S_2$ are non-degenerate. Let us denote the set of crossedges of $\CT$ by
$\EC(\CT)$ and the subset of all non-degenerate crossedges by $\ECND(\CT)$

\begin{cor}\label{cor:CI}
  $\ECND$ is a matching. That is, distinct $e,e'\in\ECND(\CT)$ 
  have no endvertex in common.
\end{cor}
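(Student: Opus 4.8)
The plan is to prove the statement by contradiction. Suppose $e$ and $e'$ are distinct elements of $\ECND(\CT)$ with a common endvertex. Choose, for $e$, a crossing pair of non-degenerate $\CT$-minimal separators $S_1,S_2\in\CNDS$ having $e$ as crossedge, say $e=s_1s_2$ with $s_i\in S_i$, and likewise choose $S_1',S_2'\in\CNDS$ for $e'$ and write $e'=t_1t_2$ with $t_i\in S_i'$. Since the crossedge of a pair is symmetric in the two separators, after relabelling within each pair we may assume that the common endvertex is $v:=s_1=t_1$, so that $v\in S_1\cap S_1'$. The whole argument then consists of feeding suitable triples of separators into the Crossedge Independence Lemma~\ref{lem:CI}(3), which forbids a non-degenerate separator from crossing two distinct separators via crossedges sharing the endvertex that lies in that separator.

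First I would collect the data supplied by the Crossing Lemma~\ref{lem:cross} for the two crossing pairs: $S_1\cap S_2=\emptyset=S_1'\cap S_2'$ (in particular $v\notin S_2$, $v\notin S_2'$, and $S_1\neq S_2$, $S_1'\neq S_2'$), and the crossedge endvertices are characterised by $S_1\cap\CY(S_2)=\{s_1\}$ and $S_1'\cap\CY(S_2')=\{t_1\}$; hence $v\in\CY(S_2)\cap\CY(S_2')$.

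The crux is the following claim: $S_1$ crosses $S_2'$, and the crossedge of $S_1$ and $S_2'$ has its $S_1$-endvertex equal to $v$. If $S_1=S_1'$ this is immediate, with crossedge $e'$ (whose $S_1'=S_1$-endvertex is $t_1=v$). If $S_1\neq S_1'$, then since $v\in S_1\cap S_1'$ and crossing separators have disjoint separators by the Crossing Lemma, $S_1$ and $S_1'$ cannot cross, so they are orthogonal; but $v\in S_1$ and $v\in\CY(S_2')$ while $v\notin S_2'$, so $v$ violates the orthogonality condition for $S_1$ and $S_2'$, whence (as $v\notin S_2'$ forces $S_1\neq S_2'$) the separators $S_1$ and $S_2'$ cross, and the Crossing Lemma identifies the $S_1$-endvertex of their crossedge as the unique vertex of $S_1\cap\CY(S_2')$, namely $v$. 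I expect this claim to be the main obstacle, mostly because one must keep straight which endpoint of each crossedge lies in which separator; once that bookkeeping is carried out against the ``$S_i\cap\CY(S_{3-i})=\{s_i\}$'' clause of the Crossing Lemma, everything is forced.

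To finish, I split on whether $S_2=S_2'$. If $S_2=S_2'$: when moreover $S_1=S_1'$, the edges $e,e'$ are the crossedge of one and the same pair, hence $e=e'$, contradicting our assumption; when $S_1\neq S_1'$, the non-degenerate separator $S_2=S_2'$ crosses the two distinct separators $S_1$ and $S_1'$ with crossedges $e$ and $e'$ whose endvertices in $S_1$ and $S_1'$ are $v$ and $v$, contradicting Lemma~\ref{lem:CI}(3). If $S_2\neq S_2'$: then $S_1,S_2,S_2'$ are pairwise distinct, and the non-degenerate separator $S_1$ crosses $S_2$ (with crossedge $e$, $S_1$-endvertex $v$) and $S_2'$ (with the crossedge from the claim, $S_1$-endvertex $v$), again contradicting Lemma~\ref{lem:CI}(3). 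In every case we reach a contradiction, so $e$ and $e'$ share no endvertex, i.e.\ $\ECND$ is a matching.
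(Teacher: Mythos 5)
Your proof is correct and follows essentially the same route as the paper's: the shared endvertex is shown to witness a second crossing of one and the same non-degenerate separator, with the two crossedges sharing the endvertex, contradicting Lemma~\ref{lem:CI}(3). You are somewhat more careful than the paper in treating the sub-cases $S_1=S_1'$ and $S_2=S_2'$ explicitly, but the underlying argument is identical.
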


\begin{proof}
  Let $e=st,e'=s't'\in\ECND(\CT)$. Let $S_1,S_2\in\CNDS$ such that $e$
  is the crossedge of $S_1$ and $S_2$ and 
  $s\in S_1$ and $t\in S_2$,  and let $S_1',S_2'\in\CNDS$ such that  $e'$
  is the crossedge of $S_1'$ and $S_2'$ and
  $s'\in S_1'$ and $t'\in S_2'$. Suppose for contradiction that $t=t'$.
  Then $t\in \CY(S_1)\cap S_2\cap S_2'$, and thus
  both $S_2$ and $S_2'$ cross $S_1$, and $t$ is an endvertex of both
  crossedges. This contradicts the Crossedge Independence Lemma~\ref{lem:CI}.
\end{proof}

\subsection{Contracting a Crossedge}
\label{sec:cce}

In this section, we will study what happens if we contract a single
non-degenerate crossedge of $G$. We shall prove that the resulting
graph $G'$ is still 3-connected and has a tangle $\CT'$ of order $4$
that is ``induced'' by $\CT$. Technically, we will prove that $\CT$
is the lifting of $\CT'$ (see Lemma~\ref{lem:lift}). Moreover, the
minimal separations of $\CT'$ are the same as those of $\CT$, except
of course for the two separations whose crossedge we contract. Before
we go to the details, let us put this in a wider
perspective. Since the non-degenerate crossedges form a matching, the contraction
of one crossedge leaves the remaining ones intact, and we can
contract them all, one at a time. This leaves us with a graph and
tangle that has no crossedges, which means that all minimal
separations in the tangle are orthogonal. We will then remove the
``$Y$-parts'' of all non-degenerate minimal separations in the tangle
to obtain the quasi-4-connected region associated with our tangle.

The technically most difficult part is the contraction of one
crossedge, that is, the present section. The main insight is that if we a
have separator $S$ of order $4$ of our graph that contains both
endvertices of a crossedge, then each connected component of
$G\setminus S$ will only have one endvertex of the crossedge in its
neighbourhood. Thus there is a subset $S^\circ$ of $S$ of size $3$
that is still a separator of $G$. This will allow us to establish a correspondence
between the separations of order $3$ of the graphs $G$ and  $G'$ (obtained from $G$ by
contracting the crossedge). It still leaves us with many questions on
how exactly we match the separations, and in fact we will only get a
reasonable correspondence for the minimal separations in the tangle,
but this will be good enough to define a tangle of $G'$.

We still assume that $G$ is a 3-connected graph and $\CT$ a $G$-tangle
of order $4$, and we continue to use the notation and terminology of
the previous subsections. In addition, we make the following assumption.

\begin{ass}\label{ass:4t3}
  $e=s_1s_2$ is the crossedge of $S_1,S_2\in\CNDS$ and $G'$ is the
  graph obtained from $G$ by contracting the edge $e$ onto $s_1$. That
  is, $V(G'):=V(G)\setminus\{s_2\}$ and
  $E(G'):=\big(E(G)\setminus\{vs_2\mid v\in V(G)\}\big)\cup \{vs_1\mid
  vs_2\in E(G)\setminus\{e\}\}$.
\end{ass}

Let us fix some notation:
  for $i=1,2$, let $Y_i:=\CY(S_i)$ and $Z_i:=\CZ(S_i)$, and we assume
 $S_i=\{s_i,s_i',s_i''\}$ (see Figure~\ref{fig:4t2}).

\begin{figure}
  \centering
  \input{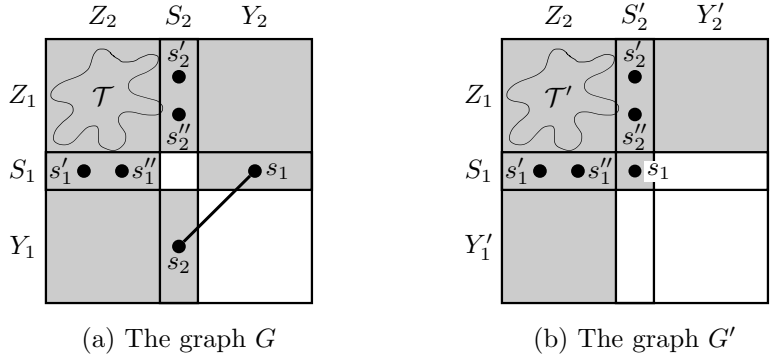}
  \caption{The setting of Section~\ref{sec:cce}}
  \label{fig:4t2}
\end{figure}

\begin{lem}\label{lem:4t15a}
  For all $S\in\CMS\setminus\{S_1,S_2\}$, 
  \begin{equation}
    \label{eq:25}
    s_1,s_2\not\in S\cup \CY(S),
  \end{equation}
  and $S$ is a separator of $G'$.
\end{lem}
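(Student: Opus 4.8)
The plan is to handle the two assertions separately, the first being where the real work lies. Fix $S\in\CMS\setminus\{S_1,S_2\}$ and put $(Y,S,Z):=(\CY(S),S,\CZ(S))\in\CMT$. Since $V(G)$ is the disjoint union of $\CY(S)$, $S$ and $\CZ(S)$, the claim $s_1,s_2\notin S\cup\CY(S)$ is equivalent to $s_1,s_2\in Z$, so it suffices to prove the latter. Fix $i\in\{1,2\}$, set $j:=3-i$, and recall from the Crossing Lemma~\ref{lem:cross} (via the definition of the crossedge of $S_1,S_2$) that $S_i\cap Y_j=\{s_i\}$, that $S_1\cap S_2=\emptyset$, and that $Y_1\cap Y_2=\emptyset$. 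I argue by contradiction, assuming $s_i\in Y\cup S$. As $S$ is distinct from $S_1$ and from $S_2$, exactly one of the following holds: $S$ crosses $S_i$; or $S$ crosses $S_j$; or $S$ is orthogonal to both $S_1$ and $S_2$. I treat these three cases.

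If $S$ crosses $S_i$, the Crossing Lemma gives $S\cap S_i=\emptyset$, so $s_i\notin S$ and hence $s_i\in Y$; it also gives $S_i\cap Y=\{\alpha\}$, where $\alpha$ is the endpoint in $S_i$ of the crossedge of $S_i$ and $S$, so $s_i=\alpha$. Now $S_i$ crosses both $S$ and $S_j$, the three separators $S,S_i,S_j$ are distinct, and $S_i\in\CNDS$, so the Crossedge Independence Lemma~\ref{lem:CI}(3) applies: the endpoints in $S_i$ of the crossedge of $(S_i,S)$ and of the crossedge $e$ of $(S_i,S_j)$ are distinct. These endpoints are $\alpha$ and $s_i$ respectively, which are equal — a contradiction. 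If $S$ crosses $S_j$, the Crossing Lemma gives $\CY(S)\cap Y_j=\emptyset$ and $S\cap Y_j=\{\delta\}$, where $\delta$ is the endpoint in $S$ of the crossedge of $S_j$ and $S$. Since $s_i\in Y_j$, the first equality forbids $s_i\in Y$, so $s_i\in S$, whence $s_i=\delta$. Applying the Crossedge Independence Lemma with $S_j$ in the middle ($S_j$ crosses $S$ and $S_i$, the three separators are distinct, $S_j\in\CNDS$), the endpoint in $S$ of the crossedge of $(S_j,S)$ and the endpoint in $S_i$ of the crossedge $e$ of $(S_j,S_i)$ are distinct; these are $\delta$ and $s_i$, again equal — a contradiction. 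Finally, if $S$ is orthogonal to both $S_1,S_2$: from $s_i\in S_i\cap(Y\cup S)$ and orthogonality of $S$ to $S_i$ we get $s_i\in(Y\cup S)\cap(Y_i\cup S_i)\subseteq S\cap S_i$, so $s_i\in S$; then from $s_i\in Y_j\cap(Y\cup S)$ and orthogonality of $S$ to $S_j$ we get $s_i\in(Y\cup S)\cap(Y_j\cup S_j)\subseteq S\cap S_j\subseteq S_j$, contradicting $S_1\cap S_2=\emptyset$. This proves $s_1,s_2\in\CZ(S)$.

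For the second assertion, recall that $\CZ(S)$ is a connected component of $G\setminus S$ (Lemma~\ref{lem:reed}), and that $\CY(S)\neq\emptyset$ — otherwise the minimality of $(\CY(S),S,\CZ(S))$ in $\CT$ would force $S=\emptyset$, impossible since $|S|=3$. We have just shown $s_1,s_2\in\CZ(S)$, so $S$ is disjoint from $\{s_1,s_2\}$ and is left unchanged when $e=s_1s_2$ is contracted onto $s_1$; moreover the contraction takes place entirely inside $\CZ(S)$, and since $G$ has no edge between $\CZ(S)$ and $\CY(S)$, neither does $G'$. Hence in $G'$ the set $S$ still separates the nonempty set $\CY(S)$ from the image of $\CZ(S)$ (nonempty, as it contains $s_1$), and these two lie in different components of $G'\setminus S$, while they are joined in $G'$ since $G'$ is a quotient of the connected graph $G$. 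Therefore $S$ is a separator of $G'$.

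I expect the only real obstacle to be the bookkeeping in the first assertion: in each crossing case one must correctly identify which endpoint of which crossedge lies in which separator, so that the Crossedge Independence Lemma can be invoked to contradict the identity forced by the Crossing Lemma. Once this is set up, each case collapses to a one-line contradiction, and the orthogonal case is a pure set-theoretic computation; the second assertion is then routine.
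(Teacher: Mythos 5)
Your proof is correct and rests on the same two ingredients as the paper's (the Crossing Lemma and the Crossedge Independence Lemma); the paper merely compresses your Cases A and C into the single observation that $\CY(S)\cap Y_{3-i}=\emptyset$ holds for any two distinct elements of $\CMT$, so only your Case B (the possibility $s_i\in S$) actually needs the Crossedge Independence Lemma, and the second assertion is handled identically. The only blemish is the claim that ``exactly one'' of your three cases holds --- $S$ may cross both $S_1$ and $S_2$ --- but since each case's argument uses only its own hypothesis, the analysis is still exhaustive and the proof stands.
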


\begin{proof}
  We have $s_j\in Y_{3-j}$ and $\CY(S)\cap Y_{3-j}=\emptyset$ by the
  Crossing Lemma~\ref{lem:cross}. Hence $s_j\not\in\CY(S)$. If
  $s_j\in S$ then $S$ crosses $S_{3-j}$ and the crossedge has an
  endvertex with the crossedge of $S_1$ and $S_{2}$ in common. As
  $S_1$ and $S_2$ are non-degenerate, this contradicts the Crossedge
  Independence Lemma~\ref{lem:CI}.

  To see that $S$ is a separator of $G'$, just note that
  $(\CY(S),S,\CZ(S)\setminus\{s_2\})$ is a proper separation of $G'$.
\end{proof}

\begin{lem}\label{lem:4t15}
  Let $S$ be a separator of $G$ such that
  $s_1,s_2\in S$ and
  $|S\setminus\{s_1,s_2\}|\le 2$. Then
  $|S\setminus\{s_1,s_2\}|= 2$, and there is a a separator
  $S^{\circ}\subseteq S$ such that $|S^{\circ}\cap\{s_1,s_2\}|=1$ and
  $S^{\circ}\setminus\{s_1,s_2\}=S\setminus\{s_1,s_2\}$ and there are at least two
  connected components of $G\setminus S^{\circ}$ that have a nonempty
  intersection with $V(G)\setminus S$.
\end{lem}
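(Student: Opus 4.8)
The plan is to reduce the whole lemma to one statement, call it the \emph{separation claim}: \emph{not both of $s_1$ and $s_2$ have a neighbour in every connected component of $G\setminus S$.} Before that I would record what the Crossing Lemma~\ref{lem:cross}, applied to $S_1$ and $S_2$, gives. With the standing notation $S_i=\{s_i,s_i',s_i''\}$, $Y_i=\CY(S_i)$, $Z_i=\CZ(S_i)$, it yields $s_1\in S_1\cap Y_2$, $s_2\in S_2\cap Y_1$, $\{s_1',s_1''\}=S_1\cap Z_2$, $\{s_2',s_2''\}=S_2\cap Z_1$, and $Y_1\cap Y_2=S_1\cap S_2=\emptyset$; moreover $Z_1,Z_2$ are connected with $N(Z_i)=S_i$ (Corollary~\ref{cor:reed2}), and $Y_1,Y_2$ are connected (Lemma~\ref{lem:4t1a}). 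Finally, since $G$ is $3$-connected and $S$ is a separator, $|S|\ge 3$.

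Granting the separation claim, I can finish quickly. By the symmetry between the two indices we may assume $s_1$ has no neighbour in some component $C_0$ of $G\setminus S$. If $|S|$ were $3$, then, as for every $3$-element separator of a $3$-connected graph, every vertex of $S$ would have a neighbour in every component of $G\setminus S$: indeed, if $v\in S$ missed a component $C$, then $(V(C),N(C),V(G)\setminus(V(C)\cup N(C)))$ would be a separation of order $|N(C)|\le 2$, which by $3$-connectedness forces $V(G)\setminus S\subseteq V(C)$, so $C$ would be the only component of $G\setminus S$, contradicting that $S$ separates $G$. This contradicts the existence of $C_0$, so $|S|=4$, giving the first assertion. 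For the second, set $S^{\circ}:=S\setminus\{s_1\}$, so that $|S^{\circ}|=3$, $|S^{\circ}\cap\{s_1,s_2\}|=1$, and $S^{\circ}\setminus\{s_1,s_2\}=S\setminus\{s_1,s_2\}$. Since $s_1\notin N(C_0)$ we have $N(C_0)\subseteq S^{\circ}$, so $C_0$ is already a connected component of $G\setminus S^{\circ}$, contained in $V(G)\setminus S$; and any other component $C_1$ of $G\setminus S$ is a nonempty subset of $V(G)\setminus S$ lying in a component of $G\setminus S^{\circ}$ different from $C_0$. Hence at least two components of $G\setminus S^{\circ}$ meet $V(G)\setminus S$, as required.

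It remains to prove the separation claim, which is the substance of the lemma. Suppose for contradiction that both $s_1$ and $s_2$ have a neighbour in every component of $G\setminus S$, and let $C$ be such a component. Taking a path from $s_2$ through the interior of $C$ to $s_1$ and using that there is no edge between $Y_1$ and $Z_1$, together with $s_2\in Y_1$ and $s_1\in S_1$, one sees that $C\cap S_1=\emptyset$ forces $C\subseteq Y_1$; symmetrically, $C\cap S_2=\emptyset$ forces $C\subseteq Y_2$. As $Y_1\cap Y_2=\emptyset$, the component $C$ must contain a vertex of $(S_1\cup S_2)\setminus S\subseteq\{s_1',s_1'',s_2',s_2''\}$. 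Since $G\setminus S$ has at least two components, we get two distinct vertices among $s_1',s_1'',s_2',s_2''$ that lie outside $S$ in distinct components of $G\setminus S$. I would then derive the contradiction by exhibiting a connected subgraph of $G\setminus S$ joining two of these vertices: $s_1'$ and $s_1''$ both lie in the connected set $Z_2$, $s_2'$ and $s_2''$ both lie in the connected set $Z_1$, and $(Z_1\cup Z_2)\cap S\subseteq S\setminus\{s_1,s_2\}$ has at most two elements; using $N(Z_i)=S_i$, the disjointness relations supplied by the Crossing Lemma, and a case distinction according to which of the regions $\{s_1',s_1'',s_2',s_2''\}$, $Y_1\cap Z_2$, $Z_1\cap Y_2$, $Z_1\cap Z_2$ contain the (at most two) vertices of $S\setminus\{s_1,s_2\}$, one shows that all of $s_1',s_1'',s_2',s_2''$ lying outside $S$ in fact sit in a single component of $G\setminus S$ --- the desired contradiction.

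The main obstacle is exactly this last case analysis. A vertex of $S\setminus\{s_1,s_2\}$ lying inside $Z_1$ or $Z_2$ can a priori disconnect $G[Z_i\setminus S]$, so the connections between the $s_i'$ and $s_i''$ must be rerouted through the neighbourhoods $S_1,S_2$ of $Z_1,Z_2$ and through $Z_1\cap Z_2$, and one has to argue that two vertices are too few to sever all of these routes while $G$ remains $3$-connected. The reduction itself is routine; organising this bookkeeping cleanly is where the work lies.
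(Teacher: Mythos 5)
Your reduction of the lemma to the ``separation claim'' is correct, and it is essentially the paper's own framing: the paper picks $S^{\circ}\subseteq S$ with at least two relevant components so as to minimise $|S^{\circ}\cap\{s_1,s_2\}|$ and shows the minimum is $1$, which is equivalent to your claim that one of $s_1,s_2$ misses some component of $G\setminus S$. The gap is precisely in the part you defer. The separation claim is \emph{not} provable from $3$-connectedness, the Crossing Lemma, $N(Z_i)=S_i$, and the connectivity of the $Y_i,Z_i$; no case distinction over the regions $Y_1\cap Z_2$, $Z_1\cap Y_2$, $Z_1\cap Z_2$ can close it, because the statement is false at that level of generality. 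Take $V(G)=\{s_1,s_2,t_1,t_2,a,b,c,d\}$ with edge set $\{s_1s_2,\,s_1c,\,s_1d,\,s_1t_2,\,s_2a,\,s_2b,\,s_2t_1,\,t_1a,\,t_1b,\,t_2c,\,t_2d,\,ac,\,bd\}$. This graph is $3$-connected, and with $S_1=\{s_1,a,b\}$, $Y_1=\{s_2,t_1\}$, $Z_1=\{c,d,t_2\}$ and $S_2=\{s_2,c,d\}$, $Y_2=\{s_1,t_2\}$, $Z_2=\{a,b,t_1\}$ one gets two proper, non-degenerate $3$-separations with $Y_i,Z_i$ connected, $N(Y_i)=N(Z_i)=S_i$, crossing exactly as in the Crossing Lemma with crossedge $s_1s_2$, and $S_1\cap S_2=Y_1\cap Y_2=Z_1\cap Z_2=\emptyset$ --- every structural feature your argument is allowed to use. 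Yet for $S=\{s_1,s_2,t_1,t_2\}$ the graph $G\setminus S$ has the two components $\{a,c\}$ and $\{b,d\}$, each with neighbourhood all of $S$: both $s_1$ and $s_2$ see every component, your four vertices sit in \emph{two} components of $G\setminus S$, and neither $S\setminus\{s_1\}$ nor $S\setminus\{s_2\}$ is a separator, so even the lemma's conclusion fails for this graph.

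What rescues the lemma is the tangle hypothesis, specifically the $\preceq$-minimality of $(Y_1,S_1,Z_1)$ in $\CMT$, which your outline never invokes. The paper's proof first pins down, by graph-theoretic arguments, exactly the configuration of the example ($s_1',s_2'$ in one component, $s_1'',s_2''$ in the other, $Z_1\cap Z_2=\emptyset$, and $G[(Z_1\cup S_1)\cap(Z_2\cup S_2)]$ consisting of just the two edges $s_1's_2'$ and $s_1''s_2''$), and then uses axiom \ref{li:t2} together with $(Y_1,S_1,Z_1)\in\CT$ and $(\emptyset,\{s_1',s_2'\},V(G)\setminus\{s_1',s_2'\})\in\CT$ to conclude that
\[
\big(Y_1\cup\{s_1'\},\;\{s_1,s_1'',s_2'\},\;Z_1\setminus\{s_2'\}\big)\in\CT,
\]
a separation strictly below $(Y_1,S_1,Z_1)$ in the order $\preceq$ --- contradicting minimality. (In the example this is $(\{s_2,t_1,a\},\{s_1,b,c\},\{d,t_2\})$, which shows that $(Y_1,S_1,Z_1)$ cannot be $\preceq$-minimal in any tangle containing it, so Assumption~\ref{ass:4t3} is simply never satisfied there.) You need this tangle step; without it your final case analysis is chasing a false statement.
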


\begin{proof} 
  Choose a set $S^{\circ}\subseteq S$ such that there are at least two
  connected components of $G\setminus S^{\circ}$ that have a nonempty
  intersection with $V(G)\setminus S$, and subject to this condition
  $|S^{\circ}\cap\{s_1,s_2\}|$ is minimum. We shall prove that
  \begin{equation}
    \label{eq:26} |S^{\circ}\cap\{s_1,s_2\}|=1.
  \end{equation} 
  As $G$ is 3-connected and $S^{\circ}$ is a separator, this will imply
  $|S^{\circ}\setminus\{s_1,s_2\}|=|S\setminus\{s_1,s_2\}|=2$
  and thus $S^{\circ}\setminus\{s_1,s_2\}=S\setminus\{s_1,s_2\}$.

  Suppose for contradiction that $|S^{\circ}\cap\{s_1,s_2\}|= 2$, that is,
  $s_1,s_2\in S^{\circ}$.
  
  Let us call a connected component $C$ of $G\setminus S^{\circ}$
  \emph{relevant} if $V(C)\setminus S\neq\emptyset$.

  \begin{claim} 
    For every relevant component $C$,
    \[ 
    S^{\circ}\cap\{s_1,s_2\}\subseteq N(C).
    \]

    \proof Suppose that $C$ is a relevant component with
    $S^{\circ}\cap\{s_1,s_2\}\not\subseteq N(C)$. Let
    $S^{\bullet}:=N(C)$. Then $S^{\bullet}\subseteq S^{\circ}\subseteq S$, and
    $S^{\bullet}$ is a separator of $G$ with at least two relevant components. However,
    $|S^{\bullet}\cap\{s_1,s_2\}|<|S^{\circ}\cap\{s_1,s_2\}|$. This
    contradicts the minimality of $|S^{\circ}\cap\{s_1,s_2\}|$.  \uend
  \end{claim}

  \begin{claim}[resume] 
    Let $C$ be a relevant component. Then $V(C)\setminus(Y_1\cup
    Y_2)\neq\emptyset$. 

    \proof If
    $V(C)\subseteq Y_j$, then $S^{\circ}=N(C)\subseteq Y_j\cup
    S_j$. We have $s_{3-j}\in S^\circ$, and as $s_{3-j}$ is the only
    neighbour of $s_j$ in $S_j\cup Y_j$, it follows that $s_j\not\in
    N(C)=S^\circ$. This is a contradiction.

    Hence $V(C)\setminus Y_1\neq\emptyset$ and $V(C)\setminus
    Y_2\neq\emptyset$. Then  $V(C)\setminus(Y_1\cup
    Y_2)\neq\emptyset$, because $C$ is connected and the only edge from
    $Y_1$ to $Y_2$ is the crossedge $s_1s_2$, which is not in $ E(C)$
    by our assumption that $s_1,s_2\in S^{\circ}\subseteq V(G)\setminus V(C)$.
    \uend
  \end{claim}

  Let $C_1$ and $C_2$ be two distinct relevant components. We shall
  prove that there is path in $G\setminus S^{\circ}$ from a vertex in $C_1$
  to a vertex in $C_2$; this will be a
  contradiction. For $i=1,2$, let
  $v_i\in V(C_i)\setminus (Y_1\cup Y_2)$. As $N(C_i)=S^{\circ}$, there is
  path $P_{i,j}$ from $v_i$ to $s_j$ such that
  $P_{i,j}\setminus\{s_j\}\subseteq C_i$. This path has a nonempty
  intersection with $S_{3-j}$, because $v_i\not\in Y_{3-j}$ and
  $s_j\in Y_{3-j}$. As $s_{3-j}\not\in V(C)$, either
  $s_{3-j}'\in V(P_{i,j})$ or $s_{3-j}''\in V(P_{i,j})$ (recall that
  $S_{3-j}=\{s_{3-j},s'_{3-j},s''_{3-j}\}$) . Note
  furthermore that $\big(V(P_{i,j})\setminus\{s_j\}\big)\cap \big(V(P_{3-i,j'})\setminus\{s_{j'}\}\big)=\emptyset$, because
  $C_1\cap C_2=\emptyset$. Hence we may assume without loss of
  generality that $s_{3-j}'\in V(P_{1,j})$ and $s_{3-j}''\in V(P_{2,j})$. Thus
  $s_1',s_2'\in V(C_1)$ and $s_1'',s_2''\in V(C_2)$.

  For $j=1,2$, this implies that $s_j's_j''\not\in E(G)$. As there
  is no edge from $s_j\in Y_{3-j}$ to $s_j',s_j''\in Z_{3-j}$,
  it follows that $S_j$ is an independent set. Since $S_j$ is
  non-degenerate, we have $|Y_j|>1$. Furthermore, $s_{3-j}$ is
  the only neighbour of $s_j$ in $Y_j$, because $s_js_{3-j}$ is the
  crossedge of $S_j$ and $S_{3-j}$. This implies that
  $N(Y_j\setminus\{s_{3-j}\})\subseteq\{s_j',s_j'',s_{3-j}\}$, and in
  fact equality holds because $G$ is 3-connected and
  $Y_j\setminus\{s_{3-j}\}\neq\emptyset$. This implies that there is a
  path $Q_j$ from $s_j'$ to $s_j''$ with all internal vertices in
  $Y_j\setminus\{s_{3-j}\}$. As $s_j'\in V(C_1)$ and
  $s_j''\in V(C_2)$, this path has a nonempty intersection with
  $S^{\circ}$. Say,
  $t_j\in V(Q_j)\cap S^{\circ}\subseteq S^{\circ}\setminus \{s_1,s_2\}$.

  Thus $S^{\circ}\setminus\{s_1,s_2\}=\{t_1,t_2\}\subseteq Y_1\cup Y_2$,
  because $|S^{\circ}\setminus\{s_1,s_2\}|\le|S\setminus\{s_1,s_2\}|\le
  2$. It follows that 
  \[
  S^{\circ}\cap Z_1\cap Z_2=\emptyset.
  \]
  
  \begin{claim}[resume]
    $Z_1\cap Z_2=\emptyset$.

    \proof
    Suppose for contradiction that $Z_1\cap Z_2\neq\emptyset$. Let $C$
    be a connected component of $G[Z_1\cap Z_2]$. Then $N(C)\subseteq
    N(Z_1\cap Z_2)=\{s_1',s_1'',s_2',s_2''\}$ and $|N(C)|\ge
    3$. Hence for some $j\in\{1,2\}$ we have $s_j',s_j''\in
    N(C)$. Then there is a path from $s_j'\in V(C_1)$ to $s_j''\in
    V(C_2)$ with all internal vertices in $V(C)$, and as $V(C)\cap
    S^{\circ}=\emptyset$, this is a contradiction.
    \uend
  \end{claim}

  As $s_1',s_2'\in V(C_1)$ and $s_1'',s_2''\in V(C_2)$, it follows
  that the graph $G[(Z_1\cup S_1)\cap (Z_2\cup S_2)]$ has vertex set
  $\{s_1',s_1'',s_1'',s_2''\}$ and edges set
  $\{s_1's_2',s_1''s_2'' \}$.

    \begin{claim}[resume]
      $(Y_1',S_1',Z_1'):= (Y_1\cup\{s_1'\},\{s_1'',s_2',s_1\},Z_1\setminus\{s_2'\})\in\CT$.

      \proof
      The
      shape of the graph $G[(Z_1\cup S_1)\cap (Z_2\cup S_2)]$
      described before the claim implies that $(Y'_1,S'_1,Z_1')$ is a
      separation of $G$. Suppose
      for contradiction that it is
      not in $\CT$. Then $(Z_1',S_1',Y_1')\in\CT$.       We also have $(Y_1,S_1,Z_1)\in\CT$ and
      $(\emptyset,\{s_1',s_2'\},V(G)\setminus\{s_1',s_2'\})\in\CT$. As
      \[
      Y_1'\cap Z_1\cap V(G)\setminus\{s_1',s_2'\}=\big(Y_1\cup\{s_1'\}\big)\cap Z_1\cap
      \big(V(G)\setminus\{s_1',s_2'\}\big)\subseteq (Y_1\cup S_1)\cap Z_1=\emptyset,
      \]
      by \ref{li:t2} there must be an edge that has an endvertex in
      $Y_1'=Y_1\cup\{s_1'\}$ and $Z_1$ and
      $V(G)\setminus\{s_1',s_2'\}$. The only edge that has an
      endvertex in $Y_1\cup\{s_1'\}$ and $Z_1$ is $s_1's_2'$. However,
      this edge has no endvertex in $V(G)\setminus\{s_1',s_2'\}$. This
      is a contradiction.
      \uend
    \end{claim}

    As $(Y_1,S_1',Z_1')$ is
    strictly smaller than $(Y_1,S_1,Z_1)$ with respect to the order $\preceq$, this contradicts the
    minimality of $(Y_1,S_1,Z_1)$.
\end{proof}
  
\begin{cor}\label{cor:4t15}
  $G'$ is 3-connected. 
\end{cor}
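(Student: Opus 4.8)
The plan is to reduce everything to the structural Lemma~\ref{lem:4t15}. First note the trivial parts: $G'$ is connected, being a contraction of the connected graph $G$, and $|G'|>3$ is immediate (for instance $|G|\ge 5$ since $\CY(S_1),S_1,\CZ(S_1)$ partition $V(G)$ with $|S_1|=3$ and both $\CY(S_1)$ and $\CZ(S_1)$ nonempty, so $|G'|=|G|-1\ge 4$). It therefore suffices to show that $G'$ has no separator of order at most $2$. So I would assume, for contradiction, that $S'\subseteq V(G')$ with $|S'|\le 2$ separates two vertices $u,v\in V(G')\setminus S'$ in $G'$, and then split into two cases according to whether $s_1\in S'$ (recall $s_2\notin V(G')$, so $s_2$ never occurs in $S'$).

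In the case $s_1\notin S'$, neither endvertex of the crossedge $e$ lies in $S'$, so one sees directly from Assumption~\ref{ass:4t3} that $G'\setminus S'$ is obtained from $G\setminus S'$ by contracting the edge $e$. Consequently any $u$--$v$ path in $G\setminus S'$ would map to a $u$--$v$ walk in $G'\setminus S'$; since no such walk exists, $u$ and $v$ are not joined in $G\setminus S'$, while they are joined in the connected graph $G$. Hence $S'$ is a separator of $G$ of order at most $2$, contradicting that $G$ is $3$-connected.

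In the case $s_1\in S'$, I would set $S:=S'\cup\{s_2\}$, so that $s_1,s_2\in S$ and $|S|=|S'|+1\le 3$, and $u,v\in V(G)\setminus S$. One checks that $S$ is a separator of $G$ separating $u$ and $v$: a $u$--$v$ path in $G\setminus S$ avoids both $s_1$ and $s_2$, hence is unaffected by the contraction and survives as a $u$--$v$ path in $G'\setminus S'$ — impossible. Now $S$ is a separator of $G$ with $\{s_1,s_2\}\subseteq S$ and $|S\setminus\{s_1,s_2\}|=|S|-2\le 1$. If $|S|\le 2$ this already contradicts the $3$-connectedness of $G$; if $|S|=3$, then Lemma~\ref{lem:4t15} applies to $S$ and forces $|S\setminus\{s_1,s_2\}|=2$, contradicting $|S\setminus\{s_1,s_2\}|=1$. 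Either way we reach a contradiction, and the corollary follows.

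The only non-routine ingredient is Lemma~\ref{lem:4t15}, which is already in hand; everything else is bookkeeping about how vertex separations of $G$ and of $G'$ correspond across the contraction of $e$, plus the elementary observation that contracting an edge cannot disconnect a path. I expect no genuine obstacle here — the substantive work was done in proving Lemma~\ref{lem:4t15}, and this corollary is essentially its packaging.
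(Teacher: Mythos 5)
Your proposal is correct and follows essentially the same route as the paper: for a separator $S'$ of $G'$ of order at most $2$, if $s_1\notin S'$ then $S'$ is already a small separator of $G$, and if $s_1\in S'$ then $S:=S'\cup\{s_2\}$ violates the conclusion $|S\setminus\{s_1,s_2\}|=2$ of Lemma~\ref{lem:4t15}. The extra bookkeeping you supply (paths surviving the contraction, the sub-case $|S|\le 2$) is fine but not needed beyond what the paper's two-line argument already covers.
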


\begin{proof}
  Let $S'$ be a separator of $G'$ of order at most $2$. If
  $s_1\not\in S'$, then $S'$ is a separator of $G$ of order at most
  $2$, which contradicts $G$ being 3-connected.

  If $s_1\in S'$, let $S:=S'\cup\{s_2\}$. Then $S$ is a
  separator of $G$ with $|S\setminus\{s_1,s_2\}|\le 1$. This
  contradicts the assertion of Lemma~\ref{lem:4t15} that $|S\setminus\{s_1,s_2\}|= 2$.
\end{proof}

For a separator $S$ of $G$ such that
$s_1,s_2\in S$ and
$|S\setminus\{s_1,s_2\}|\le 2$, we call a subset
$S^{\circ}\subseteq S$ such that $|S^{\circ}\cap\{s_1,s_2\}|=1$ and
$|S^{\circ}\setminus\{s_1,s_2\}|= 2$ and there are at least two
connected components of $G\setminus S^{\circ}$ that have a nonempty
intersection with $V(G)\setminus S$ an \emph{essential subseparator}
of $S$. Note that $S$ has at most two essential subseparators:
$S\setminus\{s_1\}$ and $S\setminus\{s_2\}$.

\begin{lem}\label{lem:4t16}
  Let $S$ be a separator of $G$ such that
$s_1,s_2\in S$ and
$|S\setminus\{s_1,s_2\}|\le 2$. Then there is a unique
  connected component $C^{\circ}$ of $G\setminus S$ such that
  $V(C^{\circ})\subseteq Z^\bullet$ for every separation $(Y^\bullet,S^\bullet,Z^\bullet)\in\CT$  with $S^\bullet\subseteq
  S$.

  Furthermore, for every essential subseparator $S^{\circ}$ of $S$ it holds
  that $V(C^{\circ})=\CZ(S^{\circ})\setminus\{s_1,s_2\}$.
\end{lem}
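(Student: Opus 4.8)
The plan is to pin down $C^\circ$ explicitly in terms of a fixed essential subseparator of $S$ and then check the two clauses of the statement separately.

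First some reductions. By Lemma~\ref{lem:4t15}, $|S|=4$; write $S=\{s_1,s_2,t_1,t_2\}$. Since $s_1$ and $s_2$ occur only symmetrically in the statement (interchanging them just interchanges $S_1\leftrightarrow S_2$, $Y_1\leftrightarrow Y_2$, $Z_1\leftrightarrow Z_2$), I fix an essential subseparator of $S$ and assume it is $S^\circ:=S\setminus\{s_2\}=\{s_1,t_1,t_2\}$. Put $W:=\CZ(S^\circ)$. Then $s_1\in S^\circ$ gives $s_1\notin W$, so $W\setminus\{s_1,s_2\}=W\setminus\{s_2\}$; and since $(\CY(S^\circ),S^\circ,W)\in\CT$ while its reverse is not (Corollary~\ref{cor:tangle-closure}), Lemma~\ref{lem:t2} forces $|W|\ge 2$, and then $3$-connectivity of $G$ gives $N(W)=S^\circ$. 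The key claim is that $W\setminus\{s_2\}$ is connected (it is certainly nonempty, as $|W|\ge 2$ and $s_1\notin W$). Granting this, $N(W\setminus\{s_2\})\subseteq S^\circ\cup\{s_2\}=S$ while $(W\setminus\{s_2\})\cap S=\emptyset$, so $W\setminus\{s_2\}$ is the vertex set of a single connected component $C^\circ$ of $G\setminus S$, and $V(C^\circ)=W\setminus\{s_1,s_2\}=\CZ(S^\circ)\setminus\{s_1,s_2\}$.

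\emph{Connectedness of $W\setminus\{s_2\}$.} This is the crux. If $s_2\notin W$ it is trivial. Otherwise, suppose $s_2$ is a cutvertex of $G[W]$, so $W\setminus\{s_2\}$ splits into components $D_1,\dots,D_r$ with $r\ge 2$, each a component of $G\setminus S$ with $s_2\in N(D_i)$ and $3\le|N(D_i)|\le 4$. The plan is to combine the defining property of an essential subseparator — $G\setminus S^\circ$ has a component $W'\ne W$ with $W'\subseteq V(G)\setminus S$ and $N(W')=S^\circ$ — with Lemma~\ref{lem:reed} and the $\preceq$-minimality of $S_1,S_2$ (using the Crossing Lemma~\ref{lem:cross} picture $S_i\cap Y_{3-i}=\{s_i\}$ with crossedge $s_1s_2$, and the connectedness of $\CY(S_2)$ from Lemma~\ref{lem:4t1a}) to derive a contradiction: roughly, some $D_i$ together with $s_2$ and part of $S^\circ$ yields a separation strictly $\prec$ one of $(Y_1,S_1,Z_1),(Y_2,S_2,Z_2)$ that still lies in $\CT$, or else a $2$-separator of $G$. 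I expect this case analysis to be the longest part of the proof.

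\emph{Containment.} Let $(Y^\bullet,S^\bullet,Z^\bullet)\in\CT$ with $S^\bullet\subseteq S$. If $Y^\bullet=\emptyset$ then $Z^\bullet=V(G)\setminus S^\bullet\supseteq V(C^\circ)$. Otherwise $S^\bullet$ is a proper separator of $G$, so by Lemma~\ref{lem:4t15} it cannot contain both $s_1$ and $s_2$. Assume for contradiction $V(C^\circ)\not\subseteq Z^\bullet$; as $V(C^\circ)$ is connected and disjoint from $S^\bullet$, this gives $V(C^\circ)\subseteq Y^\bullet$. Then $W\subseteq V(C^\circ)\cup\{s_2\}\subseteq Y^\bullet\cup\{s_2\}$, so $(S^\circ\cup W)\cap(S^\bullet\cup Z^\bullet)=(S^\circ\cup W)\setminus Y^\bullet\subseteq S^\circ\cup\{s_2\}=S$, and Corollary~\ref{cor:tangle-closure} forces this intersection to have size exactly $4$, hence $s_2\in W$, $S^\circ\cap Y^\bullet=\emptyset$ and $s_2\notin Y^\bullet$. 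Therefore $Y^\bullet\cap S=\emptyset$, so $Y^\bullet$ is a union of components of $G\setminus S$ and $N(Y^\bullet)\subseteq S^\bullet$; $3$-connectivity gives $N(Y^\bullet)=S^\bullet$ with $|S^\bullet|=3$, and since $W=V(C^\circ)\cup\{s_2\}$ is connected, $s_2$ has a neighbour in $V(C^\circ)\subseteq Y^\bullet$, whence $s_2\in S^\bullet$ and so $S^\bullet=\{s_2,t_1,t_2\}=S\setminus\{s_1\}$. This remaining configuration is ruled out by a further argument: one first shows, via Lemma~\ref{lem:reed} applied to the $2$-separator $\{t_1,t_2\}$ together with Corollary~\ref{cor:tangle-closure}, that $\CZ(\{t_1,t_2\})\supseteq W\cup\{s_1\}$; combined with $(Y^\bullet,\{s_2,t_1,t_2\},Z^\bullet)\in\CT$ and the crossedge $s_1s_2$ this contradicts $V(C^\circ)\subseteq Y^\bullet$. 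Hence $V(C^\circ)\subseteq Z^\bullet$.

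\emph{Uniqueness and independence of $S^\circ$.} If $C'$ is any component of $G\setminus S$ with $V(C')\subseteq Z^\bullet$ for every such separation, apply this to $(\CY(S^\circ),S^\circ,\CZ(S^\circ))\in\CT$: then $V(C')\subseteq W$, and as $V(C')\cap\{s_1,s_2\}=\emptyset$ we get $V(C')\subseteq W\setminus\{s_1,s_2\}=V(C^\circ)$, so $C'=C^\circ$. Finally, if $S\setminus\{s_1\}$ is also an essential subseparator, running the same construction with it produces a component with the same containment property, which by uniqueness equals $C^\circ$; hence $V(C^\circ)=\CZ(S^\circ)\setminus\{s_1,s_2\}$ for every essential subseparator $S^\circ$. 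The main obstacle throughout is the connectedness step (and the parallel last configuration above): both need the fine crossing structure of $S_1,S_2$ — the Crossing Lemma, the Crossedge Independence Lemma, and Lemma~\ref{lem:4t1a} — beyond the bare $3$-connectivity and essential-subseparator hypotheses.
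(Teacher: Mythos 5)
Your overall skeleton matches the paper's: fix an essential subseparator $S^\circ$, show $\CZ(S^\circ)\setminus\{s_2\}$ is connected and hence a component $C^\circ$ of $G\setminus S$, then prove the containment and deduce uniqueness. But the two steps you yourself identify as the hard ones are only announced, not proved, and the directions you sketch for them are doubtful. The connectedness of $W\setminus\{s_2\}$ is left as a ``plan'' to derive a contradiction from a hypothetical cutvertex $s_2$ of $G[W]$; nothing in your text shows how the components $D_i$ actually produce a separation $\prec(Y_j,S_j,Z_j)$ lying in $\CT$. The idea that makes this step short in the paper is different and is missing from your proposal: take $(Y',S',Z')\in\CMT$ with $(Y',S',Z')\preceq(\CY(S^\circ),S^\circ,\CZ(S^\circ))$; since $s_1\in S^\circ\subseteq Y'\cup S'$, \eqref{eq:25} forces $S'=S_j$ for some $j\in\{1,2\}$, so the \emph{connected} set $Z_j$ sits inside $Z^\circ$ and one only has to route each vertex of $Z^\circ\setminus\{s_2\}$ to $Z_j$ while avoiding $s_2$. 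That avoidance is exactly where the crossedge structure enters: for $j=1$ one shows $s_2\in\CY(S^\circ)$ (so $s_2\notin Z^\circ$ at all), and for $j=2$ a shortest path to $S_2$ inside $Z^\circ$ has its internal vertices in $Y_2$, where the only neighbour of $s_2$ is $s_1$. Without some such argument the lemma is not established.

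The containment argument has the same problem in its last configuration. After correctly reducing to $S^\bullet=S\setminus\{s_1\}$ with $V(C^\circ)\subseteq Y^\bullet$, you propose to finish ``via Lemma~\ref{lem:reed} applied to the $2$-separator $\{t_1,t_2\}$'' to get $\CZ(\{t_1,t_2\})\supseteq W\cup\{s_1\}$. Since $G$ is 3-connected, $\{t_1,t_2\}$ is not a separator and $\CZ(\{t_1,t_2\})=V(G)\setminus\{t_1,t_2\}$; the stated inclusion is vacuous and yields no contradiction. The paper instead compares the $\preceq$-minimal tangle separations below $(\CY(S^\circ),S^\circ,\CZ(S^\circ))$ and $(\CY(S^\bullet),S^\bullet,\CZ(S^\bullet))$: both must be $S_1$ or $S_2$ by \eqref{eq:25}, and they must coincide because $(Y_1\cup S_1)\cap(Y_2\cup S_2)=\{s_1,s_2\}$ (Crossing Lemma~\ref{lem:cross}) cannot contain $\{t_1,t_2\}=S^\circ\cap S^\bullet\setminus\{s_1,s_2\}$; from $S'=S''$ one gets $s_2\in\CY(S^\circ)$, $S^\bullet\subseteq\CY(S^\circ)\cup S^\circ$ and $s_1\in Z^\bullet$, whence $Z^\circ\subseteq Z^\bullet$ because $Z^\circ\cup\{s_1\}$ is connected. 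Your uniqueness and ``independence of $S^\circ$'' paragraphs are fine, but they rest on the two unproved steps above, so as it stands the proposal has a genuine gap rather than an alternative proof.
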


\begin{proof}
  Let $S^{\circ}$ be an essential subseparator of $S$, which exists by
  Lemma~\ref{lem:4t15}. Without loss of generality we may assume that
  $s_1\in S^{\circ}$.  Let $Z^{\circ}:=\CZ(S^{\circ})$ and $Y^{\circ}:=\CY(S^{\circ})$.  Moreover, let
  $(Y',S',Z')\in\CMT$ such that $(Y',S',Z')\preceq(Y^{\circ},S^{\circ},Z^{\circ})$. Then
  $Y^{\circ}\cup S^{\circ}\subseteq Y'\cup S'$ and thus $s_1\in Y'\cup S'$. By
  \eqref{eq:25}, there is a $j\in\{1,2\}$ such that $S'=S_j$.

  \begin{claim}
    $Z^{\circ}\setminus\{s_1,s_2\}=Z^{\circ}\setminus\{s_2\}$ is connected.
    
    \proof
    Note that the equality
    $Z^{\circ}\setminus\{s_1,s_2\}=Z^{\circ}\setminus\{s_2\}$ holds because
    $s_1\in S^{\circ}\subseteq V(G)\setminus Z^{\circ}$.
    
    We have $Z_j=Z'\subseteq Z^{\circ}$. As $Z_j$ is connected, it suffices to
    prove that for every $z\in Z^{\circ}\setminus (\{s_2\}\cup Z_j)$ there is a path in
    $G[Z^{\circ}\setminus\{s_2\}]$ from
    $z$ to a vertex in $Z_j$. So let 
    \[
    z\in
    Z^{\circ}\setminus (\{s_2\}\cup Z_j)\subseteq (Y_j\cup
    S_j)\setminus\{s_1,s_2\}.
    \]
    As $N(Z_j)=S_j$, it suffices to find a path from $z$ to a vertex in
    $S_j$. Let $P$ be a shortest path from $z$ to a vertex in $s\in S_j$ in the
    connected graph $G[Z^{\circ}]$. Then $V(P)\setminus\{s\}\subseteq Y_j$. We
    need to prove that $s_2\not\in V(P)$. 
    \begin{cs}
      \case1
      $j=1$.\\
      As $N(Y^{\circ})=S^{\circ}$, the vertex $s_1\in S^{\circ}$ has a neighbour in
      $Y^{\circ}\subseteq Y_1$, and as $s_2$ is the only neighbour of $s_1$ in
      $Y_1$, we have $s_2\in Y^{\circ}$. Hence $s_2\not\in V(P)$.
      
      \case2
      $j=2$.\\
      The only neighbour of $s_2\in S_2$ in $Y_2$ is $s_1$, and as
      $s_1\not\in V(P)$ and $V(P)\setminus\{s\}\subseteq Y_2$, we have
      $s_2\not\in V(P)$.
      \uend
    \end{cs}
  \end{claim}

We let $C^{\circ}:=G[Z^{\circ}\setminus S]$. It follows
from Claim~1 that this is indeed a connected component of $G\setminus
S$. Now let $(Y^\bullet,S^\bullet,Z^\bullet)\in\CT$ with $S^\bullet\subseteq S$. We need to prove that
\begin{equation}
  \label{eq:27}
  V(C^{\circ})=Z^{\circ}\setminus S=Z^{\circ}\setminus\{s_2\}\subseteq Z^\bullet
\end{equation}
Without loss of generality we may assume that $Z^\bullet=\CZ(S^\bullet)$.
It follows from Lemma~\ref{lem:t2} that $Z^\bullet\not\subseteq S$,
because then $|Z^\bullet\cup S^\bullet|\le|S|\le 4$. If $Y^\bullet\subseteq S$ then $Z^\bullet\supseteq V(G)\setminus S\supseteq
Z^{\circ}\setminus S=V(C^{\circ})$. 

Let us assume that neither $Z^\bullet\subseteq S$ nor $Y^\bullet\subseteq
S$. Then $S^\bullet$ is an essential subseparator of $S$. Note that
the analogue of Claim~1 holds for $S^\bullet$: the set $Z^\bullet
\setminus S$ is connected. 
If $S^{\circ}=S^\bullet$ then $Z^{\circ}\subseteq Z^\bullet$ and thus $V(C^{\circ})\subseteq
Z^\bullet$. So suppose that 
$S^\bullet\neq S^{\circ}$. We shall prove the following claim,
which implies $V(C^{\circ})\subseteq Z^\bullet$.

\begin{claim}[resume]
  $Z^{\circ}\setminus S=Z^{\bullet}\setminus S$.

  \proof It follows from Lemma~\ref{lem:4t15} that
  $|S^{\bullet}\setminus\{s_1,s_2\}|\ge 2$ and thus
  \begin{equation}
    \label{eq:28}
    S^{\bullet}\setminus\{s_1,s_2\}=S\setminus\{s_1,s_2\}=S^{\circ}\setminus\{s_1,s_2\}.
  \end{equation}
  Let $(Y'',S'',Z'')\in\CMT$ such that
  $(Y'',S'',Z'')\preceq(Y^{\bullet},S^{\bullet},Z^{\bullet})$. Then $S''=S_1$ or
  $S''=S_2$.  Moreover, $S'=S''$, because $S^{\circ}\subseteq Y'\cup S'$ and
  $S^{\bullet}\subseteq Y''\cup S''$ and if $S'\neq S''$ we have
  $S^{\circ}\cap S^{\bullet}\subseteq (Y'\cup S')\cap(Y''\cup S'')=\{s_1,s_2\}$,
  which contradicts \eqref{eq:28}.

  Since $s_1\in S^\circ$ we have $s_2\in S^\bullet$. Note the the
  setting is completely symmetric with respect to $S^\circ$ and
  $S^\bullet$, as both are essential subseparators of $S$. Hence
  without loss of generality we may assume that $S'=S''=S_1$. 

  We prove next that $s_2\in Y^\circ$ (by a similar argument as in the
  proof of Claim~1): the vertex $s_1\in S^\circ$ has a neighbour in
  $Y^\circ\subseteq Y_1$, and as the only neighbour of $s_1$ in $S_1$
  is $s_2$, we have $s_2\in Y^\circ$.

  Thus
  $S^{\bullet}=(S^\circ\setminus\{s_1\})\cup\{s_2\}\subseteq
  Y^{\circ}\cup S^{\circ}$. As $s_1\in S''\subseteq Z^\bullet\cup
  S^\bullet$, we have $s_1\in Z^\bullet$, and now 
$S^{\bullet}\subseteq
  Y^{\circ}\cup S^{\circ}$ implies $Z^{\circ}\subseteq Z^{\bullet}$,
  because $Z^\circ\cup\{s_1\}$ is connected. Hence
  $Z^{\circ}\setminus S\subseteq Z^{\bullet}\setminus S$. As both
  $Z^{\circ}\setminus S$ and $Z^{\bullet}\setminus S$ are vertex sets
  of connected components of $G\setminus S$, equality holds.  \uend
\end{claim}

The uniqueness of $C^{\circ}$ is immediate,
because $(Y^{\circ}, S^{\circ},Z^{\circ})\in\CT$ and $C^{\circ}$ is the only connected component of
$G\setminus S$ with $V(C^\circ)\subseteq Z^\circ$.
\end{proof}

We define the \emph{expansion} of a set $S'\subseteq V(G')$ to be the
set 
\[
S'_\wedge:=\begin{cases}
  S'\cup\{s_2\}&\text{if }s_1\in S',\\
  S'&\text{otherwise}.
\end{cases}
\]
Note that if $S'$ is a 3-separator of $G'$ then either $S'_\wedge=S'$ is a
3-separator of $G$ or $S'_\wedge=S'\cup\{s_2\}$ is a separator of
$G$ satisfying the assumptions of Lemmas~\ref{lem:4t15} and
\ref{lem:4t16}. 

Next, we define the \emph{contraction} of a set $S\subseteq
V(G)$ to be the set
\[
S^\vee:=
\begin{cases}
  (S\cup\{s_1\})\setminus\{s_2\}&\text{if }\{s_1,s_2\}\cap S\neq\emptyset,\\
  S&\text{if }\{s_1,s_2\}\cap S=\emptyset.
\end{cases}
\]
Note that $(S'_\wedge)^\vee=S'$ for all $S'\subseteq V(G')$, but only
$(S^\vee)_\wedge\supseteq S$ for $S\subseteq V(G)$, and the inclusion may be strict.

For every set
$S'\subseteq V(G')$ of order $|S'|\le 3$ we define a set  $\CZ'(S)$  as follows.
\begin{itemize}
\item If $S'$ is a separator with $s_1\in S'$, we let $S:=S'_\wedge$ be the
  expansion of $S'$ and $C^{\circ}$ the connected component of $G\setminus
  S$ obtained from Lemma~\ref{lem:4t16}. Then we let $\CZ'(S'):=V(C^{\circ})$.
\item If $S'$ is a separator with $s_1\not\in S'$, we let $\CZ'(S'):=\CZ(S')^\vee$ be
  the contraction of the set $\CZ(S')$.
\item If $S'$ is not a separator of $G'$, we let $\CZ'(S'):=V(G')\setminus 
  S'$. 
\end{itemize}
Observe that $\CZ'(S')$ is the vertex set of a connected component of
$G'\setminus S'$. We let $\CY'(S')=V(G')\setminus(S'\cup \CZ'(S'))$.

We define $\CT'$ to be the set of all separations
$(Y',S',T')\in\Sep_{<4}(G')$ such that $\CZ'(S')\subseteq Z'$.

\begin{lem}
  \label{lem:4t17}
  $\CT'$ is a $G'$-tangle of order $4$.
\end{lem}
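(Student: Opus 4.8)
The plan is to verify the three tangle axioms \ref{li:t1}, \ref{li:t2}, \ref{li:t3} for $\CT'$ directly, exploiting the tight correspondence between $3$-separators of $G'$ and $3$-separators of $G$ set up via expansion and contraction, and transporting the tangle property of $\CT$ across this correspondence. The key structural fact, established in Lemma~\ref{lem:4t16}, is that whenever $S$ is a separator of $G$ with $s_1,s_2\in S$ and $|S\setminus\{s_1,s_2\}|\le2$, all separations of $\CT$ whose separator is contained in $S$ point to one and the same connected component $C^\circ$ of $G\setminus S$; and for each essential subseparator $S^\circ$ of $S$ we have $V(C^\circ)=\CZ(S^\circ)\setminus\{s_1,s_2\}$. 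This is what lets us define $\CZ'$ consistently and is the engine of the whole proof.

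First I would record, for each $3$-separator $S'$ of $G'$, a companion separation of $G$: if $s_1\notin S'$ then $S'$ is also a $3$-separator of $G$ and we take $(\CY(S'),S',\CZ(S'))\in\CT$, while if $s_1\in S'$ then $S:=S'_\wedge=S'\cup\{s_2\}$ is a separator of $G$ satisfying the hypotheses of Lemmas~\ref{lem:4t15} and \ref{lem:4t16}, and we pick an essential subseparator $S^\circ$ and take $(\CY(S^\circ),S^\circ,\CZ(S^\circ))\in\CT$; in both cases $\CZ'(S')$ is, up to removing $s_2$ and possibly adding $s_1$, the $Z$-part of a separation in $\CT$. For \ref{li:t1}: given $(Y',S',Z')\in\Sep_{<4}(G')$, I expand it to a separation of $G$ (adding $s_2$ to the appropriate side of $s_1$, and moving the component $\CZ'(S')$ of $G'\setminus S'$ to the $Z$-side, its complement to the $Y$-side); apply \ref{li:t1} in $G$; then the defining condition $\CZ'(S')\subseteq Z'$ versus $\CZ'(S')\subseteq Y'$ is governed by which side of the $G$-separation contains $C^\circ$ (using the uniqueness clause of Lemma~\ref{lem:4t16} and Lemma~\ref{lem:reed}), giving exactly one of $(Y',S',Z')\in\CT'$ or $(Z',S',Y')\in\CT'$. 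For \ref{li:t3}: if $(Y',S',Z')\in\CT'$ then $\CZ'(S')\subseteq Z'$ is nonempty — because $\CZ'(S')$ is the vertex set of a connected component of $G'\setminus S'$ and, via the companion separation and Lemma~\ref{lem:t2} (which forces $|Z^\bullet\cup S^\bullet|\ge 5$ hence $Z^\bullet\not\subseteq S^\bullet$), it is nonempty; so $Z'\neq\emptyset$.

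The heart of the argument, and the main obstacle, is \ref{li:t2}. Given $(Y_i',S_i',Z_i')\in\CT'$ for $i=1,2,3$, I would pass to companion separations $(Y_i,S_i,Z_i)\in\CT$ in $G$ as above, choosing them with $Z_i$ connected (Lemma~\ref{lem:reed}) and $V(C_i^\circ)\subseteq Z_i$; apply \ref{li:t2} in $G$ to get either a common vertex $v\in Z_1\cap Z_2\cap Z_3$ or an edge $e^\star\in E(G)$ with an endvertex in each $Z_i$. The task is to push this witness down to $G'$ and into the sets $\CZ'(S_i')\subseteq Z_i'$. A vertex $v\in\bigcap Z_i$ survives contraction unless $v=s_2$, in which case $s_1$ works (since $s_1s_2\in E(G)$ forces $s_1$ into each $Z_i$ too, as $S_i$ separates $Z_i$ from $Y_i$ and $Z_i$ is connected through $C_i^\circ$); and one must check $v$ (or $s_1$) actually lands in $\CZ'(S_i')$ rather than merely in $Z_i$, which is where the equality $V(C_i^\circ)=\CZ(S_i^\circ)\setminus\{s_1,s_2\}$ and a short connectivity argument are needed. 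For the edge case, an edge $e^\star=xy$ of $G$ either survives as an edge of $G'$ (if neither endpoint is $s_2$, replacing $s_2$-endpoints by $s_1$), or $e^\star$ is incident to $s_2$ and we again use $s_1s_2\in E(G)$ together with $3$-connectedness of $G'$ (Corollary~\ref{cor:4t15}) and the structure of the $C_i^\circ$'s to produce a valid edge or common vertex in $G'$; the delicate point is the bookkeeping when some $S_i'$ contains $s_1$ (so its companion uses an essential subseparator) and the witness vertex is one of $s_1,s_2,s_1',s_2',s_1'',s_2''$, which is handled by the Crossedge Independence Lemma~\ref{lem:CI} and by the explicit description of $C_i^\circ$ around the crossedge. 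I expect this case analysis — reconciling the three (possibly differently-typed) companion separations with a single $G'$-witness — to be the longest and most error-prone part, but each subcase reduces to the connectivity facts already isolated in Lemmas~\ref{lem:4t15} and \ref{lem:4t16}.
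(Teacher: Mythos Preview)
Your strategy---lift to $G$, apply the axioms of $\CT$, and push the witness back---is exactly the paper's. Two efficiencies you are missing. First, \ref{li:t1} and \ref{li:t3} follow in one line from the observation (recorded just before the lemma) that $\CZ'(S')$ is always the vertex set of a connected component of $G'\setminus S'$: any $(Y',S',Z')\in\Sep_{<4}(G')$ places each such component wholly in $Y'$ or wholly in $Z'$, so exactly one of $(Y',S',Z')$, $(Z',S',Y')$ lies in $\CT'$, and $Z'\supseteq\CZ'(S')\neq\emptyset$. No lift to $G$ is needed for these two axioms. Second, for \ref{li:t2} the paper, after building companion separations $(Y^{i,1},S^{i,1},Z^{i,1})\in\CT$ as you describe, descends further to $\preceq$-minimal elements $(Y^{i,2},S^{i,2},Z^{i,2})\in\CMT$ and applies \ref{li:t2} there. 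The payoff is that Lemma~\ref{lem:4t15a} forces $S^{i,2}\in\{S_1,S_2\}$ whenever $s_1\in S^i$ (so $s_1,s_2\notin Z^{i,2}$), while the Crossing Lemma gives $s_1\in Z^i$ whenever $s_1\notin S^i$; together these collapse the only delicate case---a witness equal to $s_1$ or $s_2$---to a single line, since such a witness can occur only when $s_1\notin S^i$ for all $i$, and then $s_1\in\bigcap Z^i$. Your plan to work directly at the companion level and invoke the Crossedge Independence Lemma can be made to go through, but the descent to $\CMT$ is the cleaner device, and Lemma~\ref{lem:CI} itself is not needed.
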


\begin{proof}
  It follows immediately from the definition that $\CT'$ satisfies
  \ref{li:t1} and \ref{li:t3}. 

  To see that it satisfies \ref{li:t2}, let $(Y^i,S^i,Z^i)\in\CT'$
  for $i=1,2,3$. Suppose for contradiction that $Z^1\cap Z^2\cap
  Z^3=\emptyset$ and there is no edge that has an endvertex in every
  $Z^i$. For every $i$, we let $S^{i,0}:=S^i_\wedge$.
  We define a separation
  $(Y^{i,1},S^{i,1},Z^{i,1})$ of $G$
  as follows.
  \begin{eroman}
  \item If $S^i$ is a separator of $G'$ and $s_1\in S^i$, then we let
    $S^{i,1}$ be an essential subseparator of $S^{i,0}$
    and $Z^{i,1}:=\CZ(S^{i,1})$ and $Y^{i,1}:=\CY(S^{i,1})$.
  \item If $S^i$ is a separator of $G'$ and $s_1\not\in S^i$, then we let
    $S^{i,1}:=S^i$ and $Z^{i,1}:=\CZ(S^{i,1})$ and
    $Y^{i,1}:=\CY(S^{i,1})$. 
  \item If $S^i$ is not a separator of $G'$ and $s_1\in S^i$ then we
    let 
    \[
    S^{i,1}:=S^i=S^{i,0}\setminus\{s_2\},
    \]
    and we let
    $Z^{i,1}:=\CZ(S^{i,1})$ and $Y^{i,1}:=\CY(S^{i,1})$. 
  \item  If $S^i$ is not a separator of $G'$ and $s_1\not\in S^i$, then we let
    $S^{i,1}:=S^i$ and $Z^{i,1}:=V(G)\setminus S^i$ and
    $Y^{i,1}:=\emptyset$. 
  \end{eroman}
  Note that in cases (i) and (iii) we have $Z^{i,1}\setminus\{s_1,s_2\}=Z^i$. In
  cases (ii) and (iv), $Z^i$ is the contraction of $Z^{i,1}$. 
  Thus in all four cases we have 
  \begin{equation}
    \label{eq:29}
    Z^{i,1}\setminus\{s_1,s_2\}= Z^i\setminus\{s_1\}.
  \end{equation}
  Moreover, we have 
  $(Y^{i,1},S^{i,1},Z^{i,1})\in\CT$. We let
  $(Y^{i,2},S^{i,2},Z^{i,2})\in\CMT$ such that $(Y^{i,2},S^{i,2},Z^{i,2})\preceq(Y^{i,1},S^{i,1},Z^{i,1})$.

  \begin{claim}
    If $s_1\in S^i$ then $S^{i,2}\in\{S_1,S_2\}$. In particular,
    $s_1,s_2\in S^{i,2}\cup Y^{i,2}$.

  \proof This follows from \eqref{eq:25}.
  \uend
  \end{claim}

  \begin{claim}[resume]
    If $s_1\not\in S^i$, then $s_1\in Z^{i}$ and
    $s_1,s_2\in Z^{i,1}$.

    \proof
    If $s_1\in Y^{i}$ then $s_1,s_2\in Y^{i,1}\subseteq Y^{i,2}$. Thus
    for $j=1,2$ we have $s_j\in
    Y^{i,2}\cap Y_{3-j}$.  By the Crossing Lemma~\ref{lem:cross}, this
    implies $S^{i,2}=S_{3-j}$ for $j=1,2$, which is impossible.
    \uend
  \end{claim}
  
  By \ref{li:t2}, either $Z^{1,2}\cap Z^{2,2}\cap
  Z^{3,2}\neq\emptyset$ or there is an edge that has an endvertex in
  every $Z^{i,2}$.
  \begin{cs}
    \case1 $Z^{1,2}\cap Z^{2,2}\cap
    Z^{3,2}\neq\emptyset$.\\
    Let $v\in Z^{1,2}\cap Z^{2,2}\cap Z^{3,2}$. If
    $v\in V(G)\setminus\{s_1,s_2\}$ then
    $v\in Z^1\cap Z^2\cap Z^3$ by \eqref{eq:29}. Thus we may assume that
    $v=s_j$.

    By Claim~1, we have $s_1\not\in  S^i$ for $i=1,2,3$, because
    otherwise $s_j\not\in Z^{i,2}$. Then by Claim~2 we have $s_1\in
    Z^1\cap Z^2\cap Z^3$.

  \case2 $Z^{1,2}\cap Z^{2,2}\cap
  Z^{3,2}=\emptyset$.\\
  Then there is an edge $e=v_1v_2$ that has an endvertex in every
  $Z^{i,2}$. As $Z^{1,2}\cap Z^{2,2}\cap
  Z^{3,2}=\emptyset$, we have $v_1,v_2\in S^{1,2}\cup S^{2,2}\cup
  S^{3,2}$. For $i=1,2$, let $J_i$ be the set of $j$ such that $v_i\in
  Z^{j,2}$. Then $J_1\cup J_2=[3]$.

  For $i=1,2$, if $v_i\not\in \{s_1,s_2\}$
  then for all $j\in J_i$ we have $v_i\in Z^j$ by \eqref{eq:29}.
  Thus if $\{v_1,v_2\}\cap\{s_1,s_2\}=\emptyset$, then $e\in E(G')$
  and $e$ has an endvertex in each $Z^i$.

  So let us assume that $v_2\in\{s_1,s_2\}$. Say, $v_2=s_2$. Then for all $j\in J_2$ we have
  $s_1\not\in S^{j}$ by Claim~1 and thus $s_1\in Z^j$ by Claim~2.
  \begin{cs}
    \case{2a}
    $v_1=s_1$.\\
    Then for all $j\in J_1$ we have $s_1\not\in S^{j}$ by Claim~1 and
    thus $s_1\in Z^j$ by Claim~2. It follows that
    $s_1\in Z^1\cap Z^2\cap Z^3$.  \case{2a}
    $v_1\neq s_1$.\\
    Then $v_1\not\in\{s_1,s_2\}$ and thus for all $j\in J_1$ we have
    $v_1\in Z^j$ by \eqref{eq:29}. Furthermore, we have $e':=v_1s_1\in
    E(G')$. This edge $e'$ has an endvertex in every $Z^j$.
    \qedhere
  \end{cs}
\end{cs}
\end{proof}

\begin{lem}\label{lem:4t18}
  $\CT$ is the lifting of $\CT'$ to $G$ with respect to the
  contraction of the edge $e=s_1s_2$.
\end{lem}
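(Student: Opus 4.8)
The plan is to recognise both sides as $G$-tangles of order $4$ and then deduce equality from a single inclusion. Let $\CM$ be the model of $G'$ in $G$ with branch sets $M_{s_1}:=\{s_1,s_2\}$ and $M_v:=\{v\}$ for $v\in V(G')\setminus\{s_1\}$; this is exactly ``the contraction of $e=s_1s_2$'', and one checks routinely that it is a faithful model of $G'$ in $G$. Let $\hat\CT$ denote the lifting of $\CT'$ to $G$ with respect to $\CM$. By Lemma~\ref{lem:lift} together with Lemma~\ref{lem:4t17}, $\hat\CT$ is a $G$-tangle of order $4$, and by Assumption~\ref{ass:4t2} so is $\CT$. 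Since two $G$-tangles of the same order, one contained in the other, must coincide (if some $(Y,S,Z)$ lay in the larger but not the smaller, then by \ref{li:t1} its reverse $(Z,S,Y)$ would lie in the smaller and hence in the larger, contradicting Corollary~\ref{cor:tangle-closure} as $|S|<4$), it suffices to show $\CT\subseteq\hat\CT$.

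So fix $(Y,S,Z)\in\CT$ and put $(Y',S',Z'):=\pi_\CM(Y,S,Z)$; we must show $(Y',S',Z')\in\CT'$, i.e.\ $\CZ'(S')\subseteq Z'$. First I would record a few facts. By Reed's Lemma~\ref{lem:reed} we have $\CZ(S)\subseteq Z$. From the definition of $\pi_\CM$, a vertex $v\in V(G')\setminus\{s_1\}$ lies in $Y'$, $S'$, $Z'$ exactly as it lies in $Y$, $S$, $Z$, while $s_1\in S'$ iff $\{s_1,s_2\}\cap S\neq\emptyset$ and $s_1\in Z'$ iff $\{s_1,s_2\}\subseteq Z$; also $s_1s_2\in E(G)$ rules out $s_1,s_2$ lying on opposite sides of the separation. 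Finally, $\CZ'(S')$ is the vertex set of a connected component of $G'\setminus S'$, hence connected and contained in $Y'\cup Z'$; since $(Y',S',Z')$ has no edge between $Y'$ and $Z'$, it lies entirely in $Y'$ or entirely in $Z'$, so it is enough to produce a single vertex of $\CZ'(S')$ in $Z'$.

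The argument then splits according to $\{s_1,s_2\}\cap S$. If $\{s_1,s_2\}\cap S\neq\emptyset$, then $s_1\in S'$ and the expansion $\tilde S:=S'_\wedge$ equals $S\cup\{s_1,s_2\}$, so $s_1,s_2\in\tilde S$ and $|\tilde S\setminus\{s_1,s_2\}|=|S\setminus\{s_1,s_2\}|\le 2$; hence Lemmas~\ref{lem:4t15} and~\ref{lem:4t16} apply to $\tilde S$. Applying Lemma~\ref{lem:4t16} to $\tilde S$ and to the separation $(Y,S,Z)\in\CT$, which has $S\subseteq\tilde S$, the component $C^\circ$ it produces satisfies $V(C^\circ)\subseteq Z$, and by the definition of $\CZ'$ we have $\CZ'(S')=V(C^\circ)$ whenever $S'$ separates $G'$ (if it does not, then $\CZ'(S')=V(G')\setminus S'$ and the inclusion reduces to $Y'=\emptyset$, handled as in the other case). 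Because $s_1,s_2\in\tilde S$, no vertex of $C^\circ$ is $s_1$ or $s_2$, so every $v\in V(C^\circ)$ has $M_v=\{v\}\subseteq Z$, i.e.\ $v\in Z'$; hence $\CZ'(S')\subseteq Z'$. If instead $\{s_1,s_2\}\cap S=\emptyset$, then $s_1\notin S'$, $S'=S$, and $s_1,s_2$ lie both in $Y$ or both in $Z$; now $\CZ'(S')=\CZ(S)^\vee$ (when $S'$ separates $G'$), and using $\CZ(S)\subseteq Z$ together with the $3$-connectivity of $G$, which forbids $\CZ(S)\subseteq\{s_1,s_2\}$ unless $s_1,s_2\in Z$, one exhibits a vertex of $\CZ(S)^\vee$ in $Z'$: any vertex of $\CZ(S)\setminus\{s_1,s_2\}$, or $s_1$ itself in the sub-case $s_1,s_2\in Z$. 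The remaining degenerate situation, where $S'$ fails to separate $G'$, forces $Y'=\emptyset$ because the crossedge lies inside $Y$ or inside $Z$ and so its contraction cannot join the two sides of the separation; then the inclusion is trivial.

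The conceptual content all sits in Lemmas~\ref{lem:4t15}, \ref{lem:4t16} and Reed's Lemma, so the main obstacle is not a new idea but the bookkeeping: identifying $S'_\wedge$ with $S\cup\{s_1,s_2\}$, matching $\CZ'(S')$ with the component $C^\circ$ supplied by Lemma~\ref{lem:4t16}, and disposing of the handful of boundary cases (where $S'$ does not separate $G'$, or where $\CZ(S)$ is so small that it is contained in $\{s_1,s_2\}$). The case $\{s_1,s_2\}\cap S\neq\emptyset$ is the one that genuinely uses the work of Section~\ref{sec:cce}; the complementary case is essentially immediate once $\pi_\CM$, expansion and contraction are lined up correctly.
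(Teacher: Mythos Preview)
Your proof is correct and follows essentially the same route as the paper. Both arguments set up the contraction model $\CM$, observe that it suffices to prove one inclusion between the two tangles, and then carry out the same case analysis on whether the crossedge meets $S$; the case $\{s_1,s_2\}\cap S\neq\emptyset$ is handled in both via Lemma~\ref{lem:4t16}, and the complementary case via Reed's Lemma and the definition of $\CZ'$. The only cosmetic difference is that you prove the forward inclusion $\CT\subseteq\hat\CT$ directly, whereas the paper proves the backward implication $(Y',S',Z')\in\CT'\Rightarrow(Y,S,Z)\in\CT$ first and then flips it; the bookkeeping is otherwise identical (your identity $S'_\wedge=S\cup\{s_1,s_2\}$ is exactly the paper's \eqref{eq:30} read in reverse).
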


\begin{proof}
  Let $(Y,S,Z)$ be a separation of $G$ of order at most
  $3$. Let $(M_{w})_{w\in V(G'}$ be the branch sets of the model $\CM$
  of
  $G'$ in $G$ that corresponds to the contraction of $e$. Then
  $M_{s_1}:=\{s_1,s_2\}$ and $M_w:=\{w\}$ for all $w\neq s_1$.
  Let $(Y',S',Z'):=\pi_{\CM}(Y,S,Z)$ (see \eqref{eq:proj}). Then
  \begin{equation}
    \label{eq:30}
      (Y',S',Z')=\big(Y^\vee\setminus S^\vee,S^\vee,Z^\vee\setminus 
      S^\vee\big).
  \end{equation}
  We need to prove 
  \begin{equation}
    \label{eq:31}
    (Y,S,Z)\in \CT\iff(Y',S',Z')\in\CT'. 
  \end{equation}
  We prove the backward direction first.
  Suppose that $(Y',S',Z')\in\CT'$. If $S'$
  is not a separator of $G'$, $\CZ'(S')=V(G')\setminus S'$, and either
  $Z=V(G)\setminus S$ or $Z=V(G)\setminus (S\cup\{s_2\})$. In both
  cases, $(Y,S,Z)\in\CT$. (If $Z=V(G)\setminus (S\cup\{s_2\})$, this
  follows from Lemma~\ref{lem:t2}.) 

  So suppose that $S'$ is a
  separator of $G'$. If $s_1\not\in S'$, then $S=S'$ is a separator of
  $G$, and $Z'=Z^\vee$ is the contraction of $Z$. As
  $(Y',S',Z')\in\CT'$, we have 
  $\CZ'(S')\subseteq Z'$. Now $\CZ'(S')$ is the contraction of $\CZ(S')=\CZ(S)$,
  and thus we have $\CZ(S)\subseteq Z$. This implies $(Y,S,Z)\in\CT$. 

  Finally, suppose that $s_1\in S'$. Then $Z'\supseteq \CZ'(S')=V(C^{\circ})$
  for the connected component $C^{\circ}$ of $G\setminus
  S$ obtained from Lemma~\ref{lem:4t16}, and we have
  $V(C^{\circ})\subseteq Z$ and thus $(Y,S,Z)\in\CT$.

  To prove the forward direction of \eqref{eq:31}, we just note that
  \[
  (Y,S,Z)\in
  \CT\iff(Z,S,Y)\not\in\CT\implies(Z',S',Y')\not\in\CT'\iff(Y',S',Z')\in\CT',
  \]
  where the middle implication follows from the backward direction.
\end{proof}

\begin{lem}\label{lem:4t19}
  Either $G'$ is 4-connected and
  $\CMT'=\big\{(\emptyset,\emptyset,V(G'))\big\}$ or 
  \begin{multline}\label{eq:32}
  \CMT'=\Big\{\big(Y^\vee\setminus S^\vee,S^\vee,Z^\vee\setminus
  S^\vee\big)\Bigmid\\
  (Y,S,Z)\in\CMT\text{ such that $S^\vee$ is a separator of }G'\Big\}.
  \end{multline}
  Furthermore, for all $S\in\CMS$ such that $S^\vee$ is a
  separator of $G'$,
  \begin{equation}
    \label{eq:33}
    S\in\CNDS\iff S^\vee\in\CNDS'.
  \end{equation}
\end{lem}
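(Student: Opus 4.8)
The plan is to distinguish whether $G'$ is $4$-connected. First suppose $G'$ is $4$-connected. Then $|G'|>4$, so every separation in $\Sep_{<4}(G')$ is improper, and since tangle members have nonempty $Z$-part by~\ref{li:t3}, every $(Y',S',Z')\in\CT'$ has $Y'=\emptyset$. As $(\emptyset,\emptyset,V(G'))\in\CT'$ by Lemma~\ref{lem:tangle-closure}(1) and $(\emptyset,\emptyset,V(G'))\preceq(\emptyset,S',Z')$ for every such separation, it is the unique $\preceq$-minimal element, so $\CMT'=\{(\emptyset,\emptyset,V(G'))\}$. From now on assume $G'$ is not $4$-connected. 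Since $G'$ is $3$-connected (Corollary~\ref{cor:4t15}) and carries the order-$4$ tangle $\CT'$, it is not $K_4$ (which admits no tangle of order $4$), so it has a proper separation of order at most $3$; orienting it via~\ref{li:t1} produces a member of $\CT'$ with nonempty $Y$-part, so no $\preceq$-minimal element of $\CT'$ is improper. By Lemma~\ref{lem:reed} and the definition of $\CT'$ one has $\CZ_{\CT'}(S')=\CZ'(S')$ for all $S'$ with $|S'|\le 3$, so by Corollary~\ref{cor:reed2} every element of $\CMT'$ equals $(\CY'(T),T,\CZ'(T))$ for its separator $T=N^{G'}(Z')$. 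Hence it suffices to prove the set equality $\CMS'=\{S^\vee\mid S\in\CMS,\ S^\vee\text{ is a separator of }G'\}$ together with the fact that, writing $\CM$ for the contraction model of Lemma~\ref{lem:4t18}, $\pi_\CM$ maps $(\CY(S),S,\CZ(S))$ to $(\CY'(S^\vee),S^\vee,\CZ'(S^\vee))$; by~\eqref{eq:30} the latter equals $(\CY(S)^\vee\setminus S^\vee,S^\vee,\CZ(S)^\vee\setminus S^\vee)$, which is the shape occurring in~\eqref{eq:32}.

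For the inclusion ``$\supseteq$'', I would take $S\in\CMS$ with $S^\vee$ a separator of $G'$. Then $(\CY(S),S,\CZ(S))\in\CMT\subseteq\CT$, so by Lemma~\ref{lem:4t18} its $\pi_\CM$-image lies in $\CT'$, and by~\eqref{eq:30} that image is a proper separation of $G'$ with separator $S^\vee$. To promote this to $S^\vee\in\CMS'$ I would argue that no $(\tilde Y,\tilde S,\tilde Z)\in\CT'$ lies strictly $\preceq$-below it: such a separation would lift to a separation of $G$ of order at most $3$ which, by Lemma~\ref{lem:4t18}, lies in $\CT$ and, I claim, strictly $\preceq$-below $(\CY(S),S,\CZ(S))$, contradicting $S\in\CMS$; the lift places $s_2$ into the part containing $s_1$ when $s_1\notin\tilde S$, and passes to an essential subseparator of $\tilde S_\wedge$ (existence: Lemma~\ref{lem:4t15}; correct $Z$-part: Lemma~\ref{lem:4t16}) when $s_1\in\tilde S$. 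The key structural input is Lemma~\ref{lem:4t15a}, equation~\eqref{eq:25}: for every $S\in\CMS$ other than $S_1,S_2$ both $s_1,s_2$ lie in $\CZ(S)$, so $S^\vee=S$, $\CZ(S)^\vee=\CZ(S)\setminus\{s_2\}$ and $\CY(S)^\vee=\CY(S)$; thus contracting $e$ alters such a separation only by deleting $s_2$ from its $Z$-part, and since $s_2$ lies on the same ($Z$-)side of every separation below it, this operation preserves the restriction of $\preceq$. That observation is what I expect to do the real work, in both directions.

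For ``$\subseteq$'' I would take $T\in\CMS'$, so $(\CY'(T),T,\CZ'(T))\in\CMT'$ and, by the first paragraph, $T\ne\emptyset$. If $s_1\notin T$, then $T=T_\wedge$ is a separator of $G$ and, unravelling the definition of $\CZ'$, $\pi_\CM(\CY_\CT(T),T,\CZ_\CT(T))=(\CY'(T),T,\CZ'(T))$, with candidate preimage $(\CY_\CT(T),T,\CZ_\CT(T))\in\CT$. If $s_1\in T$, then $T_\wedge=T\cup\{s_2\}$ satisfies the hypotheses of Lemmas~\ref{lem:4t15} and~\ref{lem:4t16}; an essential subseparator $T^\circ$ of $T_\wedge$ satisfies $(T^\circ)^\vee=T$, and Lemma~\ref{lem:4t16} gives $\pi_\CM(\CY_\CT(T^\circ),T^\circ,\CZ_\CT(T^\circ))=(\CY'(T),T,\CZ'(T))$, with candidate preimage $(\CY_\CT(T^\circ),T^\circ,\CZ_\CT(T^\circ))\in\CT$. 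In either case the candidate preimage lies in $\CMT$ by the order-preservation of the previous paragraph (an element of $\CMT$ strictly below it would project into $\CT'$ strictly below $(\CY'(T),T,\CZ'(T))$, contradicting $T\in\CMS'$), so $T\in\{S^\vee\mid S\in\CMS,\dots\}$. This gives~\eqref{eq:32}. For the equivalence~\eqref{eq:33}, fix $S\in\CMS$ with $S^\vee$ a separator of $G'$. If $S\notin\{S_1,S_2\}$ then $S^\vee=S$; by~\eqref{eq:25} $\CY(S)$ avoids $s_1,s_2$, so $|\CY'(S^\vee)|=|\CY(S)|$, and contracting $e$ creates no edge inside $S$, so $S^\vee$ is independent in $G'$ iff $S$ is independent in $G$; hence $(\CY(S),S,\CZ(S))$ is degenerate iff $(\CY'(S^\vee),S^\vee,\CZ'(S^\vee))$ is. The remaining cases $S\in\{S_1,S_2\}$ follow by a short direct computation using $3$-connectivity and $S_1,S_2\in\CNDS$.

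The hard part will be the order-preservation claim underlying the two inclusions above: showing that contracting the crossedge $e$ neither destroys nor creates $\preceq$-minimality in the tangle. Making this precise amounts to a careful case analysis of where $s_1$ and $s_2$ sit in every minimal separation and in every separation lying below one -- which is exactly what Lemmas~\ref{lem:4t15a}, \ref{lem:4t15} and~\ref{lem:4t16} are designed to control -- and to checking that each lifting used stays of order at most $3$; the bookkeeping is routine but delicate.
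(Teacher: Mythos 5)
Your overall architecture coincides with the paper's: two inclusions, a case split on whether $s_1$ lies in the separator, essential subseparators via Lemma~\ref{lem:4t15}, the component identification of Lemma~\ref{lem:4t16}, and equation~\eqref{eq:25} as the structural engine. The 4-connected case and the reduction to separators via Corollary~\ref{cor:reed2} are fine, and your treatment of~\eqref{eq:33} for $S\notin\{S_1,S_2\}$ matches the paper's.

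The gap is that the ``order-preservation'' claim you defer as ``routine but delicate bookkeeping'' is the actual content of the lemma, and the instances you would need most are exactly the ones your key observation does not cover. Your justification rests on~\eqref{eq:25}: for $S\in\CMS\setminus\{S_1,S_2\}$ the contraction only deletes $s_2$ from $\CZ(S)$, so $\preceq$ is undisturbed. But minimality can fail, in either graph, because of a \emph{competitor} separation, and the dangerous competitors are precisely those whose separator meets $\{s_1,s_2\}$ --- by~\eqref{eq:25} these are $S_1$ and $S_2$ themselves, for which the contraction genuinely rearranges the separation ($s_2\in S_2$ becomes $s_1\in S_2^\vee$, and $s_1$ leaves $Y_2$) rather than merely deleting $s_2$ from the far side. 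Concretely: (a) in your ``$\supseteq$'' direction, to show the image of $(Y_2,S_2,Z_2)$ is minimal in $\CT'$ you must rule out a $(Y'',S'',Z'')\in\CMT$ whose image sits strictly below it; the paper's argument here is a specific chain ($Z''\cap\{s_1,s_2\}\neq\emptyset$, then $(Y''\cup S'')\cap\{s_1,s_2\}\neq\emptyset$, then $S''=S_{3-j}$ and $s_j\in Y_{3-j}$, contradiction) that is not an instance of ``only $s_2$ is deleted from the $Z$-part''. (b) In your ``$\subseteq$'' direction, when lifting a hypothetical smaller $(\tilde Y,\tilde S,\tilde Z)\in\CT'$ with $s_1\in\tilde S$, the lift is an essential subseparator of $\tilde S_\wedge$, whose minimal element below it in $\CT$ must (by~\eqref{eq:25}) again be $S_1$ or $S_2$; the paper then concludes by comparing directly against $(Y_j\setminus\{s_{3-j}\},S_j^\vee,Z_j)$ rather than by transferring strictness of $\preceq$ through the lift. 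Until these $\{S_1,S_2\}$ cases are written out, the proof is incomplete; they occupy most of the paper's argument and are where all the difficulty lives.
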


Recall that, by Lemma~\ref{lem:4t15a}, $S^\vee$ is a separator of $G'$
for all $S\in\CMS\setminus\{S_1,S_2\}$. Thus the clause ``such that
$S^\vee$ is a separator of $G'$'' only refers to the separators
$S_1^\vee=S_1$ and $S_2^\vee=(S_2\setminus\{s_2\})\cup\{s_1\}$, which
may not be separators of $G'$.

\begin{proof}[Proof of Lemma~\ref{lem:4t19}]
  If $G'$ is 4-connected then 
  \[
  \CT'=\{(\emptyset,S',V(G')\setminus
  S')\mid S'\subseteq V(G)\text{ width }|S'|\le 3\big\}.
  \]
  The unique minimal element of this set
  $(\emptyset,\emptyset,V(G')$. In the following, let us assume that
  $G'$ is not 4-connected.

  To prove the inclusion ``$\subseteq$'' of \eqref{eq:32}, let
  $(Y',S',Z')\in\CMT'$. Then $Y'\neq\emptyset$, because $G'$ is not
  4-connected. Hence
  the expansion $S'_\wedge$ is a separator of $G$.
  \begin{cs}
    \case1
    $s_1\in S'$.\\
    Let $S^{\circ}$ be an
    essential subseparator of $S'_\wedge$. Then
    $|S^{\circ}\cap\{s_1,s_2\}|=1$. Say, $s_i\in S^{\circ}$. Let $Z^{\circ}:=\CZ(S^\circ)$ and
    $Y^{\circ}:=\CY(S^\circ)$. Then
    $Z'=Z^{\circ}\setminus\{s_1,s_2\}$ and $Y'=Y^{\circ}\setminus\{s_1,s_2\}$ and $S'=(S^{\circ})^\vee$.

    Let $(Y,S,Z)\in\CMT$ such that $(Y,S,Z)\preceq(Y^{\circ},S^{\circ},Z^{\circ})$. Then
    $s_i\in S^{\circ}\subseteq Y\cup S$ and thus by \eqref{eq:25},
    $S=S_1$ or $S=S_2$. Say, $S=S_j$. Note that $Z_j=Z\subseteq
    Z^{\circ}\setminus\{s_1,s_2\}=Z'$. Thus
    \[
    (Y',S',Z')\succeq
    (Y_j\setminus\{s_{3-j}\},S_j^\vee,Z_j)=(Y_j^\vee\setminus
    S_j^\vee,S_j^\vee,Z_j^\vee\setminus S_j^\vee).
    \]
    By the minimality of $(Y',S',Z')$, equality holds. 

    \case2
    $s_1\not\in S'$.\\
    Then $S'_\wedge=S'$ and $(Y'_\wedge,S'_\wedge,Z'_\wedge)\in\CT$. Suppose for
    contradiction that $(Y'_\wedge,S'_\wedge,Z'_\wedge)$ is not
    minimal in $\CT$ and let $(Y,S,Z)\in\CMT$ such that
    $(Y,S,Z)\prec(Y'_\wedge,S'_\wedge,Z'_\wedge)$.
    \begin{cs}
      \case{2a}
      $S\cap\{s_1,s_2\}=\emptyset$.\\
      Then $(Y^\vee,S,Z^\vee)\in\CT'$ is strictly smaller than 
      $(Y',S',Z')$, which contradicts the minimality
      of $(Y',S',Z')$.
     \case{2b}
      $S\cap\{s_1,s_2\}\neq\emptyset$.\\
      Then $S\in\{S_1,S_2\}$. Say, $S=S_2$. Then
        $(Y_2\setminus\{s_1\},S_2^\vee,Z_2)\in\CT'$ is strictly smaller than 
      $(Y',S',Z')$, again contradicting the minimality of the latter.
    \end{cs}
  \end{cs}
  
  To prove the converse inclusion of \eqref{eq:32}, let $(Y,S,Z)\in\CMT$.
  \begin{cs}
    \case1
    $S\in\{S_1,S_2\}$.\\
    Say, $S=S_2$. Then $Y=Y_2$ and $Z=Z_2$. Clearly,
    \[
    (Y^\vee\setminus S^\vee,S^\vee,Z^\vee\setminus
    S^\vee)=(Y_2\setminus\{s_1\},(S_2\setminus\{s_2\})\cup\{s_1\},Z_2)\in\CT'.
    \]
    Suppose for contradiction that it is not minimal. Let
    $(Y',S',Z')\in\CMT'$ such that
    $(Y',S',Z')\prec(Y_2\setminus\{s_1\},(S_2\setminus\{s_2\})\cup\{s_1\},Z_2)$. By
    the converse inclusion (proved above), there is a
    $(Y'',S'',Z'')\in\CMT$ such that
    \[
    \big((Y'')^\vee\setminus(S'')^\vee,(S'')^\vee,(Z'')^\vee\setminus(S'')^\vee\big)=(Y',S',Z').
    \]
    Then $(Z'')^\vee\setminus(S'')^\vee\subset Z_2$, which implies
    $Z''\setminus\{s_1,s_2\}\subset Z_2$. By the minimality of
    $(Y_2,S_2,Z_2)$, we have $Z''\not\subset Z_2$. Thus
    $Z''\cap\{s_1,s_2\}\neq\emptyset$. Say, $s_j\in Z''$. Furthermore, 
    $(Y'')^\vee\cup (S'')^\vee\supset (Y_2\cup S_2)\setminus\{s_2\}$
    and thus $(Y''\cup S'')\cap\{s_1,s_2\}\neq\emptyset$. As $s_j\in
    Z''$ and $s_1s_2\in E(G)$ it follows that
    $s_{3-j}\in S''$. By \eqref{eq:25}, it follows that
    $S''=S_{3-j}$. But then $s_j\in Y''=Y_{3-j}$, which is a contradiction.

    \case2
     $S\not\in\{S_1,S_2\}$.\\
     Then $s_1,s_2\not\in Y\cup S$ by \eqref{eq:25}, and thus
     \[
      (Y^\vee\setminus S^\vee,S^\vee,Z^\vee\setminus
      S^\vee)=(Y,S,Z^\vee)\in\CT'.
      \]
      Suppose for contradiction that $(Y,S,Z^\vee)$ is not minimal in
      $\CT'$. Let
    $(Y',S',Z')\in\CMT'$ such that
    $(Y',S',Z')\prec(Y,S,Z^\vee)$. By
    the converse inclusion (proved above), there is a
    $(Y'',S'',Z'')\in\CMT$ such that
    \[
    \big((Y'')^\vee\setminus(S'')^\vee,(S'')^\vee,(Z'')^\vee\setminus(S'')^\vee\big)=(Y',S',Z').
    \]
    Then $(Y'')^\vee\cup(S'')^\vee=Y'\cup S'\supset Y\cup S$, which implies
    $Y''\cup S''\supset Y\cup S$ and thus
    $(Y'',S'',Z'')\prec(Y,S,Z)$. This contradicts the minimality of $(Y,S,Z)$.
  \end{cs}

  \medskip
  It remains to prove \eqref{eq:33}. Let $S\in\CMS$ such that $S^\vee$ is a
  separator of $G'$. Then $S^\vee\in\CMS'$. Let $Y:=\CY(S)$ and
  $Z:=\CZ(S)$. Then $(Y,S,Z)\in\CMT$ and
  $(Y^\vee\setminus S^\vee,S^\vee,Z^\vee\setminus S^\vee)\in\CMS'$. 
  
  Suppose first that $S\not\in\{S_1,S_2\}$. Then $S^\vee=S$ and
  $Y^\vee\setminus S^\vee=Y^\vee=Y$ and
  $Z^\vee\setminus S^\vee=Z^\vee$, because $s_1,s_2\in Z$ by
  \eqref{eq:25}. It follows that $S$ is degenerate in $G$ if and only
  if $S^\vee$ is degenerate in $G'$.

  If $S\in\{S_1,S_2\}$, say, $S=S_j$, we need
  to prove that $S^\vee$ is non-degenerate.
  We have
  \[
  (Y^\vee\setminus S^\vee,S^\vee,Z^\vee\setminus
  S^\vee)=(Y_j\setminus\{s_{3-j}\},S_j^\vee,Z_j).
  \]
  As $S^\vee$ is a separator of $G'$, we have
  $Y_j\setminus\{s_{3-j}\}\neq\emptyset$.  If
  $|Y_j\setminus\{s_{3-j}\}|>1$, then
  $(Y_j\setminus\{s_{3-j}\},S_j^\vee,Z_j)$ is non-degenerate.  Suppose
  that $|Y_j\setminus\{s_{3-j}\}|=1$. Then $s_{3-j}$ is adjacent to
  $s_j'$ or $s_j''$, because $s_{3-j}$ has degree at least $3$ in
  $G$. Say, $s_{3-j}s_j'\in E(G)$. Then $s_1s_j'\in E(G')$, and thus
  $S_j^\vee$ is not an independent set in $G'$, which again means that
  $(Y_j\setminus\{s_{3-j}\},S_j^\vee,Z_j)$ is non-degenerate.
\end{proof}

$S_i^\vee$ is not necessarily a separator of $G'$. However, if $S_i$ is crossed
by some separator $S\in\CMS\setminus\{S_1,S_2\}$ then $S_i^\vee$
remains a separator of $G'$. Hence we get the following corollary.

\begin{cor}\label{cor:4t19}
  $\EC(\CT')=\EC(\CT)\setminus\{e\}$ and $\ECND(\CT')=\ECND(\CT)\setminus\{e\}$.
\end{cor}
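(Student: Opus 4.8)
The plan is to read this corollary off the correspondence between the minimal separations of $\CT$ and of $\CT'$ established in Lemma~\ref{lem:4t19}, transporting the crossing relation and the crossedges between $G$ and $G'$ by means of the Crossing Lemma~\ref{lem:cross}. Since $s_2\notin V(G')$, the edge $e=s_1s_2$ is not an edge of $G'$, so $e\notin\EC(\CT')$; it therefore suffices to prove $\EC(\CT)\setminus\{e\}=\EC(\CT')$ and the analogous identity for the non-degenerate subsets. By Assumption~\ref{ass:4t3} and the Crossing Lemma, $e$ is the one and only crossedge of the pair $\{S_1,S_2\}$, and a short check using \eqref{eq:25} (if $s_1s_2$ is a crossedge of $\{S,\tilde S\}$ then, say, $s_1\in S$, forcing $S\in\{S_1,S_2\}$, hence $S=S_1$, and likewise $\tilde S=S_2$) shows $e$ is the crossedge of no other pair. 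Thus $\EC(\CT)\setminus\{e\}$ is precisely the set of crossedges of crossing pairs $\{S,\tilde S\}\subseteq\CMS$ with $\{S,\tilde S\}\neq\{S_1,S_2\}$, and likewise for $\EC(\CT')$ relative to $\CMS'$.

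For $\EC(\CT)\setminus\{e\}\subseteq\EC(\CT')$, fix such an $f$, the crossedge of distinct $S,\tilde S\in\CMS$ with $\{S,\tilde S\}\neq\{S_1,S_2\}$, with corresponding separations $(\CY(S),S,\CZ(S))$ and $(\CY(\tilde S),\tilde S,\CZ(\tilde S))$. First I would record that both $S^\vee$ and $\tilde S^\vee$ are separators of $G'$: at least one of $S,\tilde S$, say $\tilde S$, lies in $\CMS\setminus\{S_1,S_2\}$, so $\tilde S^\vee=\tilde S$ is such a separator by Lemma~\ref{lem:4t15a}; and the other either also lies in $\CMS\setminus\{S_1,S_2\}$, or equals $S_1$ or $S_2$, in which case it is crossed by $\tilde S\in\CMS\setminus\{S_1,S_2\}$ and hence its contraction is a separator of $G'$ by the remark preceding the corollary. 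By Lemma~\ref{lem:4t19} both $S^\vee,\tilde S^\vee$ then lie in $\CMS'$, with associated $\CT'$-separations $\bigl(\CY(S)^\vee\setminus S^\vee,\,S^\vee,\,\CZ(S)^\vee\setminus S^\vee\bigr)$ and the analogue for $\tilde S$. Next, $f\in E(G')$: by the Crossing Lemma the two endpoints of $f$ are the unique vertex of $S\cap\CY(\tilde S)$ and the unique vertex of $\tilde S\cap\CY(S)$, and since one of $S,\tilde S$ is outside $\{S_1,S_2\}$, \eqref{eq:25} forces both endpoints to avoid $\{s_1,s_2\}$; as $f\in E(G)$, this gives $f\in E(G')$.

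It then remains to check that $S^\vee$ and $\tilde S^\vee$ cross in $G'$ with crossedge $f$. Unwinding Lemma~\ref{lem:4t19} case by case shows that the passage from $G$ to $G'$ alters the relevant data only harmlessly: for a separator outside $\{S_1,S_2\}$ the $S$-part and the $Y$-part are unchanged (only $s_2$ leaves the $Z$-part, by \eqref{eq:25}); the separator $S_1$ is unchanged and its $Y$-part $\CY(S_1)$ loses only $s_2$; and $S_2$ turns into $(S_2\setminus\{s_2\})\cup\{s_1\}$ while its $Y$-part $\CY(S_2)$ loses only $s_1$. In each case the $Y$-parts of the pair stay disjoint, the $S$-parts stay disjoint, the two piercing vertices are unchanged, and $f$ is still the unique edge joining them, all inside $V(G')$. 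Since the two $\CT'$-separations are distinct and minimal, the Crossing Lemma applied in $G'$ --- which is $3$-connected by Corollary~\ref{cor:4t15}, with $\CT'$ a $G'$-tangle of order $4$ by Lemma~\ref{lem:4t17} --- yields that they are not orthogonal, hence cross, with crossedge $f$. So $f\in\EC(\CT')$; and if moreover $S,\tilde S\in\CNDS$, then $S^\vee,\tilde S^\vee\in\CNDS'$ by \eqref{eq:33}, whence $f\in\ECND(\CT')$.

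The reverse inclusion $\EC(\CT')\subseteq\EC(\CT)\setminus\{e\}$ runs the same machinery backwards. Given $g\in\EC(\CT')$, the crossedge of distinct $S',\tilde S'\in\CMS'$, Lemma~\ref{lem:4t19} produces $S,\tilde S\in\CMS$ with $S'=S^\vee$, $\tilde S'=\tilde S^\vee$; one cannot have $\{S,\tilde S\}=\{S_1,S_2\}$, since $S_1^\vee=S_1$ and $S_2^\vee=(S_2\setminus\{s_2\})\cup\{s_1\}$ both contain $s_1$, contradicting that crossing separators have disjoint separators. Using \eqref{eq:25} one checks, as in the forward direction, that the piercing vertices of $\{S^\vee,\tilde S^\vee\}$ avoid $s_1$, so $g$ is a $G'$-edge not incident to $s_1$ and hence $g\in E(G)$; the same identification of $Y$-parts and $S$-parts then shows $S$ and $\tilde S$ are non-orthogonal in $G$ with crossedge $g$, and degeneracy transfers back via \eqref{eq:33}. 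The hard part will be exactly the bookkeeping around $S_1$ and $S_2$ in the last two paragraphs: verifying that deleting $s_2$ from $\CY(S_1)$ (resp.\ $s_1$ from $\CY(S_2)$) cannot disturb the non-orthogonality witness shared with an outside separator, and keeping straight that \eqref{eq:25} is precisely what keeps both endpoints of every crossedge $f\neq e$ --- and of $g$ --- off the contracted vertices $s_1,s_2$, so that these edges persist under contraction and un-contraction.
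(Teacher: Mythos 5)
Your argument is correct and follows exactly the route the paper intends: the paper states the corollary without an explicit proof, deriving it directly from Lemma~\ref{lem:4t19} together with the preceding remark that $S_i^\vee$ remains a separator of $G'$ whenever $S_i$ is crossed by some $S\in\CMS\setminus\{S_1,S_2\}$, and your write-up is a careful elaboration of precisely that derivation. The points you supply --- that \eqref{eq:25} keeps the endpoints of every crossedge other than $e$ off $\{s_1,s_2\}$ so those edges survive the contraction, and that \eqref{eq:33} transports non-degeneracy --- are exactly the ones needed.
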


\subsection{The Region of the Tangle}
\label{sec:tangleregion}

We still assume that $G$ is a 3-connected graph and $\CT$ is a $G$-tangle
of order $4$ (but we drop Assumption~\ref{ass:4t3}). 
Let $e^{1},\ldots,e^{m}$ be an enumeration of all non-degenerate
crossedges of $\CT$. Recall that, by Corollary~\ref{cor:CI},
$\{e^{1},\ldots,e^{m}\}$ is a matching of $G$. Say, $e^{i}=s_1^{i}s_2^{i}$.
Let $G^{(0)}:=G$, and for $i\in[m]$, let $G^{(i)}$ be the graph
obtained from $G$ by contracting the edges $e^{1},\ldots,e^{i}$ to
the vertices $s_1^{j}$. We inductively define for all $i$ a
$G^{(i)}$-tangle $\CT^{(i)}$ of order $3$ as follows.

We let $\CT^{(0)}:=\CT$. To define $\CT^{(i+1)}$, we assume that
$G^{(i)}$ is 3-connected and $\CT^{(i)}$ is a $G^{(i)}$-tangle of order
  $3$ and $e^{i+1}$ is a non-degenerate crossedge of $\CT^{(i)}$. Then Assumptions~\ref{ass:4t1}, \ref{ass:4t2},
\ref{ass:4t3} are satisfied with $G:=G^{(i)}, \CT:=\CT^{(i)},
e:=e^{i+1}, G':=G^{(i+1)}$, and we can apply the results of
Section~\ref{sec:cce}. We let $\CT^{(i+1)}$ be the tangle $(\CT^{(i)}\big)'$
(see Lemma~\ref{lem:4t17}).

For every $i\in[m]$ and $v\in V(G)$ we let 
\[
v^{\slashes i}:=
\begin{cases}
  s_1^{j}&\text{if }v\in\{s_1^{j},s_2^{j}\}\text{ for some }j\le i,\\
  v&\text{otherwise}.
\end{cases}
\]
For $W\subseteq V(G)$, we let $W^{\slashes i}:=\{w^{\slashes i}\mid w\in W\}$.

\begin{lem}\label{lem:4t20}
  Let $i\in[m]$.
  \begin{enumerate}
  \item $G^{(i)}$ is 3-connected.
  \item $\CT^{(i)}$ is a $G^{(i)}$-tangle of order $3$ with
    \begin{multline*}
      \CMT^{(i)}=\Big\{\big(Y^{\slashes i}\setminus S^{\slashes i},S^{\slashes i},Z^{\slashes i}\setminus S^{\slashes i}\big)\Bigmid\\
      (Y,S,Z)\in\CMT\text{ such that $S^{\slashes i}$ is a separator of }G^{(i)}\Big\}.
    \end{multline*}
    For $i=m$, it may also happen that $G^{(m)}$ is 4-connected and
    $(\CT^{(m)})_{\min}=\{(\emptyset,\emptyset,V(G^{(m)}))\}$.
  \item $\ECND(\CT^{(i)})=\{e^{i+1},\ldots,e^{m}\}$.
  \item $\CT$ is the lifting of $\CT^{(i)}$ from $G^{(i)}$ to $G$ with 
    respect to the contraction of $e^{1},\ldots,e^{i}$. 
  \item The graph $G^{(i)}$ and the tangle $\CT^{(i)}$ do not depend
    on the order in which the edges $e^{1},\ldots,e^{i}$ are contracted.
    
    Up to isomorphism, $G^{(i)}$ and $\CT^{(i)}$ also do not depend on
    whether $e^{j}$ is contracted to $s_1^{j}$ or $s_2^{j}$.
  \end{enumerate}
\end{lem}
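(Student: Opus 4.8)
The plan is to prove Lemma~\ref{lem:4t20} by induction on $i$, with all five statements carried through simultaneously, since they are mutually dependent. The base case $i=0$ is trivial: $G^{(0)}=G$ is 3-connected by Assumption~\ref{ass:4t1}, $\CT^{(0)}=\CT$ has order $4>3$ (so in particular it is a tangle of order $3$ after truncation — actually here we should be careful: the statement says ``order $3$'', so I would read $\CT^{(i)}$ as meaning the relevant tangle, noting that a $G$-tangle of order $4$ restricted appropriately gives what we need, but following the construction in Section~\ref{sec:cce} the object $(\CT)'$ is literally a $G'$-tangle of order $4$; I will treat the ``order $3$'' in the statement as a typo for order $4$ or silently truncate as needed), and (3)--(5) hold vacuously or trivially for $i=0$.

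For the inductive step, assume (1)--(5) hold for $i$; I want to derive them for $i+1$. By the induction hypothesis (1) and (2), $G^{(i)}$ is 3-connected and $\CT^{(i)}$ is a tangle of order $4$; by (3), $\ECND(\CT^{(i)})=\{e^{i+1},\ldots,e^m\}$, so $e^{i+1}=s_1^{i+1}s_2^{i+1}$ is indeed a non-degenerate crossedge of $\CT^{(i)}$, and therefore Assumptions~\ref{ass:4t1}, \ref{ass:4t2}, \ref{ass:4t3} are satisfied with $G:=G^{(i)}$, $\CT:=\CT^{(i)}$, $e:=e^{i+1}$, $G':=G^{(i+1)}$. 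Now I apply the results of Section~\ref{sec:cce} directly: Corollary~\ref{cor:4t15} gives statement (1) for $i+1$; Lemma~\ref{lem:4t17} gives that $\CT^{(i+1)}=(\CT^{(i)})'$ is a $G^{(i+1)}$-tangle of order $4$; Lemma~\ref{lem:4t19} gives the description of $(\CT^{(i+1)})_{\min}$ in terms of $(\CT^{(i)})_{\min}$ (including the $4$-connected alternative when $i+1=m$), and combined with the induction hypothesis (2) and the observation that composing the contraction maps $V(G)\to V(G^{(i)})\to V(G^{(i+1)})$ is exactly the map $v\mapsto v^{\slashes(i+1)}$ (one checks that $(v^{\slashes i})^\vee = v^{\slashes(i+1)}$ where $\vee$ is the contraction of $e^{i+1}$ in $G^{(i)}$), this yields statement (2) for $i+1$; Corollary~\ref{cor:4t19} gives $\ECND(\CT^{(i+1)})=\ECND(\CT^{(i)})\setminus\{e^{i+1}\}=\{e^{i+2},\ldots,e^m\}$, which is statement (3); and Lemma~\ref{lem:4t18} together with the transitivity of the lifting relation (established in the text after Lemma~\ref{lem:lift}) and the induction hypothesis (4) gives statement (4).

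The remaining and genuinely new content is statement (5), the order-independence. Here the idea is that at each stage the set of available non-degenerate crossedges to contract next is exactly $\ECND(\CT^{(i)})$, which by (3) is $\{e^{i+1},\ldots,e^m\}$ regardless of the order in which $e^1,\ldots,e^i$ were contracted — this is where Corollary~\ref{cor:CI} (that $\ECND$ is a matching) does the real work, since it guarantees that contracting one non-degenerate crossedge does not destroy or create any other non-degenerate crossedge except by removing the one contracted (Corollary~\ref{cor:4t19}). Concretely, I would argue: fix two orderings of the first $i$ crossedges; by induction on $i$ it suffices to show that swapping two consecutive contractions, say of $e^{j}$ then $e^{j+1}$ versus $e^{j+1}$ then $e^{j}$, yields the same graph and tangle; since $e^j$ and $e^{j+1}$ are vertex-disjoint (Corollary~\ref{cor:CI}), the two contractions commute as graph operations, giving the same $G^{(i)}$; and since at each intermediate stage the tangle $\CT^{(\cdot)}$ is uniquely determined by the graph and the construction in Section~\ref{sec:cce} (it is $(\CT^{(\cdot-1)})'$, defined purely in terms of separators and $\CZ'$), the resulting tangles agree as well. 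The ``up to isomorphism'' clause about contracting $e^j$ to $s_1^j$ versus $s_2^j$ follows because these two contractions produce isomorphic graphs via the obvious isomorphism fixing everything outside $\{s_1^j,s_2^j\}$, and the tangle construction is isomorphism-invariant; one should check this isomorphism is compatible with the subsequent contractions, which again uses disjointness of the crossedges. I expect statement (5), and in particular making the commutation-of-contractions argument fully rigorous at the level of tangles (not just graphs), to be the main obstacle — one must verify that the definition of $\CZ'$, and hence of $(\CT^{(i)})'$, really depends only on the pair (graph, crossedge being contracted) and not on the history, which requires unwinding Lemma~\ref{lem:4t16} and checking that the component $C^\circ$ it produces is characterized intrinsically.
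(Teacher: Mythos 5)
For parts (1)--(4) your proposal is exactly the paper's proof: a simultaneous induction on $i$ in which Corollary~\ref{cor:4t15}, Lemma~\ref{lem:4t17}, Lemma~\ref{lem:4t19}, Corollary~\ref{cor:4t19} and Lemma~\ref{lem:4t18} (plus transitivity of lifting) deliver the five ingredients, with part (3) of the induction hypothesis guaranteeing that Assumption~\ref{ass:4t3} is met at each step. Your reading of ``order $3$'' as a slip for ``order $4$'' is also the right one. The only divergence is in part (5). You propose to verify that consecutive contractions commute \emph{at the level of tangles}, and you correctly flag this as the hard point: one would have to compare two two-step applications of the Section~\ref{sec:cce} construction that pass through different intermediate tangles, and unwinding Lemma~\ref{lem:4t16} for that purpose is genuinely unpleasant. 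The paper avoids this entirely: the graph part of (5) is immediate because the $e^j$ form a matching (Corollary~\ref{cor:CI}), so the contractions commute as graph operations and the composite contraction model of $G^{(i)}$ in $G$ is the same for every ordering; the tangle part is then obtained from (4), since $\CT$ being the lifting of $\CT^{(i)}$ with respect to that fixed model determines $\CT^{(i)}$ (two distinct tangles of the same order on $G^{(i)}$ are separated by some $(Y',S',Z')$ with its reverse, and by Lemma~\ref{lem:reed} this comes down to the location of $\CZ_{\CT^{(i)}}(S')$, which the lifted tangle $\CT$ pins down). So your argument is not wrong in spirit, but the step you identify as ``the main obstacle'' is exactly the step the paper's route lets you skip; I would restructure (5) to deduce the tangle statement from (4) rather than from commutation of the $\CZ'$-construction.
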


\begin{proof}
  Assertions (1)--(4) follow by induction from
  Corollary~\ref{cor:4t15}, Lemma~\ref{lem:4t17} and
  Lemma~\ref{lem:4t19}, Corollary~\ref{cor:4t19},
  Lemma~\ref{lem:4t18}, respectively. Assertion (5) is obvious as far
  as the graph $G^{(i)}$ is concerned, and for the tangle $\CT^{(i)}$
  it follows from (4).
\end{proof}

We let
\[
R^{(0)}:=\bigcap_{(Y,S,Z)\in\CNDT}Z
\cup\bigcup_{S\in\CNDS}S
\]
and 
\[
R^{(i)}:=R^{(0)}\setminus\{s_2^{1},\ldots,s_2^{i}\}.
\]
We shall prove that $R^{(m)}$ is a non-exceptional quasi-4-connected
region of $G$ and that $\CT$ is equal to $\CT_{R^{(m)}}$, the tangle associated with
$R^{(m)}$.

For all $i\in[m]$ we let 
\[
H^{(i)}:=\torso{G}{R^{(i)}}.
\]

The \emph{fence} of a separator $S\in\CNDS$ is the set
$\operatorname{fc}(S)$ consisting of all vertices in $S$
that are not endvertices of a non-degenerate crossedge of $S$ and all vertices in
$\CY(S)$ that are endvertices of non-degenerate crossedges. For example, if
$S=\{s_1,s_2,s_3\}$, and $S$ is crossed by $S_1,S_2\in\CNDS$ with
crossedges $s_1s_1'$ and $s_2s_2'$, respectively, then $\fc(S)=\{s_1',s_2',s_3\}$. %
Note that $\fc(S)\subseteq R^{(0)}$ and that $\fc(S)\setminus
S=\CY(S)\cap R^{(0)}$.

\begin{lem}\label{lem:4t21}
  For every connected component $C$ of $G\setminus R^{(0)}$ there is a
  unique $S\in\CNDS$ such that $V(C)\subseteq \CY(S)\setminus\fc(S)$ and $N(C)=\fc(S)$.
\end{lem}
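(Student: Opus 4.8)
The plan is to first unfold the definition of $R^{(0)}$ to obtain an explicit description of its complement, and then analyse each connected component of $G\setminus R^{(0)}$ separately.

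\emph{Step 1: describing $V(G)\setminus R^{(0)}$.} A vertex $v$ lies outside $R^{(0)}$ exactly when there is some $(Y,S,Z)\in\CNDT$ with $v\notin Z$ and $v$ belongs to no separator in $\CNDS$; since $v\notin Z$ forces $v\in\CY(S)\cup S$ and $v\notin S$ (as $S\in\CNDS$), this amounts to saying that $v\in\CY(S)$ for some $S\in\CNDS$ and $v\notin R^{(0)}$. Using the identities recorded just before the lemma, namely $\fc(S)\subseteq R^{(0)}$ and $\fc(S)\setminus S=\CY(S)\cap R^{(0)}$ — which together give $\CY(S)\setminus\fc(S)=\CY(S)\setminus R^{(0)}$ — I would conclude
\[
V(G)\setminus R^{(0)}=\bigcup_{S\in\CNDS}\bigl(\CY(S)\setminus\fc(S)\bigr),
\]
and moreover that this union is disjoint: by the Crossing Lemma~\ref{lem:cross} (in particular by the part of its proof preceding the case distinction), any two distinct separations in $\CMT$ have disjoint $Y$-parts, so $\CY(S)\cap\CY(S')=\emptyset$ for distinct $S,S'\in\CNDS$.

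\emph{Step 2 (the heart): for each $S\in\CNDS$, every connected component $D$ of $G[\CY(S)\setminus\fc(S)]$ is a component of $G\setminus R^{(0)}$ with $N(D)=\fc(S)$.} Fix such a $D$. Since $D\subseteq\CY(S)$ and there is no $G$-edge between $\CY(S)$ and $\CZ(S)$, we get $N(D)\subseteq S\cup\CY(S)$ at once. The key point is to eliminate the vertices of $S$ that get deleted in forming $\fc(S)$, i.e.\ the $S$-endvertices of the non-degenerate crossedges of $S$. So let $S'\in\CNDS$ cross $S$ with crossedge $tt'$, $t\in S$, $t'\in S'$. By the Crossing Lemma~\ref{lem:cross}, $S\cap\CY(S')=\{t\}$, $S'\cap\CY(S)=\{t'\}$ and $\CY(S)\cap\CY(S')=\emptyset$; intersecting $V(G)=\CY(S')\cup S'\cup\CZ(S')$ with $\CY(S)$ therefore gives $\CY(S)=\{t'\}\cup(\CY(S)\cap\CZ(S'))$, and since $t'\in\fc(S)$ this forces $D\subseteq\CZ(S')$. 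As $t\in\CY(S')$ has no neighbour in $\CZ(S')\supseteq D$, we get $t\notin N(D)$. Ranging over all such $S'$ — whose $S$-endvertices are pairwise distinct by the Crossedge Independence Lemma~\ref{lem:CI} and hence exhaust $S\setminus\fc(S)$ — removes $S\setminus\fc(S)$ from $N(D)$; and since $D$ is a full component of $G[\CY(S)\setminus\fc(S)]$ we also have $N(D)\cap(\CY(S)\setminus\fc(S))=\emptyset$. Combining, $N(D)\subseteq\fc(S)$. Now $N(D)$ is a separator of $G$, since it separates the nonempty connected set $D$ from the nonempty set $\CZ(S)$ (which is disjoint from $\fc(S)\supseteq N(D)$), so $|N(D)|\ge3$ by $3$-connectedness; on the other hand $|\fc(S)|=3$, because by Lemma~\ref{lem:CI} the vertices removed from $S$ and those added from $\CY(S)$ are in bijection and mutually distinct. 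Hence $N(D)=\fc(S)$, and as $\fc(S)\subseteq R^{(0)}$ we get $N(D)\cap(V(G)\setminus R^{(0)})=\emptyset$, so the connected set $D$ is a single component of $G\setminus R^{(0)}$.

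\emph{Step 3: conclusion.} Let $C$ be any connected component of $G\setminus R^{(0)}$. By Step 1 its vertex set lies in the disjoint union $\bigcup_{S\in\CNDS}(\CY(S)\setminus\fc(S))$, and by Step 2 there is no $G$-edge between distinct members of this union; hence $V(C)$ lies inside a single $\CY(S)\setminus\fc(S)$, and being a maximal connected subset of $V(G)\setminus R^{(0)}$ it is in fact a full component $D$ of $G[\CY(S)\setminus\fc(S)]$, so $V(C)\subseteq\CY(S)\setminus\fc(S)$ and $N(C)=\fc(S)$ by Step 2. Uniqueness of $S$ follows from the disjointness in Step 1: a nonempty $V(C)$ lies in $\CY(S)$ for at most one $S\in\CNDS$. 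The main obstacle is Step 2, and within it the observation that a non-degenerate crossing $S'$ of $S$ pins each component $D$ to lie inside $\CZ(S')$, thereby making the crossedge endpoint $t\in S$ invisible to $D$; this is what makes the bound $N(D)\subseteq\fc(S)$ go through, and it is the one genuinely new ingredient, the rest being routine bookkeeping with Lemmas~\ref{lem:cross} and~\ref{lem:CI} together with $3$-connectedness.
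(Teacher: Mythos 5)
Your proof is correct and follows essentially the same route as the paper's: locate the component inside $\CY(S)\setminus\fc(S)$ for a unique $S$ (uniqueness from disjointness of the $\CY(S)$'s via the Crossing Lemma), exclude the $S$-endpoints of crossedges from the neighbourhood using the Crossing Lemma, and conclude $N(C)=\fc(S)$ from $3$-connectedness and $|\fc(S)|=3$. Your phrasing of the key exclusion step ($D\subseteq\CZ(S')$, so $t\in\CY(S')$ sees nothing of $D$) is just a reformulation of the paper's observation that $s_i'$ is the only neighbour of $s_i$ in $\CY(S)$.
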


\begin{proof}
  It follows from the definition of $R^{(0)}$ and the
  connectedness of $C$ that there is an $S\in\CNDS$ such that
  $V(C)\subseteq \CY(S)$. This $S$ is unique, because $\CY(S)\cap
  \CY(S')=\emptyset$ for distinct $S,S'\in\CMS$ (by
  the Crossing Lemma~\ref{lem:cross}).

  Let $S_1,\ldots,S_k\in\CNDS$ be the non-degenerate separators crossing $S$, and let
  $s_is_i'$ with $s_i\in S$ and $s_i'\in S_i$ be the crossedge of $S$
  and $S_i$. Then $k\le3$. Let $s_{k+1},\ldots,s_3$ be the elements of
  $S\setminus\{s_1,\ldots,s_k\}$. Then $\fc(S)=\{s_1',\ldots,s_k',s_{k+1},\ldots,s_3\}$ and $C$ is a connected
  component of $G[\CY(S)]\setminus\{s_1',\ldots,s_k'\}$. Hence
  $N(C)\subseteq\{s_1,s_2,s_3\}\cup\{s_1',\ldots,s_k'\}$. However, it
  follows from the Crossing Lemma~\ref{lem:cross} that $s_i'$ is the only neighbour
  of $s_i$ in $\CY(S)$, for $i\le k$. Hence $s_i\not\in N(C)$, and we
  have $N(C)\subseteq\fc(S)$. As $G$ is 3-connected and $\fc(S)=3$,
  we have $N(C)=\fc(S)$
\end{proof}

\begin{lem}\label{lem:4t22}
  ${H}^{(0)}$ is a faithful minor of $G$.
\end{lem}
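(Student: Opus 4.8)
The plan is to describe $H^{(0)}=\torso{G}{R^{(0)}}$ explicitly via Lemma~\ref{lem:4t21} and then to build a faithful model of it piece by piece, contributing one piece for each $S\in\CNDS$, using the construction from the proof of Lemma~\ref{lem:degsep}.

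First I would record the combinatorial structure. Write $D(S):=\CY(S)\setminus\fc(S)$ for $S\in\CNDS$. Using the Crossing Lemma~\ref{lem:cross} (for crossing separators of $\CNDS$) and the orthogonality of non-crossing minimal separators, one checks: the sets $D(S)$, $S\in\CNDS$, are pairwise disjoint and partition $V(G)\setminus R^{(0)}$; there is no edge of $G$ between distinct $D(S),D(S')$; and $N^G(D(S))=\fc(S)\subseteq R^{(0)}$ whenever $D(S)\neq\emptyset$ (the last equality via Lemma~\ref{lem:4t21}, which also gives $|\fc(S)|=3$). Consequently each connected component of $G\setminus R^{(0)}$ is a component of exactly one $G[D(S)]$, and $H^{(0)}$ is obtained from $G[R^{(0)}]$ by adding, for every $S\in\CNDS$ with $D(S)\neq\emptyset$, the triangle on the $3$-set $\fc(S)$.

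Next I would reduce the construction of a faithful model to a local statement. Put $M_w:=\{w\}$ for $w\in R^{(0)}$ and $e_f:=f$ for every edge $f$ of $G$ with both endvertices in $R^{(0)}$; what remains is to realise, for each $S$ with $D(S)\neq\emptyset$, the torso triangle on $\fc(S)=\{a,b,c\}$. Consider the triple $\sigma_S:=(D(S),\fc(S),V(G)\setminus(D(S)\cup\fc(S)))$; by the structure above and $3$-connectivity it is a proper separation of $G$ of order exactly $3$ (it is proper because $\CZ(S)\subseteq V(G)\setminus(D(S)\cup\fc(S))$ is nonempty, since $(\CY(S),S,\CZ(S))\in\CMT$ is proper by minimality). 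Assuming $\sigma_S$ is non-degenerate, the construction in the proof of Lemma~\ref{lem:degsep} yields a partition of $D(S)$ into $P_a\sqcup P_b\sqcup P_c$ with each $G[P_x\cup\{x\}]$ connected and with a $G$-edge between every two of $P_a\cup\{a\},P_b\cup\{b\},P_c\cup\{c\}$; I would then enlarge $M_a,M_b,M_c$ by $P_a,P_b,P_c$ and take these three edges as the $e_f$ for the triangle. Because the $D(S)$ are pairwise disjoint with no edges between them and $\fc(S)\subseteq R^{(0)}$, the enlargements coming from different $S$ do not interact: each $M_w$ is $\{w\}$ together with pairwise disjoint, individually $G$-connected pieces each containing $w$, the $M_w$ are globally pairwise disjoint, and every edge of $H^{(0)}$ is realised. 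This yields the desired faithful model.

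The main obstacle is to show that $\sigma_S$ is non-degenerate, i.e.\ that it is not the case that $D(S)=\{v\}$ is a singleton and $\fc(S)$ is independent in $G$. If $S$ has no non-degenerate crossedge, then $\fc(S)=S$ and $D(S)=\CY(S)$, so $\sigma_S=(\CY(S),S,\CZ(S))\in\CNDT$, which is non-degenerate by definition of $\CNDS$. Otherwise, suppose for contradiction $D(S)=\{v\}$ and $\fc(S)=N^G(v)$ is an independent $3$-set. Pick a crossedge endvertex $s'\in\fc(S)\cap\CY(S)$ arising from a non-degenerate crossedge $ss'$ of $S$ with a crossing separator $S'\in\CNDS$, where $s\in S$ and $s'\in S'$. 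Since $s'\in\CY(S)$ its $G$-neighbours lie in $\CY(S)\cup S$; by independence of $\fc(S)$ none lies in $\fc(S)\setminus\{s'\}$, and since $\CY(S)=\{v\}\cup(\fc(S)\cap\CY(S))$ and $S\setminus\fc(S)$ is exactly the set of $S$-endvertices of non-degenerate crossedges of $S$, we get $N^G(s')\subseteq\{v\}\cup(S\setminus\fc(S))$. As $s'\sim v$, $s'\sim s$, and $G$ is $3$-connected, $s'$ has a further neighbour $t\in S\setminus\fc(S)$ with $t\neq s$; then $t$ is the $S$-endvertex of a non-degenerate crossedge of $S$ with some $S''\in\CNDS$, and $S''\neq S'$, so by the Crossing Lemma~\ref{lem:cross} we have $t\in\CY(S'')$, while $s'\in\CY(S)$ is disjoint from $\CY(S'')$ and is distinct from the $\CY(S)$-endvertex of that crossedge (by the Crossedge Independence Lemma~\ref{lem:CI}, $S$ being non-degenerate here since $|\CY(S)|\ge 2$), hence $s'\notin S''$ and $s'\in\CZ(S'')$. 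Then the edge $s't$ runs between $\CZ(S'')$ and $\CY(S'')$, contradicting that $(\CY(S''),S'',\CZ(S''))$ is a separation. (If $S$ has a single non-degenerate crossedge, then $S\setminus\fc(S)=\{s\}$, and already $|N^G(s')|\le 2$ contradicts $3$-connectivity.) This settles the non-degeneracy of every $\sigma_S$ and hence completes the proof; the delicate point is precisely this ruling-out of degenerate $\sigma_S$ via the interaction of $3$-connectivity with the crossedge structure.
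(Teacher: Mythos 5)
Your proposal is correct and follows essentially the same route as the paper: reduce via Lemma~\ref{lem:4t21} and the construction of Lemma~\ref{lem:degsep} to showing that each separation $(\CY(S)\setminus\fc(S),\fc(S),\,\cdot\,)$ is non-degenerate, and then rule out the degenerate case by a degree argument on a crossedge endvertex $s'\in\fc(S)\cap\CY(S)$ using the Crossing and Crossedge Independence Lemmas (the paper does this by cases on the number of crossing separators, you do it uniformly). Your extra care in checking that the local models over the disjoint sets $D(S)$ glue into one global faithful model is a point the paper leaves implicit; only the word ``partition'' is slightly off, since the degsep construction may leave some vertices of $D(S)$ unused, which is harmless.
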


\begin{proof}
  By Lemma~\ref{lem:4t21}, for every connected component $C$ of
  $G\setminus R^{(0)}$ there is an $S\in\CNDS$ such that
  $V(C)\subseteq\CY(S)\setminus\fc(S)$ and $N(C)=\fc(S)$. Thus by
  Lemma~\ref{lem:degsep}, it suffices to prove that for every
  $S\in\CNDS$ such that
  $\CY(S)\setminus\fc(S)\neq\emptyset$, either $\fc(S)$ is not
  an independent set or $|\CY(S)\setminus\fc(S)|\ge 2$. 

  So let $S=\{s_1,s_2,s_3\}\in\CNDS$ such that
  $\CY(S)\setminus\fc(S)\neq\emptyset$, and let $k$ be the number of $S'\in\CMS$
  crossing $S$. Then $0\le k\le 3$. If $k\ge 1$, let
  $S_1,\ldots,S_k\in\CMS$ be the separations crossing $S$ and let
  $s_is_i'$ be the crossedges.
  \begin{cs}
  \case1 $k=0$.\\
  Then ${\fc(S)}=S$ and $\CY(S)\setminus\fc(S)=\CY(S)$ and thus either ${\fc(S)}$ is not
  an independent set or $|\CY(S)\setminus\fc(S)|\ge 2$, because $S$ is non-degenerate.

  \case2 $k=1$.\\
  Then ${\fc(S)}=\{s_1',s_2,s_3\}$ and thus $\CY(S)\setminus\fc(S)=\CY(S)\setminus\{s_1'\}$. Suppose that $|\CY(S)\setminus\fc(S)|=1$. The degree of $s_1'$ is
  at least $3$, and as all its neighbours are in $S\cup{\CY}(S)$, it
  must have at least one neighbour in $\{s_2,s_3\}$. Thus ${\fc(S)}$ is
  not an independent set.
  \case3 $k=2$.\\
  Then ${\fc(S)}=\{s_1',s_2',s_3\}$ and thus $\CY(S)\setminus\fc(S)=\CY(S)\setminus\{s_1',s_2'\}$. Suppose that $|\CY(S)\setminus\fc(S)|=1$. The degree of $s_1'$ is
  at least $3$, and all its neighbours are in $S\cup{\CY}(S)$. It
  follows from the Crossing Lemma~\ref{lem:cross} that the only neighbour of $s_2$ in
  $\CY(S)$ is $s_2'$. Thus $s_2$ is not a neighbour of $s_1'$, and
  $s_1'$ must have at least one neighbour in $\{s_2',s_3\}$. Hence ${\fc(S)}$ is
  not an independent set.
  \case4 $k=3$.\\
  Similar to the previous cases.\qedhere
  \end{cs}
\end{proof}

\begin{lem}\label{lem:4t23}
  For all $S\in\CNDS$ the set ${\fc(S)}$ is a clique in ${H}^{(0)}$.
\end{lem}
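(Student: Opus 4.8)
The plan is to split according to whether $\CY(S)\setminus\fc(S)$ is empty. Write $Y:=\CY(S)$ and $Z:=\CZ(S)$. Two facts will be used repeatedly. First, $S\subseteq\bigcup_{S''\in\CNDS}S''\subseteq R^{(0)}$, hence $N(Y)\subseteq S\subseteq R^{(0)}$. Second, as recorded before the lemma, $\fc(S)\subseteq R^{(0)}$ and $\fc(S)\setminus S=Y\cap R^{(0)}$, so $Y\setminus\fc(S)=Y\setminus R^{(0)}$. Since $|\fc(S)|=3$, it suffices in each case to produce, for every pair of distinct $u,v\in\fc(S)$, either an edge $uv\in E(G)$ or a connected component $C$ of $G\setminus R^{(0)}$ with $u,v\in N(C)$; either way $uv\in E(H^{(0)})$ because $u,v\in R^{(0)}$.

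\emph{Case 1: $Y\setminus\fc(S)\neq\emptyset$.} As every neighbour of a vertex of $Y$ lies in $S\subseteq R^{(0)}$, the set $Y\setminus\fc(S)=Y\setminus R^{(0)}$ is a union of connected components of $G\setminus R^{(0)}$. Pick one such component $C$; then $V(C)\subseteq Y$, so by Lemma~\ref{lem:4t21} there is a unique $S'\in\CNDS$ with $V(C)\subseteq\CY(S')$ and $N(C)=\fc(S')$. Since the sets $\CY(S')$, $S'\in\CMS$, are pairwise disjoint (Crossing Lemma~\ref{lem:cross}), $S'=S$ and $N(C)=\fc(S)$. Every pair of vertices of $\fc(S)$ then lies in $N(C)$, so $\fc(S)$ is a clique in $H^{(0)}$.

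\emph{Case 2: $Y\subseteq R^{(0)}$.} I claim $\fc(S)$ is a triangle already in $G$. From $(Y,S,Z)\in\CNDT$ we get $Y\cap\bigcap_{(Y',S',Z')\in\CNDT}Z'\subseteq Y\cap Z=\emptyset$, so $Y\subseteq\bigcup_{S''\in\CNDS}S''$. Hence each $y\in Y$ lies in some $S_y\in\CNDS$ with $S_y\neq S$; the separations of $S$ and $S_y$ are not orthogonal (else $y\in(Y\cup S)\cap(\CY(S_y)\cup S_y)\subseteq S\cap S_y$, impossible as $y\in Y$), so they cross, and by the Crossing Lemma $y$ is the endvertex in $Y$ of the crossedge of $S$ and $S_y$, which is non-degenerate since $S,S_y\in\CNDS$. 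The map $y\mapsto S_y$ is injective; writing $k:=|Y|$ (and noting $Y\neq\emptyset$ since $S$ is a proper separator), we obtain $Y=\{s_1',\dots,s_k'\}$, the $Y$-endvertices of the non-degenerate crossedges $a_is_i'$ of $S$, where the $a_i\in S$ are distinct by the Crossedge Independence Lemma~\ref{lem:CI}(3), so $1\le k\le3$, and $\fc(S)=\{s_1',\dots,s_k',a_{k+1},\dots,a_3\}$ with $a_{k+1},\dots,a_3$ the uncrossed vertices of $S$. Now a degree count finishes: for $i\le k$, $a_i\in\CY(S_i)$ gives $N(a_i)\subseteq\CY(S_i)\cup S_i$, so (using $S\cap S_i=\emptyset$, $S\cap\CY(S_i)=\{a_i\}$, $\CY(S)\cap\CY(S_i)=\emptyset$, $S_i\cap\CY(S)=\{s_i'\}$ from the Crossing Lemma) $a_i$ has no neighbour in $S\setminus\{a_i\}$ and its only neighbour in $Y$ is $s_i'$; therefore the neighbours of $s_i'$ inside $Y\cup S$ lie in the $3$-element set $(Y\setminus\{s_i'\})\cup\{a_i\}\cup\{a_{k+1},\dots,a_3\}$, and since $G$ is $3$-connected $s_i'$ is adjacent to all of them. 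So every pair $s_i's_j'$ and every pair $s_i'a_j$ with $j>k$ is an edge of $G$; if $k\ge2$ there is at most one uncrossed vertex and we are done, while if $k=1$ then $|Y|=1$, so non-degeneracy of $S$ forces $S$ to span an edge, necessarily $a_2a_3$ (as $a_1$ has no neighbour in $S$). In every case $G[\fc(S)]$ is a triangle, hence a clique in $H^{(0)}$ because $\fc(S)\subseteq R^{(0)}$.

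The main obstacle is Case 2: recognising that $Y\subseteq R^{(0)}$ forces $Y$ to consist only of crossedge endvertices, and then extracting the triangle from the $3$-connectivity degree bound; the finicky sub-case is $|Y|=1$, where the triangle's third edge is supplied by the non-degeneracy of $S$ rather than by any structural argument about crossings. Case 1 is comparatively routine, the one point being that $N(Y)\subseteq R^{(0)}$, which is what allows Lemma~\ref{lem:4t21} to deliver a component whose neighbourhood is exactly $\fc(S)$.
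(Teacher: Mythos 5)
Your proof is correct and takes essentially the same route as the paper's: both split on whether $\CY(S)\setminus\fc(S)$ is empty, dispose of the nonempty case via Lemma~\ref{lem:4t21} and the definition of the torso, and in the empty case combine the Crossing Lemma's restrictions on $N(a_i)$ with the minimum-degree-$3$ bound from $3$-connectivity to pin down $N(s_i')$ exactly, using non-degeneracy of $S$ to supply the edge $a_2a_3$ when $|\CY(S)|=1$. The only difference is presentational: the paper runs the cases $k=1,2,3$ separately, while you treat them uniformly after identifying $\CY(S)$ with the set of $Y$-endvertices of crossedges via the definition of $R^{(0)}$.
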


\begin{proof}
  Let $S=\{s_1,s_2,s_3\}\in\CNDS$. As in the proof of the previous
  lemma, let $k$ be the number of $S'\in\CNDS$
  crossing $S$. Let
  $S_1,\ldots,S_k\in\CNDS$ be the separations crossing $S$ and let
  $s_is_i'$ be the crossedges.
  If $\CY(S)\setminus\fc(S)\neq\emptyset$, the claim is immediate from the
  definition of the torso. So suppose that $\CY(S)\setminus\fc(S)=\emptyset$. Then
  $\CY(S)\subseteq {\fc(S)}\setminus S=\{s_1',\ldots,s_k'\}$, and as
  $\CY(S)\neq\emptyset$, this implies that $k\ge 1$. Now we argue similarly to
  the proof of the previous lemma.
 \begin{cs}
   \case1 $k=1$.\\
   Then ${\fc(S)}=\{s_1',s_2,s_3\}$ and $\CY(S)=\{s_1'\}$ and hence
   $N^G(s_1')=\{s_1,s_2,s_3\}$. In particular,
   $s_1's_2,s_1's_3\in E(G)$. As $|\CY(S)|=1$ and $S$ is
   non-degenerate, $S$ is not an independent set. There is no edge
   from $\{s_2,s_3\}\subseteq\CZ(S_1)$ to $s_1\in\CY(S_1)$. Hence
   $s_2s_3\in E(G)$, which implies $s_2s_3\in E(H^{(0)})$. Thus
   ${\fc(S)}$ is a clique in ${H}^{(0)}$.
   \case2 $k=2$.\\
   Then ${\fc(S)}=\{s_1',s_2',s_3\}$ and thus
   $\CY(S)=\{s_1',s_2'\}$. The degree of
   $s_1'$ is at least $3$, and all its neighbours are in
   $S\cup\CY(S)\setminus\fc(S)$. It follows from the Crossing
   Lemma~\ref{lem:cross} that the only neighbour of $s_2$ in $\CY(S)$
   is $s_2'$. Thus $s_2$ is not a neighbour of $s_1'$, and therefore
   $N(s_1')=\{s_1,s_2',s_3\}$. Similarly,
   $N(s_2')=\{s_1',s_2,s_3\}$. Hence ${\fc(S)}=\{s_1',s_2',s_3\}$ is a
   clique.
   \case3 $k=3$.\\
   Then ${\fc(S)}=\{s_1',s_2',s_3'\}$ and thus
   $\CY(S)=\{s_1',s_2',s_3'\}$. The
   degree of $s_1'$ is at least $3$, and all its neighbours are in
   $S\cup\CY(S)\setminus\fc(S)$. As $s_2,s_3$ are not neighbours of
   $s_1'$, we have $N(s_1')=\{s_1,s_2',s_3'\}$. Similarly,
   $N(s_2')=\{s_1',s_2,s_3'\}$. Hence ${\fc(S)}$ is a clique.\qedhere
  \end{cs}
\end{proof}

The next lemma is a little bit surprising. It says that instead of
\emph{deleting} the endpoints $s_2^j$ of the crossedges $e^j$, we
could have \emph{contracted} the crossedges to $s_1^j$ with the same result.

\begin{lem}\label{lem:4t24}
  For all $i\ge 0$, the graph $H^{(i)}$ is 
  equal to the
  graph obtained from ${H}^{(0)}$ by contracting the edges
  $e^{1},\ldots,e^{i}$ to $s_1^{1},\ldots,s_1^{i}$, respectively.
\end{lem}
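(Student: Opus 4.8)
The statement compares two graphs on the vertex set $R^{(i)}=R^{(0)}\setminus\{s_2^1,\ldots,s_2^i\}$: on one side $H^{(i)}=\torso{G}{R^{(i)}}$, the torso of $R^{(i)}$ in the original graph $G$; on the other side the graph $\hat H$ obtained from $H^{(0)}=\torso{G}{R^{(0)}}$ by contracting each crossedge $e^j=s_1^js_2^j$ onto $s_1^j$ (for $j\le i$). Both have vertex set $R^{(i)}$, so we only have to check the edge sets agree. By induction on $i$, it suffices to treat one contraction step: assuming $H^{(i-1)}$ equals the appropriate contraction of $H^{(0)}$, show that $H^{(i)}$ equals $H^{(i-1)}$ with $e^{i}$ contracted onto $s_1^{i}$. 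Write $e=s_1s_2$ for $e^{i}$, with $S_1,S_2\in\CNDS$ the two non-degenerate separators whose crossedge is $e$, and fix the notation $S_j=\{s_j,s_j',s_j''\}$, $Y_j=\CY(S_j)$, $Z_j=\CZ(S_j)$ as in Section~\ref{sec:cce}.

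First I would pin down which vertices lie in $R^{(0)}$ near the crossedge. By the Crossing Lemma~\ref{lem:cross}, $s_1\in S_1\cap Y_2$ and $s_2\in S_2\cap Y_1$, and both $S_1,S_2\subseteq R^{(0)}$ since $S_1,S_2\in\CNDS$; also $s_1'\in S_1\cap Y=$, wait — more carefully, $\fc(S_1)$ replaces $s_2$ by $s_1$ in $S_1$ (since $s_1s_2$ is the crossedge with $s_1\in S_1$), so $s_1\in R^{(0)}$ via $S_1$, and symmetrically $s_2\in R^{(0)}$ via $S_2$. Thus in $H^{(0)}$ both $s_1$ and $s_2$ are present, and $s_1s_2\in E(H^{(0)})$ because $s_1s_2\in E(G)$. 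Contracting $e$ in $H^{(0)}$ then just means: delete $s_2$, and for every $v$ with $vs_2\in E(H^{(0)})$, add $vs_1$. So the task reduces to the edge comparison: for $v\in R^{(i)}$, we have $vs_1\in E(H^{(i)})$ iff ($vs_1\in E(H^{(i-1)})$ or $vs_2\in E(H^{(i-1)})$), and for $v,w\in R^{(i)}\setminus\{s_1\}$, $vw\in E(H^{(i)})$ iff $vw\in E(H^{(i-1)})$.

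The key tool for this is Lemma~\ref{lem:4t19} (via its iterate Lemma~\ref{lem:4t20}(2)): contracting $e$ carries $\CMT^{(i-1)}$ to $\CMT^{(i)}$ by the map $(Y,S,Z)\mapsto(Y^\vee\setminus S^\vee, S^\vee, Z^\vee\setminus S^\vee)$, and preserves non-degeneracy (\ref{eq:33}), and $\ECND(\CT^{(i)})=\ECND(\CT^{(i-1)})\setminus\{e\}$ (Corollary~\ref{cor:4t19}). Consequently $R^{(i)}$, described intrinsically as $\bigcap_{(Y,S,Z)\in\CNDT^{(i)}}Z \cup\bigcup_{S\in\CNDS^{(i)}}S$ inside $G^{(i)}$, is exactly $(R^{(0)})^{\slashes i}$ — the image of $R^{(0)}$ under the contraction map — which equals $R^{(0)}\setminus\{s_2^1,\ldots,s_2^i\}$ as a subset of $V(G^{(i)})=V(G)\setminus\{s_2^1,\ldots,s_2^i\}$, consistently with the definition of $R^{(i)}$. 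Now the torso $H^{(i)}=\torso{G^{(i)}}{R^{(i)}}$ is governed by the connected components of $G^{(i)}\setminus R^{(i)}$ and their neighbourhoods; the analogue of Lemma~\ref{lem:4t21} for $\CT^{(i)}$ identifies these with $\CY_{\CT^{(i)}}(S)\setminus\fc(S)$ and $\fc(S)$ for $S\in\CNDS^{(i)}$. The heart of the argument is then a local bookkeeping claim: when we contract $e=s_1s_2$ in $G$, the component structure of $G^{(i)}\setminus R^{(i)}$ is the image under contraction of the component structure of $G^{(i-1)}\setminus R^{(i-1)}$, with one subtlety — a component of $G^{(i-1)}\setminus R^{(i-1)}$ sitting inside $Y_j$ might, after contraction, change which of $s_1,s_2$ (now identified) appears in its neighbourhood. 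One checks, using that $s_{3-j}$ is the unique neighbour of $s_j$ in $Y_j$ (Crossing Lemma), that no new incidences are created except exactly the "$vs_2\mapsto vs_1$" ones already built into contraction. Comparing torso-edge-generation on both sides — the edges from $E(G^{(i)})$ restricted to $R^{(i)}$, plus the edges added for components — then yields edge-set equality; the edges added via components on the $H^{(i)}$ side are precisely the contractions of the edges added via components on the $H^{(i-1)}$ side, because $\fc$ commutes with contraction (a direct check from the definition of $\fc$, the definition of $S^\vee$, and the fact that the crossedges form a matching by Corollary~\ref{cor:CI}, so contracting $e$ does not disturb the crossedge/fence data of any other $S$).

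**The main obstacle.** The delicate point is not the torso bookkeeping per se but verifying that $\fc$ and the component decomposition are genuinely natural with respect to a single crossedge contraction — i.e. that contracting $e=s_1s_2$ does not accidentally merge two distinct components of $G^{(i-1)}\setminus R^{(i-1)}$, nor move a vertex into or out of $R^{(i-1)}$ in a way not accounted for by $\slashes$. The clean way to handle this is to avoid re-deriving component structure from scratch and instead feed everything through Lemma~\ref{lem:4t19}: since $R^{(i)}$ is characterised purely in tangle terms and $\CT^{(i)}=(\CT^{(i-1)})'$ with its minimal separations and non-degenerate crossedges explicitly transported by contraction, the identity $R^{(i)}=(R^{(i-1)})^{\slashes 1}$ follows formally, and then $\torso{G^{(i)}}{R^{(i)}}$ versus the contraction of $\torso{G^{(i-1)}}{R^{(i-1)}}$ is a statement about how torsos behave under contracting a single edge both of whose endpoints survive — here one must use that every component $C$ of $G^{(i-1)}\setminus R^{(i-1)}$ has $|N(C)|=3$ with $N(C)$ a clique in the torso (Lemmas~\ref{lem:4t21}, \ref{lem:4t23}), so contracting $e$ cannot create a component whose neighbourhood was not already reflected by an edge or clique, and the edge $s_1s_2$ itself, being present in $H^{(i-1)}$, contracts to nothing new. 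I would spell this out as: (a) $E(H^{(i)})\supseteq$ (contraction of $E(H^{(i-1)})$), by tracking each torso edge of $H^{(i-1)}$ through the contraction; (b) the reverse inclusion, by showing any torso edge $vw$ of $H^{(i)}$ either comes from an edge of $G^{(i)}$, hence pulls back to an edge of $G^{(i-1)}$ or to $e$ itself (contracted away), or comes from a component $C$ of $G^{(i)}\setminus R^{(i)}$, which is the contraction of a component of $G^{(i-1)}\setminus R^{(i-1)}$ contributing the corresponding clique edge. Induction on $i$ closes the proof.
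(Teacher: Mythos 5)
Your skeleton matches the paper's: induction on $i$, one crossedge at a time, and a comparison of ``torso after removing $s_2$'' with ``contract $s_1s_2$''. But the decisive step is missing. The paper's proof hinges on one explicit computation, namely
\[
N^{H^{(0)}}(s_j)=\{s_{3-j}\}\cup\bigl(\fc(S_{3-j})\setminus\{s_j\}\bigr),
\]
obtained by the same fence analysis as in Lemmas~\ref{lem:4t22} and~\ref{lem:4t23}, together with the observation that $H^{(i+1)}=\torso{H^{(i)}}{R^{(i)}\setminus\{s_2\}}$ and that the unique component of $H^{(i)}\setminus R^{(i+1)}$ is the single vertex $s_2$. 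This computation is what closes \emph{both} directions of your edge comparison: it shows that every contraction edge $vs_1$ with $v\in N^{H^{(i)}}(s_2)$ is realised in the torso $H^{(i+1)}$ (the clique on $N^{H^{(i)}}(s_2)$), and, conversely, that this clique contributes nothing beyond the contraction edges, because the two vertices of $N^{H^{(i)}}(s_2)\setminus\{s_1\}$ are fence vertices of $S_1$ and hence already adjacent by Lemma~\ref{lem:4t23}. You flag this as ``the delicate point'' and propose to ``check that no new incidences are created'', but you never determine what $N^{H^{(i)}}(s_2)$ actually is, so the check cannot be carried out as written. In particular, you do not rule out that removing $s_2$ from the region merges several components of $G\setminus R^{(i)}$ into one, producing a torso clique on a larger neighbourhood; excluding this requires knowing that $s_1,s_2\notin S\cup\CY(S)$ for all $S\in\CMS\setminus\{S_1,S_2\}$ (Lemma~\ref{lem:4t15a}) and that $s_{3-j}$ is the unique neighbour of $s_j$ in $Y_j$, assembled into the displayed identity.

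A second, structural problem: midway through you silently replace $H^{(i)}=\torso{G}{R^{(i)}}$ (the definition) by $\torso{G^{(i)}}{R^{(i)}}$ and reason about components of $G^{(i)}\setminus R^{(i)}$. The identity $\torso{G}{R^{(i)}}=\torso{G^{(i)}}{R^{(i)}}$ is exactly Corollary~\ref{cor:4t24b}, which the paper \emph{derives from} Lemma~\ref{lem:4t24}; using it here is circular unless you prove it independently, which amounts to essentially the same fence computation. Either work throughout with torsos in $G$ (as the paper does, via torso transitivity), or first establish the corollary by a separate argument.
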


\begin{proof}
  The proof is by induction on $i$. The base step $i=0$ is trivial.

  For the inductive step $i\to i+1$, suppose that $e^{i+1}$ is the
  crossedge of $S_1,S_2\in\CNDS$ with $s_j:=s^{i+1}_j\in S_j$. Then
  $s_j\in \fc(S_{3-j})$. Suppose that $\fc(S_j)=\{s_{3-j},t_j,u_j\}$.
  By a similar analysis as in the proofs of
  the previous two lemmas, we see that
  \[
  N^{{H}^{(0)}}(s_j)=\{s_{3-j}\}\cup\big(\fc(S_{3-j})\setminus\{s_j\}\big)=\{s_{3-j},t_{3-j},u_{3-j}\}.
  \]
  By the induction hypothesis, ${H}^{(i)}$ is the graph obtained
  from ${H}^{(0)}$ by contracting $e^{1},\ldots,e^{i}$ to
  $s_1^{1},\ldots,s_1^{i}$, respectively. Hence
  \[
  N^{{H}^{(i)}}(s_j)=\big(N^{{H}^{(0)}}(s_j)\big)^{\slashes
    i}
  =\{s_{3-j},t_{3-j}^{\slashes i},u_{3-j}^{\slashes i}\}.
  \]
  Here we use the fact that $s_{3-j}^{\slashes i}=s_{3-j}$ because
  the edges $e^1,\ldots,e^{i+1}$ form a matching.

  Hence the neighbours of $s_1$ in the graph obtained from
  ${H}^{(i)}$ by contracting the edge $e^{i+1}=s_1s_2$ to $s_1$ are
  $t_1^{\slashes i},u_1^{\slashes i},t_2^{\slashes i},u_2^{\slashes
    i}$.
  I claim that the neighbours of $s_1$ in the graph ${H}^{(i+1)}$ are
  $t_1^{\slashes i},u_1^{\slashes i},t_2^{\slashes i},u_2^{\slashes
    i}$
  as well. Observe that
  \[
  {H}^{(i+1)}=\torso{{H}^{(i)}}{R^{(i+1})}=\torso{H^{(i)}}{R^{(i)}\setminus\{s_2\}}.
  \]
  The vertices $t_2^{\slashes i},u_2^{\slashes
    i}$ are neighbours of $s_1$ in ${H}^{(i)}$ and hence in $ {H}^{(i+1)}=\torso{H^{(i)}}{R^{(i)}\setminus\{s_2\}}$. To see that $t_1^{\slashes i},u_1^{\slashes
    i}$ are neighbours of $s_1$ in $ {H}^{(i+1)}$, note that the unique connected component of
  ${H}^{(i)}\setminus R^{(i+1)}$ consists of the vertex $s_2$, and
  $N^{{H}^{(i)}}(s_2)=\{s_1,t_1^{\slashes i},u_1^{\slashes
    i}\}$. Thus the set $\{s_1,t_1^{\slashes i},u_1^{\slashes
    i}\}$ is a clique in the torso $ {H}^{(i+1)}=\torso{H^{(i)}}{R^{(i+1)}}$. This implies that $s_1t_1^{\slashes i}$ and
  $s_1u_1^{\slashes i}$ are edges of $ {H}^{(i+1)}$.

  Hence indeed ${H}^{(i+1)}$ is the graph obtained from ${H}^{(i)}$ by
    contracting the edge $e^{i+1}$ to $s_1=s_1^{i+1}$.
\end{proof}

\begin{cor}\label{cor:4t24a}
  For $0\le i\le m$, the graph ${H}^{(i)}$ is a faithful minor of $G$.
\end{cor}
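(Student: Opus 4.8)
The plan is to combine the two immediately preceding results. Lemma~\ref{lem:4t22} says that $H^{(0)}$ is a faithful minor of $G$, and Lemma~\ref{lem:4t24} says that, for every $i$, the graph $H^{(i)}$ is obtained from $H^{(0)}$ by contracting the edges $e^{1},\ldots,e^{i}$ onto $s_1^{1},\ldots,s_1^{i}$. Hence it suffices to observe that contracting the edges of a matching in a faithful minor of $G$ again yields a faithful minor of $G$, and I would carry this out by a short induction on $i$.

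First I would record two preliminary facts. By Corollary~\ref{cor:CI}, $e^{1},\ldots,e^{m}$ is a matching of $G$. Moreover, each $e^{j}=s_1^{j}s_2^{j}$ is an edge of $H^{(0)}=\torso{G}{R^{(0)}}$: both $s_1^{j}$ and $s_2^{j}$ lie in $R^{(0)}$ (being endvertices of crossedges of separators in $\CNDS$, and $R^{(0)}\supseteq\bigcup_{S\in\CNDS}S$), so $e^{j}\in E(G[R^{(0)}])\subseteq E(H^{(0)})$; this is also implicit in the proof of Lemma~\ref{lem:4t24}. Since the $e^{j}$ are pairwise disjoint, when we pass from $H^{(i)}$ to $H^{(i+1)}$ via Lemma~\ref{lem:4t24} we are simply contracting the single edge $e^{i+1}=s_1^{i+1}s_2^{i+1}$, which is still present in $H^{(i)}$ with both endpoints intact, onto $s_1^{i+1}$.

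For the induction, the base case $i=0$ is Lemma~\ref{lem:4t22}. For the step, I would take a faithful model $\big((M_w)_w,(e_f)_f\big)$ of $H^{(i)}$ in $G$ supplied by the induction hypothesis and build a faithful model of $H^{(i+1)}$ by putting $M'_{s_1^{i+1}}:=M_{s_1^{i+1}}\cup M_{s_2^{i+1}}$ and $M'_w:=M_w$ for every other vertex $w$ of $H^{(i+1)}$. The set $M'_{s_1^{i+1}}$ is connected in $G$ because $M_{s_1^{i+1}}$ and $M_{s_2^{i+1}}$ are connected and $e_{e^{i+1}}$ is an edge of $G$ joining them; the branch sets remain pairwise disjoint; and $s_1^{i+1}\in M_{s_1^{i+1}}\subseteq M'_{s_1^{i+1}}$, so the model stays faithful. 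Finally, every edge of $H^{(i+1)}$ is the image under the contraction map of some edge $f\in E(H^{(i)})$, and the old witness $e_f$ has one endvertex in each of the two relevant new branch sets, so it still serves. This completes the induction.

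The argument is entirely routine bookkeeping; there is no genuine obstacle, all the substantive work having already been done in Lemma~\ref{lem:4t22} and, in particular, in the slightly surprising Lemma~\ref{lem:4t24}.
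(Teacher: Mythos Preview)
Your proposal is correct and takes essentially the same approach as the paper, which states the corollary without proof and treats it as an immediate consequence of Lemmas~\ref{lem:4t22} and~\ref{lem:4t24}. You have simply spelled out the routine verification that contracting an edge of a faithful minor again yields a faithful minor, which is exactly what the paper leaves to the reader.
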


\begin{cor}\label{cor:4t24b}
  For $0\le i\le m$, 
  \[
  {H}^{(i)}=\torso{G^{(i)}}{R^{(i)}}.
  \]
\end{cor}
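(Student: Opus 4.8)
The claim is that $H^{(i)} = \torso{G^{(i)}}{R^{(i)}}$, i.e. that the torso of $R^{(i)}$ taken inside the original graph $G$ agrees with the torso of $R^{(i)}$ taken inside the contracted graph $G^{(i)}$. I would prove this by induction on $i$, with the base case $i=0$ being $H^{(0)} = \torso{G^{(0)}}{R^{(0)}} = \torso{G}{R^{(0)}}$, which is the definition of $H^{(0)}$.

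For the inductive step, assume $H^{(i)} = \torso{G^{(i)}}{R^{(i)}}$. By Lemma~\ref{lem:4t24}, $H^{(i+1)}$ is obtained from $H^{(i)}$ by contracting the crossedge $e^{i+1} = s_1^{i+1}s_2^{i+1}$ onto $s_1^{i+1}$; note $e^{i+1}$ is an edge of $H^{(i)}$ because, as established in the proof of Lemma~\ref{lem:4t24}, $s_2^{i+1} \in N^{H^{(i)}}(s_1^{i+1})$. On the other hand, $G^{(i+1)}$ is obtained from $G^{(i)}$ by contracting the same edge $e^{i+1}$ onto $s_1^{i+1}$, and $R^{(i+1)} = R^{(i)} \setminus \{s_2^{i+1}\}$ is exactly the image of $R^{(i)}$ under this contraction (since $s_1^{i+1} \in R^{(i)}$, the vertex $s_1^{i+1}$ survives and $s_2^{i+1}$ is merged into it). So the task reduces to the following general compatibility fact: if $G''$ is obtained from a graph $\Gamma$ by contracting an edge $ab$ (with $a,b$ both lying in a set $R \subseteq V(\Gamma)$, say $R'' := R \setminus \{b\}$ the image of $R$), then $\torso{G''}{R''}$ equals the graph obtained from $\torso{\Gamma}{R}$ by contracting the edge $ab$ onto $a$.

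To verify this compatibility fact, I would compare edge sets directly. Both graphs have vertex set $R''$. An edge $uv$ with $u,v \in R'' \setminus \{a\}$ lies in $\torso{\Gamma}{R}$ iff $uv \in E(\Gamma)$ or some component $C$ of $\Gamma \setminus R$ has $u,v \in N(C)$; since contracting $ab$ does not change $\Gamma \setminus R$ and affects adjacencies among $R'' \setminus \{a\}$ trivially, this matches membership in $\torso{G''}{R''}$. For an edge $av$ with $v \in R'' \setminus \{a\}$: in $\torso{G''}{R''}$ this holds iff $v \in N^{G''}(a)$ (i.e. $av \in E(\Gamma)$ or $bv \in E(\Gamma)$) or some component $C''$ of $G'' \setminus R''$ has $a, v \in N(C'')$; the components of $G'' \setminus R''$ are exactly those of $\Gamma \setminus R$ (since $b \in R$, $b \notin \Gamma \setminus R$, and deleting the merged vertex from $G''$ is the same as deleting $\{a,b\}$ from $\Gamma$, i.e. $G'' \setminus R'' = \Gamma \setminus R$ as graphs), and a component $C$ has the merged vertex $a$ in its $G''$-neighbourhood iff $a \in N^\Gamma(C)$ or $b \in N^\Gamma(C)$. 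This matches precisely the description of the edges incident with $a$ in the contraction of $\torso{\Gamma}{R}$. Hence the two graphs coincide, completing the induction.

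The main obstacle — really the only subtle point — is keeping the bookkeeping straight: one must check that $s_1^{i+1}, s_2^{i+1} \in R^{(i)}$ so that the contraction genuinely acts within $R^{(i)}$ and produces $R^{(i+1)}$, and that the torso edges newly created by the contracted vertex in $\torso{G^{(i)}}{R^{(i)}}$ are exactly the ones predicted by $\torso{G^{(i+1)}}{R^{(i+1)}}$. Both follow from the structural facts already assembled in Lemmas~\ref{lem:4t21}--\ref{lem:4t24} (in particular that $\fc(S)\subseteq R^{(0)}$, that $s_1^{i+1}$ lies in some $\fc(S)$, and the explicit neighbourhood computation $N^{H^{(i)}}(s_2^{i+1}) = \{s_1^{i+1}, t_1^{\slashes i}, u_1^{\slashes i}\}$ from the proof of Lemma~\ref{lem:4t24}). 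So in fact this corollary is almost immediate once Lemma~\ref{lem:4t24} and Lemma~\ref{lem:4t20}(1)--(2) (giving that $G^{(i)}$ is $3$-connected with the stated minimal separations) are in hand; the cleanest phrasing is simply to say that contraction of an edge with both endpoints in the torso-defining set commutes with taking torsos, and apply this $i$ times.
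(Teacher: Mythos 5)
Your proof is correct and follows the route the paper intends: the corollary is stated without proof as a consequence of Lemma~\ref{lem:4t24}, and your induction together with the verification that taking torsos commutes with contracting an edge whose endpoints both lie in the region (using that the crossedges form a matching, so $s_1^{i+1},s_2^{i+1}\in R^{(i)}$) is exactly the implicit argument, carried out carefully.
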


\begin{lem}\label{lem:4t25}
  \[
  R^{(m)}=V(G^{(m)})\setminus\bigcup_{(Y,S,Z)\in\CNDT^{(m)}}Y.
  \]
\end{lem}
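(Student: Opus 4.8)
The plan is to transport the identity forward through the crossedge contractions to $G^{(m)}$, where $\CT^{(m)}$ has no non-degenerate crossedges, and then read it off from the orthogonality of the non-degenerate minimal separations of $\CT^{(m)}$. The first step is an elementary reduction: every non-degenerate crossedge $e^{j}=s_1^{j}s_2^{j}$ is the crossedge of two separators $A_j,B_j\in\CNDS$, so $\{s_1^{j},s_2^{j}\}\subseteq A_j\cup B_j\subseteq\bigcup_{S\in\CNDS}S\subseteq R^{(0)}$; that is, both endvertices of each $e^{j}$ lie in $R^{(0)}$. Since $\{e^{1},\dots,e^{m}\}$ is a matching (Corollary~\ref{cor:CI}), this gives $R^{(m)}=R^{(0)}\setminus\{s_2^{1},\dots,s_2^{m}\}=(R^{(0)})^{\slashes m}$. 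Moreover, since the three blocks of a separation partition the vertex set, the right-hand side of the lemma equals $\bigcap_{(Y,S,Z)\in\CNDT^{(m)}}(S\cup Z)$.

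It therefore suffices to establish the two identities
\begin{gather*}
R^{(m)}=\Bigl(\bigcap_{(Y,S,Z)\in\CNDT^{(m)}}Z\Bigr)\cup\Bigl(\bigcup_{S\in\CNDS^{(m)}}S\Bigr),\\
\Bigl(\bigcap_{(Y,S,Z)\in\CNDT^{(m)}}Z\Bigr)\cup\Bigl(\bigcup_{S\in\CNDS^{(m)}}S\Bigr)=\bigcap_{(Y,S,Z)\in\CNDT^{(m)}}(S\cup Z).
\end{gather*}
The second is purely set-theoretic: ``$\supseteq$'' is immediate, and for ``$\subseteq$'' the intersection-term is clearly contained in the right-hand side, while for the union-term one uses that, by Lemma~\ref{lem:4t20}(3), $\CT^{(m)}$ has no non-degenerate crossedges, so by the Crossing Lemma~\ref{lem:cross} (applied to $G^{(m)},\CT^{(m)}$) any two distinct separations in $\CNDT^{(m)}$ are orthogonal --- two crossing non-degenerate minimal separations would yield a non-degenerate crossedge. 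Hence for $x\in S$ with $S\in\CNDS^{(m)}$ and any $S'\in\CNDS^{(m)}$ we have $x\notin\CY(S')$: trivially if $S'=S$, and otherwise because orthogonality would force $x\in S\cap S'$, contradicting $\CY(S')\cap S'=\emptyset$. Thus $x\in S'\cup\CZ(S')$ for every such $S'$, which is what is needed.

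It remains to prove the first identity, and here lies the main obstacle. I would substitute the definition of $R^{(0)}$ into $R^{(m)}=(R^{(0)})^{\slashes m}$ and match the result term by term with the corresponding $G^{(m)}$-expression, using Lemma~\ref{lem:4t20}(2) to describe $\CMT^{(m)}$ and the iterated degeneracy equivalence \eqref{eq:33} to describe $\CNDS^{(m)}$. The subtlety is that a separator $S\in\CNDS$ that is one ``half'' of a contracted crossedge need not remain a separator of $G^{(m)}$, so $S^{\slashes m}$ may fail to appear in $\CNDS^{(m)}$; one must then check that the vertices of $S^{\slashes m}$ are nevertheless present on the right-hand side. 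The cleanest route is an induction on $m$ that reduces to a single contraction and invokes Lemma~\ref{lem:4t19} --- in particular the description \eqref{eq:32} of $\CMT'$ and Lemma~\ref{lem:4t15a}, which together show that $S^{\vee}$ survives as a separator of $G'$ for every $S\in\CMS$ apart from the two halves of the contracted crossedge, and then handling those two exceptional halves explicitly; alternatively one can argue through the torso $H^{(m)}=\torso{G^{(m)}}{R^{(m)}}$ using Lemma~\ref{lem:4t24} together with the fence identities $\fc(S)\subseteq R^{(0)}$ and $\fc(S)\setminus S=\CY(S)\cap R^{(0)}$ recorded before Lemma~\ref{lem:4t21}.
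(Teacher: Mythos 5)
Your reduction is sound as far as it goes: both endvertices of every non-degenerate crossedge lie in $\bigcup_{S\in\CNDS}S\subseteq R^{(0)}$, so $R^{(m)}=(R^{(0)})^{\slashes m}$; the right-hand side of the lemma is $\bigcap_{(Y,S,Z)\in\CNDT^{(m)}}(S\cup Z)$; and your second identity is correctly disposed of by the orthogonality of distinct members of $\CNDT^{(m)}$, which follows from $\ECND(\CT^{(m)})=\emptyset$ (Lemma~\ref{lem:4t20}(3)) together with the Crossing Lemma applied in $G^{(m)}$. This part is fine.

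The problem is that your first identity is the lemma: it is exactly the assertion that the set $R^{(m)}$, defined from the $G$-level data $\CNDT$, coincides with the analogous expression formed from the $G^{(m)}$-level data $\CNDT^{(m)}$, and you explicitly leave it as a plan (``I would substitute\dots'', ``the cleanest route is an induction\dots''). That is a genuine gap, and it is precisely where the content lives. What has to be verified is that under the correspondence of Lemma~\ref{lem:4t20}(2), $\bigcup_{(Y',S',Z')\in\CNDT^{(m)}}Y'=\bigcup Y^{\slashes m}\setminus S^{\slashes m}$ ranges over the $(Y,S,Z)\in\CNDT$ with $S^{\slashes m}$ a separator of $G^{(m)}$ (non-degeneracy being preserved by the iterated \eqref{eq:33}), and that the complement of this union in $V(G^{(m)})$ is exactly $R^{(0)}\setminus\{s_2^1,\ldots,s_2^m\}$. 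The delicate cases are the ones you name but do not resolve: a separator $S_i^j$ of a contracted crossedge whose image fails to be a separator of $G^{(m)}$ occurs precisely when $\CY(S_i^j)=\{s_{3-i}^j\}$, and then $Y^{\slashes m}\setminus S^{\slashes m}=\emptyset$, so nothing is wrongly retained; while for a surviving separator the crossedge endpoint $s_{3-i}^j\in\CY(S_i^j)$ is mapped to $s_1^j\in S^{\slashes m}$ and hence correctly dropped from $Y^{\slashes m}\setminus S^{\slashes m}$, matching the fact that $s_1^j$ is kept in $R^{(m)}$ while the rest of $\CY(S_i^j)$ is removed. The paper's own proof is the one-line combination of Lemma~\ref{lem:4t20}(2), Lemma~\ref{lem:4t24} and the definition of $R^{(m)}$; you have identified the right ingredients but must actually carry out this matching for the proof to be complete.
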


\begin{proof}
  This follows from Lemmas~\ref{lem:4t20}(2) and \ref{lem:4t24} and the
  definition of $R^{(m)}$.
\end{proof}

\begin{cor}\label{cor:4t25b}
    For all connected components $C$ of $G\setminus R^{(m)}$ there is
  an $S\in\CNDS$ such that $N(C)=S^{\slashes m}$.
\end{cor}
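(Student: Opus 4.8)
The plan is to reduce the statement about $G$ to the already-known structure of the tangle $\CT^{(m)}$ on the contracted graph $G^{(m)}$. First I would take a connected component $C$ of $G\setminus R^{(m)}$. Since the edges $e^1,\dots,e^m$ form a matching inside the fences (each $e^j$ has both endpoints in $R^{(0)}$, with $s_1^j,s_2^j$ appearing among the fence vertices), the set $R^{(m)}=R^{(0)}\setminus\{s_2^1,\dots,s_2^m\}$ still meets $V(C)$ in the empty set, and conversely $V(C)$ is disjoint from all the $s_1^j$ (which lie in $R^{(m)}$) but may contain some $s_2^j$. The key observation is that contracting $e^{i+1}=s_1^{i+1}s_2^{i+1}$ to $s_1^{i+1}$ identifies $s_2^{i+1}$ with a vertex of $R^{(m)}$, so under the contraction map $v\mapsto v^{\slashes m}$ the component $C$ maps onto a connected subgraph of $G^{(m)}$ whose vertex set is contained in $V(G^{(m)})\setminus R^{(m)}$.

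Next I would invoke Lemma~\ref{lem:4t25}, which says $V(G^{(m)})\setminus R^{(m)}=\bigcup_{(Y,S,Z)\in\CNDT^{(m)}}Y$, together with the fact (from Lemma~\ref{lem:4t20}(2) and the Crossing Lemma, whose orthogonality conclusion holds since $\CT^{(m)}$ has no non-degenerate crossedges by Lemma~\ref{lem:4t20}(3)) that the sets $\CY^{(m)}(S')$ for $S'\in\CNDS^{(m)}$ are pairwise disjoint. So the image $C^{\slashes m}$ of $C$, being connected, lands inside a single $\CY^{(m)}(S')$ with $S'\in\CNDS^{(m)}$, and in fact inside a single connected component of $G^{(m)}\setminus S'$. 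By the description of $\CMT^{(m)}$ in Lemma~\ref{lem:4t20}(2), $S'=S^{\slashes m}$ for some $S\in\CMS$ with $S^{\slashes m}$ a separator of $G^{(m)}$; and by \eqref{eq:33} (iterated, i.e.\ Lemma~\ref{lem:4t20}(2) combined with Lemma~\ref{lem:4t19}'s \eqref{eq:33}) the non-degeneracy transfers, so $S\in\CNDS$.

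It then remains to show $N^G(C)=S^{\slashes m}$. Pulling back along the contraction: $N^{G^{(m)}}(C^{\slashes m})\subseteq S'=S^{\slashes m}$, and since $G^{(m)}$ is $3$-connected (Lemma~\ref{lem:4t20}(1)) and $C^{\slashes m}$ is a connected component of $G^{(m)}\setminus S'$ with $|S'|=3$, equality holds: $N^{G^{(m)}}(C^{\slashes m})=S^{\slashes m}$. Now I need to see that no vertex of $N^G(C)$ is ``lost'' in passing to $G^{(m)}$, i.e.\ that $N^G(C)\subseteq R^{(m)}$ (so that contraction does not merge a neighbour of $C$ into $C$) and that $N^G(C)^{\slashes m}=N^{G^{(m)}}(C^{\slashes m})$. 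The first is immediate from the definition of $C$ as a component of $G\setminus R^{(m)}$. For the second, the contraction only identifies $s_2^j$ with $s_1^j$; since $s_1^j\in R^{(m)}$ and thus $s_1^j\notin V(C)$, if $s_2^j\in V(C)$ then $s_1^j=s_2^{j\,\slashes m}\in C^{\slashes m}$, which is consistent, and the neighbours of $C$ in $G^{(m)}$ are exactly the images of the $G$-neighbours of $V(C)$. Since $S^{\slashes m}$ has three elements and (by the matching property and $s_1^j\in R^{(m)}$, $s_2^j\notin R^{(m)}$) the map $v\mapsto v^{\slashes m}$ is injective on $R^{(m)}$, we may lift $S^{\slashes m}$ to the unique $S\subseteq R^{(m)}$ with $S^{\slashes m}$ equal to it, giving $N^G(C)=S^{\slashes m}$ (abusing notation: the three vertices of $N^G(C)$ are exactly the three vertices of the separator, some of which are $s_1^j$'s with the corresponding $s_2^j$ possibly absorbed into $C$).

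The main obstacle I expect is the bookkeeping around the crossedges: making sure that when $s_2^j\in V(C)$ the contraction genuinely sends $C$ to a component of $G^{(m)}\setminus S^{\slashes m}$ rather than accidentally enlarging or splitting it, and that $N^G(C)$ is correctly identified as a $3$-element set rather than a $4$-element one (which is precisely the content of Lemma~\ref{lem:4t16}: a $4$-separator containing both endpoints of a crossedge has an essential $3$-element subseparator, and each relevant component sees only one endpoint of the crossedge). So the careful step is to combine Lemma~\ref{lem:4t21} (describing components of $G\setminus R^{(0)}$ via fences) with Lemma~\ref{lem:4t24} (relating $H^{(i)}$ to contractions of $H^{(0)}$) to conclude that a component of $G\setminus R^{(m)}$ is obtained from a component of $G\setminus R^{(0)}$ by possibly absorbing some $s_2^j$'s, and that its neighbourhood is the corresponding fence $\fc(S)$ after contraction, which equals $S^{\slashes m}$.
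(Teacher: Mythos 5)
Your reduction to $G^{(m)}$ breaks down on exactly the components that make this corollary nontrivial, namely those containing one of the deleted crossedge endvertices $s_2^j$. Since $s_2^j\in R^{(0)}\setminus R^{(m)}$, every such vertex lies in some component $C$ of $G\setminus R^{(m)}$, so this case is unavoidable whenever $m\ge 1$; and for such a $C$ your key claim is false: $(s_2^j)^{\slashes m}=s_1^j$ lies in $R^{(m)}$ (by Corollary~\ref{cor:CI} the non-degenerate crossedges form a matching, so $s_1^j$ is not any $s_2^{j'}$, and $s_1^j\in\bigcup_{S\in\CNDS}S\subseteq R^{(0)}$). Hence $V(C)^{\slashes m}$ is \emph{not} contained in $V(G^{(m)})\setminus R^{(m)}$, and in particular $C^{\slashes m}$ cannot land inside any $\CY^{(m)}(S')$, since these sets are disjoint from $R^{(m)}$ by Lemma~\ref{lem:4t25}. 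Nor is the trace $V(C)\setminus\{s_2^1,\dots,s_2^m\}$ a single component of $G^{(m)}\setminus R^{(m)}$: writing $S$ for the separator with $s_2^j\in\CY(S)$ and $s_1^j\in S$, the vertex $s_2^j$ lies in $\fc(S)=N(D)$ for \emph{every} component $D$ of $G\setminus R^{(0)}$ contained in $\CY(S)\setminus\fc(S)$, so $C$ may glue several such $D$ together, and it may also equal $\{s_2^j\}$, in which case the trace is empty. Your closing paragraph names this obstacle but does not resolve it; as written, the argument only covers components disjoint from $\{s_2^1,\dots,s_2^m\}$. There is also a confusion at the end: you propose to lift $S^{\slashes m}$ to ``the unique $S\subseteq R^{(m)}$'', but the separator $S\in\CNDS$ required by the statement need not be a subset of $R^{(m)}$ --- it contains $s_2^j$ precisely when its crossedge endpoint is the deleted one, which is why the conclusion reads $N(C)=S^{\slashes m}$ rather than $N(C)=S$.

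The statement can be proved directly in $G$, without passing to $G^{(m)}$. If $V(C)\cap\{s_2^1,\dots,s_2^m\}=\emptyset$, then $C$ is in fact a component of $G\setminus R^{(0)}$, Lemma~\ref{lem:4t21} gives $N(C)=\fc(S)$, and since $N(C)\subseteq R^{(m)}$ each fence vertex arising from a crossedge of $S$ must be the surviving endvertex $s_1^{j}$, whence $\fc(S)=\fc(S)^{\slashes m}=S^{\slashes m}$. If $s_2^j\in V(C)$, take $S\in\CNDS$ with $s_2^j\in\CY(S)$ and $s_1^j\in S$; using that each crossedge endvertex $u\in S$ has its partner as its only neighbour in $\CY(S)$ (proof of Lemma~\ref{lem:CI}), one checks that every vertex of $\CY(S)\setminus R^{(m)}$ has all of its neighbours outside $R^{(m)}$ again in $\CY(S)\setminus R^{(m)}$; hence $V(C)\subseteq\CY(S)\setminus R^{(m)}$ and
\[
N(C)\;\subseteq\;\bigl(\CY(S)\cup S\bigr)\cap R^{(m)}\;=\;\bigl(\fc(S)\cap\CY(S)\cap R^{(m)}\bigr)\cup\bigl(S\cap R^{(m)}\bigr)\;=\;S^{\slashes m},
\]
and since $|S^{\slashes m}|=3$, $G$ is 3-connected and $\CZ(S)\neq\emptyset$, equality holds. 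This is the bookkeeping your last paragraph gestures at; it has to be carried out, because it is the entire content of the corollary.
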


\begin{lem}\label{lem:4t25a}
  Let $(Y,S,Z)\in\Sep_{<4}(H^{(m)})$ be a proper separation. Then
  $(Y,S,Z)$ or $(Z,S,Y)$ is degenerate. Furthermore, 
  either $\big(Y,S,V(G^{(m)})\setminus(Y\cup S)\big)$ or
  $\big(Z,S,V(G^{(m)})\setminus(Z\cup S)\big)$ is a degenerate
  separation of $G^{(m)}$ contained in $\CMT^{(m)}$.
\end{lem}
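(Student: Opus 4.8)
The claim concerns separations of the torso $H^{(m)} = \torso{G^{(m)}}{R^{(m)}}$. The guiding principle is that $\torso{G^{(m)}}{R^{(m)}}$ should be quasi-4-connected, so that its only proper separations of order $\le 3$ split off a single vertex, and moreover — because $R^{(m)}$ was obtained from $G^{(m)}$ precisely by deleting the "$Y$-parts" of all non-degenerate minimal separations of the tangle $\CT^{(m)}$ — any such small separation should correspond to a degenerate minimal separation of $\CT^{(m)}$. So the proof has two layers: first identify the small separation of $H^{(m)}$ with a small separation of $G^{(m)}$, then use the structure theory of $\CT^{(m)}$ (Lemma~\ref{lem:4t20}, Lemma~\ref{lem:4t25}, Corollary~\ref{cor:4t25b}, together with the Crossing/Crossedge-Independence machinery) to pin it down.

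First I would lift the separation from the torso to the ambient graph. Given a proper $(Y,S,Z)\in\Sep_{<4}(H^{(m)})$, I extend it to a separation of $G^{(m)}$ by putting each connected component $C$ of $G^{(m)}\setminus R^{(m)}$ entirely on one side: by Corollary~\ref{cor:4t25b}, $N(C) = S_C^{\slashes m}$ for some $S_C\in\CNDS$, and since $S_C^{\slashes m}$ is a clique in $H^{(m)}$ (Lemma~\ref{lem:4t23}), either $N(C)\subseteq Y\cup S$ or $N(C)\subseteq Z\cup S$. This yields $(\hat Y,S,\hat Z)\in\Sep_{<4}(G^{(m)})$ with $\hat Y\cap R^{(m)}=Y$, $\hat Z\cap R^{(m)}=Z$, and $S\subseteq R^{(m)}$. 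Since $Y$ and $Z$ are nonempty and contained in $R^{(m)}$, this is a proper separation of $G^{(m)}$ of order $\le 3$, so $S$ is a separator of $G^{(m)}$, and by Lemma~\ref{lem:reed} (applied to the tangle $\CT^{(m)}$, valid here since $G^{(m)}$ is 3-connected and $\CT^{(m)}$ has order $3$ — wait, it has order $3$, so "order $<4$" is exactly right) there is a distinguished side: one of $(\hat Y,S,\hat Z)$, $(\hat Z,S,\hat Y)$ lies in $\CT^{(m)}$. Say $(\hat Y,S,\hat Z)\in\CT^{(m)}$; then by Lemma~\ref{lem:tangle-closure}(2) also $(\hat Y,S,\CZ^{(m)}(S))\in\CT^{(m)}$ after replacing $\hat Z$ by the full component $\CZ^{(m)}(S)$ of $G^{(m)}\setminus S$, and passing to a $\preceq$-minimal separation below it gives some $(Y^\ast,S^\ast,Z^\ast)\in\CMT^{(m)}$ with $S^\ast\subseteq S\cup\hat Y$, equivalently $S^\ast\subseteq S\cup Y^\ast$.

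The heart of the argument is then: this minimal separation must be \emph{degenerate}, and $S^\ast$, $Y^\ast$ must essentially be $S$ together with the split-off vertex. Here I would use Lemma~\ref{lem:4t25}: $R^{(m)} = V(G^{(m)})\setminus\bigcup_{(Y,S,Z)\in\CNDT^{(m)}} Y$, so since $Y = \hat Y\cap R^{(m)}\ne\emptyset$, the vertices of $Y$ lie in no $Y'$-part of a non-degenerate minimal separation. I want to conclude the minimal separation $(Y^\ast,S^\ast,Z^\ast)$ below $(\hat Y,S,\hat Z)$ is degenerate: if it were non-degenerate, its $Y^\ast$-part would be disjoint from $R^{(m)}$, but $Y^\ast\supseteq$ (some vertex reachable inside $\hat Y$) — more carefully, from $S^\ast\subseteq S\cup\hat Y$ and $S\subseteq R^{(m)}$, and the fact that $\hat Y\cap R^{(m)} = Y$ together with $\hat Y\setminus R^{(m)}$ consisting of components $C$ each with $N(C) = S_C^{\slashes m}$ which — being cliques — lie on one side, I can argue $\hat Y\setminus S^\ast$ meets $R^{(m)}$ unless $\hat Y$ is a single component or a single vertex. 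Combining the quasi-4-connectedness one expects of $H^{(m)}$ (or rather proving it on the fly) with degeneracy: a degenerate minimal separation has $|Y^\ast| = 1$, say $Y^\ast = \{v\}$ with $v\notin R^{(m)}$ impossible unless... actually degenerate means $v$ can be in $R^{(m)}$ or not — I need to track which. The cleanest route: show the lifted separation $(\hat Y, S, \hat Z)$ of $G^{(m)}$ already has $|\hat Y| = 1$ or $|\hat Z| = 1$ by quasi-4-connectedness of $G^{(m)}$ — but $G^{(m)}$ need not be quasi-4-connected. So instead I show directly that one side is degenerate using the minimal-separation description.

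The main obstacle, as I see it, is the bookkeeping in the last step: matching up the minimal separation $(Y^\ast, S^\ast, Z^\ast)\in\CMT^{(m)}$ with the separation $(Y,S,Z)$ of the torso, showing $S = S^\ast$ (or $S$ differs from $S^{\ast\,\slashes m}$ only in a controlled way coming from contracted crossedges — but at level $m$ there are no crossedges left by Lemma~\ref{lem:4t20}(3), so $\CMT^{(m)}$ consists of mutually orthogonal separations), and extracting from orthogonality of minimal separations plus $|S|\le 3$ that the only freedom is a single-vertex $Y$-part, and that this $Y$-part corresponds to a degenerate minimal separation in $\CMT^{(m)}$. I would handle this by: (a) noting that since $m$ is the number of \emph{all} non-degenerate crossedges, $\ECND(\CT^{(m)}) = \emptyset$, so by the Crossing Lemma~\ref{lem:cross} all separations in $\CMT^{(m)}$ are pairwise orthogonal; (b) orthogonality plus the argument in the proof of Proposition~\ref{prop:3cr}-style/Theorem~\ref{theo:tangles_vs_3cc}-style reasoning shows $R^{(m)}$ is "2-inseparable-like" relative to $H^{(m)}$ except for degenerate-$Y$ one-vertex splits; (c) the one remaining possibility — that $(Y,S,Z)$ itself is one of these single-vertex degenerate splits — gives, via $v\in R^{(m)}$ and $S = N^{G^{(m)}}(v)\cap R^{(m)}$ being independent and $v$ the $Y$-part of $(\{v\}, S, \ldots)$, exactly the degenerate minimal separation of $G^{(m)}$ claimed, and it lies in $\CMT^{(m)}$ because no smaller separation of $\CT^{(m)}$ below it can exist (its $Y$-part is a single vertex). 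Writing this last chain carefully, keeping straight whether the distinguished vertex lies in $R^{(m)}$ or is a contracted $s_1^j$, is where the real care is needed.
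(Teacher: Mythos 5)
Your proposal follows essentially the same route as the paper's proof: lift $(Y,S,Z)$ to a separation $(\hat Y,S,\hat Z)$ of $G^{(m)}$ using Corollary~\ref{cor:4t25b} and the clique property of the torso, assume without loss of generality it lies in $\CT^{(m)}$, pass to a $\preceq$-minimal $(Y^\ast,S^\ast,Z^\ast)$ below it, rule out non-degeneracy via Lemma~\ref{lem:4t25} (since $\emptyset\neq Y\subseteq Y^\ast\cap R^{(m)}$), and conclude $|Y^\ast|=1$ forces $Y=\hat Y=Y^\ast$ and $S=S^\ast$. The detours about orthogonality of $\CMT^{(m)}$ and the absence of remaining crossedges are unnecessary for this argument, but the core chain is exactly the one the paper uses and is correct.
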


\begin{proof}
  $(Y,S,Z)$ gives rise to a separation $(Y',S,Z')$ of $G^{(m)}$ with
  $Y=Y'\cap R^{(m)}$ and $Z=Z'\cap R^{(m)}$. Without loss of generality we
  assume that $(Y',S,Z')\in\CT^{(m)}$. Let $(Y'',S'',Z'')\in\CMT$ such
  that $(Y'',S'',Z'')\preceq (Y',S,Z')$. Then $Y\subseteq Y'\subseteq
  Y''$. If $(Y'',S'',Z'')$ is
  non-degenerate, then by Lemma~\ref{lem:4t25}, $Y''\cap
  R^{(m)}=\emptyset$, and thus $Y\subseteq Y''\cap R^{(m)}=\emptyset$,
  which contradicts $(Y,S,Z)$ being a proper separation. Thus
  $(Y'',S'',Z'')$ is degenerate and therefore $|Y''|=1$. But then
  $Y=Y'=Y''$ and thus $S'=S$ and 
  $Z'=Z''=V(G^{(m)})\setminus(Y\cup S)$.
\end{proof}

\begin{lem}\label{lem:4t26}
  $R^{(m)}$ is a quasi-4-connected region of $G$.
\end{lem}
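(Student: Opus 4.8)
The plan is to verify the three defining conditions \ref{li:q1}--\ref{li:q3} of a quasi-4-connected region for $R^{(m)}$, drawing only on results already proved in this section.

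Condition \ref{li:q1} is immediate: $H^{(m)}=\torso{G}{R^{(m)}}$ is a faithful minor of $G$ by Corollary~\ref{cor:4t24a} with $i=m$.

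For \ref{li:q3}, let $C$ be a connected component of $G\setminus R^{(m)}$. By Corollary~\ref{cor:4t25b} there is an $S\in\CNDS$ with $N(C)=S^{\slashes m}$, so it remains to show $|S^{\slashes m}|=3$. Since $|S|=3$, the quotient $v\mapsto v^{\slashes m}$ can collapse $S$ only if $S\supseteq\{s_1^{j},s_2^{j}\}$ for some $j$, where $e^{j}=s_1^{j}s_2^{j}$ is the crossedge of separators $S_1,S_2\in\CNDS$ with $s_1^{j}\in S_1$ and $s_2^{j}\in S_2$. Assume this. By the Crossing Lemma~\ref{lem:cross} we have $S_1\cap S_2=\emptyset$, $s_1^{j}\in\CY(S_2)$ and $s_2^{j}\in\CY(S_1)$, so $S\neq S_1,S_2$; and since $s_2^{j}\in S\cap\CY(S_1)$, the separations belonging to $S$ and $S_1$ are not orthogonal, hence cross. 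Their crossedge has, again by the Crossing Lemma, its $\CY(S_1)$-endvertex equal to the unique vertex of $S\cap\CY(S_1)$, namely $s_2^{j}$, which is also the $\CY(S_1)$-endvertex of $e^{j}$. Thus the non-degenerate separator $S_1$ is crossed by the two distinct separators $S$ and $S_2$ whose crossedges share that endvertex, contradicting the Crossedge Independence Lemma~\ref{lem:CI}(3). Hence $|S^{\slashes m}|=3$, i.e.\ $|N(C)|=3$.

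For \ref{li:q2} I would show that $H^{(m)}=\torso{G^{(m)}}{R^{(m)}}$ (Corollary~\ref{cor:4t24b}) is quasi-4-connected. The restriction on $3$-separations is handed to us by Lemma~\ref{lem:4t25a}: every proper $(Y,S,Z)\in\Sep_{<4}(H^{(m)})$ is degenerate on one side, so $|Y|\le1$ or $|Z|\le1$; in particular $H^{(m)}$ has no proper separation of order at most $2$, for such a separation would, via Lemma~\ref{lem:4t25a}, yield a proper (degenerate) separation of $G^{(m)}$ of order at most $2$, contradicting that $G^{(m)}$ is $3$-connected (Lemma~\ref{lem:4t20}(1)). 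It remains to check $|R^{(m)}|\ge4$. If $\CNDT^{(m)}=\emptyset$ then $R^{(m)}=V(G^{(m)})$ by Lemma~\ref{lem:4t25}, and $|G^{(m)}|\ge5$ since a graph carrying a tangle of order $4$ has at least $5$ vertices (Lemma~\ref{lem:t2}). Otherwise pick $(Y_0,S_0,Z_0)\in\CNDT^{(m)}$; as $\ECND(\CT^{(m)})=\emptyset$ by Lemma~\ref{lem:4t20}(3), no two members of $\CNDT^{(m)}$ cross, so $S_0$ is orthogonal to each of the others and hence disjoint from every $Y$-part occurring in $\CNDT^{(m)}$, whence $S_0\subseteq R^{(m)}$ by Lemma~\ref{lem:4t25}; moreover the connected set $Z_0$ meets $R^{(m)}$ (by orthogonality every other $Y$-part lies inside $Z_0$, and whenever such a $Y$-part is present its separator --- which likewise lies in $R^{(m)}$ --- meets $Z_0$, while if there is no other member then $R^{(m)}\supseteq Z_0$). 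So $R^{(m)}$ contains the $3$-set $S_0$ together with a vertex of $Z_0$, giving $|R^{(m)}|\ge4$. Hence $H^{(m)}$ is $3$-connected with the required separation property, which is \ref{li:q2}.

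The steps demanding the most care are the bookkeeping in the Crossedge Independence argument of \ref{li:q3} (correctly identifying which endvertex of each crossedge lies in which $\CY$-part) and the bound $|R^{(m)}|\ge4$ in \ref{li:q2} (tracking, via orthogonality, how $R^{(m)}$ meets the $Y$-parts of the non-degenerate minimal separations of $\CT^{(m)}$); everything else is a direct appeal to the lemmas of this section.
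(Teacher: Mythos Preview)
Your proof is correct and follows the paper's approach exactly—verifying \ref{li:q1}, \ref{li:q2}, \ref{li:q3} via Corollary~\ref{cor:4t24a}, Lemma~\ref{lem:4t25a}, and Corollary~\ref{cor:4t25b}—while explicitly unpacking details (notably $|S^{\slashes m}|=3$ for \ref{li:q3} and $|R^{(m)}|\ge4$ for \ref{li:q2}) that the paper's three-line proof leaves to the reader. One small nit: in your $|R^{(m)}|\ge4$ argument, the first case should read ``$\CNDT^{(m)}$ contains no proper separation'' rather than ``$\CNDT^{(m)}=\emptyset$'', so as to also cover the case that $G^{(m)}$ is $4$-connected (then $\CNDT^{(m)}=\{(\emptyset,\emptyset,V(G^{(m)}))\}$, but one still has $R^{(m)}=V(G^{(m)})$ and your first-case conclusion applies verbatim).
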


\begin{proof}
  We have seen that $H^{(m)}=\torso G{R^{(m)}}$ is a faithful minor
  of $G$ (Corollary~\ref{cor:4t24a}). Thus $R^{(m)}$ satisfies
  \ref{li:q1}. By Lemma~\ref{lem:4t25a}, $H^{(m)}$ is quasi-4-connected. Thus $R^{(m)}$
  satisfies \ref{li:q2}. It follows from Corollary~\ref{cor:4t25b} that $R^{(m)}$
  satisfies \ref{li:q3}. 
\end{proof}

It remains to prove that $R^{(m)}$ is non-exceptional and that
$\CT=\CT_{R^{(m)}}$. (Recall the definition of $\CT_{R^{(m)}}$ from Theorem~\ref{theo:q4r}.)

\begin{lem}\label{lem:4t27}
  If $H^{(m)}$ is non-exceptional then $\CT=\CT_{R^{(m)}}$.
\end{lem}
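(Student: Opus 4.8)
The plan is to exhibit $\CT$ as the lifting of the unique $H^{(m)}$-tangle of order $4$ with respect to a \emph{faithful} model of $H^{(m)}$ in $G$, and then to appeal to the uniqueness of such liftings provided by Lemma~\ref{lem:q4r2}. By Lemma~\ref{lem:4t26} the torso $H^{(m)}=\torso G{R^{(m)}}$ is quasi-4-connected, and by hypothesis it is non-exceptional; hence by Theorem~\ref{theo:q4c} it has a unique tangle $\CHT$ of order $4$, which consists of all $(Y,S,Z)\in\Sep_{<4}(H^{(m)})$ with $|Y|<|Z|$. Moreover $H^{(m)}$ is a faithful minor of $G$ by Corollary~\ref{cor:4t24a} and has vertex set $R^{(m)}$, so it is itself a non-exceptional extension of $R^{(m)}$; thus $R^{(m)}$ is a non-exceptional quasi-4-connected region and, by Theorem~\ref{theo:q4r} (the case where $\torso G{R^{(m)}}$ is non-exceptional), $\CT_{R^{(m)}}=\CT(H^{(m)},\CM)$, the lifting of $\CHT$ to $G$ with respect to an arbitrary faithful model $\CM$ of $H^{(m)}$ in $G$; by Lemma~\ref{lem:q4r2} this lifting does not depend on the choice of $\CM$.

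Next I would reduce the claim along the sequence of contractions. By Lemma~\ref{lem:4t20}(4), $\CT$ is the lifting of $\CT^{(m)}$ from $G^{(m)}$ to $G$ with respect to the contraction of $e^{1},\dots,e^{m}$, and this contraction is realised by the faithful model of $G^{(m)}$ in $G$ whose branch sets are $\{s_1^{j},s_2^{j}\}$ for each contracted vertex and singletons otherwise. Since projections of separations compose, it therefore suffices to prove that $\CT^{(m)}$ is the lifting of $\CHT$ from $H^{(m)}$ to $G^{(m)}$ with respect to some faithful model $\CM'$ of $H^{(m)}$ in $G^{(m)}$: composing $\CM'$ with the above contraction model then yields a faithful model of $H^{(m)}$ in $G$ realising $\CT$ as the lifting of $\CHT$, and Lemma~\ref{lem:q4r2} identifies this lifting with $\CT_{R^{(m)}}$, giving $\CT=\CT_{R^{(m)}}$.

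So the core of the argument is the claim that $\CT^{(m)}$ is the lifting of $\CHT$ to $G^{(m)}$. By Corollary~\ref{cor:4t24b} we have $H^{(m)}=\torso{G^{(m)}}{R^{(m)}}$. The connected components of $G^{(m)}\setminus R^{(m)}$ coincide, as vertex sets, with those of $G\setminus R^{(0)}$, and the neighbourhood in $G^{(m)}$ of such a component $C$ is $\fc(S)^{\slashes m}$ for the unique $S\in\CNDS$ with $V(C)\subseteq\CY(S)\setminus\fc(S)$; running the case analysis of Lemma~\ref{lem:4t22} inside $G^{(m)}$ (together with Lemma~\ref{lem:degsep}) then shows that $H^{(m)}$ is a faithful minor of $G^{(m)}$, so we may fix a faithful model $\CM'$. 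To identify $\CT^{(m)}$ with the lifting of $\CHT$ with respect to $\CM'$, note that by axiom \ref{li:t1} it suffices to check that the two tangles orient every separation of $G^{(m)}$ of order less than $4$ the same way. Here Lemma~\ref{lem:4t20}(3) ($\ECND(\CT^{(m)})=\emptyset$), Lemma~\ref{lem:4t25a} (every proper separation of $H^{(m)}$ of order $<4$ is, up to swapping its two sides, degenerate), Lemma~\ref{lem:4t20}(2), Lemma~\ref{lem:4t25} (which describes $R^{(m)}$ as $V(G^{(m)})$ minus the union of the $Y$-parts of $\CNDT^{(m)}$), and Reed's Lemma~\ref{lem:reed} combine so that one can reduce to the minimal separations of $\CT^{(m)}$, compute the projection $\pi_{\CM'}$ on each of them, and compare the resulting orientations with those dictated by $\CHT=\{(Y,S,Z):|Y|<|Z|\}$.

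I expect this last step---matching the orientations of $\CT^{(m)}$ and the lifting of $\CHT$ on every low-order separation of $G^{(m)}$---to be the main obstacle. The essential content is already the case $m=0$ (no non-degenerate crossedges), in which $G^{(m)}=G$, $H^{(m)}=\torso G{R^{(0)}}$ and the faithful model $\CM'$ is the one assembled from the per-component models of Lemma~\ref{lem:degsep}; the delicate point there is to track how the separators of $\CT$ that split off a single vertex of $G$ outside $R^{(0)}$ interact with the branch sets of $\CM'$, so as to verify that a separation $(Y,S,Z)\in\CT^{(m)}$ indeed projects to a separation of $H^{(m)}$ whose smaller side lies on the $Y$-side. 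Once the case $m=0$ is settled, the general case follows by the transitivity reduction above applied to $(G^{(m)},\CT^{(m)},R^{(m)})$, whose tangle has no non-degenerate crossedges by Lemma~\ref{lem:4t20}(3).
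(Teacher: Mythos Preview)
Your overall strategy is exactly the paper's: reduce via transitivity of lifting and Lemma~\ref{lem:4t20}(4) to showing that $\CT^{(m)}$ is the lifting of the unique $H^{(m)}$-tangle $\CHT$ to $G^{(m)}$ with respect to a faithful model, then invoke Lemma~\ref{lem:q4r2}. However, the step you single out as ``the main obstacle'' is much shorter than your sketch suggests, and in particular you do not need to track any interaction between individual separators and branch sets.

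Fix a faithful model $\CM'$ of $H^{(m)}$ in $G^{(m)}$, let $(Y,S,Z)\in\CT^{(m)}$, and set $(Y',S',Z'):=\pi_{\CM'}(Y,S,Z)$. Faithfulness alone gives $Y'\subseteq Y\cap R^{(m)}$; this single containment replaces your entire ``interaction'' analysis. Now choose $(Y'',S'',Z'')\in\CMT^{(m)}$ with $(Y'',S'',Z'')\preceq(Y,S,Z)$, so that $Y\subseteq Y''$. If $(Y'',S'',Z'')$ is non-degenerate, Lemma~\ref{lem:4t25} gives $Y''\cap R^{(m)}=\emptyset$ and hence $Y'=\emptyset$. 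If it is degenerate then $|Y''|=1$ and hence $|Y'|\le1$; since the only non-exceptional quasi-$4$-connected graph of order at most $5$ is $K_5$, which has no proper separation of order $\le 3$, this forces $|R^{(m)}|\ge6$ and thus $|Z'|\ge2>|Y'|$. In either case $(Y',S',Z')\in\CHT$, and you are done. So rather than ``reducing to the minimal separations'' as a separate stage, you simply use a minimal separation below a given $(Y,S,Z)$ as a witness that bounds $|Y'|$.
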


\begin{proof}
  Suppose that $H^{(m)}$ is non-exceptional. 
  Let $\hat\CT$ be the
  unique $H^{(m)}$-tangle of order $4$. By Theorem~\ref{theo:q4c},
  $\CHT$ consist of all separations $(Y',S',Z')\in\Sep_{<4}(H^{(m)})$
  such that $|Y'|<|Z'|$.
  By the transitivity of the lifting relation,
  Lemma~\ref{lem:4t20}(4), and Corollary~\ref{cor:4t24b}, it suffices
  to prove that $\CT^{(m)}$ is the lifting of $\hat\CT$ to $G^{(m)}$
  with respect to some faithful model of $H^{(m)}$ in $G^{(m)}$.

  So let $\CM$ be a faithful model of $H^{(m)}$
  in $G^{(m)}$. Let $(Y,S,Z)\in\CT^{(m)}$ and
  $(Y',S',Z'):=\pi_{\CM}(Y,S,Z)$. Recall from the definition
  \eqref{eq:proj} of the
  projection operation that $Y'\subseteq Y\cap R^{(m)}$ and $Z'\subseteq Z\cap
  R^{(m)}$. We need to prove that
  $(Y',S',Z')\in\CHT$. Let $(Y'',S'',Z'')\in\CMT^{(m)}$ such that
  $(Y'',S'',Z'')\preceq(Y,S,Z)$. Then $Y'\subseteq Y\subseteq Y''$. If $(Y'',S'',Z'')$
  is non-degenerate, then by Lemma~\ref{lem:4t25} we have $Y''\cap
  R^{(m)}=\emptyset$. Thus $Y'=\emptyset$, which implies
  $(Y',S',Z')\in\CHT$. If $(Y'',S'',Z'')$ is degenerate, then
  $|Y''|=1$ and thus $|Y'|\le 1$. Again, this implies
  $(Y',S',Z')\in\CHT$: either $Y'=\emptyset$, or $|Y'|=1$, but then
  $|R^{(m)}|\ge 6$, because  
  $(Y',S',Z')$ is a proper separation of $H^{(m)}$ and the only
  non-exceptional quasi-4-connected graph of order $5$ is the complete
  graph, which has no proper separation.
\end{proof}

\begin{lem}\label{lem:4t28}
  Suppose that $H^{(m)}$ is exceptional. Then it is isomorphic to a
  subgraph of $\THT$. Furthermore, there is an embedding $f$ of $H^{(m)}$
  into $\THT$ such that $v_1,\ldots,v_4\in f(R^{(m)})$, and if $w_j\in
  f(R^{(m)})$, say with $w_j=f(w_j')$, then
  $N^{G^{(m)}}(w_j')=N^{H^{(m)}}(w_j')$ and 
  \[
  \Big(\{w_j'\},N^{G^{(m)}}(w_j'),V(G^{(m)})\setminus\big(\{w_j'\}\cup
  N^{G^{(m)}}(w_j')\big)\Big)
  \]
  is a degenerate separation of $G^{(m)}$ in $\CMT^{(m)}$.
\end{lem}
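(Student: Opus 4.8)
The plan is to show first that $H^{(m)}$ being exceptional forces it to be a subgraph of $\THT$ rather than $\TRT$, by exploiting Lemma~\ref{lem:4t25a} (every proper separation of $H^{(m)}$ is, up to orientation, degenerate and lies in $\CMT^{(m)}$) together with the fact that $\TRT$ has a non-degenerate $3$-separation — namely $(\{w_i\},\{v_1,v_2,v_3\},\{w_2,w_3\}\setminus\{w_i\})$ — in which both sides have size $2$. More precisely, if $H^{(m)}$ were isomorphic to a subgraph of $\TRT$ containing $w_1,w_2,w_3$, then it would have a proper separation of order $3$ in which \emph{neither} side is a singleton (after checking $v_1,v_2,v_3$ must be present for $3$-connectivity), contradicting the first sentence of Lemma~\ref{lem:4t25a}. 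So $H^{(m)}$ embeds in $\THT$. This is essentially the argument already used in Claim~1 of the proof of Lemma~\ref{lem:q4r1}, and I would import it almost verbatim.

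Next, fix any embedding $f$ of $H^{(m)}$ into $\THT$. I would argue $v_1,\ldots,v_4\in f(R^{(m)})$: the four vertices $v_1,\ldots,v_4$ form the tetrahedron that is the ``core'' of $\THT$, and if some $v_i$ were missing then $H^{(m)}$ would no longer be $3$-connected (each $w_j$ has degree exactly $3$ with neighbourhood a triangle among the $v$'s, and deleting one $v_i$ disconnects or reduces connectivity). Since every vertex of $H^{(m)}$ lies in $R^{(m)}$ by definition of the torso's vertex set, this is automatic — actually $V(H^{(m)})=R^{(m)}$, so $f(R^{(m)})=f(V(H^{(m)}))$ and the only content is that the $v_i$'s are genuinely there in the image, which follows from $3$-connectivity as just noted. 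Now suppose $w_j\in f(R^{(m)})$, say $w_j=f(w_j')$ with $w_j'\in R^{(m)}$. In $\THT$ the vertex $w_j$ has degree $3$ with pairwise non-adjacent? — no, the $w_j$-neighbourhoods in $\THT$ are triangles — so I must be careful: I would instead observe that $w_j'$ has degree $3$ in $H^{(m)}$, its neighbourhood $N^{H^{(m)}}(w_j')=\{$three of the $v_i\}$, and the triple $(\{w_j'\}, N^{H^{(m)}}(w_j'), R^{(m)}\setminus(\{w_j'\}\cup N^{H^{(m)}}(w_j')))$ is a proper separation of $H^{(m)}$ of order $3$ (it is proper because $|R^{(m)}|\ge 6$, as $H^{(m)}$ is non-complete: it has a proper separation). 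By Lemma~\ref{lem:4t25a} this separation, pushed to $G^{(m)}$ as $(\{w_j'\},N^{H^{(m)}}(w_j'),V(G^{(m)})\setminus(\{w_j'\}\cup N^{H^{(m)}}(w_j')))$, is a degenerate separation of $G^{(m)}$ lying in $\CMT^{(m)}$. Degeneracy gives in particular that $N^{H^{(m)}}(w_j')$ is independent in $G^{(m)}$ and that this is the whole neighbourhood, i.e.\ $N^{G^{(m)}}(w_j')\subseteq N^{H^{(m)}}(w_j')$; the reverse inclusion $N^{H^{(m)}}(w_j')\subseteq N^{G^{(m)}}(w_j')$ holds because $w_j'\in R^{(m)}$ has no connected component of $G^{(m)}\setminus R^{(m)}$ ``hidden behind'' it — more precisely, $w_j'$ together with the (unique, by degeneracy/$\CMT$) singleton $\{w_j'\}$-side of the separation shows any torso-edge at $w_j'$ is already a $G^{(m)}$-edge; I would spell this out via the $Y$-side being exactly $\{w_j'\}$.

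The main obstacle I anticipate is the bookkeeping around which embedding to choose and verifying $N^{G^{(m)}}(w_j')=N^{H^{(m)}}(w_j')$ (rather than just $\subseteq$ one way). The delicate point is that a priori an extension vertex in $G^{(m)}$ could be attached to $w_j'$ inside a component of $G^{(m)}\setminus R^{(m)}$, inflating its $G^{(m)}$-degree; I would rule this out by noting that the separation $(\{w_j'\},N^{H^{(m)}}(w_j'),\cdot)$ being in $\CMT^{(m)}$ means $\{w_j'\}=\CY(N^{H^{(m)}}(w_j'))$, so $N^{G^{(m)}}(w_j')=N(\{w_j'\})\subseteq N^{H^{(m)}}(w_j')$ exactly, and combined with $w_j'\in R^{(m)}\Rightarrow$ its $H^{(m)}$-neighbours among $R^{(m)}$ are $G^{(m)}$-neighbours (no torso edge is created at a vertex whose deletion leaves only a singleton ``outside''), we get equality. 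Once this is in hand, degeneracy of the separation is immediate from the shape $|Y|=1$ plus independence of $N^{H^{(m)}}(w_j')$ in $G^{(m)}$, which is part of what Lemma~\ref{lem:4t25a} already hands us. I would also double-check the claim $|R^{(m)}|\ge 6$ used to guarantee properness: since $H^{(m)}$ is exceptional quasi-4-connected it is in particular not the tetrahedron $K_4$ in the case that matters (or if it were $K_4$, it has no proper $3$-separation and the hypothesis ``$H^{(m)}$ exceptional'' combined with earlier lemmas forces a contradiction — the statement is vacuous there since there is no $w_j$), so $H^{(m)}$ has a proper separation of order $3$, forcing $|R^{(m)}|\ge 6$ by quasi-$4$-connectivity and $|H^{(m)}|\ge 5$.
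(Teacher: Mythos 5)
Your overall strategy (reduce everything to Lemma~\ref{lem:4t25a}) is the right one, but two of your case analyses break down.

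First, the reduction to $\THT$. You claim that a subgraph of $\TRT$ containing $w_1,w_2,w_3$ has a proper $3$-separation ``in which neither side is a singleton'' (elsewhere: ``both sides have size $2$''). This is arithmetically impossible: such a graph has $6$ vertices, so for any $3$-separator the two sides together have only $3$ vertices and one side is always a singleton. The separation $(\{w_1\},\{v_1,v_2,v_3\},\{w_2,w_3\})$ therefore only contradicts the first sentence of Lemma~\ref{lem:4t25a} when $\{v_1,v_2,v_3\}$ is \emph{not} independent in $H^{(m)}$. If all edges $v_iv_j$ are absent (e.g.\ $H^{(m)}\cong K_{3,3}$, which is a quasi-4-connected subgraph of $\TRT$ but not of $\THT$), that separation \emph{is} degenerate and your contradiction evaporates. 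The correct route is the second sentence of Lemma~\ref{lem:4t25a}: since the side $\{w_2,w_3\}$ has size $2$, the orientation $(\{w_1\},\{v_1,v_2,v_3\},V(G^{(m)})\setminus\{v_1,v_2,v_3,w_1\})$ must be the one lying in $\CMT^{(m)}$, and then Corollary~\ref{cor:reed2} forces its $Z$-part to be connected --- which it is not, because $w_2$ and $w_3$ lie in different components of $G^{(m)}\setminus\{v_1,v_2,v_3,w_1\}$. (Your appeal to Claim~1 of Lemma~\ref{lem:q4r1} does not transfer either; that argument lives in a different setting, with extension vertices.)

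Second, the claim $|R^{(m)}|\ge 6$ is false and you rely on it to decide which orientation Lemma~\ref{lem:4t25a} puts into $\CMT^{(m)}$. The graph on $\{v_1,\dots,v_4,w_1\}$ ($K_4$ plus one degree-$3$ vertex) is an exceptional quasi-4-connected subgraph of $\THT$ with a proper $3$-separation, so ``$H^{(m)}$ has a proper separation'' only gives $|R^{(m)}|\ge 5$. When $|R^{(m)}|=5$ the separation at $w_1$ is $(\{w_1\},\{v_1,v_2,v_3\},\{v_4\})$ with \emph{both} sides singletons, and Lemma~\ref{lem:4t25a} may hand you the degenerate separation at $v_4$ rather than at $w_1$. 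This is exactly why the lemma asserts only the \emph{existence} of a suitable embedding $f$: in that case one must swap $w_1$ and $v_4$ (they play symmetric roles in $\THT$) to obtain an embedding with the stated property. Your treatment of $N^{G^{(m)}}(w_j')=N^{H^{(m)}}(w_j')$, by contrast, is fine --- indeed more explicit than the paper's.
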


\begin{proof}
  Suppose that $H^{(m)}$ is not isomorphic to a subgraph of $\THT$. Then,
  without loss of generality, it is a subgraph of $\TRT$ with
  $w_1,w_2,w_3\in V(H^{(m)})$.  We apply Lemma~\ref{lem:4t25a} to the
  separation $(\{w_1\},\{v_1,v_2,v_3\},\{w_2,w_3\})$ of
  $H^{(m)}$. Then 
  \[
  (\{w_1\},\{v_1,v_2,v_3\},V(G^{(m)}\setminus\{v_1,v_2,v_3,w_1\})\in\CMT.
  \]
  However, $V(G^{(m)}\setminus\{v_1,v_2,v_3,w_1\})$ is not connected
  in $G^{(m)}$, because $w_2$ and $w_3$ belong to different connected
  components. This contradicts Corollary~\ref{cor:reed2}.

  Thus $H^{(m)}$ is isomorphic to a subgraph of $\THT$. Without
  loss of generality, we may assume that $H^{(m)}\subseteq\THT$. We
  may further assume that $v_1,\ldots,v_4\in V(H^{(m)})$, because the
  only quasi-4-connected subgraphs of $\THT$ that do not contain
  $v_1,\ldots,v_4$ contain one $w_i$ and the three adjacent $v_i$s,
  and these subgraphs are isomorphic to the subgraph induced by
  $v_1,\ldots,v_4$. 

  Suppose that $w_j\in V(H^{(m)})$ for some $j$, say,
  $j=1$. Let $S:=N^{H^{(m)}}(w_1)=\{v_1,v_2,v_3\}$. Then
  $(\{w_1\},S,V(H^{(m)})\setminus\big(S\cup\{w_1\}\big)$ is a proper separation of
  $H^{(m)}$. By 
  Lemma~\ref{lem:4t25a}, either
  \[
  \Big(\{w_1\},S,V(G^{(m)})\setminus\big(S\cup\{w_1\}\big)\Big)
  \]
  is a degenerate separation of $G^{(m)}$ in $\CMT^{(m)}$, or $|V(H^{(m)})\setminus(S\cup\{w_1\})|=1$,
  which implies that $V(H^{(m)})\setminus(S\cup\{w_1\})=\{v_4\}$, and
   \[
  \Big(\{v_4\},S,V(G^{(m)})\setminus\big(S\cup\{v_4\}\big)\Big)
  \]
  is a degenerate separation of $G^{(m)}$ in $\CMT^{(m)}$. 
 In the former case, we are done, and in the latter case the mapping $f:V(H^{(m)})\to
  V(\THT)$ defined by $f(w_1):=v_4$, $f(v_4)=w_1$, $f(v_i)=v_i$ for
  $i=1,2,3$ is an embedding of $H^{(m)}$ into $\THT$ with the desired properties.
\end{proof}

\begin{lem}\label{lem:4t29}
  Suppose that $H^{(m)}$ is exceptional. Then there is a
  non-exceptional extension $\hat H$ of $R^{(m)}$ and a faithful model $\CM$
  of $\hat H$ in $G$ such that $\CT=\CT(\hat H,\CM)$.
\end{lem}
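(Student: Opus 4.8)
The plan is to produce $\hat H$ and $\CM$ explicitly by first working in $G^{(m)}$, where $\CT^{(m)}$ has no non\-degenerate crossedges, and then pulling back along the contraction of $e^1,\dots,e^m$. The starting point is Lemma~\ref{lem:4t28}: fix the embedding $f$ it provides and identify $H^{(m)}=\torso G{R^{(m)}}=\torso{G^{(m)}}{R^{(m)}}$ (Corollary~\ref{cor:4t24b}) with a subgraph of $\THT$ in which $\{v_1,\dots,v_4\}\subseteq R^{(m)}$, every $w_j\in R^{(m)}$ satisfies $N^{G^{(m)}}(w_j)=N^{\THT}(w_j)$, and the star separation at such a $w_j$ is a degenerate element of $\CMT^{(m)}$. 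Call a $3$-subset of $\{v_1,\dots,v_4\}$ a \emph{face}; there are four. I call a face $F$ \emph{hosted} if some $w_j\in R^{(m)}$ has $N^{G^{(m)}}(w_j)=F$, and \emph{occupied} if some connected component $C$ of $G^{(m)}\setminus R^{(m)}$ has $N(C)=F$; note that every component of $G^{(m)}\setminus R^{(m)}$ has its $3$-element neighbourhood inside $\{v_1,\dots,v_4\}$, since the only neighbours of a $w_j\in R^{(m)}$ are the three $v_i$ of its face.

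The key structural step is the claim that \emph{every face is hosted or occupied}. Granting this, I build $\hat H$ on vertex set $R^{(m)}\cup\{z_F:F\text{ a face that is not hosted}\}$, with $z_F$ adjacent exactly to the three vertices of $F$, keeping the corner edges $w_jv_i$ of the hosted corners, and adding \emph{no} edge inside $\{v_1,\dots,v_4\}$. Then $\hat H$ is triangle\-free, has exactly $8$ vertices, and sending the hosted corners and the $z_F$ to $w_1,\dots,w_4$ exhibits it as a full subgraph of $\THF$; hence $\hat H$ is non\-exceptional quasi\-4\-connected, so \ref{li:x2} holds. For a faithful model $\CM^{(m)}$ of $\hat H$ in $G^{(m)}$ I take singleton branch sets on $R^{(m)}$, realise the corner edges $w_jv_i$ directly (they are edges of $G^{(m)}$), and for each not\-hosted face $F$ choose a component $C_F$ of $G^{(m)}\setminus R^{(m)}$ with $N(C_F)=F$, put $M_{z_F}:=V(C_F)$ (connected, with $z_F$ any vertex of $C_F$), and realise each edge $z_Fv_i$ by an edge of $G^{(m)}$ from $V(C_F)$ to $v_i\in N(C_F)$. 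Distinct faces give distinct components $C_F$, so all branch sets are disjoint, $\CM^{(m)}$ is faithful, and this construction also yields \ref{li:x3} with respect to $G^{(m)}$.

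Next I pull back to $G$: composing $\CM^{(m)}$ with the contraction model of $G^{(m)}$ in $G$ (replace each $s_1^j$ occurring in a branch set by $\{s_1^j,s_2^j\}$) gives a faithful model $\CM$ of $\hat H$ in $G$, so \ref{li:x1} holds. For \ref{li:x3} over $G$ one checks, using Corollary~\ref{cor:4t25b} (each component of $G\setminus R^{(m)}$ has neighbourhood $S^{\slashes m}$ for some $S\in\CNDS$) together with the way $e^1,\dots,e^m$ attach $R^{(m)}$ to its exterior, that the branch sets $M_{z_F}$ still lie in pairwise distinct components of $G\setminus R^{(m)}$. Finally, by Lemma~\ref{lem:q4r1} every graph satisfying \ref{li:x1}--\ref{li:x3} (with $\torso G{R^{(m)}}$ exceptional) is isomorphic to a full subgraph of $\THF$, hence has exactly $8=|V(\hat H)|$ vertices and contains $R^{(m)}$, so $V(\hat H)$ is inclusionwise minimal and \ref{li:x4} holds. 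Thus $\hat H$ is a non\-exceptional extension of $R^{(m)}$ (so in particular $R^{(m)}$ is non\-exceptional). For $\CT=\CT(\hat H,\CM)$, by transitivity of lifting, Lemma~\ref{lem:4t20}(4) and Corollary~\ref{cor:4t24b} it suffices that $\CT^{(m)}$ is the lifting of the unique $\hat H$\-tangle $\CHT$ of order $4$ — which by Theorem~\ref{theo:q4c} is $\{(Y',S',Z')\in\Sep_{<4}(\hat H):|Y'|<|Z'|\}$ — with respect to $\CM^{(m)}$. Given $(Y,S,Z)\in\CT^{(m)}$, put $(Y',S',Z'):=\pi_{\CM^{(m)}}(Y,S,Z)$ and pick $(Y'',S'',Z'')\in\CMT^{(m)}$ with $(Y'',S'',Z'')\preceq(Y,S,Z)$, so $Y\subseteq Y''$. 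If $(Y'',S'',Z'')$ is non\-degenerate then $Y''\cap R^{(m)}=\emptyset$ by Lemma~\ref{lem:4t25}, so no $v_i$ or hosted corner lies in $Y'$, while $z_F\in Y'$ means $V(C_F)\subseteq Y$, forcing $F=N(C_F)\cap R^{(m)}\subseteq S$ and hence $S=F$; so $|Y'|\le 1$. If $(Y'',S'',Z'')$ is degenerate then $|Y|\le|Y''|=1$, so again $|Y'|\le 1$. Since $|V(\hat H)|=8$ and $|S'|\le|S|\le 3$, we get $|Z'|=8-|Y'|-|S'|\ge 4>|Y'|$, so $(Y',S',Z')\in\CHT$; the reverse implication follows by complementation exactly as in Lemma~\ref{lem:4t27}. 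Together with Lemma~\ref{lem:4t27} and Lemma~\ref{lem:4t26}, this completes the ``$\CT\mapsto R_\CT$'' half of Theorem~\ref{theo:corr}.

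I expect the main obstacle to be the structural claim that every face is hosted or occupied. The natural route is by contradiction: if a face $F$ is neither, then there is no body behind $F$, so $G^{(m)}\setminus F$ is connected and $(\emptyset,F,V(G^{(m)})\setminus F)\in\CT^{(m)}$; combining the four face separations $(\CY(F),F,\CZ(F))\in\CT^{(m)}$ and playing tangle axiom \ref{li:t2} against the fact that every edge leaving a body behind a face $F'$ lands in $F'$ yields a contradiction — the clean sub\-case being when exactly three faces carry bodies, which reproduces the $\THT$\-obstruction of Example~\ref{exa:cube-1} (three cliques covering all edges). The remaining sub\-cases (one or two faces carrying bodies) require the extra input that $H^{(m)}$ is quasi\-4\-connected and exceptional yet $\CT^{(m)}$ is a genuine order\-$4$ tangle, and here — as already in the proof of Lemma~\ref{lem:q4r2} — a short case analysis seems unavoidable. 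A secondary nuisance is the \ref{li:x3}\-bookkeeping when passing from $G^{(m)}$ back to $G$, i.e.\ ruling out that two of the components $C_F$ merge into a single component of $G\setminus R^{(m)}$; this is where Corollary~\ref{cor:4t25b} and the matching property of the crossedges (Corollary~\ref{cor:CI}) are used.
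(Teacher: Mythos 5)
Your proposal follows the paper's proof almost step for step: work in $G^{(m)}$, invoke Lemma~\ref{lem:4t28} to pin $H^{(m)}$ inside $\THT$, show that behind every ``face'' $S_j=N^{\THF}(w_j)$ there is either a vertex $w_j\in R^{(m)}$ or a component of $G^{(m)}\setminus R^{(m)}$ with neighbourhood $S_j$, contract those components to obtain a faithful model of a full subgraph of $\THF$, verify the lifting by bounding $|\hat Y|\le 1$ via Lemma~\ref{lem:4t25} and the degenerate case, and pull everything back to $G$ by transitivity of lifting and Lemma~\ref{lem:4t20}(4). All of that matches the paper, and your $|Y'|\le 1$ computation for the lifting is sound.

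The one place you stop short is the claim you yourself flag as the main obstacle --- that every face is hosted or occupied --- and your worry about sub-cases depending on how many faces ``carry bodies'' is unfounded: the argument is uniform. First show that for each $i\in[4]$ the separation $(Y_i,S_i,Z_i)$, where $Z_i$ is the component of $G^{(m)}\setminus S_i$ containing the tetrahedron vertex opposite the face $S_i$, lies in $\CT^{(m)}$: if $w_i\in R^{(m)}$ this is the degenerate separation in $\CMT^{(m)}$ supplied by Lemma~\ref{lem:4t28}; if $w_i\notin R^{(m)}$ and $Y_i\neq\emptyset$ then $Y_i\cap R^{(m)}=\emptyset$ and Lemma~\ref{lem:4t25} forces $(Y_i,S_i,Z_i)\in\CNDT^{(m)}$; if $Y_i=\emptyset$ it is in the tangle trivially. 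Now suppose face $j$ is neither hosted nor occupied. Every vertex of $G^{(m)}$ is either some $v_l$ (which lies in exactly one $Z_i$), a hosted $w_i$ with $i\neq j$ (which lies in $\bigcap_{i'\neq i}Z_{i'}$), or a vertex of a component $C$ with $N(C)=S_{i_0}$ for some $i_0\neq j$ (and then $V(C)\subseteq Y_{i_0}\cap\bigcap_{i\neq i_0}Z_i$); hence $\bigcap_{i\neq j}Z_i=\emptyset$. Likewise every edge of $G^{(m)}$ has both endvertices in $Y_{i_0}\cup S_{i_0}$ for some $i_0\neq j$ (tetrahedron edges lie in two faces, corner edges $w_iv_l$ lie in $Y_i\cup S_i$, and edges meeting a component $C$ lie in $V(C)\cup N(C)$), so it misses $Z_{i_0}$. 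The three separations at the faces other than $j$ therefore violate \ref{li:t2}, with no case distinction on the number of occupied faces. With that claim closed, your construction goes through and coincides with the paper's.
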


\begin{proof}
  We first prove the assertion for $G^{(m)}$ instead of $G$; it will then be
  easy to lift it to $G$.

  \begin{claim}
    There is a non-exceptional extension $\hat H$ of $R^{(m)}$ in
    $G^{(m)}$ and a faithful model $\CM$ of $\hat H$ in $G^{(m)}$ such
    that $\CT^{(m)}$ is the lifting of the unique $\hat H$-tangle of
    order $4$ to $G^{(m)}$ with
    respect to $\CM$.

    \proof By Lemma~\ref{lem:4t28}, we may assume without loss of
    generality that $H^{(m)}\subseteq\THT$ with
    $\{v_1,\ldots,v_4\}\in R^{(m)}$.  We actually view $H^{(m)}$ as a
    subgraph of $\THF$, and for $j=1,\ldots,4$, let
    $S_j:=N^{\THT}(w_j)$. Then, by Lemma~\ref{lem:4t28}, if
    $w_j\in R^{(m)}$ then
    $\big(\{w_j\},S_j,V(G)\setminus(\{w_j\}\cup S_j\big)$ is a
    degenerate 3-separation of $G^{(m)}$.

    For $i\in[4]$, let $Z_i$ be the
    connected component of $G^{(m)}\setminus S_i$ that contains the
    unique element in $\{v_1,\ldots,v_4\}\setminus S_i$, and we let
    $Y_i:=V(G^{(m)})\setminus(S_i\cup Z_i)$. Then
    $(Y_i,S_i,Z_i)\in\CT^{(m)}$. This can be seen as follows. If
    $w_i\in R^{(m)}$, then 
    $\big(\{w_i\},S_i,V(G)\setminus(\{w_i\}\cup S_i)\big)$ is a
    degenerate 3-separation of $G^{(m)}$ in $\CMT^{(m)}$. Thus
    $V(G)\setminus(\{w_i\}\cup S_i)$ is connected, which implies
    $Z_i=V(G)\setminus(\{w_i\}\cup S_i)$ and $Y_i=\{w_1\}$ and 
    $(Y_i,S_i,Z_i)=\big(\{w_i\},S_i,V(G)\setminus(\{w_i\}\cup
    S_i)\big)\in\CT^{(m)}$. Otherwise,
    $Y_i\cap R^{(m)}=\emptyset$, and either $Y_i=\emptyset$, which
    trivially implies $(Y_i,S_i,Z_i)\in\CT^{(m)}$, or it follows from
    Lemma~\ref{lem:4t25} that $(Y_i,S_i,Z_i)\in\CNDT^{(m)}$.

    Now it follows from \ref{li:t3} that if $w_j\not\in R^{(m)}$ then
    there is a connected component $C_j$ of $G^{(m)}\setminus R^{(m)}$
    such that $N(C_j)=S_j$, because otherwise for the three
    separations $(Y_i,S_i,Z_i)\in\CT^{(m)}$, where
    $i\in[4]\setminus\{j\}$, the intersection of the $Z_i$ is empty,
    and there is no edge that has an endvertex in each $Z_i$.
    Contracting $C_j$ to a single vertex gives us the desired faithful
    model $\CM$ of a full subgraph $\hat H$ of $\THF$ in
    $G^{(m)}$. 

    Let $\CHT$ be the unique $\hat H$-tangle of order $4$.
    It remains to prove that $\CT^{(m)}$ is the lifting of $\CHT$ with
    respect to $\CM$. So let $(Y,S,Z)\in\CT^{(m)}$, and let $(\hat
    Y,\hat S,\hat Z):=\pi_{\CM}(Y,S,Z)$. We need to prove that $(\hat
    Y,\hat S,\hat Z)\in\CHT$. As $\hat Y\subseteq Y$, we may assume
    without loss of generality that $|Y|\ge 2$. Let $(Y',S',Z')\in\CMT^{(m)}$
    such that $(Y',S',Z')\preceq(Y,S,Z)$. Then $Y\subseteq Y'$, and thus
    $(Y',S',Z')$ is non-degenerate. By Lemma~\ref{lem:4t25}, $Y'\cap
    R^{(m)}=\emptyset$ and thus 
    \[
    |\hat Y|\le|Y\cap V(\hat H)|\le 1
    \]
   By Lemma~\ref{lem:t2}, it follows that $(\hat
    Y,\hat S,\hat Z)\in\CHT$.
    \uend
  \end{claim}

  Now choose $\hat H$ according to the claim, and let $\CHT$ be the
  unique $\hat H$-tangle of order $4$. Then $\hat H$ is also a
  non-exceptional extension of $H$ in $G$, and by
  Lemma~\ref{lem:4t20}(4) and the transitivity of
  the lifting relation, $\CT$ is the lifting of
  $\CHT$ to $G$ with respect to some faithful model of $\hat H$ in $G$.
\end{proof}

We let
\[
R_{\CT}:=R^{(m)}.
\]

\begin{proof}[Proof of the Correspondence Theorem~\ref{theo:corr}.]
  The theorem follows from Theorem~\ref{theo:q4r} and
  Lemmas~\ref{lem:4t27} and \ref{lem:4t29}.
\end{proof}

So far, we have only talked about the quasi-4-connected regions of a
graph. It is natural to call the graphs $\torso G{R_{\CT}}$ for the
$G$-tangles $\CT$ of order $4$ the \emph{quasi-4-connected components}
of $G$. While the regions $R_{\CT}$ are not canonical, the following
corollary (to 
Lemma~\ref{lem:4t20}) says that the quasi-4-connected components
$\torso G{R_{\CT}}$ are canonical if viewed as abstract graphs (that
is, up to isomorphism). 

\begin{cor}\label{cor:q4c-can}
  Let $G,G'$ be a 3-connected graphs, and let $\CT,\CT'$ be tangles of
  order $4$ of the these graphs. Suppose that there is an isomorphism
  $f$ from $G$ to $G'$ that maps $\CT$ to $\CT'$, that is, such that
  for all $(Y,S,Z)\in\Sep_{<4}(G)$ we have
  $(Y,S,Z)\in\CT\iff(f(Y),f(S),f(Z))\in\CT'$. Then there is an
  isomorphism from $\torso G{R_{\CT}}$ to $\torso {G'}{R_{\CT'}}$.
\end{cor}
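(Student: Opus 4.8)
The plan is to observe that, although the region $R_\CT=R^{(m)}$ is not canonical --- its construction fixes an arbitrary enumeration $e^1,\ldots,e^m$ of the non-degenerate crossedges of $\CT$ and, for each one, an arbitrary choice of which endpoint $s_2^j$ to delete --- the torso $\torso G{R_\CT}$ admits a description that \emph{is} canonical: by Lemma~\ref{lem:4t24} it is the graph obtained from $\torso G{R^{(0)}}$ by contracting the non-degenerate crossedges of $\CT$, and every ingredient of this description ($R^{(0)}$, the set of non-degenerate crossedges, and the torso operation) is manifestly isomorphism-invariant. So I would transport this description along $f$ rather than trying to relate $R_\CT$ and $R_{\CT'}$ directly.

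First I would check that $f$ carries all the tangle-theoretic data of Section~\ref{sec:tangleregion} from $\CT$ to $\CT'$. The map $(Y,S,Z)\mapsto(f(Y),f(S),f(Z))$ is a bijection from $\Sep_{<4}(G)$ onto $\Sep_{<4}(G')$ which, by hypothesis, carries $\CT$ onto $\CT'$; it visibly respects the partial order $\preceq$ of \eqref{eq:2} (defined purely by inclusions), it sends degenerate separations to degenerate separations (degeneracy asks $|Y|=1$ and $S$ independent, both preserved by the graph isomorphism $f$), and it preserves orthogonality of separations (a set-theoretic condition). Hence $f$ maps $\CMT$ onto $(\CT')_{\min}$, maps $\CNDT$ and $\CNDS$ onto the corresponding objects of $\CT'$, maps crossing pairs of minimal separations to crossing pairs, and --- using the uniqueness of the crossedge supplied by the Crossing Lemma~\ref{lem:cross} --- maps $\ECND(\CT)$ bijectively onto $\ECND(\CT')$, regarded as sets of edges of $G$ and $G'$.

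Since $R^{(0)}$ is assembled from $\CNDT$ and $\CNDS$ by intersections and unions, the previous step yields $f(R^{(0)})=(R')^{(0)}$, where $(R')^{(0)}$ is the analogue for $\CT'$ in $G'$. Torsos are isomorphism-invariant --- an isomorphism $G\to G'$ taking $R$ to $R'$ maps $G[R]$ isomorphically to $G'[R']$ and the connected components $C$ of $G\setminus R$ onto those of $G'\setminus R'$ with $N(f(C))=f(N(C))$, so it restricts to an isomorphism $\torso GR\to\torso{G'}{R'}$ --- and therefore $g:=f|_{R^{(0)}}$ is an isomorphism from $\torso G{R^{(0)}}$ onto $\torso{G'}{(R')^{(0)}}$. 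Each non-degenerate crossedge $e=s_1s_2$ has both endpoints in $R^{(0)}$ (as $s_i\in S_i\subseteq R^{(0)}$) and lies in $E(G)$, so it is an edge of $\torso G{R^{(0)}}$; thus $g$ maps the matching $M:=\ECND(\CT)$, viewed as a set of edges of $\torso G{R^{(0)}}$, bijectively onto the matching $M':=\ECND(\CT')$ of $\torso{G'}{(R')^{(0)}}$.

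To finish, I would apply Lemma~\ref{lem:4t24}: $\torso G{R_\CT}=\torso G{R^{(m)}}$ is obtained from $\torso G{R^{(0)}}$ by contracting the edges of $M$, and likewise $\torso{G'}{R_{\CT'}}$ is obtained from $\torso{G'}{(R')^{(0)}}$ by contracting the edges of $M'$; the choices in the definition of $R_\CT$ do not affect this, since contracting a matching is independent of the order and contracting an edge to either endpoint yields isomorphic graphs. Contraction of a matching being functorial with respect to graph isomorphisms, the isomorphism $g$ carrying $M$ to $M'$ induces an isomorphism between the two contracted graphs, hence from $\torso G{R_\CT}$ onto $\torso{G'}{R_{\CT'}}$. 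The only real work is the bookkeeping of the second paragraph --- verifying that minimality, (non-)degeneracy, crossing, the crossedge, and thus $R^{(0)}$ and $\ECND(\CT)$ are all preserved by $f$ --- which is routine; the one conceptual point is not to chase $R_\CT$ itself but its canonical repackaging via Lemma~\ref{lem:4t24}.
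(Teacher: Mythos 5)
Your proposal is correct and takes essentially the approach the paper intends: the paper states this as a corollary of Lemma~\ref{lem:4t20} without an explicit proof, the point being precisely that the construction of $R_{\CT}$ is canonical except for the choices of contraction order and of which crossedge endpoint to keep, which do not affect the isomorphism type. Your routing through Lemma~\ref{lem:4t24} (torso of $R_{\CT}$ equals $\torso G{R^{(0)}}$ with the matching $\ECND(\CT)$ contracted) together with the routine check that $f$ preserves $\preceq$, minimality, degeneracy, crossing and crossedges is a clean and complete way to make that implicit argument explicit.
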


\section{Decomposition into Quasi-4-Connected Components}

With the Correspondence Theorem at hand, it is now relatively easy to
prove the Decomposition Theorem~\ref{theo:dec}.
 
\begin{theo}\label{theo:q4dec}
  Let $G$ be a 3-connected graph. Then $G$ has a tree decomposition
  $(T,\beta)$ of adhesion at most $3$ such that for all $t\in V(T)$,
  the torso $\torso G{\beta(t)}$ is either is a complete graph $K_3$ or $K_4$ or
 a quasi-4-connected component of $G$.

  Furthermore, such a decomposition can be computed in time $O(n^2(n+m))$.
\end{theo}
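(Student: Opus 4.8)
The idea is to build the tree decomposition directly from the structure developed in Section~\ref{sec:4t}, using the tangles of order $4$ of $G$ as the "leaf" nodes and the separators that "point between" tangles as the adhesion sets. Concretely, for each $G$-tangle $\CT$ of order $4$ we have the quasi-4-connected region $R_\CT=R^{(m)}$ (Lemma~\ref{lem:4t26}) with $\torso G{R_\CT}$ a faithful minor of $G$ (Corollary~\ref{cor:4t24a}), and by Corollary~\ref{cor:4t25b} every connected component $C$ of $G\setminus R_\CT$ has $|N(C)|=3$. First I would treat each such component $C$ recursively: the separator $N(C)$ together with $V(C)$ gives a proper $3$-separation, and the "$C$-side torso" $\torso G{V(C)\cup N(C)}$ is again $3$-connected (here one uses that $G$ is $3$-connected and $N(C)$ is a clique in $\torso G{R_\CT}$, hence becomes a triangle). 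Recursing into these smaller pieces yields the tree: glue the node for $R_\CT$ to the nodes obtained from each $C$ along the bag $N(C)$, which has size exactly $3$, so adhesion at most $3$. The base case is when $\torso GR$ is already quasi-4-connected (no nontrivial components) or is $K_3$/$K_4$; the small exceptional quasi-4-connected graphs with no tangle of order $4$ (subgraphs of $\THT$, $\TRT$, classified in Section~\ref{sec:4t}) must be handled as explicit $K_3$/$K_4$ torsos glued along their natural $3$-separators, which is exactly why the statement allows $K_3$ and $K_4$ torsos.

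To make this precise I would proceed by induction on $|G|$. If $G$ is quasi-4-connected and non-exceptional, it has a unique tangle $\CT$ of order $4$ (Theorem~\ref{theo:q4c}), $R_\CT=V(G)$, and the one-node decomposition works. If $G$ is quasi-4-connected and exceptional, it has no tangle of order $4$; it is a subgraph of $\THT$ or $\TRT$, and one writes down by hand a tree decomposition whose torsos are $K_4$'s (the tetrahedron) glued along triangles — for $\THT$ the central tetrahedron on $\{v_1,\dots,v_4\}$ together with the three "corner" $K_3$'s (or absorb them), for $\TRT$ similarly. If $G$ is not quasi-4-connected, pick any proper $3$-separation; but rather than iterating crudely, the clean approach is: $G$ has at least one tangle $\CT$ of order $4$ unless $G$ itself is "small/exceptional" in which case we are in the $K_3/K_4$ case — and if $\CT$ exists, set the root bag $\beta(r):=R_\CT$, apply the Correspondence Theorem~\ref{theo:corr} so $\torso G{R_\CT}$ is a quasi-4-connected component, and for each component $C$ of $G\setminus R_\CT$ recurse on the graph $G_C:=\torso G{V(C)\cup N(C)}$, which has $|V(G_C)|<|G|$ since $R_\CT\neq\emptyset$ must contain vertices outside $V(C)$ (here one checks $|R_\CT|\ge 4$, true because a quasi-4-connected component has at least $4$ vertices, or $R_\CT$ triggers a $K_3/K_4$ case). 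Then attach the tree for $G_C$ to $r$ via an edge whose two bags both contain $N(C)$; adhesion $3$ is immediate, and the tree-decomposition axioms follow because the only vertices shared between the $R_\CT$-bag and the $C$-subtree are those of $N(C)$.

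For the running time, the plan is: computing one tangle of order $4$ and its region reduces, via the machinery of Section~\ref{sec:cce}–\ref{sec:tangleregion}, to finding the minimal $3$-separators of the tangle and the non-degenerate crossedges, then contracting the crossedges and deleting the $Y$-parts. Finding a tangle of order $4$ (or certifying there is none) can be done in $O(n(n+m))$ time by the standard flow-based search for $3$-separators; enumerating and classifying the relevant minimal separators and crossedges costs another $O(n(n+m))$ per region by max-flow computations between candidate triples; assembling $R_\CT$ from these is $O(n+m)$. Since each recursive call works on a graph that is edge-disjointly and almost vertex-disjointly smaller (the pieces $G_C$ overlap only in triangles, so $\sum|E(G_C)|=O(m+n)$ by a standard accounting as for triconnected components), the total number of recursive calls and their combined sizes are $O(n+m)$, giving overall $O(n^2(n+m))$. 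The main obstacle I anticipate is not the tree-decomposition bookkeeping — which is routine once the region is in hand — but making the recursion terminate cleanly in the presence of the exceptional quasi-4-connected graphs: one must verify that whenever $\torso G{R_\CT}$ is an exceptional torso (so there is genuinely no tangle) the piece is small enough to be output directly as $K_3$ or $K_4$ glued along triangles, and that this never reintroduces a $3$-separation that was supposed to have been "used up," i.e. that the collection of separators $\{N(C)\}$ over all recursion levels is laminar. This laminarity is exactly what the Crossing Lemma~\ref{lem:cross} and condition~\ref{li:q3} were set up to guarantee, so the proof threads it back through those results.
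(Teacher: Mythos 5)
Your high-level plan (root bag $=$ a tangle's region, recurse on the components of $G\setminus R_\CT$, allow $K_3/K_4$ bags for the tangle-free leftovers) is in the right spirit, but it has two genuine gaps that the paper's proof is specifically engineered to avoid. First, your recursion step ``$G$ is not quasi-4-connected, hence it has a tangle of order $4$ unless it is small/exceptional'' is false. A $3$-connected graph obtained by gluing many $K_4$'s along triangles in a tree-like pattern is neither quasi-4-connected nor exceptional nor small, yet it has no tangle of order $4$ at all (just as a tree has no tangle of order $2$); your recursion then has no root bag to choose and cannot proceed. The paper's proof never searches globally for a tangle. Instead it works one separation at a time: given the separation $(Y_0,S_0,Z_0)$ handed down from the parent node, it asks whether $(Y_0,S_0,Z_0)$ has a \emph{split vertex}. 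If it does, the bag is $S_0\cup\{z_0\}$ (a $K_4$ torso) and one recurses on the resulting components; only if it has no split vertex does Lemma~\ref{lem:q4dec1} \emph{construct} a tangle $\CT(Y_0,S_0,Z_0)$ oriented into $Z_0$, whose region becomes the bag. You have no analogue of the split-vertex dichotomy, and without it the tangle-free-but-large pieces are unaccounted for.

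Second, you recurse on the modified graphs $G_C=\torso G{V(C)\cup N(C)}$, but the theorem requires the torsos of the final decomposition to be quasi-4-connected components \emph{of $G$}, i.e.\ of the form $\torso G{R_\CT}$ for $G$-tangles $\CT$. Tangles and regions of $G_C$ are not automatically tangles and regions of $G$ (one would need a lifting argument in the spirit of Lemma~\ref{lem:induced-tangle}, in the opposite direction and at adhesion $3$ rather than $2$, which you do not supply); the paper avoids the issue entirely by keeping every separation, tangle and region inside $G$ throughout the construction, with condition~\ref{li:q3} and Corollary~\ref{cor:4t25b} guaranteeing that each child separation lives inside the $Z_0$-side of its parent, so laminarity holds by construction rather than by the Crossing Lemma (which only governs crossings of minimal separations \emph{within a single tangle}, not across different tangles or recursion levels). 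Finally, a small correction to your base case: even for a non-exceptional quasi-4-connected $G$ one can have $R_\CT\subsetneq V(G)$, since a degree-$3$ vertex whose neighbourhood spans an edge yields a non-degenerate minimal separation whose $Y$-part is removed from the region.
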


Here, and throughout this section, we denote the numbers of vertices and edges of the 
 input graph $G$ of our algorithms by $n$ and $m$, respectively.

The Decomposition Theorem~\ref{theo:dec} follows by combining the
decomposition of Theorem~\ref{theo:q4dec} with the standard decomposition of a
graph into its three connected components.

The proof of Theorem~\ref{theo:q4dec} requires some preparation.For
the rest of this section, we assume that $G$ is a 3-connected
graph. 
Let $(Y,S,Z)\in\Sep_{=3}(G)$ be non-degenerate. A
\emph{split vertex} of $(Y,S,Z)$ is a vertex $z\in Z$ such that for every
connected component $C$ of $G\setminus(S\cup\{z\})$ it holds that
$|N(C)|=3$. 

\begin{lem}\label{lem:q4dec1}
  Let $(Y_0,S_0,Z_0) \in\Sep_{=3}(G)$ be a non-degenerate proper
  separation such that
  $Z_0$ is connected and $(Y_0,S_0,Z_0)$ has no split vertex. Then the
  set $\CT(Y_0,S_0,Z_0)$ of all separation $(Y,S,Z)\in\Sep_{<4}(G)$
  such that either $Z_0\subseteq Z$ or $|Z\cap S_0|>|Y\cap S_0|$ is a
  $G$-tangle of order $4$.
\end{lem}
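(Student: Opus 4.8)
The plan is to verify the three tangle axioms \ref{li:t1}--\ref{li:t3} directly for $\CT:=\CT(Y_0,S_0,Z_0)$. For $\sigma=(Y,S,Z)\in\Sep_{<4}(G)$ I will write $y_\sigma=|Y\cap S_0|$, $z_\sigma=|Z\cap S_0|$. First I record that $N(Z_0)=S_0$ (by $3$-connectedness, since $(Y_0,S_0,Z_0)$ is proper), so every vertex of $S_0$ has a neighbour in $Z_0$; consequently $Z_0\subseteq Z$ already forces $S_0\cap Y=\emptyset$, and hence $\sigma\in\CT$ if and only if $z_\sigma>y_\sigma$, or ($S=S_0$ and $Z_0\subseteq Z$). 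In particular $(Y_0,S_0,Z_0)\in\CT$, and \ref{li:t3} is immediate because $Z=\emptyset$ gives $z_\sigma=0<1\le$ nothing while $Z_0\neq\emptyset$, so $\sigma\notin\CT$. For \ref{li:t1}, let $\sigma\in\Sep_{<4}(G)$: if $y_\sigma\neq z_\sigma$, one of $\sigma,(Z,S,Y)$ lies in $\CT$ by the majority clause; if $y_\sigma=z_\sigma$, then because $|S_0|=3$ is odd either $S=S_0$, in which case $Z_0$ — being a connected component of $G\setminus S_0$ — lies entirely in $Y$ or in $Z$ and again $\sigma$ or $(Z,S,Y)$ lies in $\CT$, or else $y_\sigma=z_\sigma=1$ and $|S\cap S_0|=1$. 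This last configuration must be excluded, and it is exactly here that the no‑split‑vertex hypothesis enters.

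So suppose $S_0=\{a,b,c\}$ with $a\in Y$, $b\in S$, $c\in Z$; I would exhibit a split vertex of $(Y_0,S_0,Z_0)$, contradicting the hypothesis. Since $G$ is $3$-connected, $N(D)=S_0$ for every connected component $D$ of $G[Y_0]$; as $a\in N(D)\cap Y$ and $c\in N(D)\cap Z$ while $D$ is connected, $D$ must meet $S$, so $S\cap Y_0\neq\emptyset$. Together with $b\in S$ this gives $|S\cap(Y_0\cup S_0)|\geq2$, hence $|S\cap Z_0|\leq1$; on the other hand $S\cap Z_0\neq\emptyset$ (if $Z_0$ meets both $Y$ and $Z$ use its connectedness; if $Z_0\cap Y=\emptyset$, resp. $Z_0\cap Z=\emptyset$, the $Z_0$-neighbour of $a$, resp. $c$, lies in $S$). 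Write $S\cap Z_0=\{w\}$, so that $Z_0\setminus\{w\}=Z_0\setminus S$ splits into a part $P\subseteq Y$ and a part $Q\subseteq Z$ with no edge between them. Now every connected component $C$ of $G\setminus(S_0\cup\{w\})$ lies in $Y_0$, in $P$, or in $Q$: in the first case $N(C)=S_0$; in the second, $c\notin N(C)$ since there is no $Y$–$Z$ edge, so $N(C)\subseteq\{a,b,w\}$; in the third, symmetrically $N(C)\subseteq\{b,c,w\}$. In all cases $|N(C)|\leq3$, so $w$ is a split vertex — the required contradiction. This finishes \ref{li:t1}; as by‑products of this analysis, every $\sigma\in\CT$ satisfies $y_\sigma\leq1$, and if $Z_0\not\subseteq Z$ then $z_\sigma\geq1$ and $Z_0\cap S\neq\emptyset$.

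Axiom \ref{li:t2} is the main obstacle. Let $(Y_i,S_i,Z_i)\in\CT$ for $i=1,2,3$ with $Z_1\cap Z_2\cap Z_3=\emptyset$ and no edge of $G$ having an endvertex in each $Z_i$; I seek a contradiction. Exactly as in the proof of Lemma~\ref{lem:q4c1}, $3$-connectedness and \ref{li:t3} give $N(Z_k)=S_k$, whence: if $v\in Z_i\cap Z_j$ for distinct $i,j,k$, then $v\in Y_k$ (otherwise $v\in S_k=N(Z_k)$ and an edge from $v$ into $Z_k$ meets all three $Z_i$). If $Z_0\subseteq Z_i$ for at least two of the indices, say $i=1,2$, then $Z_1\cap Z_2\cap Z_3\supseteq Z_0\cap Z_3$, which is therefore empty; but then no edge joins $Z_0$ to $Z_3$ (such an edge has one end in $Z_0\subseteq Z_1\cap Z_2$ and one in $Z_3$), so $N(Z_3)=S_3$ forces $Z_0\cap S_3=\emptyset$, i.e. $Z_0\subseteq Y_3$, and then $z_{\sigma_3}=0$ with $Z_0\not\subseteq Z_3$ shows $\sigma_3\notin\CT$, a contradiction. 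It remains to handle the case where $Z_0\subseteq Z_i$ for at most one $i$. Here I would run the double‑counting argument of Lemma~\ref{lem:q4c1} inside the set $Z_0\cup S_0$: the three vertices of $S_0$ lie in the $Y_i$ with total multiplicity at most $3$ (since $y_{\sigma_i}\le1$), so the whole difficulty is to control the ``blobs'' $A_i:=Z_0\cap Y_i$. One checks that $N(A_i)\subseteq(S_0\cap Y_i)\cup(Z_0\cap S_i)\cup(S_0\cap S_i)$, which has at most $y_{\sigma_i}+|S_i|\le4$ vertices and meets $Z_0$; an argument in the spirit of the \ref{li:t1} step — exhibiting a split vertex of $(Y_0,S_0,Z_0)$ whenever a blob is too large — should bound the $A_i$, after which a vertex of $Z_0\cup S_0$ lying in two of the $Z_i$ and in no $Y_i$ must appear, contradicting the claim above. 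Making this ``blob control'' precise, and organising the count (over the various splittings of $S_0$ among the $\sigma_i$) so that it actually closes, is the part I expect to be technically the most delicate.
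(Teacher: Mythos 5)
Your treatment of \ref{li:t1} and \ref{li:t3} is correct: the reformulation of membership in $\CT(Y_0,S_0,Z_0)$ via $N(Z_0)=S_0$, and the contrapositive argument exhibiting a split vertex in the residual case $|Y\cap S_0|=|S\cap S_0|=|Z\cap S_0|=1$, both check out and use the no-split-vertex hypothesis in essentially the same place as the paper (which instead cases on $|S\cap Z_0|$ and extracts a component $C$ of $G\setminus(S_0\cup\{z\})$ with $N(C)=S_0\cup\{z\}$). Your disposal of the sub-case of \ref{li:t2} in which $Z_0\subseteq Z_i$ for at least two indices is also correct.

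The genuine gap is the remaining case of \ref{li:t2}, which is the heart of the lemma and occupies most of the paper's proof; you state a plan but explicitly leave the ``blob control'' step unproven. Moreover, that plan is unlikely to close as stated: membership of $(Y,S,Z)$ in $\CT$ constrains only $|Y\cap S_0|$ versus $|Z\cap S_0|$ and the position of $Z_0$, not the size of $Y\cap Z_0$. For instance a separation with $S\cap S_0=\{b,c\}$, one vertex of $S$ inside $Z_0$, $a\in Z$, and $Y$ a large piece of $Z_0$ lies in $\CT$ with $|Y\cap Z_0|$ unbounded, and the no-split-vertex hypothesis does not forbid this (it concerns components of $G\setminus(S_0\cup\{z\})$ for a single $z\in Z_0$, not pieces cut off by arbitrary $3$-separators inside $Z_0$). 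The pigeonhole you want also needs $|(Z_0\cup S_0)\setminus(Y_1\cup Y_2\cup Y_3)|\ge 5$, which can already fail when $|Z_0|$ is small and a vertex of $S_0$ lies in some $Y_i$. The paper's argument is organised differently: it splits on whether some $S_i\subseteq Y_0\cup S_0$, tracks the three vertices of $S_0$ individually through the triples $(Y_i,S_i,Z_i)$, and repeatedly invokes the no-split-vertex hypothesis to produce components $C$ of $G\setminus(S_0\cup S_i)$ with $N(C)=S_0\cup S_i$, from which (together with the connectedness of $G[Y_0\cup S_0]$) it manufactures either a common vertex of the $Z_i$ or an edge meeting all three. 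An analysis of comparable depth is needed to finish \ref{li:t2}.
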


\begin{proof}
  Let $\CT:=\CT(Y_0,S_0,Z_0)$. To see that $\CT$ satisfies
  \ref{li:t1}, let $(Y,S,Z)\in\Sep_{<4}(G)$. If $S\subseteq Y_0\cup S_0$, then the connected set $Z_0$ is
  either a subset of $Z$ or of $Y$, and thus either $(Y,S,Z)\in\CT$
  or $(Z,S,Y)\in\CT$. Suppose next that $|S\cap Z_0|=1$. Let $z$ be
  the unique vertex in $S\cap Z_0$. Then $z$ is not a split vertex of $(Y_0,S_0,Z_0)$,
  and hence there is a connected component $C$ of
  $G\setminus(S_0\cup\{z\})$ such that $N(C)=S_0\cup\{z\}$. Then
  $V(C)\subseteq Z_0$, because $z\in Z_0$, and thus $V(C)\cap
  S=\emptyset$. It follows that either $V(C)\subseteq Y$ or $V(C)\subseteq
  Z$. Without loss of generality we may assume that $V(C)\subseteq
  Z$. As $S_0\subseteq N(C)$, this implies $S_0\setminus S\subseteq
  Z$. As $S_0\setminus S\neq\emptyset$, it follows that
  $(Y,S,Z)\in\CT$. Finally, suppose that $|S\cap Z_0|\ge 2$. If
  $S\cap S_0=\emptyset$, then either $|Z\cap S_0|\ge 2$ or $|Y\cap
  S_0|\ge 2$, and thus either $(Y,S,Z)\in\CT$
  or $(Z,S,Y)\in\CT$. If $|S\cap S_0|=1$, then $S\cap Y_0=\emptyset$,
  and as $G$ is 3-connected and $Y_0\neq\emptyset$, the vertices in
  $S_0\setminus S$ belong to the same connected component of
  $G\setminus S_0$. Hence either both are in $Z$ or both are in $Y$,
  and again it follows that  either $(Y,S,Z)\in\CT$
  or $(Z,S,Y)\in\CT$. 

  Observe next $|V(G)|\ge 6$, because $|Y_0|\ge 1$ and $|S_0|=3$
  and $|Z_0|\ge 2$ (otherwise the unique vertex in $Z_0$ would be a
  split vertex).

  \begin{claim}
    For all $(Y,S,Z)\in\CT$ we have $|S\cup Z|\ge 4$.

    \proof
    It follows from the definition of $\CT$ that $Z\neq\emptyset$. If $Y=\emptyset$,
    then $|S\cup Z|=|V(G)|\ge 6$. Otherwise, $(Y,S,Z)$ is a proper
    separation and thus $|S|=3$, which implies $|S\cup Z|\ge
    4$.
    \uend
  \end{claim}

  The claim implies that $\CT$ satisfies \ref{li:t3}.

  To prove that $\CT$ satisfies \ref{li:t2}, let
  $(Y_i,S_i,Z_i)\in\CT$ for $i=1,2,3$. Suppose for contradiction $Z_1\cap
  Z_2\cap Z_3=\emptyset$ and that there is no edge that
  has an endvertex in each $Z_i$. Once more we recycle Claim~1 of Lemma~\ref{lem:q4c1}.

  \begin{claim}[resume]
        For distinct $i,j,k\in[3]$ and $x\in V(G)$, if $x\in Z_i\cap Z_j$
    then $x\in Y_k$.
    \uend
  \end{claim}

  \begin{cs}
    \case1
    There is an $i\in[3]$ such that $S_i\subseteq Y_0\cup S_0$. \\
    Without loss of generality, we may assume that $i=1$ and
    $(Y_1,S_1,Z_1)=(Y_0,S_0,Z_0)$. We may further assume that
    $S_i\not\subseteq Y_0\cup S_0$ for $i=2,3$. Then $|Z_i\cap
    S_0|>|Y_i\cap S_0|$. 

    By Claim~1 we have
    $Z_2\cap Z_3\cap S_0 =Z_2\cap Z_3\cap S_1=\emptyset$. Thus for
    some $i\in\{2,3\}$ $|Z_i\cap S_0|<2$. Without loss of generality
    we assume $|Z_2\cap S_0|<2$. Then $|Y_2\cap S_0|=\emptyset$ and
    thus $|S_2\cap S_0|=2$. Since $S_2\not\subseteq Y_0\cup S_0$, we
    have $|S_2\cap Z_0|=1$. As the vertex in $S_2\cap Z_0$ is not a
    split vertex, there is a connected component $C$ of
    $G\setminus (S_0\cup S_2)$ such that $N(C)=S_0\cup S_2$. Then
    $V(C)\subseteq Z_0\cap Z_2=Z_1\cap Z_2$. Now let $v\in Z_3\cap S_0$, and let
    $w\in V(C)$ be adjacent to $v$. Then the edge $vw$ has an
    endvertex in each $Z_i$.

    \case2
    $|S_i\cap Z_0|\neq\emptyset$ For all $i\in[3]$.\\
    Then $|Z_i\cap S_0|>|Y_i\cap S_0|$. If $|Z_i\cap Z_j\cap
    S_0|=\emptyset$ for all $i\neq j$, then $|Z_i\cap S_0|=1$ and thus
    $|Y_i\cap S_0|=0$ for all $i$. Thus $|S_i\cap S_0|=2$ and
    $|S_i\cap Y_0|=\emptyset$, because $S_i\not\subseteq S_0\cup
    Y_0$. But this implies $Y_0\subseteq Z_1\cap Z_2\cap Z_3$, which
    is a contradiction.

    Hence without loss of generality we may assume that $Z_1\cap
    Z_2\cap S_0\neq\emptyset$. Let $s\in Z_1\cap
    Z_2\cap S_0$. Then by Claim~1, $s\in Y_3$. Then $|Y_3\cap
    S_0|\ge 1$, and this implies $|Z_3\cap S_0|\ge 2$. Let $s',s''\in
    Z_3\cap S_0$. Then $S_0=\{s,s',s''\}$. 

    If $|S_3\cap Z_0|\le 1$, there is a connected component $C$ of 
    $G\setminus (S_0\cup S_3)$ such that $N(C)=S_0\cup S_3$. But then
    there is a path from $s\in Y_3$ to $s'\in Z_3$ in $G\setminus
    S_3$, which is impossible. Hence $|S_3\cap Z_0|\ge 2$.

    Thus $|S_3\cap Y_0|\le 1$. Since the graph $G[Y_0\cup S_0]$ is
    connected, we have $|Y_0\cap S_3|= 1$, and the unique vertex
    $y\in Y_0\cap S_3$ separates $s$ from $\{s',s''\}$. Then
    $ss',ss''\not\in E(G)$. Furthermore, $sy\in E(G)$ and $y$ is the
    only neighbour of $s$ in $Y_0\cup S_0$, because otherwise $\{y,s\}$ would
    be separator of $G$. By Claim~1, $y\not\in Z_1\cap Z_2$. Say,
    $y\not\in Z_2$. then $y\in S_2$, because $y$ is adjacent to
    $s\in Z_2$. As $S_2\not\subseteq Y_0\cup S_0$, it now follows that
    $s'$ and $s''$ are not both in $S_2$. As
    $|Z_2\cap S_0|>|Y_2\cap S_2|$, one of these vertices, say, $s'$ is
    in $Z_2$.

    By Claim~1, $s'\in Z_2\cap Z_3$ implies $s'\in
    Y_1$. Arguing as above with $(Y_1,S_1,Z_1)$ instead of
    $(Y_3,S_3,Z_3)$, we see that $Z_1\cap S_0=\{s,s''\}$ and $|S_1\cap
    Z_0|=2$ and $|S_1\cap Y_0|=1$, and the unique vertex $y'\in
    S_1\cap Y_0$ separates $s'$ from $s,s''$ in $G$. Furthermore,
    $s's,s's''\not\in E(G)$, and $s'y'\in E(G)$ and $y'$ is the
    only neighbour of $s'$ in $Y_0\cup S_0$.

    Now we have $s''\in Z_1\cap Z_3$, and again by the same argument
    we see that  $s''\in Y_2$ and $Z_2\cap S_0=\{s,s'\}$ and $|S_2\cap
    Z_0|=2$ and $|S_2\cap Y_0|=1$ and the unique vertex $y''\in
    S_1\cap Y_0$ separates $s'$ from $s,s''$ in $G$. Furthermore,
    $s''s,s''s'\not\in E(G)$, and $s''y''\in E(G)$ and $y''$ is the
    only neighbour of $s''$ in $Y_0\cup S_0$.
    
    Let us rename the vertices $s,s',s''$ to $s_{12},s_{23},s_{13}$
    and the vertices $y,y',y''$ to $y_{12},y_{23},y_{13}$. Then for
    distinct $i,j,k$ we have $s_{ij}\in S_0\cap Z_i\cap Z_j\cap Y_k$ and $S_k\cap Y_0=\{y_{ij}\}$ and
    $N(s_{ij})\cap (Y_0\cup S_0)=\{y_{ij}\}$. Note that this implies
    that $S_0=\{s_{12},s_{13},s_{23}\}$ is an independent set.
   It follows that
    \begin{equation}
      \label{eq:34}
      Y_0\setminus\{y_{ij}\}\subseteq Z_k,
    \end{equation}
    because all $y\in Y_0\setminus\{y_{ij}\}$ are
    reachable in $G\setminus\{y_{ij}\}$ by a path from
    $\{s_{ik},s_{jk}\}\subseteq Z_k$.

    As the separation $(Y_0,S_0,Z_0)$ is non-degenerate and $S_0$ is
    an independent set, we have $|Y_0|>1$. Since
    $N(S_0)=\{y_{12},y_{23},y_{13}\}$ and $N(y_{ij})\cap
    S_0=\{s_{ij}\}$ and $G$ is 3-connected, it is
    easy to see that this implies that the vertices $y_{ij}$ are
    mutually distinct. Now let $e=vw$ be an arbitrary edge of
    $G[Y_0]$. Such an edge exists, and it follows from \eqref{eq:34}
    that the edge has an endvertex in each $Z_k$. Again, this is a
    contradiction.
    \qedhere
  \end{cs}
\end{proof}

Let $W,X\subseteq V(G)$. Then a \emph{$(W,X)$-separation} is a vertex
separation $(Y,S,Z)$ such that $W\subseteq Y\cup S$ and $X\subseteq
Z\cup S$. A $(W,X)$-separation $(Y,S,Z)$ is \emph{minimum} if its
order is minimal, that is, there is no $(W,X)$-separation $(Y',S',Z')$
such that $|S'|<|S|$. It is \emph{leftmost minimum} if it is minimum
and, subject to this condition, $Y$ is inclusionwise minimal.

The following lemma follows from Lemma~2.4 of \cite{marosuraz13}.

\begin{lem}[\cite{marosuraz13}]\label{lem:minsep}
  There is an algorithm that, given a graph $G$ and sets $W,X\subseteq
  V(G)$, computes a leftmost minimum $(W,X)$-separation in time
  $O(k(n+m))$, where $n:=|G|$, $m:=|E(G)|$, and $k$ is the order of a
  minimum $(X,Y)$-separation.
\end{lem}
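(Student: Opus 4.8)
The plan is to prove this by the standard reduction of minimum vertex separators to maximum flow, via vertex splitting. Concretely, I would build a directed network $N$ as follows. Split every vertex $v\in V(G)$ into an ``in-copy'' $v^{-}$ and an ``out-copy'' $v^{+}$ joined by an arc $v^{-}\to v^{+}$ of capacity $1$; for every edge $uv\in E(G)$ add arcs $u^{+}\to v^{-}$ and $v^{+}\to u^{-}$ of capacity $\infty$; add a source $s$ with an arc $s\to w^{-}$ of capacity $\infty$ for each $w\in W$, and a sink $t$ with an arc $x^{+}\to t$ of capacity $\infty$ for each $x\in X$. A finite $s$--$t$ cut of $N$ can only cut arcs of the form $v^{-}\to v^{+}$, so it corresponds to a set $S\subseteq V(G)$; conversely, any $(W,X)$-separation $(Y,S,Z)$ induces an $s$--$t$ cut of capacity $|S|$, the crucial point being that there is no edge of $G$ between $Y$ and $Z$ (so no $\infty$-arc crosses). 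In particular the minimum cut value of $N$ is finite (cutting all internal arcs gives value $|V(G)|$), and it equals the minimum order $k$ of a $(W,X)$-separation. Note that vertices of $W\cap X$ are automatically forced into $S$, consistent with every $(W,X)$-separation having $W\cap X\subseteq S$.

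Next I would run the Ford--Fulkerson augmenting-path method on $N$. Since all capacities are integral and the maximum flow value is $k$, it terminates after at most $k$ augmentations, and each augmenting path is found by a single graph search in time $O(n+m)$ (the network has $2n+2$ vertices and $n+2m+|W|+|X|=O(n+m)$ arcs). This produces a maximum flow $f$ of value $k$ in total time $O(k(n+m))$. Then I read off the separation from the residual graph $N_{f}$: let $A$ be the set of vertices of $N_{f}$ reachable from $s$ (one more $O(n+m)$ search), and put $Z:=\{v : v^{-}\notin A\}$, $S:=\{v : v^{-}\in A,\ v^{+}\notin A\}$, $Y:=\{v : v^{-}\in A,\ v^{+}\in A\}$. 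These sets partition $V(G)$. Standard residual-graph facts give $W\subseteq Y\cup S$ (the $\infty$-arc $s\to w^{-}$ is unsaturated, so $w^{-}\in A$), $X\subseteq Z\cup S$ (if $x^{+}\in A$ then $t\in A$, contradicting maximality), and no edge of $G$ between $Y$ and $Z$ (for $u\in Y$, $v\in Z$, $uv\in E(G)$ the $\infty$-arc $u^{+}\to v^{-}$ would leave $v^{-}\in A$); since $(A,\overline A)$ is a minimum cut its value is $|S|$, hence $|S|=k$ and $(Y,S,Z)$ is a minimum $(W,X)$-separation.

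For leftmost-minimality I would argue as follows: every minimum $(W,X)$-separation $(Y',S',Z')$ induces an $s$--$t$ cut of capacity exactly $k$ (again using that no edge runs between $Y'$ and $Z'$), hence a minimum cut; and the source side of any minimum cut contains $A$, because a minimum cut is tight, so no residual arc leaves its source side and reachability from $s$ cannot escape it. Translating back, $Y\subseteq Y'$ for every minimum $(W,X)$-separation $(Y',S',Z')$, so $Y$ is inclusionwise minimal, which is exactly leftmost-minimality. The only place where genuine care is needed is this final correspondence — that the reachability set $A$ does not depend on the particular maximum flow and realises the source-minimal minimum cut; everything else is the routine Menger/Ford--Fulkerson toolkit. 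Indeed, the whole statement, including the running time, is Lemma~2.4 of \cite{marosuraz13}, so in the paper it suffices to invoke that reference after recording the reduction above.
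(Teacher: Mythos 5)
Your proof is correct. The paper itself gives no argument for this lemma --- it simply invokes Lemma~2.4 of \cite{marosuraz13} --- and your vertex-splitting/max-flow construction, with the residual reachability set from $s$ yielding the (flow-independent) source-minimal minimum cut and hence the inclusionwise minimal $Y$, is exactly the standard argument underlying that reference; the only cosmetic remark is that the ``$(X,Y)$-separation'' in the statement is a typo for ``$(W,X)$-separation'', which you have read correctly.
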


A $(W,X)$-separation $(Y,S,Z)$ is \emph{proper} if $W\cap
Y\neq\emptyset$ and $X\cap Z\neq\emptyset$. Note that there is a
proper $(W,X)$-separation if and only if there is a $w\in W\setminus X$ and an
$x\in X\setminus W$ such that $wx\not\in E(G)$. 
A \emph{(leftmost) minimum
  proper $(W,X)$-separation} is a proper $(W,X)$-separation that is
(leftmost) minimum among all proper $(W,X)$-separations. 
While there always is a unique leftmost minimum
$(W,X)$-separation, as can be a proved by a straightforward
submodularity argument, there is not necessarily a unique leftmost
minimum proper $(W,X)$-separation. However, the proof of the following
lemma shows that there are at most $k^2$ leftmost
minimum proper $(W,X)$-separations, where $k$ is the order of a
leftmost minimum $(W,X)$-separation.

\begin{lem}\label{lem:q4dec2}
  Let $k\ge 1$. Then there is a linear time algorithm that, given a
  graph $G$ and sets $W,X\subseteq
  V(G)$, decides if there is a proper $(W,X)$-separation of order at
  most $k$, and if there is computes the set of all leftmost minimum
  proper $(W,X)$-separations. 
\end{lem}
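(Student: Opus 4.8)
The plan is to reduce the computation of leftmost minimum \emph{proper} $(W,X)$-separations to a bounded number of calls to the algorithm of Lemma~\ref{lem:minsep}. First I would observe that a proper $(W,X)$-separation exists at all if and only if there are $w\in W\setminus X$ and $x\in X\setminus W$ with $wx\notin E(G)$; this can be checked in linear time since $k$ is a fixed constant (e.g. iterate over $w\in W\setminus X$ and, for each, check whether some $x\in X\setminus W$ is a non-neighbour, which can be done in time $O(n+m)$ overall by marking the neighbourhood of $w$). If no such pair exists we report that there is no proper $(W,X)$-separation; otherwise we proceed. I would also first run the algorithm of Lemma~\ref{lem:minsep} once to compute the (unique) leftmost minimum $(W,X)$-separation and let $k_0\le k$ be its order; if $k_0>k$ we can already answer ``no'' when the problem asks only for separations of order at most $k$ — but more importantly the order of any minimum \emph{proper} $(W,X)$-separation is at most $k$ by hypothesis, so we may restrict attention to that regime.

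The key idea for the proper case is that a minimum proper $(W,X)$-separation $(Y,S,Z)$ is a minimum $(W',X')$-separation for the augmented sets $W':=W\cup\{w\}$ and $X':=X\cup\{x\}$, where $w\in W\cap Y$ and $x\in X\cap Z$ are chosen witnesses of properness; and conversely, any $(W',X')$-separation for non-adjacent $w\in W\setminus X$, $x\in X\setminus W$ is automatically proper. So the second step is: for every pair $(w,x)$ with $w\in W\setminus X$, $x\in X\setminus W$, and $wx\notin E(G)$, call the algorithm of Lemma~\ref{lem:minsep} on $(W\cup\{w\},X\cup\{x\})$ to obtain a leftmost minimum $(W\cup\{w\},X\cup\{x\})$-separation; take the minimum order $\kappa$ over all such pairs — this is the order of a minimum proper $(W,X)$-separation. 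This already gives a linear-time algorithm since $k$, and hence the running time $O(\kappa(n+m))$ of each call with $\kappa\le k$, is bounded, but the number of pairs $(w,x)$ is not bounded. To fix this, the third step is to argue that it suffices to consider a bounded number of pairs: take the leftmost minimum $(W,X)$-separation $(Y_0,S_0,Z_0)$ computed above; any minimum proper $(W,X)$-separation $(Y,S,Z)$ of order $\le k$ must, by a submodularity argument, have $S$ meeting $Y_0\cup S_0$ in at most $k$ vertices and $Z_0\cup S_0$ in at most $k$ vertices, so the relevant witnesses $w$ can be taken from a set of size $O(k^2)$ (those $w\in W$ reachable from $X$ only through small cuts), and symmetrically for $x$. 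Thus only $O(k^4)$ pairs $(w,x)$ need to be examined, each costing $O(k(n+m))$, for total time $O(k^5(n+m))=O(n+m)$ for fixed $k$.

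Finally, to collect \emph{all} leftmost minimum proper $(W,X)$-separations: among the $O(k^4)$ separations produced, keep only those of order exactly $\kappa$, and among those take the inclusionwise-minimal ones with respect to the $Y$-side. I would argue that every leftmost minimum proper $(W,X)$-separation arises this way: if $(Y,S,Z)$ is such a separation, pick witnesses $w\in W\cap Y$, $x\in X\cap Z$; then $(Y,S,Z)$ is a proper — hence, by the above, leftmost-candidate — minimum $(W\cup\{w\},X\cup\{x\})$-separation, and the leftmost minimum $(W\cup\{w\},X\cup\{x\})$-separation returned by Lemma~\ref{lem:minsep} has $Y$-side contained in $Y$; by leftmost-minimality of $(Y,S,Z)$ among proper separations and the fact that the returned separation is itself proper and of the same (minimum proper) order, the two coincide. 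So the returned list, after the filtering just described, is exactly the set of leftmost minimum proper $(W,X)$-separations; there are at most $O(k^2)$ of them, consistent with the remark preceding the lemma. The main obstacle I expect is the submodularity bookkeeping in the third step — showing that only $O(k^2)$ candidate witnesses $w$ (resp.\ $x$) matter, i.e.\ that the ``leftmost'' structure of minimum cuts confines the possible $Y$-sides of minimum proper separations to a region touched by boundedly many minimum $(W,X)$-cuts — and making sure the counting argument for the ``at most $k^2$ leftmost minimum proper separations'' claim is tight and correctly interfaces with the uniqueness of leftmost minimum $(W,X)$-separations.
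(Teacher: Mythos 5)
Your overall architecture (enumerate witness pairs $(w,x)$, reduce each to a leftmost-minimum-separation computation via Lemma~\ref{lem:minsep}, then filter) matches the paper's, but the central reduction step is broken. You claim that a minimum proper $(W,X)$-separation with witnesses $w\in W\cap Y$, $x\in X\cap Z$ is a minimum $(W\cup\{w\},X\cup\{x\})$-separation, and conversely that any $(W\cup\{w\},X\cup\{x\})$-separation is automatically proper. Since $w\in W$ and $x\in X$, the augmented sets equal $W$ and $X$, so this augmentation changes nothing: a $(W,X)$-separation only requires $W\subseteq Y\cup S$, so the leftmost minimum separation returned by Lemma~\ref{lem:minsep} may well place $w$ (and $x$) inside the separator $S$, and it need not be proper at all. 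Indeed the order of a minimum proper $(W,X)$-separation can strictly exceed that of a minimum $(W,X)$-separation, so the two optimisation problems are genuinely different. The missing idea is a gadget that \emph{forces} $w$ strictly into $Y$ and $x$ strictly into $Z$: the paper replaces $w$ by $k+1$ pairwise-adjacent copies $w_1,\dots,w_{k+1}$ (each inheriting $N^G(w)$), and likewise for $x$; a separator of order at most $k$ cannot contain all $k+1$ copies, so in any minimum $(W_w,X_x)$-separation of $G_{w,x}$ of order $\le k$ the copies of $w$ all lie in $Y$ and those of $x$ in $Z$, and projecting back yields exactly the proper $(W,X)$-separations with $w\in Y$, $x\in Z$. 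Without this (or an equivalent device) the correctness argument collapses.

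Your handling of the number of pairs is also a gap rather than a proof. You defer to an unproved ``submodularity bookkeeping'' claim that only $O(k^2)$ witnesses $w$ matter, and you yourself flag it as the main obstacle; as stated it is not clear it is true, and it is in any case not needed. The paper sidesteps the issue by a case distinction: if $|W|,|X|\le k$ there are at most $k^2$ pairs to try; if both $|W|>k$ and $|X|>k$, then \emph{every} $(W,X)$-separation of order at most $k$ is automatically proper (a separator of size $\le k$ cannot absorb all of $W$ or all of $X$), so a single call to Lemma~\ref{lem:minsep} suffices; and in the mixed case one blows up only the vertices of the small side. You should replace the submodularity step by this elementary case analysis, and replace the vacuous set augmentation by the vertex-duplication gadget.
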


Note that we treat $k$ as a constant here. In fact, we will only apply
the lemma for $k=3$.

\begin{proof}
  Let $G$ be a graph and $W,X\subseteq V(G)$. Let us first assume that
  $|W|,|X|\le k$. For a vertex $v\in V(G)$, we let $G_{v}$ be
  the graph obtained from $G$ by replacing $v$ by fresh vertices
  $v_1,\ldots,v_{k+1}$ and adding edges from $v_i$ to $v_j$ for all $i\neq
  j$ and from $v_i$ to $w$ for all $i$ and all $w\in N^G(v)$. 

  Now let $w\in W\setminus X$ and $x\in X\setminus W$ such that $w\neq x$, and consider the
  graph $G_{w,x}:=(G_w)_x$. Let
  $W_w:=(W\setminus\{w\})\cup\{w_1,\ldots,w_{k+1}\}$ and
  $X_x:=(X\setminus\{x\})\cup\{x_1,\ldots,x_{k+1}\}$. Observe that if
  $(Y,S,Z)$ is a
  minimum $(W_w,X_x)$-separation in $G_{w,x}$ of order $|S|\le k$,
  then $\{w_1,\ldots,w_{k+1}\}\subseteq Y$ and
  $\{x_1,\ldots,x_{k+1}\}\subseteq Z$. Thus $(Y,S,Z)$ ``projects'' to
  a proper $(W,X)$-separation 
  \[
  P(Y,S,Z):=\big((Y\setminus\{w_1,\ldots,w_{k+1}\})\cup\{w\},S,(Z\setminus\{x_1,\ldots,x_{k+1}\})\cup\{x\})
  \]
  of $G$. Moreover, if $(Y,S,Z)$ is leftmost minimum, then $P(Y,S,Z)$
  is leftmost minimum among all $(W,X)$-separations $(Y',S',Z')$ with
  $w\in Y'$ and $x\in Z'$.

  Now we let $\CP$ be the set of all $P(Y,S,Z)$, where $(Y,S,Z)$ is a
  leftmost minimum $(W_w,X_x)$-separation in $G_{w,x}$ for some $w\in
  W,x\in X$ with $w\neq x$. All separations in $\CP$ are proper
  $(W,X)$-separations, and provided there is a proper
  $(W,X)$-separation of order at most $k$, all leftmost minimum proper
  $(W,X)$-separations are in the set $\CP$. In fact, the leftmost minimum proper
  $(W,X)$-separations are precisely the $(Y,S,Z)\in\CP$ with minimum $|S|$
  and, subject to this, inclusionwise minimal $Y$.

  By Lemma~\ref{lem:minsep} and the
  assumption $|W|,|X|\le k$, the set $\CP$ can be computed in linear
  time, and then we can filter out those separations that are actually
  leftmost minimum.

  It remains to deal with the case that $|W|>k$ or $|X|>k$. If both
  $|W|>k$ and $|X|>k$, every $(W,X)$-separation of order at most $k$
  is proper. Thus the assertion of the lemma follows directly from
  Lemma~\ref{lem:minsep}. If $|W|\le k$ and $|X|>k$, we consider
  $(W_w,X)$-separations in the graph $G_w$, for all $w\in W$, and if
  $|W|> k$ and $|X|\le k$, we consider $(W,X_x)$-separations in the
  graph $G_x$, for all $x\in X$.
\end{proof}

Let us say that a separation $(Y_0,S_0,Z_0)\in\Sep_{=3}(G)$
\emph{defines} a tangle if $(Y_0,S_0,Z_0)$ is non-degenerate and $Z_0$
is connected in $G$ and $(Y_0,S_0,Z_0)$ has no split vertex. Then the
tangle \emph{defined by} $(Y_0,S_0,Z_0)$ is $\CT(Y_0,S_0,Z_0)$ (of
Lemma~\ref{lem:q4dec1}).

\begin{lem}\label{lem:q4dec3}
  There is an algorithm that, given a 3-connected graph $G$ and
  a separation $(Y_0,S_0,Z_0)$ of $G$ of order $3$ defining the
  tangle $\CT=\CT(Y_0,S_0,Z_0)$, computes the set $\CNDT$ and the set
  of all non-degenerate crossedges of $\CT$ in time
  $O(n(n+m))$.
\end{lem}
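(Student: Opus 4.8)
The plan is to describe an algorithm that, starting from a single ``defining'' separation $(Y_0,S_0,Z_0)$, enumerates all the minimal non-degenerate separations of $\CT=\CT(Y_0,S_0,Z_0)$ together with their crossedges. The starting point is the observation that by Corollary~\ref{cor:reed2}, every $(Y,S,Z)\in\CMT$ is determined by its $3$-separator $S=N(Z)$, with $Z=\CZ(S)$ and $Y=\CY(S)$, and by Lemma~\ref{lem:t2} it is enough to look at separators of order exactly $3$. So the real task is: enumerate all $\CT$-minimal non-degenerate $3$-separators $S$. The membership test ``$(Y,S,Z)\in\CT$'' for a given $3$-separation is, by definition of $\CT(Y_0,S_0,Z_0)$, a purely local check (either $Z_0\subseteq Z$, or $|Z\cap S_0|>|Y\cap S_0|$), computable in linear time; and given $S$, determining which side $\CZ(S)$ is amounts to computing the connected components of $G\setminus S$ and, for each, applying the test — again linear time. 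This gives us a linear-time oracle for ``$S\in\CMS$?'': compute $\CY(S),\CZ(S)$, then verify $\preceq$-minimality by checking that no $3$-subset $S'$ with $S'\cup\CY(S')\subsetneq S\cup\CY(S)$ (there are only constantly many relevant ways to shrink, by the structure of $\preceq$ and $|S|\le 3$) also satisfies the membership test. The difficulty is not testing a candidate but \emph{producing} the list of all candidates.

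The key idea for enumeration is to use the leftmost-minimum-proper-separation machinery of Lemma~\ref{lem:q4dec2}. Fix the defining separator $S_0$; since $S_0$ itself, or rather $(\CY(S_0),S_0,\CZ(S_0))$, points ``towards'' $Z_0$, every separation in $\CT$ has its $Z$-side containing $Z_0$ or at least the majority of $S_0$. Intuitively, any $(Y,S,Z)\in\CMT$ separates some vertex $y\in\CY(S)$ from the ``core'' of the tangle. The plan is: for each vertex $v\in V(G)$ (as a candidate for lying in the $Y$-side) and each subset $W$ of constant size anchoring the core (e.g. $W\subseteq S_0\cup Z_0$ of size at most $3$ so that $W$ must land in $S\cup Z$ on the tangle side), run the linear-time algorithm of Lemma~\ref{lem:q4dec2} with parameters $(\{v\},W)$ to obtain all leftmost minimum proper separations of order at most $3$ separating $v$ from $W$. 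Each such separation is a $3$-separation of $G$; collect these $O(n)$ candidates, discard the ones not in $\CT$, and among those keep the ones that are $\preceq$-minimal and non-degenerate. The correctness claim is that every $S\in\CNDS$ arises this way: if $(\CY(S),S,\CZ(S))\in\CMT$ is non-degenerate, pick $v\in\CY(S)$ and pick $W$ to be three vertices of $S_0\cup Z_0$ that are forced to lie on the $Z$-side (such $W$ exists by the definition of $\CT(Y_0,S_0,Z_0)$ together with $|Z_0|\ge 2$ and the tangle axioms, essentially Lemma~\ref{lem:t2}); then $(\CY(S),S,\CZ(S))$ is \emph{a} proper $(\{v\},W)$-separation of order $3$, and minimality of $S$ in $\CT$ forces it to coincide with, or be refined by, one of the leftmost minimum ones produced — here one uses Corollary~\ref{cor:reed4} and the uniqueness properties of $\CZ(\cdot)$ to pin it down exactly. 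Since there are $O(n)$ choices of $v$ and $O(1)$ choices of $W$, and each call is $O(n+m)$, the enumeration runs in $O(n(n+m))$, and the subsequent filtering/minimality checks cost $O(n+m)$ each, so $O(n(n+m))$ total.

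Once $\CNDT$ is in hand, computing the non-degenerate crossedges is cheap: by the Crossing Lemma~\ref{lem:cross}, two separators $S_1,S_2\in\CNDS$ cross iff they are not orthogonal, and then the crossedge is the unique edge $s_1s_2$ with $S_i\cap\CY(S_{3-i})=\{s_i\}$; by Corollary~\ref{cor:CI} the set $\ECND$ is a matching, so there are at most $n/2$ such edges. For each of the $O(n)$ separators (their number is $O(n)$ by the matching bound and the structure of $\CMT$ forced by Lemma~\ref{lem:4t20}, or simply because distinct minimal separators have distinct separators and we can bound them via the decomposition), we test orthogonality against a bounded number of ``neighbours'' — in fact each $S$ is crossed by at most $3$ others by the Crossedge Independence Lemma~\ref{lem:CI} — using the stored sets $\CY(S),\CZ(S)$; each pairwise orthogonality/crossedge computation is $O(1)$ after a linear-time preprocessing that, for each vertex, records which $\CY(S)$ and $S$ it belongs to. This last bookkeeping is $O(n(n+m))$ in the worst case (summing set sizes over all $O(n)$ separators), well within budget.

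The main obstacle I anticipate is proving \emph{completeness} of the enumeration step: showing that every $S\in\CNDS$ is captured by some $(\{v\},W)$-call with $W$ drawn from the fixed small ``seed'' $S_0\cup Z_0$, and that the leftmost-minimum separation returned is not merely \emph{some} separation in $\CT$ but actually refines down to $(\CY(S),S,\CZ(S))$ itself. This requires carefully arguing that the leftmost choice cannot ``overshoot'' past $S$ — i.e. that $\CZ(S)$ really is the leftmost minimum $Y$-side, which should follow from Corollary~\ref{cor:reed4} (every separation with a smaller separator on the $S$-side has $\CZ(S)$ inside its $Z$-side) combined with the minimality of $S$ in $\CT$ — and that the anchoring set $W$ genuinely lies in $\CZ(S)\cup S$ for \emph{every} non-degenerate minimal $S$, which is where the hypothesis that $(Y_0,S_0,Z_0)$ defines a tangle (no split vertex, $Z_0$ connected, non-degenerate) and Lemma~\ref{lem:t2} do the work. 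The rest is routine combinatorics and a standard complexity accounting.
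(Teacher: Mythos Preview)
Your enumeration scheme has the orientation reversed, and this is a real gap rather than a detail. You compute leftmost minimum proper $(\{v\},W)$-separations with $v$ on the $Y$-side; ``leftmost'' then minimises the side containing $v$, i.e.\ the $Y$-side. But an element of $\CMT$ is $\preceq$-minimal, which means $S\cup Z$ is minimal, hence the $Y$-side is \emph{maximal}. Concretely, if $(Y,S,Z)\in\CMT$ has $|Y|\ge 2$ and there is another $3$-separator $S'$ inside $Y\cup S$ separating your chosen $v\in Y$ from $W$ --- e.g.\ if $v$ has degree~$3$ with $N(v)\subseteq Y\cup S$, or if $G[Y\cup S]$ has any internal $3$-cut --- your call returns the separation at $S'$, never $(Y,S,Z)$. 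Iterating over all $v\in Y$ does not help: every vertex of $Y$ may lie behind some such internal cut. Your ``overshoot'' worry is stated backwards; the leftmost choice \emph{under}shoots, pulling the cut towards $v$, and Corollary~\ref{cor:reed4} (which only constrains separators $S'\subseteq S$) does not prevent this.

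The paper's proof swaps the roles: it computes leftmost minimum proper $(S_0,\{x\})$-separations for each $x\in Z_0$. Now $S_0$ is the anchor on the \emph{left}, and since $S_0\subseteq S\cup\CZ(S)$ whenever $(Y,S,Z)\in\CMT$ is orthogonal to $(Y_0,S_0,Z_0)$, minimising the left side is exactly minimising $\CZ(S)$, which is precisely what $\preceq$-minimality demands. This gives the clean equivalence: $(Y,S,Z)\in\CMT$ and is orthogonal to $(Y_0,S_0,Z_0)$ if and only if $(Z,S,Y)$ is a leftmost minimum proper $(S_0,\{x\})$-separation for some $x\in Z_0$. Separations that \emph{cross} $(Y_0,S_0,Z_0)$ require a separate treatment, because then one vertex of $S_0$ lies in $Y$ and $S_0$ is no longer a valid anchor; the paper handles these via the at most three candidate crossedges $sy$ (one per $s\in S_0$ with a unique neighbour $y$ in $Y_0\cup S_0$), using a leftmost-minimum computation inside $G[S_0\cup Z_0]$. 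Your ``$O(1)$ choices of $W$'' does not cover this case either, since no fixed $W\subseteq S_0\cup Z_0$ lands on the $Z$-side of every crossing separation.
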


\begin{proof}
  We show how to compute the set $\CMT$; then we can easily filter
  out the non-degenerate separations in $\CMT$ to obtain $\CNDT$.

  Let $x\in Z_0$. Observe that if $(Z,S,Y)$ is a proper
  $(S_0,\{x\})$-separation of order at most $3$, then
  $(Y,S,Z)\in\CT$. This follows immediately from the definition of
  $\CT$. It implies the following equivalence for every separation
  $(Y,S,Z)$ of $G$ of order $3$.
  \begin{eroman}
  \item $(Y,S,Z)\in\CMT$ and $(Y,S,Z)$ does not cross
    $(Y_0,S_0,Z_0)$.
  \item There is an $x\in Z_0$ such that $(Z,S,Y)$ is a leftmost
    minimum proper $(S_0,\{x\})$-separation.
  \end{eroman}
  We can use this equivalence to compute the set of all
  $(Y,S,Z)\in\CMT$ such that $(Y,S,Z)$ does not cross $(Y_0,S_0,Z_0)$
  (repeatedly applying the algorithm of Lemma~\ref{lem:q4dec2} to all
  $x\in Z_0$). Note that the equivalence also gives us a linear bound on the number of
  such $(Y,S,Z)$.

  It remains to deal with the $(Y,S,Z)\in\CMT$ crossing
  $(Y_0,S_0,Z_0)$. For each $s\in S_0$ that has a unique neighbour
  $y\in Y_0\cup S_0$, the edge $sy$ may be a crossedge. This gives us at most
  three potential crossedges, and we deal with them separately. So let
  $s\in S$ and $y\in Y_0$ such that $N(s)\cap (Y_0\cap
  S_0)=\{y\}$. Then for every separation $(Y,S,Z)\in\Sep_{=3}(G)$
  the following are equivalent.
  \begin{eroman}[resume]
  \item $y\in S$ and $\big(Z\cap(S_0\cup Z_0),S\cap(S_0\cup Z_0),Y\cap (S_0\cup
    Z_0)\big)$ is a leftmost minimum proper
    $(S\setminus\{s\},\{s\})$-separation in the graph $G[S_0\cup
    Z_0]$.
  \item $(Y,S,Z)\in\CMT$ and $(Y,S,Z)$ crosses $(Y_0,S_0,Z_0)$ with
    crossedge $ys$.
  \end{eroman}
  To see this, note that (iii) implies that $|S\cap Z_0|=2$, because
  $(Y_0,S_0,Z_0)$ has no split vertex. The equivalence between (iii)
  and (iv) allows us to compute the remaining separations in $\CMT$.

  As we have an overall linear bound on the number of separations in
  $\CMT$ and hence in $\CNDT$, we can easily compute the set of
  non-degenerate crossedges.
\end{proof}
  
Let us a call a 3-separator $S$ of $G$ \emph{degenerate} if there is a
connected component $C$ of $G\setminus S$ such that the separation
$(G\setminus(S\cup V(C)),S,V(C))$ is degenerate. It is easy to see
that this is the case if and only if $S$ is an independent set and
$G\setminus S$ has exactly two connected components, one of which has
order $1$.

\begin{lem}\label{lem:compute-nd}
  There is an a algorithm that, given a 3-connected graph $G$,
  decides if $G$ has a non-degenerate 3-separator and computes one if
  there is in time $O(n^2(n+m))$.
\end{lem}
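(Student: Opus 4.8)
The plan is to build the algorithm around the explicit description of degenerate $3$-separators recorded just above the lemma: a set $S$ is a degenerate $3$-separator of $G$ if and only if $S=N^G(y)$ for some vertex $y$ of degree $3$ such that $N^G(y)$ is independent and $G\setminus(\{y\}\cup N^G(y))$ is connected. Hence $G$ has at most $n$ degenerate $3$-separators, and the set $D$ of all of them can be computed in time $O(n(n+m))$: I would iterate over the vertices $y$ with $\deg(y)=3$, and for each such $y$ test independence of $N^G(y)$ and connectedness of $G\setminus(\{y\}\cup N^G(y))$ by a single graph search. The set $D$ is then stored in a structure permitting constant-time membership tests for triples. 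Throughout, I may assume $|G|\ge 5$, since the only $3$-connected graph on at most four vertices is $K_4$, which has no $3$-separator at all, and this case is reported directly.

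The second ingredient is the observation that every $3$-separator becomes visible as an articulation point once two of its vertices are deleted: if $S=\{v,a,b\}$ is a $3$-separator then, as $G$ is $3$-connected and $|G|\ge 5$, the graph $G\setminus\{v,a\}$ is connected, whereas $G\setminus S=(G\setminus\{v,a\})\setminus\{b\}$ is disconnected, so $b$ is an articulation point of $G\setminus\{v,a\}$. The algorithm therefore loops over all $\binom n2$ pairs $\{v,a\}$ of vertices; for each pair it computes, by a depth-first search in linear time, the set of articulation points of $G\setminus\{v,a\}$, and for each such articulation point $b$ it tests whether $\{v,a,b\}\in D$. It outputs the first triple $\{v,a,b\}$ found with $\{v,a,b\}\notin D$ and halts; if no pair yields such a triple, it reports that $G$ has no non-degenerate $3$-separator.

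Correctness will then follow because $D$ is precisely the set of degenerate $3$-separators. If the algorithm outputs $\{v,a,b\}$, then $b$ is an articulation point of $G\setminus\{v,a\}$, so $\{v,a,b\}$ is a $3$-separator, and $\{v,a,b\}\notin D$ means it is non-degenerate; conversely, if $S=\{v,a,b\}$ is any non-degenerate $3$-separator, then when the pair $\{v,a\}$ is processed, $b$ occurs among the articulation points of $G\setminus\{v,a\}$ and $S\notin D$, so the algorithm returns $S$ (or some other non-degenerate $3$-separator it found earlier). For the running time, the preprocessing costs $O(n(n+m))$, and the main loop performs $O(n^2)$ iterations, each dominated by an $O(n+m)$ articulation-point computation, so the total is $O(n^2(n+m))$.

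The one point that needs care — and what I expect to be the only real obstacle — is keeping each iteration of the main loop within $O(n+m)$, since for a fixed pair $\{v,a\}$ there can be up to $n$ articulation points to test against $D$, and naive membership tests would cost an extra logarithmic factor. I would avoid this by precomputing, once at the start, for every pair $\{v,a\}$ the list $\mathrm{Forb}(v,a):=\{b:\{v,a,b\}\in D\}$; since $\sum_{\{v,a\}}|\mathrm{Forb}(v,a)|=3|D|\le 3n$, all these lists are built in $O(n)$ total time by scanning $D$ once, after which deciding whether some articulation point of $G\setminus\{v,a\}$ avoids $\mathrm{Forb}(v,a)$ is a single linear pass per pair. (Alternatively one can radix-sort $D$ and probe it directly.) Everything else is a routine matter of running standard linear-time graph-search subroutines.
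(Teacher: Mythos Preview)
Your proposal is correct and takes a genuinely different, more elementary route than the paper. The paper proceeds via the machinery of leftmost minimum proper $(W,X)$-separations (Lemma~\ref{lem:q4dec2}): for every ordered pair $(y,z)$ of vertices it computes the leftmost minimum proper $(\{y\},\{z\})$-separations, and when the resulting separator $S'$ happens to be degenerate (namely $S'=N(y)$ with $N(y)$ independent) it repeats the computation with $W=S'$ to push past $y$ and force a non-degenerate separator. Your approach sidesteps this entirely: you enumerate all $3$-separators directly as triples $\{v,a,b\}$ where $b$ is an articulation point of $G\setminus\{v,a\}$, and filter them against a precomputed list $D$ of the at most $n$ degenerate separators. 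Both arguments meet the $O(n^2(n+m))$ bound. Your argument is simpler and self-contained, requiring only Tarjan's articulation-point algorithm; the paper's argument has the modest advantage of reusing the minimum-separation subroutine that is needed anyway for the next lemma (computing $\CNDT$ from a defining separation).
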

  
\begin{proof}
  We first test if there is an $S\subseteq V(G)$ such that $|S|=3$ and
  all connected components of $G\setminus S$ have order $1$. In this
  case, $S$ is a non-degenerate 3-separator if $|G|\ge 6$ or if $|G|\ge 5$ and $S$ is
  not an independent set.

  In the following, we assume that for every $S\subseteq V(G)$ such
  that $|S|=3$ there is at least one connected components $C$ of
  $G\setminus S$ such that $|C|\ge 2$. Now suppose that $S$ is a
  non-degenerate 3-separator of $G$. Let $Y$ be the vertex set of a
  connected component of $G$ of size $|Y|\ge 2$, and let
  $Z:=V(G)\setminus(S\cup Y)$. Let $y\in Y$ and $z\in Z$.

  Then there is a leftmost minimum proper
  $(\{y\},\{z\})$-separation $(Y',S',Z')$ with $Y'\cup S'\subseteq
  Y\cup S$, because $(Y,S,Z)$ is a minimum proper
  $(\{y\},\{z\})$-separation. The separator $S'$ is non-degenerate
  unless $S'$ is an independent set and
  $S'=N(y)$. However, in this case there is a leftmost minimum proper
  $(S',\{z\})$ separation $(Y'',S'',Z'')$ such that $S''$ is
  non-degenerate. To see this, let $y'\in N(y)\cap Y$. Then there is a
  leftmost minimum proper
  $(S',\{z\})$ separation $(Y'',S'',Z'')$ with $y,y'\in Y''$ and
  $Y''\cup S''\subseteq Y\cup S$, because $(Y,S,Z)$ is a minimum proper
  $(S',\{z\})$ separation with $y,y'\in Y$. The set $S''$ is a
  non-degenerate 3-separator.
  
  Thus we can find a non-degenerate 3-separator as follows. For all
  pairs $y,z$ of distinct vertices, we compute all leftmost minimum proper
  $(\{y\},\{z\})$-separations $(Y',S',Z')$ and check if there is one
  such that $S'$ is a non-degenerate 3-separator. If $y$ has degree
  $3$ and $S':=N(y)$ is an independent set, we also compute all leftmost minimum proper
  $(S',\{z\})$ separations $(Y'',S'',Z'')$ and check if $S''$ is a non-degenerate 3-separator.
\end{proof}
  
\begin{proof}[Proof of Theorem~\ref{theo:q4dec}]
  If $G$ has no non-degenerate 3-separator, then $G$ is quasi-4-con\-nected,
  and we return the trivial tree decomposition with a  one-node
  tree. In the following, we assume that $G$ has at least one
  non-degenerate 3-separator.

  We view the tree $T$ in the tree decomposition as directed with all
  edges pointing away from the root, and we denote the descendant order
  in the tree by $\unlhd^T$, or just $\unlhd$ if $T$ is clear from the
  context. With each (directed) edge $e=(s,t)$ of the tree we associate
  a separation $\sep(s,t)=(Y,S,Z)$ of order $3$ such that $Z$
  is connected in $G$ and
  \begin{align}
    \label{eq:35}
    S&=\beta(t)\cap\beta(s),\\
    \label{eq:36}
    S\cup Z&=\bigcup_{u\unrhd t}\beta(u).
  \end{align}
  We build the tree decomposition iteratively starting from the root
  $r$ of the tree. We pick an arbitrary non-degenerate 3-separator
  $S_r$ of $G$ and let $\beta(r):=S_r$. For every connected component $C$ of $G\setminus S_r$ we
  create a child $t$ of $r$, and we let
  $\sep(r,t):=(V(G)\setminus(S_r\cup V(C)),S_r,V(C))$. 

  At every step of the construction, we pick a leaf $t$ of the current
  tree such that $\beta(t)$ is not yet defined. Let $s$ be the parent
  of $t$ and $\sep(s,t)=(Y_0,S_0,Z_0)$.
  \begin{cs}
    \case1
    $|Z_0|\le 1$.\\
    Then $|S_0\cup Z_0|\le 4$, and we let $\beta(t):=S_0\cup Z_0$. The
    node $t$ will remain a leaf of the final tree.
    \case2
    $|Z_0|>1$ and $(Y_0,S_0,Z_0)$ has a split vertex $z_0\in
    Z_0$.\\
    Then we let $\beta(t):=S_0\cup\{z_0\}$. For every connected
    component $C$ of $G\setminus (S_0\cup\{z_0\})$ with $V(C)\subseteq
    Z_0$ we create a child $u$ of $t$ and let
    $\sep(t,u):=(V(G)\setminus(N(C)\cup V(C)),N(C),V(C))$.
    \case3
    $|Z_0|>1$ and $(Y_0,S_0,Z_0)$ has no
    split vertex.\\
    Let $\CT=\CT(Y_0,S_0,Z_0)$ be the $G$-tangle of Lemma~\ref{lem:q4dec1}. Note that
    $(Y_0,S_0,Z_0)\in\CNDT$. We associate a quasi-4-connected region
    $R_{\CT}$ with $\CT$ as described in
    Section~\ref{sec:tangleregion}, where we make sure that we
    contract all crossedges of crossings that involve $S_0$ to their
    endvertex in $S_0$. Then, in the terminology of
    Section~\ref{sec:tangleregion}, $S_0=S_0^{\slashes m}$.

    We let $\beta(t):=R_{\CT}$. 

    Recall Corollary~\ref{cor:4t25b}. For every
    $S\in\CNDS\setminus\{S_0\}$ and every connected component $C$ of
    $G\setminus R_{\CT}$ with $N(C)=S^{\slashes m}$ we create a  child $u$ of $t$ and let
    $\sep(t,u):=(V(G)\setminus(S^{\slashes m}\cup V(C)),S^{\slashes m},V(C))$.
  \end{cs}
  The completes the description of our construction. We need to
  describe a time $O(n^2(n+m))$-algorithm implementing it. By
  Lemma~\ref{lem:compute-nd}, we can compute a non-degenerate
  3-separator $S_r$ (for the root $r$) within this time if there is
  one.

  Now we show that we can handle every step of the construction in
  time $O(n(n+m))$. So let $t$ be a leaf of the current tree, $s$ its
  parent, and $(Y_0,S_0,Z_0):=\sep(s,t)$. Case~1 is easy. For Case~2,
  we need to compute all connected components of
  $G\setminus(S_0\cup\{z\})$ for all $z\in Z_0$, which we can do in
  time $O(n(n+m))$. For Case~3, we need to compute $\CMT$ and
  $R_{\CT}$ for the tangle $\CT=\CT(Y_0,S_0,Z_0)$, and
  Lemma~\ref{lem:q4dec3} allows us to do this.
\end{proof}

\begin{rem}
  There is one minor issue that we ignored so far in order to not
  complicate things unnecessarily. It may happen that some
  quasi-4-connected components $\torso G{R_{\CT}}$ of $G$ do not appear as
  torsos $\torso G{\beta(t)}$ in the decomposition, because if $\torso
  G{R_{\CT}}$ is a subgraph of $\THF$ (see Figure~\ref{fig:th4}), it
  will be treated in Case~2 of the construction: the vertex
  $v_4$ is a split vertex with respect to the separation
  $\{v_1,v_2,v_3\}$, and the vertices $w_i$ will be split off. But we
  can easily detect this during the construction and avoid to split off
  those vertices if we want to.

  If we carry out the construction exactly as in the proof of the
  theorem, then the $G$-tangles of order $4$ are associated with all
  nodes $t$ such that 
  \begin{itemize}
  \item either $|\beta(t)|\ge 5$
  \item or $|\beta(t)|=4$ and for each subset $S\subseteq \beta(t)$ of
    size $|S|=3$ there is a neighbour $u$ of $t$ such that
    $\beta(u)\cap \beta(t)=S$.
  \end{itemize}
  In the second case, the neighbours of $t$ allow us to find a
  non-exceptional extension of the quasi-4-connected region
  $\beta(t)$.
  \uend
\end{rem}

\section{Conclusions}
Relaxing 4-connectedness, we introduce the notion of
quasi-4-connectedness of graphs and prove that
every graph has a decomposition into quasi-4-connected
components. We show that these quasi-4-connected component correspond
to the tangles of order $4$, putting our result in the context of
recent work on tangles an decompositions
\cite{cardiehar+13a,cardiehar+13b,cardiehun+14,groschwe15a,hun11,gm10}. Furthermore,
we prove that our decomposition can be computed in cubic time. I think that our decomposition generalises the
decomposition of a graph into its 3-connected components in a natural
way and as such is a fundamental and interesting result
in structural graph theory. Although we did not explore this in
the present paper, I also believe that the result may turn out to be a
useful algorithmic tool, just like the decomposition into 3-connected
components (though maybe not quite as broadly applicable). 

The most obvious question is whether our result has a generalisation
to ``quasi-$k$-connected components'', whatever they may be, for
$k\ge 5$. I am skeptical, because we exploit many special
properties of separators of order $3$ here, most importantly the
limited way in which they can cross. However, our decomposition is
not a straightforward generalisation of the decomposition into
3-connected components either, and arguably, our main contributions are the
conceptual ideas related to quasi-4-connectedness. It may well be that
new conceptual ideas lead to perfectly nice decompositions of higher
order.

Finally, in particularly when thinking of applications, it would be
desirable to have a decomposition algorithm working in quadratic or
even in linear time. I see no fundamental obstructions to the
existence of such
an algorithm.

\newpage
\appendix
\part*{Appendix}

In this appendix, we discuss how our notion of tangles based on
separations of the vertex set relates to Robertson and Seymour's
original definition of tangles \cite{gm10} based on partitions of the
edge set. Let us say that an \emph{RS-separation} of a graph $G$ is a pair $(A,B)$ of subgraphs of $G$ such
that $A\cup B=G$ and $E(A)\cap E(B)=\emptyset$. The \emph{order} of
the RS-separation $(A,B)$ is $\ord(A,B):=|V(A)\cap V(B)|$.

Robertson and Seymour define a $G$-tangle of order $k$ to be a family $\CT$ of RS-separations of
$G$ of order less than $k$ satisfying the following conditions.
\begin{nlist}{T'}
\item\label{li:rt1}
  For all RS-separations $(A,B)$ of $G$ of order less than $k$, either
  $(A,B)\in\CT$ or $(B,A)\in\CT$.
\item\label{li:rt2}
  If $(A_1,B_1), (A_2,B_2),(A_3,B_3)\in\CT$ then $A_1\cup A_2\cup
  A_3\neq G$.
\item\label{li:rt3}
  $V(A)\neq V(G)$ for all $(A,B)\in\CT$.
\end{nlist}
Let us call such tangles \emph{RS-tangles} in the following.

With every RS-separation $(A,B)$ we associate the separation
$\angles{A,B}=(Y,S,Z)$ with $Y:=V(A)\setminus V(B)$, $S:=V(A)\cap
V(B)$, and $Z:=V(B)\setminus V(A)$. %

\begin{prop}
  Let $G$ be a graph.
  \begin{enumerate}
  \item Let $\CT$ be a $G$-tangle of order $k$.
    Then 
    \[
    \CT':=\big\{(A,B)\bigmid (A,B)\text{ RS-separation of $G$ of order $<k$
      with }\angles{A,B}\in\CT\big\}
    \]
    is an RS-tangle of $G$ of order $k$.
  \item Let $\CT'$ be a an RS-tangle of $G$ of order $k$. Then
     \[
    \CT:=\big\{\angles{A,B}\bigmid (A,B)\in\CT'\big\}
    \]
   is a $G$-tangle of order $k$.
  \end{enumerate}
\end{prop}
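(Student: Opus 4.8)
The plan is to verify the three tangle axioms in each direction by directly translating between the two languages, using the correspondence $\angles{A,B}=(Y,S,Z)$ and its inverse. The key observation is that the map $(A,B)\mapsto\angles{A,B}$ is ``almost'' a bijection: given a separation $(Y,S,Z)\in\Sep(G)$ there may be several RS-separations $(A,B)$ with $\angles{A,B}=(Y,S,Z)$, differing only in how the edges inside $G[S]$ (and edges with exactly one endvertex in $S$) are distributed between $A$ and $B$; but $\ord(A,B)=|S|$ depends only on $(Y,S,Z)$, and whether $A\cup B=G$ (i.e.\ every edge is covered) is automatic. Conversely, $\angles{A,B}$ really is a separation of $G$: no edge joins $Y=V(A)\setminus V(B)$ to $Z=V(B)\setminus V(A)$, since such an edge would have to lie in $A$ (both endpoints touching $A$'s side) and in $B$, contradicting $E(A)\cap E(B)=\emptyset$ unless one endpoint is in $S$.

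First I would record the dictionary between the two kinds of data. For \ref{li:rt1} vs.\ \ref{li:t1}: if $(A,B)$ is an RS-separation of order $<k$, then $\angles{A,B}=(Y,S,Z)$ has $|S|<k$ and $\angles{B,A}=(Z,S,Y)$; so \ref{li:t1} for $\CT$ yields \ref{li:rt1} for $\CT'$, and conversely every separation $(Y,S,Z)$ of order $<k$ arises as $\angles{A,B}$ for \emph{some} RS-separation $(A,B)$ (put all ``mixed'' edges into $A$, say), so \ref{li:rt1} for $\CT'$ gives \ref{li:t1} for $\CT$ — here I need the remark that membership of $(A,B)$ in $\CT'$ depends only on $\angles{A,B}$, which is true by definition of $\CT'$ in part~(1) and which I would need to check is consistent in part~(2), i.e.\ that if $\angles{A,B}=\angles{A',B'}$ then $(A,B)\in\CT'\iff(A',B')\in\CT'$; this holds because $\CT'$ is an RS-tangle and two such RS-separations have the same order and $V(A)=V(A')$, so \ref{li:rt1} forces them to the same side unless both sides are present, and a short argument with \ref{li:rt2}/\ref{li:rt3} rules that out. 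For \ref{li:rt3} vs.\ \ref{li:t3}: $V(A)\neq V(G)$ says exactly $Z=V(B)\setminus V(A)\neq\emptyset$, so these axioms are literally the same under the dictionary.

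The main work is \ref{li:rt2} vs.\ \ref{li:t2}. Given $(A_i,B_i)\in\CT'$ with $\angles{A_i,B_i}=(Y_i,S_i,Z_i)$, the condition $A_1\cup A_2\cup A_3\neq G$ fails iff every vertex lies in some $V(A_i)$ \emph{and} every edge lies in some $E(A_i)$. I would show this is equivalent to the negation of \ref{li:t2}, i.e.\ to $Z_1\cap Z_2\cap Z_3=\emptyset$ together with no edge having an endvertex in every $Z_i$. For the vertex part: a vertex $v$ lies in no $V(A_i)$ iff $v\in Z_i$ for all $i$, so ``every vertex is in some $V(A_i)$'' $\iff Z_1\cap Z_2\cap Z_3=\emptyset$. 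For the edge part: an edge $e$ lies in no $E(A_i)$ — given the vertex condition — iff for each $i$ at least one endpoint of $e$ is in $Z_i$ (if both endpoints of $e$ were in $V(A_i)$ and $e\notin E(A_i)$ then $e\in E(B_i)$, but one can always choose the RS-separations, or rather the condition is forced, so that such $e$ must then have an endpoint in $Y_i$... ) — this is the one place needing care, because $E(A_i)$ is not determined by $\angles{A_i,B_i}$. The clean way around it: the statement $A_1\cup A_2\cup A_3\neq G$ is \emph{monotone} — enlarging the $A_i$ (moving ambiguous edges into $A_i$) only helps cover $G$ — so \ref{li:rt2} for $\CT'$ is equivalent to the assertion that for the ``maximal'' representatives the union is not all of $G$, and for those the edge condition becomes exactly ``no edge has an endvertex in every $Z_i$''. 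Running this equivalence in both directions gives \ref{li:rt2}$\Leftrightarrow$\ref{li:t2}, completing both parts. The hard part, then, is not any single deep idea but the bookkeeping around the non-injectivity of $(A,B)\mapsto\angles{A,B}$: one must be careful that the tangle membership is well-defined on $\angles{A,B}$ and that \ref{li:rt2} is correctly matched using its monotonicity in the choice of representative.
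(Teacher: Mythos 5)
The paper itself omits the proof of this proposition, so there is nothing to compare against; your outline is the natural argument and is essentially correct, including the observation that $\angles{A,B}$ is a separation, the literal translation of \ref{li:rt1}/\ref{li:t1} and \ref{li:rt3}/\ref{li:t3}, and the clean treatment of \ref{li:rt2} in part~(1) (there no choice of representative is needed: if $A_1\cup A_2\cup A_3=G$ then every vertex avoids some $Z_i$ and every edge, lying in some $E(A_i)$, has no endvertex in that $Z_i$, contradicting \ref{li:t2} directly).

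The one place where your argument is not yet a proof is exactly the spot you flag: the claim that membership in an RS-tangle $\CT'$ depends only on $\angles{A,B}$. Your sketch ("\ref{li:rt1} forces them to the same side \dots a short argument with \ref{li:rt2}/\ref{li:rt3} rules that out") does not go through as written: if $(A,B)\in\CT'$ and $(B',A')\in\CT'$ with $\angles{A,B}=\angles{A',B'}=(Y,S,Z)$, the union $A\cup B'$ covers all vertices but need \emph{not} be all of $G$, because an edge with both endvertices in $S$ may lie in $E(B)$ and in $E(A')$, hence in neither $E(A)$ nor $E(B')$ — so \ref{li:rt2} yields no contradiction from these two separations alone. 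The missing ingredient is the auxiliary single-edge RS-separations: for an edge $e=uv$ with $u,v\in S$, let $D_e$ be the subgraph with vertex set $\{u,v\}$ and edge set $\{e\}$ and $C_e:=G-e$; then $(C_e,D_e)$ and $(D_e,C_e)$ have order $2<k$ (this is where you must treat $k\le 2$ separately — there $|S|\le 1$, so no ambiguous edges exist and the map is injective), \ref{li:rt3} excludes $(C_e,D_e)$ since $V(C_e)=V(G)$, so \ref{li:rt1} forces $(D_e,C_e)\in\CT'$. Now move one $S$-internal edge at a time: if $(A,B)\in\CT'$ and $(A^+,B^-)$ arises by transferring a single such edge $e$ from $B$ to $A$, and if $(B^-,A^+)\in\CT'$, then $A\cup B^-\cup D_e=G$, contradicting \ref{li:rt2}; hence $(A^+,B^-)\in\CT'$, and symmetrically for the reverse transfer. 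Iterating gives the well-definedness, after which your "maximal representative" argument for \ref{li:t2} in part~(2) is correct: an edge not in $E(A_i^{\max})$ necessarily has an endvertex in $Z_i$. With this lemma supplied, the proof is complete.
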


We omit the straightforward proof. the proposition shows that our
version of tangle and Robertson and Seymour's original version are
essentially the same, and it also shows how to translate our results
to the original framework

\end{document}